\newif\ifdraft
\newcommand{\E}[1]{\mathbf{E}\left[#1\right]}
\newcommand{\Exp}[1]{\mathbb{E}\left[#1\right]}
\newcommand{\Prob}[1]{\mathbf{P}\left(#1\right)}
\newcommand{\Pro}[1]{\mathbb{P}\left(#1\right)}
\newcommand{\poly}{\operatorname{\text{{\rm poly}}}}
\newcommand{\RCT}{\ensuremath{\mathrm{RCT}}\xspace}
\newcommand{\RCTID}{\ensuremath{\mathrm{RCTID}}\xspace}
\newcommand{\RCTDEG}{\ensuremath{\mathrm{RCTDEG}}\xspace}
\newcommand{\eps}{\varepsilon}
\newcommand{\whp}{w.h.p.\xspace}
\newcommand{\wehp}{w.e.h.p.\xspace}
\newcommand{\aas}{a.a.s.\xspace}
\newcommand{\dd}{\,\mathrm d}
\newcommand{\B}{\mathcal B}
\newcommand{\Hb}{\ensuremath{\mathbb H^2}\xspace}
\newcommand{\bH}{\Hb}
\newcommand{\disk}{\ensuremath{\mathcal D_R}\xspace}
\newcommand{\G}{\ensuremath{\mathcal{G}(n, \alpha, C)}\xspace}
\newcommand{\hrg}{\mathcal{G}(n, \alpha, C)}
\newcommand{\minexp}{\zeta_{min}}
\newcommand{\maxexp}{\zeta_{max}}
\newcommand{\maxUncoul}[1]{\Delta_{#1}}
\newcommand{\colourSpace}[2]{X_{#1}(#2)}
\newcommand{\finallayer}{\ell^{*}}
\newcommand{\startlayer}{\Bar{\ell}}
\newcommand{\bigO}{\mathcal{O}}
\DeclareMathOperator{\polylog}{polylog}
\DeclareMathOperator{\dist}{d_h}
\newcommand{\deguncoul}[2]{\deg_{#1}\left(#2\right)}
\newcommand{\deghigh}[1]{\deg^+{(#1)}}
\newcommand{\indicator}[1]{\mathbf{1}_{\{#1\}}}
\newcommand{\colour}{\psi}
\newcommand{\palette}{\Psi}
\newcommand{\colourpal}[2]{\Psi_{#1}\left(#2\right)}
\newcommand{\colourchosen}[2]{\psi_{#1}\left(#2\right)}
\newcommand{\intensity}{\lambda}
\newcommand{\func}{\rho}
\declaretheorem{theorem}
\newtheorem*{theorem*}{Theorem}
\declaretheorem{lemma, proposition, corollary}[
style=plain,
sibling=theorem
]
\declaretheoremstyle[headfont=\normalfont,bodyfont=\itshape,]{claimstyle} 
\declaretheorem{claim}[
style=claimstyle,
sibling=theorem,
numbered=no
]
\DeclareFontFamily{U}{tipa}{}
\DeclareFontShape{U}{tipa}{m}{n}{<->tipa10}{}
\newcommand{\arc@char}{{\usefont{U}{tipa}{m}{n}\symbol{62}}}%
\newcommand{\arc}[1]{\mathpalette\arc@arc{#1}}
\newcommand{\arc@arc}[2]{%
  \sbox0{$\m@th#1#2$}%
  \vbox{
    \hbox{\resizebox{\wd0}{\height}{\arc@char}}
    \nointerlineskip
    \box0
  }%
}
\newcounter{ctr}
\edef\csname c\Alph{ctr}\endcsname{\noexpand\mathcal{\Alph{ctr}}}
\newcommand{\authoremail}[1]{%
  \href{mailto:#1}{\tiny\raisebox{4pt}{\faEnvelope[regular]}}%
}
\author[1]{Yannic Maus\thanks{This research was funded in whole or in part by the Austrian Science Fund (FWF) \url{https://doi.org/10.55776/P36280}, \url{https://doi.org/10.55776/I6915}. For open access purposes, the author has applied a CC BY public copyright license to any author-accepted manuscript version arising from this submission.}\authoremail{yannic.maus@tugraz.at}}
\author[2]{{Janosch Ruff\thanks{This research was partially funded by the German Research Foundation (Deutsche Forschungsgemeinschaft, DFG) – project number 390859508.}}\authoremail{Janosch.Ruff@hpi.de}}
\affil[1]{TU Graz, Austria}
\affil[2]{Hasso Plattner Institute, University of Potsdam, Germany}
\title{On Distributed Colouring of Hyperbolic Random Graphs}
\date{}
\begin{document}
\allsectionsfont{\sffamily}
\maketitle
\thispagestyle{empty}

\begin{abstract}
We analyse the performance of simple distributed colouring algorithms under the assumption that the input graph is a hyperbolic random graph (HRG), a generative model capturing key properties of real-world networks such as power-law degree distributions and large clustering coefficients.  Motivated by the shift from worst-case analysis to more realistic network models, we study the number of rounds and size of the colour space required to colour HRGs in the distributed setting.

\smallskip

In the conceivable simplest algorithm each vertex selects a colour uniformly at random and keeps it permanently if no neighbour tries the same colour; otherwise it discards the candidate colour choice and tries a fresh random colour  in the next round. 


\begin{compactitem}
\item We first show that this  randomised algorithm  terminates in exactly two rounds \aas when given a colour space of $\eps\cdot\Delta$, for any constant $\eps>0$, while failing \whp if the colour space is only $\chi\cdot n^{\delta}$ for some small constant $\delta>0$. 
\item We then consider another classic variation of that random colour trial algorithm  that resolves conflicts by prioritising nodes with smaller IDs. While it also colours with colour space $\eps\cdot\Delta$ in two rounds, it \whp fails to do so with $\Delta/\log^{\Omega(1) }\Delta\gg \chi$ colours. 

\item Lastly, inspired by the structure of HRGs, we consider a variant of the random colour trial algorithm that  prioritises high-degree vertices.  It achieves a valid colouring in two rounds \aas using only $\Delta^{1-\delta}$ colours for some constant $\delta>0$. We also show that our bound on the colour space is asymptotically tight (up to polylogarithmic factors) for certain values of the power-law exponent. 
\end{compactitem}
All three randomised algorithms are extremely simple and run in the bandwidth-restricted CONGEST model. 
These positive results demonstrate that constant-time algorithms can outperform the classical $\Omega(\log^{*}n)$ lower bound for the $\Delta+1$-vertex colouring problem in worst-case graphs, established by [Linial; FOCS '87] and [Naor; SIAM Journal Disc. Math. '91].


Our results rely on several new structural insights into HRGs, which may be of independent interest. More generally, our results contribute to the line of research of simple algorithms beating the general lower bounds on HRGs like $\Omega(n)$ for computing the shortest path~[Bläsius, Freiberger, Friedrich, Katzmann, Montenegro-Retana and Thieffry; Transactions of Algorithms ’24]. 


\end{abstract}

\clearpage
\thispagestyle{empty}
\tableofcontents

\newpage
\setcounter{page}{1}
\section{Introduction}

In this paper, we study vertex colouring algorithms in the classic LOCAL model of distributed computing \cite{Linial-92}: A communication network is modelled as an $n$-node graph where each node represents a computational entity that communicates with its neighbours in synchronous rounds. The time complexity of an algorithm in this model is measured by the number of synchronous communication rounds required until all nodes have produced their outputs. In the \emph{distributed vertex colouring problem} each node has to output a colour from some optimally small set of colours that is different from its neighbours' colours. 

The problem has been extensively studied, e.g., \cite{barenboimelkin_book,BE10,BEG17,FHK16,MT20,SW10,johansson99,CLP18,HSS18,HKNT22,GG24}, and, over the decades, randomised algorithms have improved from $\bigO(\log n)$ rounds~\cite{luby86,alon86} over sublogarithmic-time algorithms~\cite{HSS18} to the current state of the art of $\poly\log\log n$ rounds \cite{RG20,BEPS,CLP18,GG24,HKMT21}, though the resulting algorithms remain technically complex---using various involved graph decompositions, the shattering framework \cite{BEPS}, and even the best randomised algorithms rely on deep insights on distributed derandomisation. Despite this massive progress on faster algorithms, a significant gap to the known lower bound remains. In fact, the only lower bound is a $\Omega(\log^* n)$ lower bound for colouring graphs with maximum degree $2$ that stems from Linial's seminal paper for the deterministic setting and was later shown to hold for randomised algorithms as well by Naor~\cite{Linial-92,Naor91}. In fact, there is no lower bound for $\Delta=\omega(1)$.

In this paper, we ask the question whether one can design better algorithms for real-world graphs.
 Following the line of work using hyperbolic geometry to model complex real-world networks \cite{pkbv-info-2010, kpk-h-10, Bogu2010, Serrano2008}, we use the well-established model of hyperbolic random graphs (HRGs) as introduced by Krioukov, Papadopoulos, Kitsak, Vahdat, and Boguñá~\cite{kpk-h-10}. As HRGs have $\bigO(\log n)$ diameter~\cite{DBLP:conf/analco/KiwiM15, fk-dhrg-18, ms-k-19} they can be trivially coloured with the optimal number of colours by having a designated leader node learn the whole graph topology, brute-forcing the solution locally, and re-distributing it to the nodes of the network. However, such an approach is highly unrealistic, abuses the unbounded message size of the LOCAL model, and offers no algorithmic insight. Surprisingly, we show that one can also colour HRGs with  $(1+o(1))\chi$ colours in constant time. Here, $\chi$ is the chromatic number of the graph which is the smallest number of colours that are required in any valid vertex colouring of the graph. This algorithm does not just exploit the structure of HRGs but it also exploits  the unbounded message size of the LOCAL model in a manner similar to the $\bigO(\text{diameter})$ algorithm, providing very little algorithmic insight. HRGs can also be coloured quickly using existing algorithms~\cite{HN21}, though the algorithms are technically complex while still requiring significantly more colours and time than optimal. 

 
 In this paper, we aim to get the best of all worlds, i.e., a extremely simple and efficient algorithms that colour with few colours significantly outperforming all existing results. 

We use the simplest conceivable randomised distributed colouring algorithm as our baseline approach:

\begin{tcolorbox}
\textbf{Random Colour Trial (RCT):}
Each node randomly picks a uniformly random candidate colour from its palette of available colours, i.e., the whole colour palette without the colours used by its already permanently coloured neighbours, and sends this colour to its neighbours while receiving their colours. If no neighbour tries to get coloured with the same candidate colour, the node permanently adopts the colour, otherwise it discards the candidate colour and retries in the next round with a new random candidate colour from an updated palette. \\
This process repeats until all nodes have a valid colour.
\end{tcolorbox}

It is simple and local---when using $\Delta + 1$ colours, where $\Delta$ is the maximum degree of the graph---it is known to converge in $\bigO(\log n)$ rounds with high probability on any graph~\cite{BEPS,johansson99}. As messages only contain colours, it also runs in the CONGEST model, where communication is restricted to $\bigO(\log n)$-bit messages per edge per round.

\subparagraph{Why $\Delta + 1$ colours?} Vertex colouring with $\Delta+1$ colours is the most studied colouring problem in  distributed computing and many other related modern models of computing like streaming~\cite{ACK19}, massively parallel computation~\cite{CDP21}, local computation algorithms~\cite{FHK16}. This stems from a simple observation: using $\Delta+1$ colours guarantees that every node always has an available colour, regardless of how the colouring unfolds, ensuring termination and correctness.
In distributed settings, it is known that already colouring with a single colour less changes the entire nature of the problem. In order to make fast progress distributed colouring algorithms colour large parts of the network need in parallel, but doing so carelessly, may result in an uncoloured subgraph on which we can even existentially not complete the colouring.  As a result there is an $\Omega(\log_{\Delta}\log n)$ lower bound for colouring low-degree graphs~\cite{BFHKLRSU16}, and algorithms are  more involved and slower~\cite{GHKM18,FHM23,BBN25}. More complex and significantly slower algorithms to colour with fewer colours exists for special graph families such as triangle-free graphs~\cite{PS15}, planar graphs~\cite{planar-graphs1}, or graphs with bounded arboricity \cite{BE10}.

\subsection{Our Contribution.}
We show that the simple \RCT algorithm performs exceptionally well on real-world graphs using significantly fewer colours than the best existing distributed algorithms. In particular, it colours a hyperbolic random graph in just $2$ rounds, \aas\footnote{An event $A$ holds \emph{asymptotically almost surely} (\aas) if $\Pro{A} \in 1 - o (1)$.} Moreover, we show that variants of \RCT achieve a similar performance--some using a number of colours close to the chromatic number--all while remaining extremely simple. For most of these algorithms, we establish nearly tight lower bounds, showing they cannot run faster or use fewer colours than described.


 To make sense of our formal results, we first give some minimal background on hyperbolic random graphs (HRGs). For more details and the formal definition see \Cref{sec:prelims} or the excellent exposition in \cite[§3.3]{katz-diss-23}. Hyperbolic random graphs (HRGs) are a generative graph model where vertices are thrown onto a hyperbolic disk and edges connect a pair of vertices if the pair is close in hyperbolic distance to each other. A power-law degree distribution emerges from this process and it is one of the typical properties of a HRG~\cite{gpp-rhg-12, pkbv-info-2010}. More formally, the  probability that a vertex has degree $k$ is given by $\sim k^{-(2\alpha + 1)}$ where the model parameter $\alpha\in (1/2,1)$ influences the probability distribution on the hyperbolic disk and thus indirectly  the power-law exponent. It turns out that many fundamental graph properties of HRGs like the maximum degree~\cite{gpp-rhg-12}, the spectral gap \cite{hrg-spectral}, the second largest component \cite{km-slcrhg-19}, the tree-width \cite{bfk-tw-2016} or the clique number, chromatic number and degeneracy \cite{bmrs-stacs-25} depend on $\alpha$. We call a graph $G$ a \emph{typical hyperbolic random graph} if $G$ fullfils all properties that hold \aas on a hyperbolic random graph. As our first contribution, we show the following theorem for \RCT:

\begin{restatable}[Random colour trial]{theorem}{mainTheorem}\label{thm:main-theorem}\text{ }
\begin{compactenum}
    \item For any constant $\eps>0$, \RCT executed with $\eps\cdot \Delta$ colours on a typical hyperbolic random graph with maximum degree $\Delta$ terminates after exactly 2 rounds \aas 
    \item There exists a constant $\delta >0$, such that \RCT executed with $n^{\delta}\cdot \chi$ colours on a typical hyperbolic random graph with chromatic number $\chi$ never terminates \wehp\footnote{An event $A$ holds \emph{with extreme high probability} (\wehp) if $\Pro{A} \in 1 - n^{-\omega(1)}$.}
\end{compactenum}
\end{restatable}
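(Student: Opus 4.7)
My plan splits along the two parts of the theorem.

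For \textbf{Part 1}, the approach is a two-round analysis. In round~1, a vertex $v$ of degree $d$ avoids every neighbour's colour with probability at least $(1-1/C)^d \ge \exp(-d/(\eps\Delta))$, which is bounded below by the positive constant $e^{-1/\eps}$. This is not strong enough to colour high-degree vertices in a single round, so the real work happens in round~2. For round~2 I would establish two properties that hold with high probability for every (potentially uncoloured) vertex $v$. First, its palette in round~2 still has size $\Omega(\Delta)$: by a standard occupancy argument, the $\le \Delta$ neighbours of $v$ sample from $\eps\Delta$ colours in round~1, so only a $(1-e^{-1/\eps})$-fraction of the palette is blocked in expectation, and a Chernoff bound controls the deviation. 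Second, the number of uncoloured neighbours of $v$ after round~1 is $o(\Delta)$. To bound this I would write the count as $\sum_{u\sim v} X_u$, where $\Pro{X_u=1}\le \deg(u)/C$, bound the expectation via typical HRG facts about $\sum_{u\sim v}\deg(u)$, and apply a bounded-difference concentration inequality over the randomness of round~1. With an $\Omega(\Delta)$ palette and $o(\Delta)$ uncoloured neighbours, each still-uncoloured $v$ succeeds in round~2 with probability $1-o(1)$, and a union bound over these vertices finishes the argument.

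For \textbf{Part 2}, I would focus on the maximum-degree vertex $v^{*}$. By typical HRG properties, $\Delta=\Theta(n^{1/(2\alpha)})$ \whp while $\chi = \Theta(n^{(1-\alpha)/(2\alpha)})$ up to $\polylog n$ factors, so $\Delta/\chi = n^{1/2-o(1)}$ and therefore $C = n^\delta\chi \le \Delta^{1-\Omega(1)}$ for all sufficiently small $\delta$. The strategy is to exhibit, in every round, a set of $\omega(C\log n)$ uncoloured neighbours of $v^{*}$, which forces the single-round success probability of $v^{*}$ to be at most $\exp(-\omega(\log n)) = n^{-\omega(1)}$; a union bound over $\poly(n)$ rounds then proves the claim. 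To produce such a pool, I would single out a set $B$ of neighbours of $v^{*}$ of degree at least $C\log^2 n$, use the power-law distribution and the HRG degree/radius correspondence to show $|B|\gg C\log n$ (because $v^{*}$ lies close to the disc centre and is therefore adjacent to essentially every other sufficiently high-degree vertex), and argue that each $u\in B$ remains uncoloured in every round \wehp since its per-round success probability is $\exp(-\deg(u)/C) = n^{-\omega(1)}$.

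\textbf{Main obstacle.} In Part~1, the hardest piece is bounding $\sum_{u\sim v}\deg(u)$ for the maximum-degree vertex $v$: a na\"ive worst-case bound of $\Delta^2$ would kill the argument, so one must formalise as a lemma the typical-HRG fact that almost all of $v$'s neighbours are low-degree vertices contributing only $O(1)$ in expectation, giving $\sum_{u\sim v}\deg(u) = o(\Delta\cdot C)$. In Part~2, the obstacle is robustness across rounds: the pool $B$ shrinks as its members occasionally get coloured and as their palettes evolve, so a coupling or martingale argument is needed to show that $|B|$ stays $\omega(C\log n)$ throughout $\poly(n)$ rounds \wehp, rather than merely for a single round.
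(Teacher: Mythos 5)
Your Part~1 follows essentially the same route as the paper: show the palette of every vertex stays of size $\Omega(\Delta)$ after round~1, show the uncoloured degree collapses, and finish in round~2. Two caveats: (i) the theorem says \emph{exactly} two rounds, so you also need the easy complementary statement that the algorithm does \emph{not} finish after round~1 (the paper does this with a DFS exposing $\Theta(n)$ edges and a Chernoff bound); (ii) your closing step ``each uncoloured $v$ succeeds with probability $1-o(1)$, union bound'' is too weak as stated, since there are polynomially many survivors of round~1. You need the per-vertex round-2 failure probability to be polynomially small and, crucially, to use the bound $\min(\deg(v),\,\text{total uncoloured})$ on the uncoloured degree -- the paper's layer-by-layer case analysis exists precisely because using only the global maximum uncoloured degree makes the expected number of round-2 failures blow up for $\alpha>3/4$. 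Your ``main obstacle'' remark shows you see this, but the resolution has to be built into the union bound, not just into the degree-sum lemma.

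Part~2 has a genuine gap: your mechanism cannot prove ``\emph{never} terminates.'' You propose to show that in every round the maximum-degree vertex (and each member of the pool $B$) has per-round success probability $n^{-\omega(1)}$ and to union-bound over rounds. That yields ``does not terminate within $T$ rounds'' for $T$ bounded in terms of that probability, but the theorem asserts non-termination over \emph{all} rounds $t=1,2,3,\dots$; a per-round success probability that is tiny but positive, conditioned on the past, is compatible with eventual termination, and no martingale or coupling argument can rescue this because the pool $B$ genuinely erodes over super-polynomially many rounds. What the paper does instead is construct an \emph{absorbing} obstruction after round~1: using the Leaves Lemma, it finds (via a birthday-paradox argument inside the clique formed by a central layer) two adjacent vertices $u,v$ that try the same candidate colour in round~1 and each of which has $\Omega(n^{\zeta_{\min}})$ leaf neighbours; with $|\Psi|\le \eps n^{\zeta_{\min}}/\log n$ the leaves permanently consume every colour except the one $u$ and $v$ both tried, so $u$ and $v$ are \emph{colour locked} on the same colour and collide deterministically in every subsequent round -- the partial colouring cannot even existentially be completed. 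This is what makes ``never'' provable. (Separately, your value $\chi=\Theta(n^{(1-\alpha)/(2\alpha)})$ is incorrect; the chromatic number of a typical HRG is $\Theta(n^{1-\alpha})$, which is what makes $n^{\delta}\chi$ fall below the paper's threshold $n^{\zeta_{\min}}/\log n$ for small $\delta$.)
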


\smallskip

\Cref{thm:main-theorem} improves upon using the best existing algorithm for the problem in several ways. It is much simpler, it is faster, and it uses fewer colours. Moreover, we show in \Cref{pro:not-two-rounds-whp} that \RCT cannot colour a typical hyperbolic random graph \whp\footnote{An event $A$ holds \emph{with high probability} (\whp) if $\Pro{A} \in 1 - \bigO(1/n)$.} in two rounds.

\smallskip

\textbf{How fast can we colour in general?}
As mentioned, Linial's seminal $\Omega(\log^* n)$ lower bound for deterministic distributed colouring with $\Delta+1$ colours was extended to randomised algorithms by Naor~\cite{Linial-92,Naor91}. These bounds are often cited as applying to general graphs--but in fact, they only hold on graphs with maximum degree two, i.e., collections of paths and rings. For larger $\Delta$, the best published lower bound is that $\Delta+1$ colouring (and even $\bigO(\Delta)$ colouring) cannot be solved in one round with a deterministic algorithm~\cite{szegedy93,Kuhn2006On,M21}. Interestingly, for sufficiently large $\Delta$--at least some $\poly\log n$--there exist algorithms that achieve $\Delta+1$ colouring in $\bigO(\log^* n)$ rounds \cite{HN21} significantly outperforming the $\poly\log\log n$ time complexity for general graphs~\cite{CLP18}. Whether this is tight, or whether constant-round algorithms might suffice even for $\bigO(\Delta)$ colouring, remains one of the most fundamental open problems in distributed computing. While we do not claim that our result of \Cref{thm:main-theorem} extends to general graphs, it is nevertheless striking that on HRGs, the simplest randomised algorithm achieves proper colouring with few colours in exactly two rounds. Either designing an $\bigO(1)$-round $\Delta+1$-colouring algorithm for all graphs with sufficiently large $\Delta$ or ruling out such an algorithm would be considered a major (and maybe unexpected) breakthrough.

\begin{figure}[t]

\centering
\includegraphics[height=0.23\textheight]{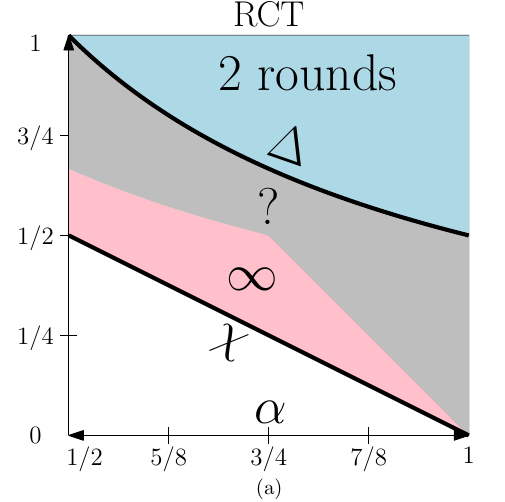}\hfill
\includegraphics[height=0.23\textheight]{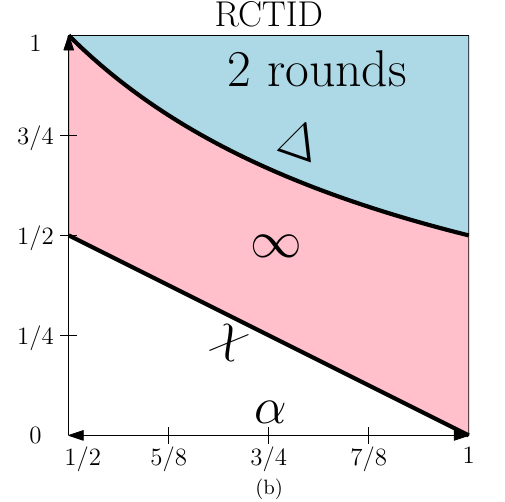}\hfill
\includegraphics[height=0.23\textheight]{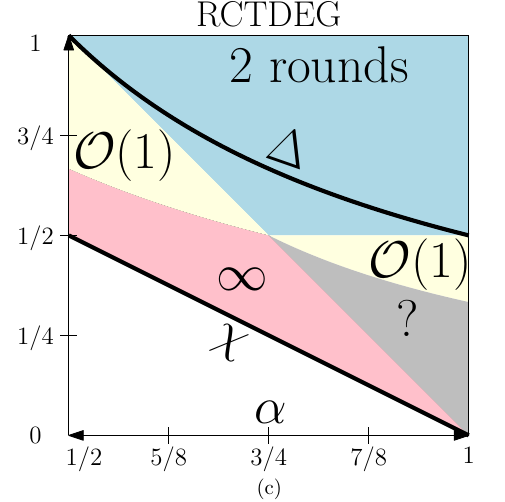}

 \caption{Comparison between the random colour trials \RCT,~\RCTID~~and~\RCTDEG in number of rounds depending on the number of colours. The parameter $\alpha \in (1/2,1)$ controlls the average degree (x-axis) and $f(\alpha)$ (y-axis) represents the exponent of the polynomial $n^{f(\alpha)}$ which is the colour space used by the algorithm. The heavier lines indicate the exponents $f(\alpha) = 1/(2\alpha)$ of the maximum degree $\Delta$ or rather the exponent $f(\alpha) = 1-\alpha$ of the chromatic number $\chi$ for a typical hyperbolic random graph. (a)~\RCT does not terminate in the red area \whp, does terminate in 2 rounds \aas in the blue area, and the grey area is unknown. (b)~\RCTID does not terminate in the red area \whp and terminates in the blue area in 2 rounds \aas (c)~\RCTDEG terminates in 2 rounds \aas in the blue area, terminates in constant rounds in the yellow area \aas, does not terminate in the red area~\whp and the grey area is unknown.}
\label{fig:plot}

\end{figure}

\subparagraph{Limitations to colouring HRGs with random colour trials.}
When considering the minimal number of colours required for \RCT to legally colour a hyperbolic random graph, a natural benchmark is the chromatic number $\chi\ll \Delta$. However, even for small constant choices of $\eps$ in \Cref{thm:main-theorem} part 1, \RCT uses substantially more than $\chi$ colours. In \Cref{thm:main-theorem} part 2  we show that \RCT cannot colour with close to the optimal number of colours $n^{\Theta(1)}$ factor difference remains. See \Cref{fig:plot} for an illustration of the gap between upper and lower bounds and their relation to the chromatic number of all of our main results. 

\smallskip

\subparagraph{Random colour trial with ID priority (\boldmath\RCTID).}  At this point, one might wonder whether a modifications of \RCT could yield even better results for hyperbolic random graphs--specifically, whether it could colour the graph within two rounds with a smaller error probability or use fewer colours. In \RCT all nodes back-off whenever there is a conflict. Following prior work \cite{BEPS}, a natural variant to mitigate this issue that has also shown to be helpful in simplifying proofs is to introduce unique IDs for breaking ties: instead of a vertex discarding its candidate colour whenever there is any neighbour with the same candidate colour, a vertex discards its candidate colour only if it has a neighbour with a smaller ID that tries the same candidate colour. For example the vertex with the smallest ID  always permanently adopts its candidate colour. We call the resulting algorithm \emph{random colour trial with ID priority} (\RCTID). Indeed, we show that the probability of \RCTID not finishing within two rounds on a typical HRG is decaying polynomially in $n$, which is stronger than the guarantees for \RCT shown in \Cref{thm:main-theorem}. Additionally, we show that one can essentially not improve upon this result. The minimal amount of colours required for termination is $\tilde\Omega(\Delta)$\footnote{We write $f(n) \in \Tilde{\Omega}(g(n))$ if there exists a constant $c>0$ such that $f(n) \leq \log^{-c}(n)\cdot g(n)$} (see \Cref{fig:plot}b.). We sum up our findings for \RCTID in the following theorem: 

\begin{restatable}[Random colour trial with ID priority]{theorem}{rctidTheorem}\label{the:rctid}\text{ }
\begin{compactenum}
    \item  For any constant $\eps>0$ there exists a constant $c>0$, such that \RCTID executed with $\eps\cdot \Delta$ colours on a typical HRG with maximum degree $\Delta$ terminates after exactly 2 rounds with probability $1 - n^{-c}$.

\item \RCTID with $\bigO( \Delta\cdot \log^{-3}n)$ colours on a typical HRG with maximum degree $\Delta$ never terminates \wehp
\end{compactenum}
\end{restatable}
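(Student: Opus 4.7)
My plan is to prove both parts via a per-vertex analysis that exploits the degree structure of a typical HRG.

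For part~1, fix a vertex $v$ of degree $d_v$ and let $N^-(v)$ denote its lower-ID neighbours. In round~1, $v$ becomes permanently coloured iff no $u \in N^-(v)$ picks the same uniform random candidate from the palette of size $\eps\Delta$; the round-1 failure probability is therefore at most $d_v/(\eps\Delta)$. Conditional on $v$ being uncoloured after round~1, $v$ fails in round~2 iff some $w \in N^-(v)$ is itself uncoloured after round~1 and picks the same round-2 candidate as $v$. The round-2 candidates are independent of round~1 and drawn uniformly from palettes of size at least $\eps\Delta - d_v$ and $\eps\Delta - d_w$ respectively, which lets me upper-bound the two-round failure probability of $v$ by a double sum over pairs $u,w \in N^-(v)$ with each term of order $(d_u/(\eps\Delta))(d_w/(\eps\Delta))\cdot 1/(\eps\Delta - \max(d_v,d_w))$.

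To push this to $\bigO(n^{-c-1})$ per vertex (so that a union bound over all $n$ vertices yields the claimed $\bigO(n^{-c})$ failure probability), I would use two properties of a typical HRG: the sum of degrees in the neighbourhood of $v$ is at most $\poly(d_v)\cdot\polylog(n)$, and only $\polylog(n)$ vertices have degree $\Theta(\Delta)$. For ``low-degree'' vertices with $d_v \leq \eps\Delta/2$, both palettes in round~2 have size $\Omega(\eps\Delta)$ and the bound falls out directly. For the rare ``high-degree'' vertices with $d_v$ close to $\eps\Delta$, I would exploit that the lower-ID neighbourhood of a central vertex in a typical HRG is dominated by peripheral (low-degree) vertices, which keeps the terms in the double sum small even when $\eps\Delta - d_v$ is modest.

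For part~2, the idea is to show that the maximum-degree vertex $v$ has its palette exhausted \wehp, blocking it forever. With $C = \bigO(\Delta/\log^3 n)$ colours, the round-1 success probability for $v$ itself is only $(1-1/C)^{d_v} \leq \exp(-\Omega(\log^3 n))$, so $v$ is uncoloured after round~1 \wehp In a typical HRG, $v$ has $\Omega(\Delta) = \Omega(C\log^3 n)$ neighbours of polylogarithmic degree, each of which is coloured in the first few rounds with constant probability and, conditional on success, receives a uniform random colour from the set of $C$ colours. A coupon-collector argument then shows that all $C$ colours are covered among these coloured neighbours \wehp, so $v$'s palette becomes empty and \RCTID cannot terminate.

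The main obstacle, in my view, is the high-degree regime of part~1: the elementary palette bound degenerates for $d_v \approx \eps\Delta$, and one must carefully combine the HRG geometry with the colour-choice randomness to extract a polynomially small per-vertex failure bound. Part~2 is conceptually cleaner but requires verifying that enough low-degree neighbours of $v$ are coloured sufficiently independently for the coupon-collector step to reach the \wehp scale.
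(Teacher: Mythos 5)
Your round-by-round collision accounting is the right starting point and, recast as a first-moment bound on the \emph{total} number of vertices surviving two rounds, it is close to what the paper actually does (bound $\Exp{Z_\ell}$ layer by layer, apply Markov, union-bound over the $\bigO(\log n)$ layers). However, the plan as stated --- drive the per-vertex failure probability down to $\bigO(n^{-c-1})$ and union-bound over all $n$ vertices --- cannot succeed: for a hub vertex $v$ of degree $\Theta(\Delta)=n^{1/(2\alpha)\pm o(1)}$, round 1 fails with constant probability and the expected number of uncoloured lower-ID neighbours entering round 2 is about $\sum_{w\in N(v)}\deg(w)/(\eps\Delta)\approx n^{1/\alpha-1}$, so the genuine two-round failure probability of $v$ is of order $n^{-(1-1/(2\alpha))}\gg n^{-1}$. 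The saving grace is that there are only about $ne^{-\alpha\ell}$ vertices of degree about $e^{\ell/2}$, so the \emph{sum} of the per-vertex bounds is still polynomially small; you must organise the union bound by degree, which is exactly the role of the paper's layer decomposition. Two further points: your auxiliary claim $\sum_{w\in N(v)}\deg(w)\le\poly(\deg(v))\cdot\polylog(n)$ is false (a leaf attached to the hub already has neighbourhood degree sum $\Delta$), and for the vertices of degree at least $\eps\Delta$ the trivial palette bound $|\palette|-\deg(v)$ is vacuous; the paper needs a dedicated balls-into-bins argument (\Cref{lem:slack-rct}) to show their round-2 palette is still $\Omega(\Delta)$ \wehp, and ``most lower-ID neighbours are peripheral'' does not substitute for it, since those peripheral neighbours can still collectively occupy the entire colour space.

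For part 2 the architecture is right, but the coupon-collector step is precisely where the difficulty sits and your proposal leaves it unresolved. If the blockers are generic neighbours of polylogarithmic degree, the events ``$w$ gets permanently coloured with colour $\colour$'' are neither independent across $w$ nor uniform over the palette once you condition on retention, and colours adopted after round 1 are drawn from shrunken palettes; extracting a $1-n^{-\omega(1)}$ guarantee from such correlated draws is not routine. The paper's device is to use degree-one neighbours with higher ID-priority than the target vertex: such a leaf \emph{always} keeps its candidate colour, and the leaves' candidates are i.i.d.\ uniform over $\palette$, so covering all $\bigO(\Delta/\log^{3}(n))$ colours with $\Omega(\Delta/\polylog(n))$ leaves is an immediate Chernoff-plus-union-bound computation. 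This in turn rests on the Leaves Lemma (\Cref{lem:leaf}), a nontrivial structural statement guaranteeing \aas a vertex of degree $\Delta/\polylog(n)$ with a constant fraction of its neighbours being leaves --- it is not available for free for the literal maximum-degree vertex. Finally, you must fix the ID assignment: the target vertex has to be given lower priority than its blockers, otherwise it may simply win round 1 and the non-termination argument collapses.
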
 


\subparagraph{Colouring with $o(\Delta$) colours via degree priorities (\boldmath\RCTDEG).} The last variant of random colour trials we study on hyperbolic random graphs we call \emph{random colour trial with degree priority} (\RCTDEG). Again, in each round every uncoloured vertex chooses its candidate colour uniform at random from its available colour space. Then a vertex assigns its candidate colour if no neighbour with a larger degree tries the same candidate colour, breaking ties arbitrarily (for example by ID's). One can think of it as a greedy approach where larger degree vertices are given the priority and one aims to colour large degree vertices first. \RCTDEG is motivated by the arguments in our impossibility results for \RCT and \RCTID, in both of which, high degree vertices fail to find a free colour. In a nutshell, we show the following for \RCTDEG:

\begin{theorem}[Simplified version of \cref{the:rctdeg}]
\label{the:rctdegSimplified}
There exist constants $\delta_2 > \delta_1 > 0$ such that on a typical hyperbolic random graph $G$
\begin{itemize}
    \item \RCTDEG with colour space $\Delta^{1 - \delta_1}$ terminates after $2$ rounds \aas;
    \item \RCTDEG with colour space $\Delta^{1 - \delta_2 +o(1)}$ terminates after $\bigO(1)$ rounds \aas;
    \item  \RCTDEG with colour space $\Delta^{1 - \delta_2 - o(1)}$ never terminates \wehp if $\alpha \leq 3/4$. 
\end{itemize}
\end{theorem}

This theorem shows that one can colour real-world graphs exteremly fast with few colours. The main benefit is that the algorithm itself is extremely simple shifting the difficulty to the analysis. All our constant-time runtime upper bounds hold \aas or with probability $1-n^{-c}$ for some constant $c<1$. This is no coincidence. When colouring with $\Delta^{1-\delta}$ colours, then with probability $n^{-c}$, \RCTDEG quickly runs into a partial colouring that cannot be completed to a colouring of the whole graph.
\smallskip

While \RCTDEG uses $o(\Delta)$ colours and small messages, the problem whether there is a low-bandwidth algorithm of comparable simplicity that can colour with close to $\chi$ colours remains open.


\smallskip

\subparagraph{Non-bandwidth efficient Algorithms.} Lastly, we complement our results with a deterministic but in practice unrealistic LOCAL model algorithm to essentially colour with the optimal number of colours.

\begin{restatable}[Deterministic LOCAL colouring]{theorem}{deterministicLOCAL}\label{the:deterministic-colouring}
A threshold hyperbolic random graph can be coloured by a LOCAL deterministic distributed algorithm in constant many rounds and with $(1+o(1))\cdot \chi$ colours \wehp  
\end{restatable}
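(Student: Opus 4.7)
The plan is to execute a ``collect-and-broadcast'' strategy on a small, constant-diameter \emph{core} of the graph that already determines the chromatic number up to a $1+o(1)$ factor, and then extend the colouring to the periphery using the hyperbolic geometry. First I would pick a radial threshold $r^* \le R/2$ (with $R$ the disk radius) and let $K$ be the set of vertices with radial coordinate at most $r^*$. In the \emph{threshold} HRG model the hyperbolic triangle inequality guarantees that any two vertices in $K$ lie within hyperbolic distance $2r^* \le R$ of one another, so $K$ induces a clique in $G$. Calibrating $r^*$ against the clique-number analysis of \cite{bmrs-stacs-25} then yields $|K| = (1+o(1))\chi$ \wehp

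Since $K$ is a clique it has graph diameter $1$, so in one round the members of $K$ can elect the minimum-ID vertex $\ell$ as a leader, and in one further round $\ell$ learns every ID in $K$; this fully describes $G[K]$ (a complete graph on the transmitted IDs). The leader then locally assigns the colours $\{1,\dots,|K|\}$ to $K$ and broadcasts the assignment back in a single round, using the unbounded message size of the LOCAL model. This uses at most $|K| = (1+o(1))\chi$ colours on the core.

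It remains to colour the periphery. Choosing $r^*$ slightly strictly below $R/2$, the standard degree-tail bounds for HRGs ensure that the maximum degree of the induced subgraph $G[V \setminus K]$ is at most $n^{o(1)}$ \wehp I would then tessellate the outer annulus into cells of constant hyperbolic diameter: \wehp each cell contains $\polylog n$ vertices, and any vertex is graph-adjacent only to vertices in a bounded number of neighbouring cells. After one round of in-cell aggregation every periphery vertex deterministically computes its colour as a fixed function of its cell's content, drawn from a fresh palette of size $n^{o(1)} = o(\chi)$ disjoint from $\{1,\dots,|K|\}$. Combining all phases yields $(1+o(1))\chi$ colours in $\bigO(1)$ rounds \wehp

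The main obstacle is the periphery step. A bounded-degree periphery alone does not immediately admit constant-round deterministic colouring because of Linial-style lower bounds, so one has to genuinely exploit the hyperbolic structure---the fact that adjacency is a geometric predicate on positions, not merely a combinatorial one---to produce a valid colour from a short local view. Pinning down the exact cell decomposition, verifying the cell-cardinality and cross-cell adjacency bounds \wehp, and controlling the periphery palette so it remains $o(\chi)$, are the key technical steps; the remaining components---identifying the clique inside the inner disk and computing an optimal colouring at the leader---are essentially standard once the decomposition is in place.
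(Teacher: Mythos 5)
Your high-level architecture (colour a geometrically-defined inner region by collect-and-broadcast, then colour the rest with a small disjoint palette) matches the paper's, but two of your concrete steps fail, and the paper's proof works precisely because it makes different choices at those two points.

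First, the degree claim for the periphery is false. If you take $K=V\cap\B_0(r^*)$ with $r^*\le R/2$ so that $K$ is a clique, then a vertex at radius just above $R/2$ has expected degree $\Theta(ne^{-R/4})=\Theta(\sqrt n)$ by \Cref{lem:vertex-degree}, and most of its neighbours also lie outside $\B_0(r^*)$, so $\Delta(G[V\setminus K])$ is polynomial in $n$, not $n^{o(1)}$. The clique core is simply too small: the whole band of radii between $R/2$ and $(2-\eps)\log(n)$ consists of vertices that are neither in your clique nor low-degree. The paper's fix is to push the inner region out to radius $(2-\eps)\log(n)$, giving up the clique property; the inner subgraph is then not a clique, but \Cref{lem:const-diameter} shows it still has \emph{constant diameter} \wehp, so a leader can still learn it entirely and brute-force a $\chi$-colouring in $\bigO(1)$ LOCAL rounds. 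Only the vertices with radius at least $(2-\eps)\log(n)$ remain, and those genuinely do have degree at most $\Theta(n^{\eps/2})$ (\Cref{lem:outer-set}).

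Second, your periphery step does not go through. You correctly identify the Linial-type obstruction to constant-round deterministic colouring of a bounded-degree graph, but the proposed escape---exploiting that ``adjacency is a geometric predicate on positions''---is unavailable: in the LOCAL model the nodes are not given their coordinates, only IDs and the combinatorial graph, so no cell tessellation can be computed by the algorithm. (Even as a purely analytic device the tessellation is problematic: a vertex at radius near $R/2$ is adjacent to boundary vertices across an angular range of order $n^{-1/2}$, i.e.\ polynomially many constant-width cells, so cross-cell adjacency is not bounded.) The paper sidesteps all of this with a purely combinatorial trick: one round of Linial's colour-reduction colours the outer set with $\bigO(\Delta(O)^2\log n)$ colours, and choosing $\eps=(1-\alpha)/2$ makes this $\tilde{\bigO}(n^{(1-\alpha)/2})=o(\chi)$ since $\chi\in\Theta(n^{1-\alpha})$. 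The point is that one is allowed to be wasteful by a $\Delta(O)^2$ factor on the outer part because even that waste is negligible against $\chi$. Without this (or an equivalent) idea, your proof has a genuine hole; with it, your geometric machinery becomes unnecessary.
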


\smallskip

 Note that the probabilistic guarantee here is with respect to the distribution of the graph and not to the algorithm. In contrast to our random colour trial variations that also work in the more restrictive CONGEST model, the algorithm of \Cref{the:deterministic-colouring} heavily exploits the unbounded message size of the LOCAL model.

\subsection{Background on Hyperbolic random graphs}
\subparagraph{Power-law degree and clustering coefficient.} Many real-world networks like the internet or social networks share common properties like a power-law degree distribution \cite{ faloutsos1999power, vhhk-s-19} and a non-vanishing clustering coefficient \cite{Newman2003, PhysRevE.74.056114, Watts1998}. Krioukov, Papadopoulos, Kitsak, Vahdat, and Boguñá~\cite{kpk-h-10} introduced \emph{hyperbolic random graph} (HRG) where they use hyperbolic geometry for random graphs to incorporate both of these properties \cite{gpp-rhg-12}. In contrast to a random geometric graph (RGG) \cite{p-rgg-03} that lives in the Euclidean space, a hyperbolic random graph is generated in the hyperbolic plane  $\bH$. One restricts the plane to a disk of radius $R$ and then $n$ vertices are sampled into the disk where its radius $R$ scales with $n$. As opposed to the Euclidean plane where space expands polynomially, hyperbolic space expands exponentially, and as a result, most area of the disk is close to the boundary. This leads to most vertices being located close to the boundary of the disk while only a few vertices are close to the centre of the disk. The hyperbolic geometry also implies that vertices at the boundary of the disk have smaller (expected) degrees which gives rise to a heterogenous degree distribution. This is in contrast to the homogeneous degree distribution of RGGs while preserving large clustering coefficients of models based on Euclidean space. 
\subparagraph{Algorithmic properties.} The resulting graph has further properties one finds in many complex networks like a small diameter \cite{fk-dhrg-18, DBLP:conf/analco/KiwiM15, ms-k-19} and a giant component of linear size \cite{bfm-giant-15, fm-giant-18}. From the algorithmic side, many problems like minimum vertex cover \cite{katzmann-exactvc-2023, katzmann-approxvc-2023}, maximum independent set \cite{bfk-tw-2016}, the largest clique \cite{bfk-chrg-18}, the shortest path \cite{bffkmm-icalp-2022} or colouring \cite{bmrs-stacs-25} have been studied in the centralised setting. Moreover, the two core properties of a non-vanishing clustering coefficient and a power law degree distribution inherited by HRGs were also proposed for \emph{propositional satisfiability} by~Bläsius,~Friedrich,~Göbel,~Levy~and~Rothenberger~\cite{TommyAndy} as a possible explanation for the tractability of industrial SAT instances.

\subsection{Background on Distributed Graph Colouring}
Graph colouring is one of the most-studied problems in in the field of distributed computing and we have already mentioned several results. Many early results are summarized in the excellent monograph by Barenboim and Elkin dedicated to distributed graph colouring \cite{barenboimelkin_book}.
Regarding general upper bounds when colouring with $\Delta + 1$ colours, in a breakthrough result, Rozho\v{n} and Ghaffari introduced a $\poly\log n$ deterministic algorithm for the network decomposition problem which due to reductions \cite{SLOCAL17,newHypergraphMatching} applies to a huge class of problems including the classic $\Delta+1$-vertex colouring problem, and also implies a $\poly\log\log n$-round randomised algorithm for $\Delta+1$-vertex colouring \cite{CLP18}.  For the more restrictive CONGEST model~\cite{paleg-00}, Halld{\'{o}}rsson,~Kuhn,~Maus~and Tonoyan later accomplished the same efficiency~\cite{HKMT21}.  Halld{\'{o}}rsson,~Kuhn,~Nolin~and Tonoyan matched these bounds for $\Delta+1$-list colouring~\cite{HKNT22} and $deg+1$-list colouring \cite{HNT22}. The current state of the art for distributed colouring in the LOCAL model stems from  the celebrated work by Ghaffari~and~Grunau~\cite{GG24}. They designed an $\tilde{O}(\log^{5/3}n)$-round algorithm for the maximal independent set problem, that if combined with a reduction by Luby \cite{luby86} and the aforementioned works also yields a  $\tilde{O}(\log^{5/3}\log n)$-round randomised algorithm for $\Delta+1$-vertex colouring. Here $\tilde{O}$ hides factors  that are logarithmic in the argument.

For graphs with a relatively small maximum degree $\Delta$, there are several deterministic algorithms that colour in  $f(\Delta)+ \bigO(\log^*n)$ rounds~\cite{barenboim15,BEG17,FHK16,MT20} while for graphs with maximum degree $\Delta$ larger than $\poly\log n$, there are randomised algorithms which colour within $\bigO(\log^* n)$ rounds~\cite{SW10,HN21}.

\smallskip 

All aforementioned results consider a colour space of size $\Delta + 1$. Eventhough Brook's theorem says that all graphs except for odd cycles and cliques are colourable with at most $\Delta$ colours~\cite{Brooks_1941}, using $\Delta$ instead of $\Delta + 1$ colours makes the problem much more involved and there exists an entire body of work dedicated towards this direction~\cite{PS95,GHKM18,FHM23,HM24,BBN25}

When considering more specialised graph classes, the infamous four~colour~theorem states that any planar graph can be coloured with $4$ colours~\cite{4-color-part-1, 4-color-part-2}. Chechik~and~Mukhtar~\cite{planar-graphs1} presented a deterministic algorithm that colours a planar graph with $6$ colours in $\bigO(\log(n))$ rounds which is runtime-optimal. In~\cite{planar-graphs2}~Aboulker,~Bonamy,~Bousquet,~and~Esperet established that for $4$ colours there is no polylogarithmic time algorithm for general planar graphs while for $5$ colours Postle showed in~\cite{planar-graphs3} that planar graphs are colourable in polylogarithmic time. 

For graphs with arboricity $a$, Barenboim~and~Elkin accomplished an algorithm for the LOCAL model that runs in $\bigO(\log a \log n)$ rounds and produces for any constant $\eps > 0$ a colouring with $\bigO(a^{1+\eps})$ colours~\cite{barenboim10}. Since the chromatic number $\chi$ and the arboricity $a$ of HRGs are bounded by $a \in \Theta(\chi) \in n^{\Theta(1)}$~\whp~\cite{bmrs-stacs-25,bfk-tw-2016}, this implies that HRGs can be coloured in $\bigO(\log^2(n))$ rounds using $\bigO(\chi^{1+\eps})$ colours~\whp over the randomness of $G$.

To the best of our knowledge, the only previous work where distributed colouring has been studied on any random graph model, is the work by Krzywdziński~and~Rybarczyk~\cite{random-graphs-distributed-colouring}. Here the authors considered Erd\H{o}s--Réyni graphs and the LOCAL model without any constraints on the size of the messages for which they obtain a $\bigO(\log\log n)$ rounds algorithm when the graph has average degree $o(\sqrt{n})$.

\section{A Bird's Eye View on our Analysis}
One of the major challenges of analysing randomised algorithms on hyperbolic random graphs arises from the fact that we need to deal with two different sources of randomness: the randomness of the graph and the probabilistic nature of a randomised algorithm. To circumvent the problems that arise from dealing with two different probability spaces at once, we make use of structural graph properties that emerge asymptotically almost surely on HRGs. We call a HRG with these  properties a \emph{typical} HRG.  We identify properties of typical HRGs and condition on those to ensure the desired fast performance of our randomised algorithms. To the best of our knowlegde, we are the first to analyse a randomised algorithm on HRGs. However, Kiwi, Schepers and Sylvester have analysed the behaviour of random walks on HRGs~\cite{Kiwi2024}.

In any case, the main difficulty then lies in identifying the aforementioned properties, i.e.,  properties that (1) hold with a sufficiently large probability over the generation of the random graph, and that (2) are useful for designing a fast algorithm for those typical  graphs, or in our case to show that the given \RCT-based algorithms perform well on HRGs. Examples of properties that emerge with high probability in HRGs and are algorithmically exploited in other works are for example the \emph{dominance rule} for vertex cover \cite{katzmann-exactvc-2023, katzmann-approxvc-2023}, the \emph{degeneracy} for colouring \cite{bmrs-stacs-25} or a mixture of \emph{locality} and
\emph{heterogeneity} for the study of shortest path \cite{bffkmm-icalp-2022}. Bl\"{a}sius and Fischbeck \cite{bf-evacaga-22} provide evidence that these  structural properties may also appear in real-world networks, possibly explaining why simple algorithms outperform their worst-case bounds.

\subsection{Upper bounds for \RCT and the Layer Discretization}
In the classic runtime proof of \RCT one shows that each vertex gets coloured with a constant probability in each round of the algorithm, regardless of what has happened in the past, yielding a logarithmic runtime; one can also show that this is tight for worst-case input instances, see \Cref{app:RCTLowerBoundWorstCase}. In order to show that \RCT completes within few rounds on HRGs, we make use of their typical structure. We use a  discretisation of the disk into $\lfloor R \rfloor \in \bigO(\log(n))$ layers $\mathcal{A}_0, \mathcal{A}_1, \ldots, \mathcal{A}_{\lfloor R \rfloor - 1}$ similar to \cite{fk-dhrg-18}; the main advantage is that vertices in the same layer $\mathcal{A}_\ell$ have similar degrees of roughly $e^{\ell/2}$.  See \Cref{fig:neighbourhood-and-layers} for an illustration of the layers and the neighbourhood of a vertex following the geometry of the hyperbolic disk. 
\begin{figure}[t]
    \centering \includegraphics[height=0.3\textheight]{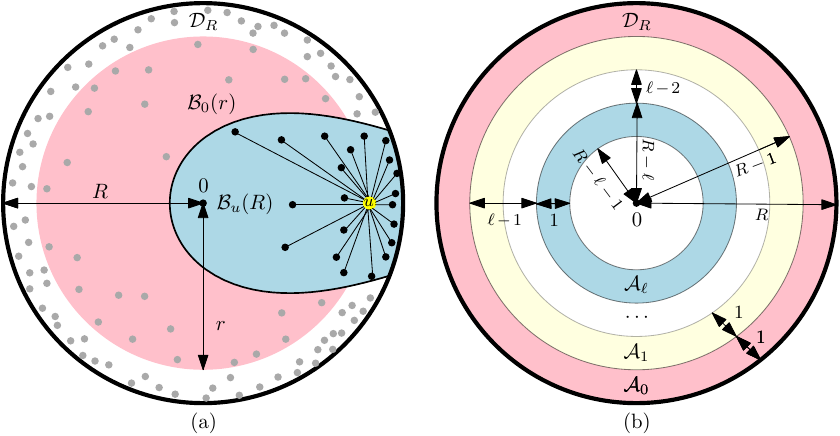}
    \caption{(a) Illustration of the neighbourhood $N(u)$ of a vertex $u$  given by $V \cap \B_u(R)$ (blue region) following the geometry of the hyperbolic disk. The red area is the ball $\B_0(r)$ centred around the origin for some $r$. (b)~Sketch of  layer $\mathcal{A}_0$ (red area), layer $\mathcal{A}_1$ (yellow area) and some layer $\mathcal{A}_\ell$ (blue area) for some $0< \ell<R$.}
    \label{fig:neighbourhood-and-layers}
\end{figure}
In the sequel, we use different arguments and measures of progress for different layers. The probability of a small degree vertex to be coloured is much larger than constant, in fact, the probability to be unsuccessful with colouring in the first round is upper bounded by $\deg(v)/\mathsf{|ColourSpace|}$, as $v$'s neighbours can block at most $\deg(v)$ colours with their candidate colours. Ignoring dependencies, this shows that in the layers with constant degrees only $\bigO(n/\mathsf{|ColourSpace|})=\bigO(n/\Delta)=o(\sqrt{n})$  vertices remain uncoloured after the first iteration, and heuristically, only $\bigO(\sqrt{n}/\Delta)\ll 1$ nodes after the second iteration. 

The main challenge is to make sufficient progress in  layers with large degree nodes. Imagine that we want to prove \Cref{thm:main-theorem} with $\eps=1/100$, that is, we aim to colour the graph with $\Delta/100$ colours. For a node $v$ in a layer with degrees $\gg \Delta/100$ this is troublesome, as we cannot complete the colouring for $v$ (not even existentially) if $v$'s neighbours use the whole colour space resulting in no available colours left for $v$. First, we show that the randomness of \RCT implies that the  individual palette of available colours of each node remains of size $\Omega(\Delta)$ after the first round. While large degree vertices initially may have up to $\Delta$ neighbours, we show that after the first round their uncoloured degree drops to $o(\Delta)$; the latter statement heavily exploits the structure of a typical HRG as most neighbours of a large degree node have a much better than constant success probability. We show that in subsequent rounds, the palette remains large, but due to the smaller uncoloured degree the probability to remain uncoloured decreases drastically. 

\subsection{Upper bound for \RCTID} In order to show a smaller error probability than the one we show for \RCT, we exploit that whether a vertex discards its candidate colour or not, does not necessarily depend on all its neighbours. Following the lines drawn by Barenboim,~Elkin,~Pettie~and~Schneider~\cite{BEPS}, 
we use that the ID's give us a natural order in which we reveal the randomness of the candidate colours (only for the purpose of analysis), such that when we reveal the candidate colour of some vertex $v$ , the probability that $v$ discards its colour only depends on the already revealed randomness. This allows us to apply an Azuma-Hoeffding type concentration bound. To obtain the desired overall result, we need to carefully apply this idea to the nodes in different layers separately exploiting the respective structural properties to obtain strong concentration for the drop in the number of uncoloured vertices per layers and down the line the uncoloured degree of vertices after the first round. This leads to the  smaller $1/\poly n$ error guarantees of \RCTID as compared to the \aas guarantee of \RCT. We remark that a seemingly easier approach for \RCT employed by prior work is actually incorrect~\cite{HKMT21}. As many neighbours have individual small probabilities to remain uncoloured, one may be tempted to show that the uncoloured degree of a vertex $v$ is stochastically dominated by a process of independent binomial random variables indicating whether $v$'s neighbours are coloured or not. If this was true, one could apply a simple Chernoff bound as done in~\cite{HKMT21} to show that the uncoloured degree of vertices drops significantly with high probability.  See \Cref{sec:stochastic-dominance} for a detailed discussion and an explanation of why this analysis in~\cite{HKMT21} is incorrect and more care is needed.

\subsection{Upper bound for \RCTDEG using \texorpdfstring{$o(\Delta)$}{o(Delta)} colours}
\begin{figure}[t]
    \centering \includegraphics[height=0.3\textheight]{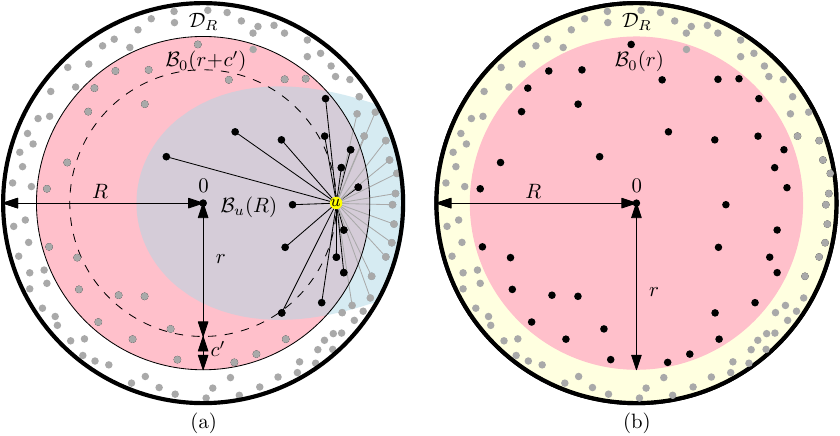}
    \caption{ (a)Illustration of a larger degree neighbourhood $N^+(u)$. It holds \whp that no vertex outside ball $\B_0(r+c')$ (red area) has a larger degree than $\deg(u) = |V \cap  \B_u(R)|$~(vertices in the blue area). Given this event, we have $N^+(u) \subseteq V \cap \B_0(r+c')$ and we can upper bound the larger degree neighbourhood of $u$ by $\deghigh{u} \leq |V \cap \B_0(r+c') \cap \B_u(R)|$~(intersection of blue and red area). (b) Sketch of "high degree vertices". Every vertex in $\B_0(R) \setminus \B_0(r)$ (yellow area) has a degree much smaller than the colour space and all vertices in $\B_0(r)$ (the red area) are coloured in the first round \aas}
    \label{fig:larger-degree-neighbourhood}
\end{figure}
For the analysis of the \RCTDEG algorithm we coin the term \emph{larger degree neighbourhood} in~\Cref{sec:larger-degree-neighbourhood}. The larger degree neighbourhood of a vertex $u$ with degree $\deg(u)$ is given by the set $N^+(u):= \{v \in N(u) : \deg(v) \geq \deg(u)\}$ (see \Cref{fig:larger-degree-neighbourhood}a for an illustration, intersection of the red and blue area). Since \RCTDEG gives priority to vertices with larger degree, this entails that the probability that $u$ is coloured only depends on the randomness of uncoloured vertices in $N^+(u)$ and the remaining colour palette of $u$. Our analysis takes advantage of this fact by showing that, given the colour space is large enough, all vertices with degree above some threshold
$\mathsf{LargeDegree} \ll \mathsf{|ColourSpace|}$ get coloured within the first round \aas (see red area in~\Cref{fig:larger-degree-neighbourhood}b for a sketch). We do so by showing that the entire set of $\mathsf{LargeDegree}$ vertices are rainbow coloured. We show that they all try different candidate colours among each other \aas and due to the priority over small degree vertices they can also adopt these colours permanently resulting in the desired rainbow colouring.  While the probabilistic guarantees for small degree vertices as compared to \RCTID mostly stay intact, such a rainbow colouring cannot be achieved for them, simply as there are way more small degree vertices than the size of the colour space. Thus the analysis requires a careful balance between the threshold of being a large degree vertex and the number of large degree vertices.


 The remaining uncoloured vertices in the second round~(see \Cref{fig:larger-degree-neighbourhood}b, yellow area) then have a degree that is smaller than the colour space given to the algorithm. This implies that, given the first round was successful at colouring all high degree vertices, the algorithm will produce a valid colouring eventually; we show that such a colouring is found quickly as well. For this we make use of the fact that the uncoloured vertices with a relatively large degree, have most of their larger degree neighbours already coloured.

\subparagraph{Why does this result in colouring with $o(\Delta)$ colours?} A key advantage of \RCTDEG is the successful colouring of the $\approx \sqrt{|\mathsf{ColourSpace}|}$ highest degree vertices in the first round: in contrast to \RCT, the rainbow candidate colours among high degree vertices directly ensure that they are also coloured when using \RCTDEG. Now, exploiting the power-law structure of a typical HRG, with  a colour space of size $o(\Delta)$, after the first round all non-coloured vertices have a degree that is smaller than the colour space. Thus, it is also ensured that a legal colouring for the yet to be coloured sub-graph exist. Indeed, we show that such a colouring for the remaining uncoloured sub-graph is found quickly as well.

\subparagraph{Behaviour of \RCTDEG for different power-law exponents.} In order to fully grasp our results for \RCTDEG, recall that the degree distribution of a hyperbolic random graph follows a power law, where the power law exponent $2\alpha + 1$ is controlled by the model parameter $\alpha \in (1/2,1)$. We refer to \Cref{the:rctdeg} in \Cref{sec:rctdeg} for formal statements. Due to the theorem's technicality we chose to explain the results via the illustration in \Cref{fig:plot}c instead. The figure parameterises the size of the colour space and the behaviour of \RCTDEG for different values of the parameter $\alpha$. If the size of the colour space falls into the blue area, the algorithm terminates within 2 rounds only, if it falls into the yellow area it terminates in constant time, and if it falls into the red area it does not terminate at all. Most strikingly, for $\alpha \leq 3/4$ our bounds on the necessary colour space for \RCTDEG to successfully colour a typical hyperbolic random graph in constant rounds are tight. The behaviour of the algorithm for a colour space size residing in the grey area marked with a question mark remains open.


\subsection{Limitations of \RCT, \RCTID and \RCTDEG.}
As any partial colouring can always be extended to a legal colouring at low-degree vertices regardless of how the rest of the graph has been coloured, any non-termination result should investigate the behaviour of large-degree vertices. For example, if some palette becomes empty during the algorithm's execution, it won't terminate. 
Palettes change over time and it is notoriously difficult to keep track of the probability distribution of vertices' palettes over multiple rounds of the algorithm. Thus, the key ingredient in proving our nearly matching lower bounds on the performance of these graph colouring algorithms is a structural result on HRGs. This result allows us to show that if the colour space is too small, a non-completable partial colouring is likely to emerge already after the first round.  This structural result may be of independent interest:  We show that most vertices in most of the layers $\mathcal{A}_0, \mathcal{A}_1, \ldots, \mathcal{A}_{\lfloor R \rfloor - 1}$ have many leaves, i.e., degree $1$ vertices, as neighbours. To be slightly more formal, for a vertex $u$ let $L(u)\subseteq N(u)$ be the set of its neighbours that are leaves and define  $U=\{u\in V \mid |L(u)|=\Theta(\deg(u))\}$  as the set of vertices for which a constant fraction of their neighbours are leaves. We prove the following lemma.

\begin{figure}[t]
    \centering \includegraphics[height=0.3\textheight]{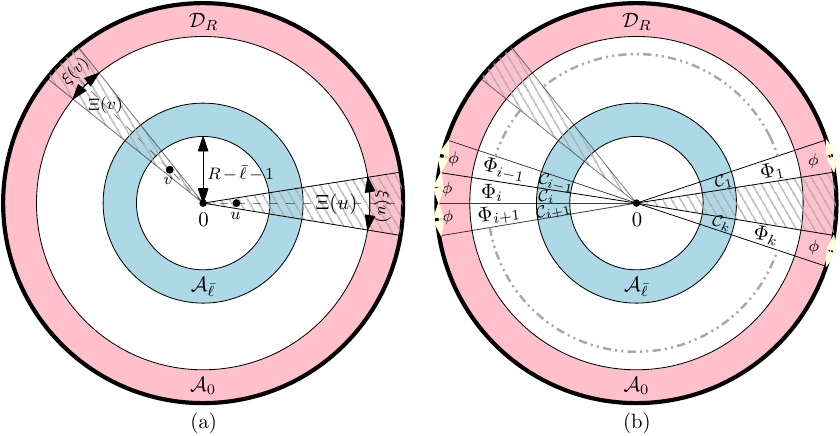}
    \caption{Sketch of the construction for \Cref{lem:leaf}: (a) Illustration of sectors we remove from $\disk$. (b) Excerpt of the subdisk $\disk\setminus \Xi$ divided into $k$ sectors where every sector $\Phi_i$ has angle $\phi$.}
    \label{fig:leaves}
\end{figure}

\smallskip

\noindent\textbf{Leaves Lemma} (informal).
\emph{ For most layers a constant fraction of its vertices are in $U$. 
}
\smallskip

See \Cref{lem:leaf} for the formal statement. 
To prove this we utilise a careful geometric construction such that the event for certain vertices to be a leaf is independent of most parts of the disk. To this end we 
dissect most of the disk into $k$ sectors $\Phi_1, \ldots, \Phi_k$ each of a fixed angle $\phi$ (see~\Cref{fig:leaves} for the construction). Then, for a fixed layer $\mathcal{A}_{\Bar{\ell}}$, we obtain for all $i \in [k]$: 
\begin{enumerate}[label=(\alph*)]
    \item\label{item:easy} in sector $\Phi_i$ we find with constant probability a vertex $w$ in layer $\mathcal{A}_{\Bar{\ell}}$ (area $\mathcal{A}_{\Bar{\ell}} \cap \Phi_i = \mathcal{C}_i$ in~\Cref{fig:leaves}b);
    \item\label{item:hard} $w$ has with constant probability a constant fraction of its neighbours as leaves located in $\mathcal{A}_0 \cap \Phi_i$.
\end{enumerate}
Proving \Cref{item:easy} is straightforward as long as angle $\phi$ defining the intersection of the sector with the area of the layer is large enough. The larger challenge lies in proving \Cref{item:hard} that we address in \Cref{lem:leaves}. In its proof we use that any vertex in layer $\mathcal{A}_0$ has a constant probability to be a leaf. Thus the region of the disk whose randomness we need to reveal (in order to show that many leaves adjacent to $w\in V \cap \mathcal{C}_i$) is relatively small; it intersects at most the two neighbouring sectors besides $\Phi_i$ itself (other than the removed part of the disk sketched in \Cref{fig:leaves}a). Thus, there are enough sectors where the events are independent. Essentially, this allows us to flip coins for the graph construction independently for a constant fraction of sectors, each with constant probability providing a fresh vertex $w$ satisfying (a) and (b). Thus, for each fixed layer, we can use Chernoff bound to show that the number of vertices satisfying the condition of the leaves lemma is well concentrated. Lastly, a union bound over the layers wraps up the proof.

\smallskip

The leaf structure is useful since events in the context of randomised colouring of vertices with degree $1$ are mostly independent of other events. In particular revealing the candidate colour of the only neighbour of the leaf, the probability that the leaf tries the same candidate colour as its neighbour is independent of any other event. In~\Cref{sec:colour-space}, we use the leaves lemma  to show that after a single round of \RCT with a too small colour space, that there are so called \emph{colour locked} vertices. These are vertices that have only one possible candidate colour to choose from as all other colours are used by its premantently coloured neighbours. We show that after the first round many vertices of a relatively large clique are colour locked on a uniform at random colour. When the number of colours available to \RCT is not too much larger than the number of vertices in this clique of colour locked vertices, the "birthday paradox" enforces that \whp there exists a pair in the clique that is colour locked on the same colour. Then not only the algorithm does not terminate anymore but even existentially the computed partial colouring cannot be completed.

 In~\Cref{sec:rctid-part2}, we use the leaves lemma for  our non-termination results for \RCTID. To show that \RCTID never terminates when the number of colours is less than $\Delta/\log^{\Omega(1) }n$, consider a large-degree vertex $u$ for which many of its small-degree neighbours, including many leaves, have a smaller ID.  In this case, one can show that the entire colour space is consumed by $u$'s leaf-neighbours in the first round. This implies that $u$ has no possible candidate colour for round two and beyond and the algorithm will never terminate.

 In \Cref{sec:colour-space-rctdeg},  we use a similar argument as for \RCT on large-degree vertices being colour-locked for \RCTDEG.  In principal, this is also the reason why a \emph{with high probability guarantee} for \RCTDEG is not possible if we use a colour space that is $\Delta^{1-\delta}$: using that few colours, with probability $n^{-c}$ for a constant $c < 1$ there exists a pair of high degree vertices which are neighbours picking the same colour while their respective colour spaces are consumed by their leaves. This birthday paradox can only be omitted with probability $1 - n^{-c}$ which implies that with probability at least $n^{-c}$ the algorithm never finishes.

\section{Conclusion}\label{sec:conclusion}
 In this paper we investigated how the performance of simple randomised distributed colouring algorithms on hyperbolic random graphs depend on the colour space. We showed that the most classic and simple algorithms undergo an abrupt transition from using $2$ rounds when using $\eps\cdot\Delta$ colours, over  constant rounds for fewer colours to never terminating when the colour space is further restricted. Interestingly, our positive results demonstrate that constant-time algorithms can break the fundamental $\Omega(\log^* n)$ worst-case runtime lower bound for the $\Delta+1$-vertex colouring problem proven by Linial and Naor for deterministic and randomised algorithms respectively \cite{Linial-92,Naor91}. 

All our results are based on several structural insights on the structure of hyperbolic random graphs that may be of independent interest. For upper bounds on the colour space required, all three randomised algorithms we analysed require significantly less colours than the maximum degree $\Delta$ and for any $\eps >0$, it is sufficient to use $\eps \cdot \Delta$. We showed that when we use the degrees of vertices as a heuristic for which vertices should be prioritised, the algorithm is highly likely to be successful even if the colour space is $\Delta^{1-\delta}$ for some $\delta >0$. Moreover, our upper bounds for this algorithm match our established lower bounds by a multiplicative polylogarithmic factor for certain regimes of the power law degree distribution inherited by hyperbolic random graphs.


\subparagraph{Future directions.} There are several intriguing directions to go from here. 

\begin{compactitem}
\item 
First, there remains the question whether \RCTDEG is truly superior to \RCT. In the area where \RCTDEG colours with $o(\Delta)$ colours, the performance of \RCT remains open. Surprisingly, \RCTID cannot close this gap and does not terminate with $\bigO(\Delta/ \polylog n)$ colours (see also \Cref{fig:plot}).

\item Second, for $\alpha>3/4$, it is unclear if our lower bound on the colour space for \RCTDEG is tight for all regimes of $\alpha$ or if our upper bound can be improved upon (see grey area in~\Cref{fig:plot}c).

\item Third, it would be interesting to further investigate the existence of efficient deterministic  CONGEST colouring algorithms for hyperbolic random graphs. 

\item Fourth, it remains open whether another simple CONGEST algorithm can colour with $(1+o(1))\chi$ colours; as shown \RCTDEG  uses few colours but still significantly more than the optimal number of colours. Alternatively, it remains open to design such algorithms whose runtime bounds hold with high probability. 

\item Fifth, can other typical problems in the field like maximal independent sets be solved faster on hyperbolic random graphs than on worst-case instances \cite{BBHORS21}?
\end{compactitem}

\smallskip

In general, we  hope to spark further research on distributed algorithms and more generally sublinear algorithms on real-world graphs.

\section*{Outline of the rest of the paper} The remainder of this article is structured as follows. In \Cref{sec:prelims} we formally introduce the model of hyperbolic random graphs and give important definitions and notation. In \Cref{sec:structural} we show structural results for HRGs that we make use of throughout the paper. We then analyse the behaviour of \RCT on hyperbolic random graphs in \Cref{sec:rct}. The results for \RCTID can be found in \Cref{sec:rctid}. \Cref{sec:rctdeg} contains the analysis of \RCTDEG. Our findings for the deterministic approach are given in \Cref{sec:deterministic}. 

\section{Hyperbolic Random Graphs}\label{sec:prelims}

\subparagraph{Hyperbolic plane.}
The central mathematical object in this work are \emph{hyperbolic random graphs} as introduced in their \emph{native representation} by Papadopoulos~et~al.~\cite{pkbv-info-2010}. We operate on the hyperbolic plane $\bH = [0, \infty) \times[0,2\pi)$ where a point $x \in \bH$ is equipped with a radial coordinate $r(x)$ and an angular coordinate $\varphi(x)$. We restrict our attention to the curvature -1 where the \emph{hyperbolic distance} $\dist(\cdot,\cdot)$ for two points $x,y \in \bH$ is  
\begin{align}\label{eq:hyperbolic_distance}
    \cosh(\dist(x,y)) = \cosh(r(x))\cosh(r(y)) - \sinh(r(y))\sinh(r(x))\cos(\varphi(x) - \varphi(y)).
\end{align}
We equip \bH with the topology induced by $\dist$.

Throughout this work, we use a subspace of the hyperbolic plane \bH; namely a disk of radius $R$ for which we write $\disk = [0,R]\times[0,2\pi)$, where we call the point $(0,0)$ the \emph{origin}. Since we restrict our space to $\disk$, the set of points with distance $r$ to $x\in\disk$ is given by the set $\B_x(r) = \{y\in\disk : \dist(x,y)\leq r\}\subseteq \disk$.

We now introduce a probability measure $\mu$ on \disk, which is parametrised by the model parameter $\alpha\in(1/2,1)$. For measurable $\mathcal S\subseteq \disk$, define
\begin{align*}
	\mu(\mathcal S) = \int_S \func(x) \dd x, 
    \qquad \func(x) = \frac{\alpha\sinh(\alpha x)}{2\pi(\cosh(\alpha R) - 1)}, 
\end{align*}
where $\rho$ is the density of $\mu$ with respect to the Lebesgue measure on \disk. 

\subparagraph{Hyperbolic random graphs.} We use the \emph{Poissonised version} of \emph{(threshold) hyperbolic random graphs} similar to Kiwi and Mitsche in~\cite{km-slcrhg-19} (see also \cite[§3.3.4]{katz-diss-23} for the model as defined here). A hyperbolic random graph (\emph{HRG}) is a graph $G = (V,E)$ with vertex set $V$ and edge set $E$ that is obtained by the following inhomogeneous Poisson point process: let $n$ be an integer which is the expected value for the Poisson random variable $|V| = N$, i.e., the number of vertices. Then set the radius of our disk $R := \log(n) + C$ for some $C \in \Theta(1)$ and let the intensity function at polar coordinates $(r, \varphi)$ for $0\leq r < R$ be $\intensity(r, \varphi):=n\rho(x)$. The set of vertices is a random variable $V = \{(r_1,\varphi_1), (r_2,\varphi_2),.., (r_N,\varphi_N)\}$ for which holds that for area $\mathcal{S} \subseteq \disk$ the number of expected vertices is given by $n\mu(\mathcal{S}) = \E{|V \cap \mathcal{S}|}$. 

To obtain our edge set $E$ in the threshold version, we connect a pair of vertices $u,v\in V$ if and only if their distance $\dist(u,v)$ is at most $R$ and write $G \sim \G$ for the resulting graph. In the temperature version of HRGs, an edge $\{u,v\} \in E(G)$ exists  with probability $p(u,v) = (1 + \exp\left(\frac{1}{2T} \left(\dist(u, v) - R\right)\right))^{-1}$ determined by both distance and a ``temperature'' parameter $T$ (see e.g. \cite[§3.1]{Krohmer2016}).

\subparagraph{Typical hyperbolic random graphs.} We call a threshold hyperbolic random graph that has all properties that occur \aas a \emph{typical hyperbolic random graph}. When we analyse randomised algorithms on typical hyperbolic random graphs, we assume that the input instance has all properties that hold \aas for a HRG. The probabilistic guarantees we then give for the algorithm are with respect to typical instances by conditioning on the event that the input graph $G$ is a typical hyperbolic random graph. We stress that typically, one strives for an \aas~probabilistic guarantee for hyperbolic random graphs like in the context of random walks~\cite{Kiwi2024}, the diameter~\cite{fk-dhrg-18, ms-k-19} or average distances~\cite{DBLP:journals/im/AbdullahFB17}.  Actually, the proofs for the diameter and random walks rely on the event that no vertex lies too close to the origin; however, this event does not hold \whp Also the upper bound on the maximum degree $\Delta$ in \Cref{the:max-degree} relevant for our results does not hold \whp and only \aas; see the discussion in~\Cref{sec:degree}.

In the following we collect some results by Gugelmann, Panagiotou and Peter~\cite{gpp-rhg-12} we make use of to prove that a typical hyperbolic random graph has the desired properties. For this it is convenient to characterise the connection of vertices in terms of their \emph{angular distance}. Since vertices are connected if and only if their distance is at most $R$, we define
\begin{align}\label{eq:angle-func}
    \theta_R(r_1,r_2) = \arccos\left(\frac{\cosh(r_1)\cosh(r_2) - \cosh(R)}
    {\sinh(r_1\sinh(r_2)}\right),
\end{align}
which denotes by \Cref{eq:hyperbolic_distance} the angle distance for two points with radii $r_1$ and $r_2$ such that their hyperbolic distance is exactly $R$. Thus, two vertices with an angle distance of at most $\theta_R$ are connected. For $\theta_R(\cdot,\cdot)$ the following is well known (\cite[Lemma 3.1]{gpp-rhg-12}).
\begin{lemma}\label{lem:max-angle}
Let $x,y \in \disk$ and $0 \leq r(x), r(y) \leq R$ and $r(x)+r(y) \geq R$. Then it holds
$$
\theta_R(r(x), r(y)) = \Theta(1)e^{\frac{R-r(x)-r(y)}{2}}.
$$
\end{lemma}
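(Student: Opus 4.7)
The plan is to derive the asymptotic from the closed-form expression in \Cref{eq:angle-func} by reducing the $\arccos$ to a ratio of hyperbolic sines that factors cleanly.

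Write $\theta = \theta_R(r_1,r_2)$. First, combine the trigonometric half-angle identity $1-\cos\theta = 2\sin^2(\theta/2)$ with the hyperbolic subtraction identity $\cosh(r_1-r_2) = \cosh(r_1)\cosh(r_2) - \sinh(r_1)\sinh(r_2)$ to rewrite
\[
    \sin^2(\theta/2) \;=\; \frac{\cosh(R) - \cosh(r_1-r_2)}{2\sinh(r_1)\sinh(r_2)}.
\]
Assuming without loss of generality that $r_1 \geq r_2$, a further application of the sum-to-product identity $\cosh A - \cosh B = 2\sinh(\tfrac{A+B}{2})\sinh(\tfrac{A-B}{2})$ with $A=R$, $B=r_1-r_2$ yields the clean form
\[
    \sin^2(\theta/2) \;=\; \frac{\sinh\!\left(\tfrac{R+r_1-r_2}{2}\right)\,\sinh\!\left(\tfrac{R-r_1+r_2}{2}\right)}{\sinh(r_1)\,\sinh(r_2)}.
\]
The hypothesis $r_1+r_2 \geq R$ together with $r_1,r_2 \leq R$ ensures that both arguments in the numerator lie in $[0,R]$ and that each numerator argument differs from the corresponding $r_i$ in the denominator by exactly $(r_1+r_2-R)/2$.

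Next, apply the asymptotic $\sinh(x) = \tfrac{1}{2}e^{x}(1 + O(e^{-2x}))$, valid once $x$ is bounded below by a positive constant. Pairing each numerator factor with the corresponding denominator factor and cancelling the common $\tfrac{1}{2}e^{r_i}$ factors gives
\[
    \frac{\sinh\!\left(\tfrac{R+r_1-r_2}{2}\right)}{\sinh(r_1)} \cdot \frac{\sinh\!\left(\tfrac{R-r_1+r_2}{2}\right)}{\sinh(r_2)} \;=\; \Theta(1)\cdot e^{-(r_1+r_2-R)} \;=\; \Theta(1)\cdot e^{R-r_1-r_2}.
\]
Since $r_1+r_2 \geq R$ forces $\theta \in [0,\pi]$, on which interval $\sin(\theta/2) = \Theta(\theta)$ uniformly, taking square roots yields $\theta_R(r_1,r_2) = \Theta(1)\cdot e^{(R-r_1-r_2)/2}$ as claimed.

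The main obstacle will be controlling the implicit constants in $\Theta(\cdot)$ near the degenerate boundary where one radius approaches $0$: then $\sinh(r_i)$ behaves like $r_i$ rather than like $\tfrac{1}{2}e^{r_i}$, and the exponential asymptotic breaks down. For standard HRG applications the relevant radii lie in $[\Omega(1),R]$, so this regime is excluded; the cleanest presentation would either impose a mild lower bound $r_i \geq c$ for an absolute constant $c$, or separately handle the small-$r$ regime by substituting the linear approximation of $\sinh$. With that caveat, the computation reduces to routine applications of hyperbolic trigonometric identities and exponential asymptotics of $\sinh$ and $\cosh$.
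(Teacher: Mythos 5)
Your derivation is correct, and since the paper does not prove \Cref{lem:max-angle} itself (it is imported verbatim from Gugelmann, Panagiotou and Peter as their Lemma~3.1), there is no in-paper argument to match; your route is a clean, self-contained version of the standard computation. The chain $1-\cos\theta = 2\sin^2(\theta/2)$, the hyperbolic subtraction identity, and the product formula $\cosh A - \cosh B = 2\sinh(\tfrac{A+B}{2})\sinh(\tfrac{A-B}{2})$ yields exactly $\sin^2(\theta/2) = \sinh(a)\sinh(b)/(\sinh(r_1)\sinh(r_2))$ with $a = r_1 - s$, $b = r_2 - s$, $s = \tfrac{r_1+r_2-R}{2}\geq 0$, and both $a,b \geq 0$ under the stated hypotheses; the final step $\sin(\theta/2)=\Theta(\theta)$ on $[0,\pi]$ is fine since $\arccos$ takes values there. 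Two remarks on the caveat you raise. First, the degenerate case $r(y)=0$ must genuinely be excluded ($\sinh(r_2)=0$ in the denominator of \Cref{eq:angle-func}, and the angle is meaningless at the origin), so the lemma as literally stated needs $r_1,r_2>0$. Second, your worry about small positive radii can be dissolved entirely rather than handled by a separate case: writing $\sinh(x)=\tfrac12 e^x(1-e^{-2x})$ \emph{exactly}, the ratio becomes $e^{-2s}\cdot\frac{(1-e^{-2a})(1-e^{-2b})}{(1-e^{-2r_1})(1-e^{-2r_2})}$; with the convention $r_1\geq r_2$ one has $a\geq R/2$ and $r_1\geq R/2$, so the first correction factor is $1-o(1)$, while $r_2/2\leq b\leq r_2$ gives $\frac{1-e^{-2b}}{1-e^{-2r_2}}\in[1/4,1]$ uniformly over all $r_2>0$. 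So the implicit constants are absolute and no case split is needed; with that small tightening your proof is complete.
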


For a ball of radius $r$ and the origin as its centre point (as depicted in \Cref{fig:neighbourhood-and-layers}a), one get via $\int_{0}^r f(\rho)d\rho$ the following.

\begin{lemma}\label{lem:measure-inner-disk}
    For any $0 \leq r \leq R$ we have
$\mu\left(\mathcal{B}_0(r)\right) = (1 - o(1))e^{-\alpha(R- r)}~.$
\end{lemma}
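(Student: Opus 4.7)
The plan is to exploit the rotational symmetry of the density $\func$. Since $\func(x)$ depends only on the radial coordinate of $x$, the angular integration over $\mathcal{B}_0(r)$ is trivial. First I note that the hyperbolic distance from the origin $(0,0)$ to any point $(\rho,\varphi)$ reduces, via \Cref{eq:hyperbolic_distance} with $\cosh(0)=1$ and $\sinh(0)=0$, to $\dist((0,0),(\rho,\varphi)) = \rho$. Hence $\mathcal{B}_0(r) = [0,r]\times[0,2\pi)$, and the ball is simply the polar rectangle of radial extent $r$.

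Next I would carry out the integration directly:
\begin{align*}
\mu(\mathcal{B}_0(r)) \;=\; \int_{0}^{2\pi}\int_0^r \func(\rho,\varphi)\, \dd\rho\, \dd\varphi \;=\; \int_0^r \frac{\alpha\sinh(\alpha\rho)}{\cosh(\alpha R)-1}\, \dd\rho \;=\; \frac{\cosh(\alpha r)-1}{\cosh(\alpha R)-1}.
\end{align*}

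The only remaining task is to massage this closed form into the desired asymptotic shape. Using the identity $\cosh(x)-1 = 2\sinh^2(x/2)$, or equivalently writing out the exponentials, the ratio becomes
\begin{align*}
\mu(\mathcal{B}_0(r)) \;=\; \frac{\sinh^2(\alpha r/2)}{\sinh^2(\alpha R/2)} \;=\; e^{-\alpha(R-r)}\cdot \frac{(1-e^{-\alpha r})^2}{(1-e^{-\alpha R})^2}.
\end{align*}
Since $R = \log n + C \to \infty$, the denominator $(1-e^{-\alpha R})^2$ is $1 - o(1)$. For the numerator, in every application of this lemma in the paper the radial coordinate $r$ of interest grows with $R$ (or is at least $\omega(1)$, as in the layer decomposition of \Cref{fig:neighbourhood-and-layers} where one works with $r \in [R - \bigO(\log R), R]$), so $(1-e^{-\alpha r})^2 = 1-o(1)$ as well, and the claimed asymptotic equality follows.

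The calculation is essentially routine and I do not foresee any significant obstacle; the only mild subtlety is that the stated asymptotic $(1-o(1))e^{-\alpha(R-r)}$ is tight only when $r$ is not bounded (otherwise the factor $(1-e^{-\alpha r})^2$ remains a constant strictly less than $1$). In the regimes in which the lemma is later invoked—bounding the mass of balls around the origin of radius that scales with $R$—this condition is automatically met, so the statement and its proof fit together cleanly.
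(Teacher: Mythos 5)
Your proof is correct and follows exactly the route the paper indicates (direct radial integration of the density, yielding $\frac{\cosh(\alpha r)-1}{\cosh(\alpha R)-1}$, then extracting the exponential asymptotics); the paper itself only gestures at the integral $\int_0^r \func(\rho)\dd\rho$ without writing out the details. Your observation that the $(1-o(1))$ factor is only accurate when $r\to\infty$ is a fair caveat that applies equally to the lemma as stated in the paper, and it is indeed harmless in all of the lemma's applications.
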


Moreover, we write $N(u) := \{v \in V : \{u,v\} \in E(G)\}$ for the \emph{neighbourhood} of $u$ and $\deg(u) := |N(u)|$ for the \emph{degree} of $u$ which is a Poisson random variable. The following corresponds to the expected degree of a vertex $u$ with radius $r$ (see also \Cref{fig:neighbourhood-and-layers}a)\footnote{The original bounds are slightly different but ours directly follow from using \cite[Corollary 5]{bks-hudg-23} for $\theta_R(r,x)$; we provide the proof in \Cref{sec:degree}.}.

\begin{restatable}[\cite{gpp-rhg-12}, Lemma 3.2 Vertex Degree]{lemma}{degree}\label{lem:vertex-degree}
    Let $G \sim \hrg$ and let $u \in V$ be a vertex with radius $r\geq 1$. Then, the expected degree of $u$ is 
    $$
    \E{\deg(u)} = n\cdot\mu\left(\B_u(R) \cap \mathcal{D}_R\right) \begin{cases}
        \leq (1+o(1))n\cdot \frac{\alpha e^{-r/2}}{\alpha-1/2}\\
        \geq  (1-o(1))n\cdot\frac{\alpha e^{-r/2}}{\pi(\alpha-1/2)}.
    \end{cases} 
    $$
\end{restatable}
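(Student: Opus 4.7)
The plan is to compute $\E{\deg(u)} = n \cdot \mu(\B_u(R) \cap \disk)$ explicitly and bound this measure. The first equality is automatic from the Poisson structure: since $V$ is an inhomogeneous Poisson point process with intensity $n\rho$, the Slivnyak--Mecke theorem gives that conditional on $u \in V$ the remaining points still form a Poisson process of the same intensity, so the expected number of those falling in $\B_u(R) \cap \disk$ equals $n \cdot \mu(\B_u(R) \cap \disk)$; by the threshold rule these are precisely the neighbours of $u$.

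To estimate the measure, I would dissect the ball by the radial coordinate $y$ of the candidate neighbour. If $y \leq R - r$ the hyperbolic triangle inequality places every such point within distance $R$ of $u$ regardless of its angle, so that inner slab contributes the full annulus measure $\mu(\B_0(R-r)) = (1-o(1))e^{-\alpha r}$ by \Cref{lem:measure-inner-disk}. If $y > R - r$ then only an angular window of half-width $\theta_R(r,y)$ around $\varphi(u)$ is admissible, so the outer slab contributes $\int_{R-r}^R 2\theta_R(r,y)\rho(y)\dd y$. Substituting $\rho(y) = \alpha\sinh(\alpha y)/(2\pi(\cosh(\alpha R) - 1))$ yields
\[
\mu(\B_u(R)\cap \disk) = (1-o(1))e^{-\alpha r} + \int_{R-r}^R \frac{\alpha\, \theta_R(r,y)\sinh(\alpha y)}{\pi(\cosh(\alpha R) - 1)} \dd y.
\]
Since $\alpha > 1/2$, the inner term $e^{-\alpha r}$ decays strictly faster than $e^{-r/2}$ and is absorbed into the $1\pm o(1)$ factor of the stated bound, so only the outer integral drives the leading asymptotics.

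For the outer integral I would use the sharp asymptotics of $\theta_R$ from Corollary~5 of \cite{bks-hudg-23}, which provide matching constants beyond the crude $\theta_R(r,y) = \Theta(e^{(R-r-y)/2})$ of \Cref{lem:max-angle}. Combined with $\sinh(\alpha y) = (1+o(1))e^{\alpha y}/2$ and $\cosh(\alpha R) - 1 = (1+o(1))e^{\alpha R}/2$ on the relevant range, the integrand reduces to a constant multiple of $e^{(R-r-y)/2}\cdot e^{(\alpha-1/2)y}\cdot e^{-\alpha R}$, and the integral collapses to
\[
(1\pm o(1))\cdot \frac{c\,\alpha\, e^{-r/2}}{\pi(\alpha - 1/2)}\bigl(1 - e^{-(\alpha - 1/2)r}\bigr),
\]
where $c$ is the leading constant from the bound on $\theta_R$ used. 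Taking $c = \pi$ from the upper bound $\theta_R(r,y) \leq (1+o(1))\pi\, e^{(R-r-y)/2}$ yields the claimed $(1+o(1))\,n\alpha e^{-r/2}/(\alpha - 1/2)$ upper bound, and $c = 1$ from the lower bound $\theta_R(r,y) \geq (1-o(1))\, e^{(R-r-y)/2}$ yields the claimed $(1-o(1))\,n\alpha e^{-r/2}/(\pi(\alpha-1/2))$ lower bound; the factor $1 - e^{-(\alpha - 1/2)r}$ is bounded away from $0$ for $r \geq 1$ and is absorbed into the $1 \pm o(1)$.

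The main obstacle is not the integration, which is elementary, but pinning down the leading constants of $\theta_R(r,y)$ in both directions. The generic $\Theta(1)$ constants of \Cref{lem:max-angle} only yield the correct order, not the separated upper and lower constants appearing in the statement --- this is exactly why the refinement of Corollary~5 of \cite{bks-hudg-23} is invoked in the footnote, while the remaining computation is routine exponential integration.
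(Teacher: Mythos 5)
Your proposal matches the paper's own proof in \Cref{sec:degree} essentially step for step: the same split into the inner ball $\B_0(R-r)$ plus the angular integral over $y \in [R-r,R]$, the same appeal to Corollary~5 of \cite{bks-hudg-23} to get the separated constants $1 \leq \tau \leq \pi$ on $\theta_R$, the same explicit integration of $e^{-x/2}\sinh(\alpha x)$, and the same absorption of the $e^{-\alpha r}$ term using $\alpha > 1/2$. The only (shared) imprecision is that the factor $1 - e^{-(\alpha-1/2)r}$ is merely bounded away from $0$ for constant $r \geq 1$ rather than $1-o(1)$, but the paper's proof glosses over this in exactly the same way, so your argument is faithful to it.
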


Finally, let $\Delta$ denote the maximum degree of a hyperbolic random graph $G$, i.e., $\Delta := \max_{v \in V}\deg(v)$. A hyperbolic random graph has the following maximum degree.

\begin{restatable}[\cite{gpp-rhg-12}, Theorem 2.4 Maximum Degree]{lemma}{maxDegree}\label{the:max-degree}
    Let $G\sim \hrg$ be a hyperbolic random graph. Then $\Delta < n^{\frac{1}{2\alpha}}\cdot\log(n)$ \aas\footnote{The authors claim \whp However, looking into proofs the probabilistic guarantee is only \aas according to our definitions. For completeness, we provide a proof for our here stated form in~\Cref{sec:degree}.} and $\Delta > n^{\frac{1}{2\alpha}}\cdot\log^{-2}(n)$ \wehp\

\end{restatable}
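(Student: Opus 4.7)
The plan is to combine the expected-degree formula of \Cref{lem:vertex-degree} (which decays as $e^{-r/2}$ in the radial coordinate) with the volume estimate of \Cref{lem:measure-inner-disk} for the inner ball, and then apply Chernoff bounds to the Poisson-distributed degrees. The key observation I will use is that, conditional on a vertex's position, its degree is exactly the count of a Poisson point process inside a fixed region and therefore concentrates sharply.

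For the upper bound I would introduce a threshold radius
\[
r^{*}:=(2-1/\alpha)\log n - 2\log\log n + O(1),
\]
calibrated via \Cref{lem:vertex-degree} so that $\Exp{\deg(u)}\le\tfrac{1}{3}\,n^{1/(2\alpha)}\log n$ for every $u$ with $r(u)\ge r^{*}$. The first step is to show that \aas\ no vertex lies inside $\B_0(r^{*})$: by \Cref{lem:measure-inner-disk}, $n\mu(\B_0(r^{*}))=(\log n)^{-2\alpha+o(1)}=o(1)$, and since $|V\cap\B_0(r^{*})|$ is Poisson, Markov's inequality closes this step. The second step bounds the degree of every vertex with $r(u)\ge r^{*}$: Chernoff for Poisson yields $\Pro{\deg(u)\ge n^{1/(2\alpha)}\log n}\le\exp(-\Omega(n^{1/(2\alpha)}\log n))=n^{-\omega(1)}$. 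Summing over all vertices via Campbell's formula (or a union bound using $|V|\le 2n$ \aas\ by Poisson concentration) gives total failure probability $o(1)$.

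For the lower bound I would introduce a second, symmetric threshold
\[
r^{**}:=(2-1/\alpha)\log n + 2\log\log n - O(1),
\]
calibrated so that the lower bound of \Cref{lem:vertex-degree} yields $\Exp{\deg(u)}\ge c\cdot n^{1/(2\alpha)}/\log n$ for every $u$ with $r(u)\le r^{**}$. By \Cref{lem:measure-inner-disk}, $n\mu(\B_0(r^{**}))=\Theta((\log n)^{2\alpha})$, which grows polylogarithmically in $n$, so a Poisson Chernoff bound gives $|V\cap\B_0(r^{**})|\ge 1$ \wehp. Fixing any such witness $u$, a Chernoff lower-tail bound on its Poisson degree gives $\deg(u)\ge\Exp{\deg(u)}/2\ge n^{1/(2\alpha)}/\log^{2}n$ \wehp.

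The hard part will be the asymmetry between the two probabilistic guarantees, which is intrinsic to this approach rather than an artifact of loose estimates. Excluding vertices of very small radius is only an \aas\ event because $n\mu(\B_0(r^{*}))$ can be made only $(\log n)^{-\Theta(1)}$, not $n^{-\Omega(1)}$; pushing $r^{*}$ further inward to improve this would inflate the expected-degree bound by a polynomial factor and destroy the target $n^{1/(2\alpha)}\log n$. This is precisely the subtlety flagged in the authors' footnote: the original \whp\ claim of~\cite{gpp-rhg-12} cannot be recovered under this strategy and must be weakened to \aas, while the lower bound, which is driven by polylogarithmically many candidate witnesses each of which has an exponentially concentrated Poisson degree, genuinely holds \wehp.
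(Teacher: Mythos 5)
Your proposal is correct and follows essentially the same route as the paper's proof in \Cref{sec:max-degree}: both pick an inner threshold radius near $(2-1/\alpha)\log n$ shifted by a polylogarithmic correction, use \Cref{lem:measure-inner-disk} plus Markov to exclude vertices of smaller radius only \aas, bound all remaining degrees via \Cref{lem:vertex-degree} and a Poisson Chernoff/union bound, and for the lower bound exhibit \wehp a witness vertex inside a slightly larger ball whose Poisson degree concentrates. Your diagnosis of why the upper bound cannot be upgraded to \whp is exactly the point of the authors' footnote, so nothing further is needed.
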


\subparagraph{Distributed Graph Colouring \& Notation.} 
Let $\palette = [k]$ be the set of $k$ colours. The integer $t \in \mathbb{N}{\geq 0}$ denotes the \emph{round} of the distributed algorithm. For a vertex $u \in V$, we write $\colourchosen{t}{u} = \colour$ if it is assigned colour $\colour \in \palette$ after round $t$, and $\colourchosen{t}{u} = \emptyset$ if it remains \emph{uncoloured}. Initially, at round $t = 0$, all vertices are uncoloured, i.e., $\colourchosen{0}{v} = \emptyset$ for all $v \in V$. The set of \emph{available} colours for vertex $u$ in round $t$ is defined as $\colourpal{t}{u} := \palette \setminus \cup_{v \in N(u)} \colourchosen{t-1}{v}$. When analysing the randomised distributed algorithm on a fixed graph $G$, we use $\Pro{\cdot}$ and $\Exp{\cdot}$ for probabilities and expectations; in contrast, we use $\Prob{\cdot}$ and $\E{\cdot}$ when referring to randomness over the graph distribution $G \sim \hrg$.

\section{Typical HRGs: Layers, Leaves \& Larger Degree Neighbourhood}\label{sec:structural}

In this section we introduce structural properties of hyperbolic random graphs that hold at least \aas and thus, also hold for typical hyperbolic random graphs. In \Cref{sec:layers} we introduce the discretisation of the hyperbolic disk into layers with the for us relevant properties in \Cref{lem:layer-properties}. In \Cref{sec:leaves} we present the leaves lemma that we use throughout the paper for the lower bounds on the colour space. At last, we establish to notion of larger degree neighbourhood in \Cref{sec:larger-degree-neighbourhood} which we use to analyse \RCTDEG. 
 
\subsection{Discretisation into Layers}\label{sec:layers}
 Through out the paper, we use the following discretisation of the disk $\mathcal{D}_R$: a \emph{layer} $\mathcal{A}_{\ell} \subset \mathcal{D}_R$ is the sub-space with distance $[\ell, \ell +1)$ to the boundary (see \Cref{fig:neighbourhood-and-layers}b for a sketch). More formally, for $\ell \in [\lfloor R\rfloor]$ the layer of \emph{level} $\ell$ is defined by $\mathcal{A}_\ell: = \B_0(R-\ell)\setminus\B_0(R-\ell-1)$. A similar layering has been introduced by  Friedrich and Krohmer~\cite{fk-dhrg-18} in order to bound the diameter of HRGs. We use a different property here, namely that vertices that share the same layer have roughly the same degree. Using \Cref{lem:measure-inner-disk} the measure of a layer is given by
\begin{align}\label{eq:layer-measure}
    \mu(\mathcal{A}_\ell) = \Theta(1)e^{-\alpha\ell}.
\end{align}
We also refer to the set of vertices $V \cap \mathcal{A}_\ell =: V_\ell$ as the vertices of layer $\ell$. The expected degree of a vertex $v \in V_\ell$ is given via \Cref{lem:vertex-degree} by
\begin{align}\label{eq:layer-degree}
    \E{\deg(v)} = \Theta(1)e^{\ell/2}.
\end{align}
Finally, we write $\finallayer := \max{(\{\ell : V_\ell \neq \emptyset\})}$, that is, the largest level for a layer that still contains at least one vertex. The following lemma is a simple application of mostly \Cref{eq:layer-measure} and \Cref{eq:layer-degree}. It sets us up with concentration bounds we make use of throughout our arguments. It provides us with (1) how many vertices are located in a layer, (2) what the degree of vertex in a layer is, and (3) what the largest level of a layer is.

\begin{lemma}[Layer lemma]\label{lem:layer-properties} Let $G\sim\hrg$ be a threshold hyperbolic random graph and let $\ell \in [\lfloor R\rfloor]$. Then it holds for $V_\ell := V \cap \mathcal{A}_\ell$, $v \in V_\ell$ and the largest layer level $\finallayer$ that
\begin{enumerate}
    \item\label{item:vertices-in-layer} \wehp $|V_\ell| \in \begin{cases}
			\Theta\left(ne^{-\alpha\ell}\right), & \text{if $\ell \leq \lceil \alpha^{-1}(\log(n) - 2\log(\log(n))) \rceil$ }\\
            \bigO(\log^2(n)), & \text{otherwise,}
		 \end{cases}$ 
    \item\label{item:degree-in-layer} and $\deg(v) \in {\bigO}\left(e^{ \ell/2} + \log(n)\right)$ \wehp
    \item\label{item:max-layer} Moreover, $\finallayer \leq \lceil\alpha^{-1}\cdot(\log(n) + \log(\log(n)) \rceil$ \aas
\end{enumerate}
\end{lemma}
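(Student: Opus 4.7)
All three parts reduce to Poisson tail estimates applied to counts whose means are supplied either by the layer measure identity~\eqref{eq:layer-measure}, by~\Cref{lem:vertex-degree}, or by~\Cref{lem:measure-inner-disk}. The underlying fact I would exploit repeatedly is that for a Poisson point process with intensity $n\rho$, the restriction to any measurable set $\mathcal{S} \subseteq \disk$ has cardinality $\mathrm{Poi}(n\mu(\mathcal{S}))$, and for a point $v$ of the process located at $x$ the conditional distribution of the remaining process is unchanged (Slivnyak--Mecke). Standard Chernoff-type tail bounds for Poisson variables with mean $\mu$ state that $\Prob{X \geq k} \leq (e\mu/k)^k$ for $k \geq \mu$, which turns into exponential concentration in~$\mu$ when $k = \Theta(\mu)$ and into sharp control at level $\Theta(\log n)$ when $\mu$ is bounded.

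\textbf{Part~\ref{item:vertices-in-layer}.} I would observe that $|V_\ell|$ is Poisson with mean $n\mu(\mathcal{A}_\ell) = \Theta(n e^{-\alpha\ell})$ by~\eqref{eq:layer-measure}. In the first regime, $\ell \leq \lceil\alpha^{-1}(\log n - 2\log\log n)\rceil$ gives mean $\Omega(\log^2 n)$, so a multiplicative Chernoff bound gives concentration within constant factors with failure probability $\exp(-\Omega(\log^2 n))$, which is \wehp. In the second regime, $\ell$ exceeds that threshold, so the mean is at most $\bigO(\log^2 n)$; a direct Poisson upper tail estimate at level $C\log^2 n$ for a large constant $C$ yields the absolute $\bigO(\log^2 n)$ bound \wehp.

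\textbf{Part~\ref{item:degree-in-layer}.} By Slivnyak--Mecke, conditionally on $v \in V_\ell$ being located at $x \in \mathcal{A}_\ell$, the degree $\deg(v)$ is distributed as $\mathrm{Poi}(n\mu(\B_x(R) \cap \disk))$, which by~\Cref{lem:vertex-degree} has mean $\Theta(e^{\ell/2})$ uniformly in $x \in \mathcal{A}_\ell$ (for $\ell$ with $r(x)\geq 1$; the remaining inner-most layers have $e^{\ell/2}=\Theta(\sqrt n)$ and are handled trivially). Then the Poisson tail bound at level $k = C(e^{\ell/2} + \log n)$ for sufficiently large constant $C$ gives failure probability $\exp(-\Omega(e^{\ell/2} + \log n))$ when $\ell$ is large, and $(e/C)^{C\log n}$ when $\ell$ is small, both \wehp. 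A union bound over the at most $\mathrm{poly}(n)$ vertices of $V$ (bounded by Part~\ref{item:vertices-in-layer}) preserves the \wehp guarantee.

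\textbf{Part~\ref{item:max-layer}.} Setting $L := \lceil\alpha^{-1}(\log n + \log\log n)\rceil$, the expected total number of vertices lying in layers $\ell > L$ is
\begin{align*}
\sum_{\ell=L+1}^{\lfloor R\rfloor} n\mu(\mathcal{A}_\ell) \;\in\; \bigO\!\left(n\cdot e^{-\alpha(L+1)}\right) \;\subseteq\; \bigO\!\left(1/\log n\right),
\end{align*}
using the geometric summability from~\eqref{eq:layer-measure}. Markov's inequality then forces this count to be zero \aas, so $\finallayer \leq L$.

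\textbf{Main obstacle.} The only genuinely delicate point is Part~\ref{item:degree-in-layer}: making ``for $v \in V_\ell$'' precise without circular conditioning. The cleanest way is to state the bound as ``with \wehp probability every vertex of the process satisfies the degree bound corresponding to its layer,'' and then use Slivnyak--Mecke together with a union bound over all process points; alternatively, one can discretise $\mathcal{A}_\ell$ into $\mathrm{poly}(n)$ cells, bound the number of process points in the $R$-ball of each cell by a Poisson tail and take a union bound. Everything else is bookkeeping with standard Poisson concentration.
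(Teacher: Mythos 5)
Your proposal is correct and follows essentially the same route as the paper: Poisson concentration (Chernoff) applied to the layer counts and to the degree via the means from \Cref{lem:measure-inner-disk}, \eqref{eq:layer-measure} and \Cref{lem:vertex-degree}, followed by Markov's inequality for the emptiness of the innermost region in Part~\ref{item:max-layer}. The only (harmless) differences are that you invoke Slivnyak--Mecke and a union bound explicitly in Part~\ref{item:degree-in-layer}, and in Part~\ref{item:max-layer} you sum the layer measures geometrically where the paper bounds the measure of the ball $\B_0(r_0)$ directly -- these are the same computation.
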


\begin{proof}
     Fix any layer $\mathcal{A}_\ell$ and consider the random variable $|V_\ell|$, i.e., the number of vertices that are located in this layer. Notice that the number of vertices located in area $\mathcal{A}_\ell$ is a Poisson random variable. This allows us to apply Chernoff bound. Using \Cref{eq:layer-measure} we have for $\ell \leq \lceil \alpha^{-1}(\log(n) - 2\log(\log(n))) \rceil$ that $\E{|V_\ell|} = \Theta(1)ne^{-\alpha\ell} \in \Omega(\log^2(n))$ since $\E{|V_\ell|}$ is decreasing in $\ell$. Thus, using a Chernoff bound, we obtain 
    \begin{align}\label{eq:vertices-in-layer}
        \Prob{|V_\ell| \not\in \Theta\left(n e^{-\alpha \ell}\right)} \in n^{-\omega(1)}.
    \end{align}
    
For the case $\ell \geq \lceil \alpha^{-1}(\log(n) - 2\log(\log(n))) \rceil$, we use again that $\E{|V_\ell|}$ is decreasing in $\ell$ and thus, $\E{|V_\ell|} \in \bigO(\log^2(n))$ which proves \Cref{item:vertices-in-layer} using another Chernoff bound. 

Next, consider any vertex $v \in V_\ell$. Recall that the neighbourhood of a vertex $v$ lies within the ball $\B_v(R) \cap \disk$. Thus, the size of the neighbourhood is again a Poisson random variable and another Chernoff bound gives us for the degree of a vertex $v$ via \Cref{eq:layer-degree}
\begin{align}\label{eq:degree-in-layer}
    \Prob{\deg(v) \not\in {\bigO}\left(e^{ \ell/2}+ \log(n)\right)} \in \bigO(1/n^{c}),
\end{align}
where $c$ is an arbitrarily chosen constant. This concludes the argument for \Cref{item:degree-in-layer}.

Finally, set $r_0:= R - (\lceil\alpha^{-1}\cdot(\log(n) + \log(\log(n)) \rceil)$. Note that in order to ensure \Cref{item:max-layer}, it suffices to show that $V \cap \B_0(r_0) = \emptyset$. To this end we use \Cref{lem:measure-inner-disk} and obtain

\begin{align*}
    \E{|V \cap B_0(r_0)|} = n\cdot \mu(\B_0(r)) = n\cdot(1-o(1))e^{-\alpha(R-r)} = \Theta(1)(1/\log(n)) \in o(1). 
\end{align*}

Markov's inequality~\Cref{lem:markov} with $a=\sqrt{\log(n))}\E{|V \cap B_0(r_0)|}$ finishes the proof.
\end{proof}

\subsection{Leaves Lemma}\label{sec:leaves}

In this section we introduce a property of HRGs we will make use of in particular for the lower bounds on the number of rounds (\Cref{pro:not-two-rounds-whp}) and for the probability that \RCT does not terminate with a too small colour space (\Cref{pro:dist-coul-never}). We will also us it later in~\Cref{sec:rctid} when we analyse \RCTID. We show a lot of vertices with many neighbours of degree one to which we refer as \emph{leaves}. This is in particular useful in the context of randomised colouring since leaves have only one neighbour and thus, there are no dependencies for the candidate colour an uncoloured leaf tries in each round. The following will be useful to show our desired properties of leaves.

\begin{lemma}[Layer-leaves]\label{lem:leaves}
Let $G\sim\hrg$ be a threshold hyperbolic random graph and let $u \in V(G)$ be a vertex in $G$ with radius $r \in R - \omega(1)$. Then with constant probability $\Omega(\deg(u))$ neighbours of $u$ are leaves in layer $\mathcal{A}_0$.   
\end{lemma}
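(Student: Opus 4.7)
The plan is to compute the expected number of leaf neighbours of $u$ in layer $\mathcal{A}_0$ via Campbell's formula for the Poisson point process, and then convert this expectation into a constant-probability lower bound through a second-moment argument. Let $X := |\{v \in N(u) : v \in \mathcal{A}_0, \deg(v) = 1\}|$. The assumption $r \in R - \omega(1)$ means $R - r \to \infty$, so by \Cref{lem:vertex-degree} we have $\E{\deg(u)} = \Theta(ne^{-r/2}) \to \infty$; since $\deg(u)$ is Poisson with this mean, it concentrates and $\deg(u) \leq 2\E{\deg(u)}$ with probability $1 - o(1)$.

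\textbf{Expected number of leaf neighbours.} Campbell's formula, combined with Slivnyak's theorem to condition on $u,v \in V$ while keeping the rest of the process Poisson, yields
\begin{equation*}
\E{X} = \int_{\B_u(R) \cap \mathcal{A}_0} n\rho(v) \cdot \Prob{V \cap \B_v(R) = \{u,v\} \;\big|\; u,v \in V} \dd v.
\end{equation*}
For $v \in \mathcal{A}_0$ we have $n\mu(\B_v(R)) = \E{\deg(v)} = \Theta(1)$ by \Cref{eq:layer-degree}, so the exclusion probability equals $e^{-\Theta(1)} = \Theta(1)$ uniformly in $v$. Combining this with the angular-width estimate $\theta_R(r,r') = \Theta(e^{-r/2})$ from \Cref{lem:max-angle} (valid for $r' \in [R-1,R]$) and the density $\rho(r') = \Theta(e^{-\alpha(R-r')})$ gives $\E{X} = \Theta(n\mu(\B_u(R) \cap \mathcal{A}_0)) = \Theta(n e^{-r/2}) = \Theta(\E{\deg(u)})$.

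\textbf{Second moment.} The two-point Campbell formula yields
\begin{equation*}
\E{X(X-1)} = \int\!\!\int n^2 \rho(v)\rho(w) \cdot \indicator{w \notin \B_v(R)} \cdot e^{-n\mu(\B_v(R) \cup \B_w(R))} \dd v \dd w,
\end{equation*}
integrated over ordered pairs in $\B_u(R) \cap \mathcal{A}_0$. Since each individual ball has $n\mu(\B_v(R)) = \Theta(1)$, the union also satisfies $n\mu(\B_v(R) \cup \B_w(R)) = \Theta(1)$, so the exponential is $\Theta(1)$ uniformly. The mass of excluded mutual-neighbour pairs contributes only $O(\E{X})$, which is negligible next to the bulk. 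Hence $\E{X(X-1)} = \Theta(\E{X}^2)$ and $\E{X^2} = \Theta(\E{X}^2)$, so Paley--Zygmund gives $\Prob{X \geq \E{X}/2} \geq \tfrac14 \cdot \E{X}^2/\E{X^2} = \Omega(1)$. Combining with the $1-o(1)$ event $\deg(u) \leq 2\E{\deg(u)}$ yields $X \geq c \cdot \deg(u)$ with constant probability for some $c > 0$.

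\textbf{Main obstacle.} The delicate step is controlling the second moment: leaf events are not independent because any two balls $\B_v(R), \B_w(R)$ share a common inner core around the origin of measure $\Theta(1/n)$, which rules out a Chernoff-type argument over the neighbours of $u$. The key saving is that vertices in $\mathcal{A}_0$ already have expected degree $\Theta(1)$, so each ball has total measure $\Theta(1/n)$ and the overlap contributes only a bounded multiplicative factor. A smaller-order correction is needed to discard pairs $v,w$ that are themselves adjacent; by \Cref{lem:max-angle} these lie within angular distance $\Theta(n^{-1/2}) \ll e^{-r/2}$ of one another and form a vanishing fraction of the relevant pair mass.
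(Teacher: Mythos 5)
Your proof is correct, but it takes a genuinely different route from the paper's. The paper never computes a second moment: it lower-bounds $\E{|N_0(u)|}=\Theta(\E{\deg(u)})$ exactly as you do, observes that each $v\in N_0(u)$ is a leaf with probability at least some constant $1-\delta$ (Poisson degree with constant mean), and then applies Markov's inequality to the \emph{complement} — the number $Y$ of non-leaf neighbours in $\mathcal{A}_0$, whose expectation is at most $\delta\E{|N_0(u)|}$ — to conclude that with constant probability at most a $(\delta+\varepsilon)$-fraction of $N_0(u)$ are non-leaves; a Chernoff bound on the Poisson variable $|N_0(u)|$ then converts "constant fraction of $|N_0(u)|$" into "$\Omega(\deg(u))$". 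This sidesteps the correlations between leaf events entirely at the cost of being somewhat coarse. You instead attack the leaf count $X$ head-on via the two-point Campbell/Slivnyak formula and Paley--Zygmund, which forces you to control the pairwise dependencies (the overlap $\mu(\B_v(R)\cup\B_w(R))=\Theta(1/n)$ and the adjacent pairs $v\sim w$); your accounting there is sound, the adjacent-pair mass is indeed $O(\E{X})=o(\E{X}^2)$, and the argument would even survive in regimes where the per-vertex leaf probability is $o(1)$, where the paper's "$\delta<1$" trick breaks down. Two small quibbles: the angular threshold for adjacency of two vertices in $\mathcal{A}_0$ is $\theta_R(r_1,r_2)=\Theta(e^{-R/2})=\Theta(1/n)$ under the convention $R=2\log(n)+C$ used throughout the paper's proofs, not $\Theta(n^{-1/2})$ (this only strengthens your negligibility claim, so nothing breaks); and the lower bound $\mu(\B_u(R)\cap\mathcal{A}_0)=\Omega(e^{-r/2})$ that you assert deserves the one-line justification via monotonicity of $\theta_R(r,\cdot)$ that the paper supplies.
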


\begin{proof}
    Recall that $\mathcal{A}_0 = \B_0(R)\setminus\B_0(R-1)$. For vertex $u$ let us write $ N_{0}(u) := N(u) \cap \mathcal{A}_0$, i.e., the subset of the neighbourhood of $u$ located in layer $\mathcal{A}_0$. Then, by \Cref{eq:angle-func}, all the vertices with angle distance $\theta_R(r,R)$ to $u$ are neighbours of $u$ since $\theta_R(r,x)$ decreases monotonically in $x$. Then the expected size of $|N_0(u)|$ is 
        $$\E{|N_0(u)|} \geq n\cdot\mu(\B_0(R)\setminus\B_0(R-1))\cdot 2\theta_R(r,R).$$

Using \Cref{lem:max-angle} and \Cref{eq:layer-measure} for the angle and the measure of the layer we get
    
\begin{equation}\label{eq:leaves}
\E{|N_0(u)|} \geq \Theta(1)e^{(R-r)/2} = \Theta(1)\E{\deg(u)} \in \omega(1),
\end{equation}
where for the equation we used \Cref{lem:vertex-degree} and in the last step we used that $r(u) \in \omega(1)$. Next, consider any vertex $v \in N_0(u)$ and let $Z_v$ be the indicator random variable that is $1$ if $\deg(v) = 1$ and $0$ otherwise and we write $Z = \sum_{v \in N_0(u)}Z_v$ for the number of leaves of $u$ that have radius at least $R-1$. Since the degree of $v$ follows a Poisson distribution and the expected degree of a vertex with radius at most $R-1$ is constant using \Cref{lem:vertex-degree}, we obtain that $\Prob{Z_v = 1} \in \Omega(1)$.

Moreover, applying a Chernoff bound in conjunction with \Cref{eq:leaves} we obtain $|N_0(u)| \in \Omega(\deg(v))$ with constant probability. Thus, by linearity of expectation and law of total expectation we get $\E{Z} \in \Omega(\deg(u)) \in \omega(1)$. Let us now write $Y = |N_0(u)| -Z$ for the number of "none-leaves" for which we get by linearity of expectation that there exists a constant $0 < \delta < 1$ such that $\E{Y} - \E{|N_0(u)|} -\E{Z} \leq \delta\E{|N_0(u)|}$. Finally let $\varepsilon < 1 -\delta$ while being non-vanishing by which we get by an application of Markov's inequality for the random variable $Z$ that
\begin{align*}
    \Prob{Z \in \Omega(\deg(u))} \geq 1 - \Prob{Y \geq \underbrace{(\delta+\varepsilon)}_{<1}|N_0(u)|} \geq 1 - \delta/(\varepsilon + \delta),
\end{align*}
which is non-vanishing by our choice of $\varepsilon > 0$. This proves our claim.
\end{proof}

We now present the \emph{leaves lemma}. Recall that $V_\ell$ is the set of vertices located in layer $\mathcal{A}_\ell$, i.e., $V_\ell = V \cap \mathcal{A}_\ell$ and we write $L(u) := \{v \in N(u) : \deg(v) = 1\}$ for the leaves of $u$. The proof is deferred to \Cref{sec:leavesproof}

\begin{restatable}[Leaves lemma]{lemma}{leavesLemma}\label{lem:leaf}
      Let $G\sim\hrg$ be a threshold hyperbolic random graph and let $c>0$ be constant large enough. Then, for all integers $\startlayer$ such that $\lceil c\log(\log(n))\rceil \leq \startlayer \leq \lfloor \frac{\log(n) - \log(\log(n)) - c }{\alpha} \rfloor$, \aas there exists a set $U_{\startlayer} \subseteq V_{\startlayer}$ where
    \begin{enumerate}
        \item for any $u \in U_{\startlayer}$, a constant fraction of $u$'s neighbours are leaves $|L(u)| \in \Theta(\deg(u))$ and
        \item the size of the set $U_{\startlayer}$ is $|U_{\startlayer} | \in \Theta(|V_{\startlayer}|)$.
    \end{enumerate}
\end{restatable}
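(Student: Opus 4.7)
For each $v \in V_\startlayer$ define $X_v := \indicator{|L(v)| \geq \gamma \deg(v)}$ with $\gamma>0$ the constant supplied by \Cref{lem:leaves}. Under the hypothesis $\startlayer \geq c \log \log n$, every $v \in V_\startlayer$ has radius $R - \startlayer \in R - \omega(1)$, so \Cref{lem:leaves} gives $\Prob{X_v = 1} \geq p_0$ for an absolute constant $p_0 > 0$, independent of the location of $v$ inside $\mathcal{A}_\startlayer$. Linearity then yields $\E{|U_\startlayer|} \geq p_0 \cdot \E{|V_\startlayer|}$, and \Cref{lem:layer-properties}\,(\ref{item:vertices-in-layer}) pins down $|V_\startlayer|$ with very high probability. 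It remains to show the concentration statement: $|U_\startlayer| = \sum_{v \in V_\startlayer} X_v$ stays within a constant factor of its mean on an event whose complement has probability $o(1/\log n)$, since a union bound over the $O(\log n)$ admissible values of $\startlayer$ then yields the a.a.s.\ claim.

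The plan is to extract near-independence via an angular sectorization of $\mathcal{D}_R$ tailored to $\startlayer$. Partition the disk into $K$ equal-width angular sectors of width $w = 2\pi/K = C \cdot \theta_R(R-\startlayer, R) = \Theta(e^{(\startlayer - R)/2})$ for a large constant $C$. By \Cref{lem:max-angle}, this choice ensures that for every $v \in V_\startlayer$ lying in the central half of its sector, the entire set of candidate leaf neighbors $N(v) \cap \mathcal{A}_0$ lies inside that sector. Call a sector \emph{good} if it contains at least one central layer-$\startlayer$ vertex $v$ with $X_v = 1$. Because the Poisson point process is independent across disjoint regions, the good-sector indicators are mutually independent as soon as the event ``sector $S_i$ is good'' depends only on the Poisson process inside $S_i$; a Chernoff bound then delivers $\Omega(K)$ good sectors with probability $1 - e^{-\Omega(K)}$, and within each good sector a second-moment argument --- using that the covariance $\Cov(X_v, X_{v'})$ for two vertices of $V_\startlayer$ in the same sector decays with their angular separation --- promotes the single good vertex to $\Omega(Km) = \Omega(|V_\startlayer|)$ good vertices, where $m = \E{|V_\startlayer|}/K$. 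Restricting to the central halves of sectors only loses a constant factor.

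The \textbf{main obstacle} is that the leafness of a candidate $w \in N(v) \cap \mathcal{A}_0$ technically depends on the entire ball $\B_w(R)$, which protrudes deep into the inner disk and thereby couples different sectors. The plan is to neutralize this by conditioning on the Poisson process restricted to the inner disk $\B_0(R/2)$: since $\mu(\B_0(R/2)) = \Theta(e^{-\alpha R/2})$, the inner disk contains only $O(n^{1-\alpha/2})$ points in expectation, and with probability $1 - o(1)$ none of them falls into the narrow angular region that would kill a fixed leaf-check; the resulting multiplicative loss can be absorbed into $p_0$. Outside $\B_0(R/2)$ the angular width of $\B_w(R)$ is only $O(n^{-1/4})$ at every radius, so by enlarging $C$ we ensure that the leaf-check region of each central vertex is contained in its own sector on a good event of probability $1 - o(1)$. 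After this conditioning the outer Poisson processes restricted to distinct sectors are independent, the sector events decouple, and the Chernoff-plus-second-moment argument above goes through, completing the proof.
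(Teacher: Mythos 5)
Your overall strategy coincides with the paper's: invoke \Cref{lem:leaves} for a per-vertex constant success probability, cut the disk into angular sectors so that the sector events become independent coin flips, and finish with a Chernoff bound. The gap is in the decoupling step, precisely at the obstacle you correctly identified but did not actually neutralise. Conditioning away $\B_0(R/2)$ is not enough: a candidate leaf $w\in\mathcal{A}_0$ can still have neighbours at any radius in $(R/2,\,R-\startlayer-1)$, and the angular reach of $\B_w(R)$ towards a vertex at radius $r$ is $\Theta(e^{-r/2})$, which at $r=R/2$ is $\Theta(e^{-R/4})=\Theta(n^{-1/2})$ --- not $O(n^{-1/4})$, and in any case enormously larger than your sector width $w=\Theta(C\,n^{-1}e^{\startlayer/2})$ whenever $\startlayer$ is near the lower end $c\log\log n$ of the admissible range (there $w=\Theta(\operatorname{polylog}(n)/n)$). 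So a single leaf-check region spans $\Theta(n^{1/2}e^{-\startlayer/2})=\omega(1)$ of your sectors, ``enlarging $C$'' would require $C\geq n^{1/2}e^{-\startlayer/2}$, which is not a constant, and the asserted mutual independence of the good-sector indicators fails; the Chernoff bound is therefore not justified. (Two smaller slips in the same paragraph: $n\mu(\B_0(R/2))=\Theta(n^{1-\alpha})$, not $O(n^{1-\alpha/2})$, and the within-sector ``second-moment argument'' is left entirely unspecified even though with your narrow sectors it would have to carry most of the load for small $\startlayer$.)

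The paper closes exactly this hole before sectorising: it excises from $\mathcal{A}_0$ the angular shadow $\Xi$ of \emph{every} vertex of radius below $R-\startlayer-1$ (not just those in $\B_0(R/2)$), and shows via \Cref{lem:layer-properties} and \Cref{lem:max-angle} that the total excised angle is $o(1)$ a.a.s.\ once $\startlayer\geq c\log\log n$. On $\mathcal{A}_0\setminus\Xi$ a candidate leaf has no neighbour of radius below $R-\startlayer-1$, so its leaf-check only probes the process within angular distance $\theta_R(R-\startlayer-1,R-1)=\Theta(n^{-1}e^{\startlayer/2})$, which is $o(\phi)$ for the paper's sector width $\phi=\Theta(n^{-1}e^{\alpha\startlayer})$ since $\alpha>1/2$; only then do non-adjacent sectors genuinely decouple. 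To repair your write-up you would need to replace the $\B_0(R/2)$ conditioning by this full shadow-removal (or an equivalent argument that the intermediate-radius contamination has total angular measure $o(1)$ and only disables an $o(1)$ fraction of sectors), and widen the sectors so that each contains the entire, now truly local, leaf-check region.
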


\begin{proof}[Proof sketch.]
    We fix a layer $\startlayer$ and by \Cref{lem:leaves} the lemma statement holds in expectation, i.e., in expectation a constant fraction of the vertices in layer $\mathcal{A}_{\startlayer}$ have a constant fraction of its neighbours as leaves that are located in $\mathcal{A}_0$. To obtain the desired concentration, we consider a subset of vertices $U_{\startlayer} \subseteq {V}_{\startlayer}$ that fulfils the following: for any pair of vertices $u,v \in U_{\startlayer}$, let $\B_u$ and $\B_v$ in the disk $\disk$ be the areas in the disk $\disk$ that we need to reveal for the events that $u$ has $\Theta(\deg(u))$ leaves in $\mathcal{A}_0$ and $v$ has $\Theta(\deg(v))$ leaves in $\mathcal{A}_0$. We ensure that these two areas are disjoint, i.e., $\B_u \cap \B_v = \emptyset$. We accomplish this by having an angle distance for $u,v \in U_{\startlayer}$ large enough such that for any point $x \in \mathcal{A}_0 \cap \B_u(R)$ we have $B_x(R) \cap (\B_v(R)\setminus \B_0(R-\startlayer-1)) = \emptyset$. Moreover, for potential leaves we only consider a sub area $\mathcal{A}'_0 \subseteq \mathcal{A}_0$ such that for any point $a \in \mathcal{A}'_0$ and any vertex $w \in V \cap \B_0(R-\startlayer-1)$, the distance is $\dist(a,w)>R$ and consequently, the probability that a vertex in the area $\mathcal{A}'_0$ is a leaf-neighbour of $u \in U_{\startlayer}$ is independent of the set of vertices $V \cap \B_0(R-\startlayer-1)$ since this set of vertices has no neighbours in $\mathcal{A}'_0$. Thus, for any pair of vertices $u,v \in U_{\startlayer}$, the area on the disk $\disk$ that we reveal for the events that $u$ and $v$ have a constant fraction of their neighbours as leaves are disjoint. Then, for each vertex $u \in U_{\startlayer}$ it is an independent coin flip if $u$ has $\Theta(\deg(u))$ many leaves and the concentration follows from Chernoff bound with $|U_{\startlayer}| \in \omega(\log(n))$ coin flips and a union bound over all layers yields the desired result. 
\end{proof}

\subsection{Larger Degree Neighbourhood}\label{sec:larger-degree-neighbourhood}
For the analysis of \RCTDEG we are interested in the neighbourhood of $u$ that has a larger degree than $u$. The following lemma tells us up to which radius $r'$ there might be vertices with a larger degree than than vertex $u$ with radius $r(u) =r$. For an illustration we refer to \Cref{fig:larger-degree-neighbourhood}a.

\begin{lemma}[Larger degree radius]\label{lem:larger-neighbour-radius}
    Let $G\sim\hrg$ be a threshold hyperbolic random graph and let $c' = 7$. Moreover, let $u \in V$ with radius $r \leq R - \frac{2\log(\log(n))}{(1-\alpha)}$ and degree $\deg(u)$. Then, \wehp all vertices with radius $r' \geq r + c'$ have a smaller degree than $\deg(u)$.
\end{lemma}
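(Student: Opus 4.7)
The idea is purely a concentration-of-degrees argument: two vertices whose radii differ by the constant $c'=7$ have expected degrees differing by a multiplicative factor of $e^{-c'/2}$, and under the hypothesis on $r$ both expected degrees are polynomially large, so Chernoff gives super-polynomial concentration which survives a union bound over the at most $n$ vertices playing the role of the "outer" vertex.

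\medskip

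\emph{Step 1: control the expected degrees.} Apply \Cref{lem:vertex-degree}. Writing $A := (1-o(1))\frac{\alpha}{\pi(\alpha-1/2)}$ and $B := (1+o(1))\frac{\alpha}{\alpha-1/2}$, we have
\begin{align*}
\Exp{\deg(u)} \;\geq\; A\,n\,e^{-r/2}, \qquad \Exp{\deg(v)} \;\leq\; B\,n\,e^{-r'/2} \;\leq\; B\,e^{-c'/2}\, n\, e^{-r/2}
\end{align*}
for every vertex $v$ with $r(v) \geq r+c'$. Using $R = \log n + C$ and the hypothesis $r \leq R - \tfrac{2\log\log n}{1-\alpha}$,
\begin{align*}
\Exp{\deg(u)} \;\geq\; A\,n\,e^{-R/2}\,(\log n)^{1/(1-\alpha)} \;=\; \Omega\!\Bigl(\sqrt n\,(\log n)^{1/(1-\alpha)}\Bigr) \;=\; \Omega(\sqrt n \,\log^{2} n),
\end{align*}
since $1/(1-\alpha) > 2$ on $\alpha \in (1/2,1)$. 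In particular $\Exp{\deg(u)}$ is super-polylogarithmic, which is what the Chernoff step will need.

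\medskip

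\emph{Step 2: separate the two expectations by a constant factor.} With $c'=7$ one has
$B \,e^{-c'/2}/A \;=\; \pi\,e^{-7/2}\,(1+o(1)) \;\leq\; 0.1,$
so the upper bound on $\Exp{\deg(v)}$ is at most a $0.1$-fraction of the lower bound on $\Exp{\deg(u)}$. Pick any constant $\gamma \in (1, \sqrt{e^{7/2}/\pi}) \approx (1, 3.24)$, e.g.\ $\gamma = 3$; then $\gamma^{2}\cdot \pi e^{-7/2} < 1$, so
\begin{align*}
\gamma \cdot B\,e^{-c'/2}\, n\,e^{-r/2} \;<\; \gamma^{-1}\cdot A\, n\, e^{-r/2}.
\end{align*}

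\medskip

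\emph{Step 3: Chernoff and union bound.} Since $\deg(u)$ and $\deg(v)$ are each Poisson (they count Poisson points in a measurable region of $\mathcal D_R$), a standard Chernoff tail for Poisson variables gives, for any $v$ of radius $\geq r+c'$,
\begin{align*}
\Pro{\deg(v) > \gamma\,\Exp{\deg(v)}} \;\leq\; \exp\!\bigl(-\Omega(\Exp{\deg(v)})\bigr), \quad
\Pro{\deg(u) < \gamma^{-1}\,\Exp{\deg(u)}} \;\leq\; \exp\!\bigl(-\Omega(\Exp{\deg(u)})\bigr).
\end{align*}
Both right-hand sides are bounded by $\exp(-\Omega(\sqrt n))$, which is super-polynomially small. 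Union-bounding the second inequality once, and the first inequality over the at most $N = O(n)$ vertices of the point process (using Poisson tail bounds on $N$ as well, or just conditioning on the typical event $N \leq 2n$), we conclude that with probability $1 - n\,e^{-\Omega(\sqrt n)} = 1 - n^{-\omega(1)}$,
\begin{align*}
\deg(v) \;\leq\; \gamma\,B\,e^{-c'/2}\,n\,e^{-r/2} \;<\; \gamma^{-1} A\, n\,e^{-r/2} \;\leq\; \deg(u),
\end{align*}
for every $v$ of radius $\geq r+c'$ simultaneously. This is the claimed \wehp statement.

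\medskip

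\emph{Main obstacle.} The only quantitative work is in Step 2: the factor of $\pi$ between the upper and lower estimates in \Cref{lem:vertex-degree} must be overcome by $e^{-c'/2}$ with enough slack that Chernoff deviations on both sides still fit. The constant $c' = 7$ is calibrated precisely for this (giving $\pi e^{-7/2} \approx 0.095$); any smaller $c'$ would force the Chernoff deviations $\gamma,\gamma^{-1}$ arbitrarily close to $1$, and for $c'\leq 3$ the factor $\pi\, e^{-c'/2}$ exceeds $1$ and no deterministic separation of expectations is available at all.
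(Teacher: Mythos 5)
Your proposal follows essentially the same route as the paper's proof: use \Cref{lem:vertex-degree} to get a lower bound on $\Exp{\deg(u)}$ and an upper bound on $\Exp{\deg(v)}$ for $r(v)\geq r+c'$, observe that $c'=7$ makes $e^{-c'/2}$ beat the $\pi$-gap between the two constants with room for constant-factor Chernoff deviations on both sides, and finish with a union bound. Two small points. First, in Step 3 the displayed tail bound for $\deg(v)$ is taken relative to $\gamma\Exp{\deg(v)}$ and claimed to be $\exp(-\Omega(\Exp{\deg(v)}))\leq\exp(-\Omega(\sqrt n))$; this fails for vertices $v$ near the boundary, whose expected degree is $\Theta(1)$, so the per-vertex bound would be a constant and the union bound over $n$ vertices would not close. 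The correct phrasing (which your final display implicitly uses) is to bound $\Pro{\deg(v)>k}$ against the \emph{fixed} threshold $k=\gamma^{-1}A\,n e^{-r/2}$, which is $\omega(\log n)$ by the hypothesis on $r$, giving $\exp(-\Omega(k))=n^{-\omega(1)}$ uniformly over all $v$ regardless of how small $\Exp{\deg(v)}$ is. (The paper's own proof is phrased with the same looseness, so this is a shared rather than a new defect.) Second, your $\Omega(\sqrt n\log^2 n)$ figure for $\Exp{\deg(u)}$ comes from reading $R=\log n+C$; the proof in the paper, like the rest of its calculations, uses $R=2\log n+C$, under which $\Exp{\deg(u)}=\Omega((\log n)^{1/(1-\alpha)})=\Omega(\log^2 n)$ only --- still ample for the \wehp conclusion, but your quantitative claim of $\exp(-\Omega(\sqrt n))$ error is stronger than what holds.
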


\begin{proof}
    Consider the expected degree of $u$, which given via~\Cref{lem:vertex-degree} is at least
$$
\E{\deg(u)} \geq (1-o(1))n\cdot \frac{\alpha e^{-r/2}}{\pi(\alpha - 1/2)} \in \Omega(\log^2(n)),
$$

where the second step followed from the assumption of our case that $r \geq R - 2/(1-\alpha)\log(\log(n))$ and $R = 2\log(n) +C$. From a Chernoff bound it then follows that $\deg(u) \geq \E{\deg(u)}/2 \geq (1-o(1))n\cdot\frac{\alpha e^{-(R-\ell)/2}}{2\pi(\alpha-1/2)} = : k$ \wehp using that $\E{\deg(u)} \in \Omega(\log^2(n))$. 

In contrast, consider any vertex $v$ with radius $r' \geq r + c'$. By \Cref{lem:vertex-degree} we get for the expected degree of vertex $v$ 
$$
\E{\deg(v)} \leq (1+o(1))n\cdot\frac{\alpha e^{-(r + c')/2}}{\alpha-1/2} \leq e^{-3/2}(1+o(1))n\cdot\frac{\alpha e^{-(R-\ell)/2}}{2\pi(\alpha-1/2)} ,
$$

Another Chernoff bound then yields that $\deg(v) \leq e^{3/2}\cdot \E{\deg(v)} < k$ \wehp Though the degrees $\deg(u)$ and $\deg(v)$ might be non-independent, a union bound over the complementary events for both degrees gives $\deg(u) > \deg(v)$ \wehp A union bound over all vertices with radius $r'\geq r + c$ finishes the proof.
\end{proof}

Let $N^+(u):= \{v \in N(u) | \deg(v) \geq \deg(u)\}$ be the neighbourhood of $u$ with larger degree and $\deghigh{u} = |N^+(u)|$. We bound $\deghigh{u}$ parameterised by layer level $\ell$ (see also \Cref{fig:larger-degree-neighbourhood}b)

\begin{lemma}[Larger degree neighbourhood]\label{lem:neighbours-larger-degree}
Let $G\sim\hrg$ be a threshold hyperbolic random graph and let $u \in V_\ell$. Then, \wehp, 
$$\deghigh{u}\in\begin{cases}
{\bigO}(e^{\ell/2} + \log(n)) \text{ for $0 \leq \ell \leq \lfloor 2/(1-\alpha)\log(\log(n))\rfloor$},\\
\bigO(e^{\ell(1-\alpha)}) \text{ for $\lceil 2/(1-\alpha)\log(\log(n))\rceil \leq \ell \leq \lceil R/2 \rceil$ },\\
\bigO(ne^{-\alpha\ell}) \text{ for $ \lceil R/2 \rceil \leq \ell \leq \lceil \alpha^{-1}(\log(n) - 2\log(\log(n))) \rceil$}.
\end{cases} $$ 
\end{lemma}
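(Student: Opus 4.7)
The plan is to split the analysis according to the three stated ranges of $\ell$ and to use \Cref{lem:larger-neighbour-radius} to confine $N^+(u)$ to a thin radial annulus around $u$ in the last two ranges.

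\emph{Case 1} ($0 \leq \ell \leq \lfloor 2\log(\log(n))/(1-\alpha)\rfloor$). The trivial bound $\deghigh{u}\leq \deg(u)$ together with item 2 of \Cref{lem:layer-properties} directly yields $\deghigh{u} \in \bigO(e^{\ell/2} + \log(n))$ \wehp The hypothesis of \Cref{lem:larger-neighbour-radius} may fail in this range, so no finer bound is available nor needed.

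\emph{Cases 2 and 3.} In both cases $r(u) < R - \ell \leq R - 2\log(\log(n))/(1-\alpha)$, so \Cref{lem:larger-neighbour-radius} applies and forces $N^+(u) \subseteq \{v \in V : r(v) \leq r(u)+c'\}$ \wehp In \emph{Case 3} ($\lceil R/2\rceil \leq \ell \leq \lceil\alpha^{-1}(\log(n)-2\log(\log(n)))\rceil$), I then use the crude ball bound
\[
    \deghigh{u} \leq |V \cap \B_0(r(u)+c')|,
\]
which is Poisson-distributed with mean $n\mu(\B_0(r(u)+c'))=\Theta(ne^{-\alpha \ell})$ by \Cref{lem:measure-inner-disk}. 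Since this mean is $\Omega(\log^2 n)$ throughout the range, a Chernoff (Poisson-tail) bound upgrades the expectation to the claimed $\bigO(ne^{-\alpha \ell})$ \wehp

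\emph{Case 2} ($\lceil 2\log(\log(n))/(1-\alpha)\rceil \leq \ell \leq \lceil R/2 \rceil$) is the subtle one: the crude ball bound from Case 3 is $\Theta(n^{1-2\alpha}e^{\alpha\ell})$, which for small $\ell$ and $\alpha$ close to $1$ can be much larger than the target $e^{\ell(1-\alpha)}$. Here the plan is to split $N^+(u)$ by radius. Vertices with $r(v) < R - r(u)$ are neighbours of $u$ irrespective of the angular distance, so their count is bounded by $|V \cap \B_0(R-r(u))|$, with mean $\Theta(n^{1-2\alpha}e^{\alpha\ell})$; using $\alpha > 1/2$ and $\ell \leq R/2$, elementary algebra gives $n^{1-2\alpha}e^{\alpha\ell} \in \bigO(e^{\ell(1-\alpha)})$. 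Vertices with $r(v) \in [R-r(u),\, r(u)+c']$ require the angular condition, and their expected count is
\[
    n\int_{R-r(u)}^{r(u)+c'} 2\,\theta_R(r(u),r')\,\rho(r')\, dr'.
\]
Substituting $\theta_R(r(u),r')=\Theta(e^{(R-r(u)-r')/2})$ from \Cref{lem:max-angle} together with $\rho(r')=\Theta(e^{\alpha(r'-R)})$ makes the integrand an exponential in $r'$ with positive rate $\alpha-1/2$, so the integral is dominated by its upper endpoint and evaluates to $\Theta(e^{\ell(1-\alpha)})$ after using $ne^{-R/2}=\Theta(1)$. In both sub-parts the expectation is $\omega(\log n)$ in this range, so Poisson concentration yields the desired \wehp bound.

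The main technical obstacle is the integration in the angular-restricted sub-part of Case 2 and the bookkeeping of geometric constants around the transition at $\ell = R/2$ (where the automatic-neighbour region switches from $\B_0(\ell)$ inside of $\B_0(r(u)+c')$ to the whole of $\B_0(r(u)+c')$). The rest of the argument is a routine case-by-case application of \Cref{lem:measure-inner-disk}, \Cref{lem:max-angle}, and standard Chernoff/Poisson-tail concentration, combined with a union bound over the failure event of \Cref{lem:larger-neighbour-radius}.
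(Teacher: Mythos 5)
Your proposal is correct and its skeleton matches the paper's proof exactly: Case~1 via the trivial bound $\deghigh{u}\leq\deg(u)$ and \Cref{lem:layer-properties}, and Cases~2 and~3 by invoking \Cref{lem:larger-neighbour-radius} to confine $N^+(u)$ to $\B_0(r(u)+c')$ w.e.h.p.\ and then applying Poisson concentration, with Case~3 handled by the crude ball measure from \Cref{lem:measure-inner-disk}. The one place you genuinely diverge is Case~2: the paper obtains $\Exp{\deghigh{u}}\leq\Theta(1)e^{\ell(1-\alpha)}$ by citing an external result (\cite[Lemma~3.3]{Krohmer2016}) as a black box, whereas you rederive it by splitting the ball into the automatic-neighbour region $\B_0(R-r(u))$ (whose Poisson mean $\Theta(n^{1-2\alpha}e^{\alpha\ell})$ you correctly show is $\bigO(e^{\ell(1-\alpha)})$ for $\ell\leq R/2$) and the angular-restricted annulus, where the integral of $\theta_R\cdot\rho$ has positive exponential rate $\alpha-1/2$ and is dominated by its upper endpoint, yielding $\Theta(e^{\ell(1-\alpha)})$. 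This makes the lemma self-contained at the cost of some integration bookkeeping. One small imprecision: the expectation of the first sub-part of Case~2 need not be $\omega(\log n)$ (for $\ell$ near the lower end of the range and $\alpha$ close to $1$ it can even be $o(1)$); the clean fix is to apply the Poisson tail bound to the combined count $|N(u)\cap\B_0(r(u)+c')|$, whose mean $\Theta(e^{\ell(1-\alpha)})$ is indeed $\Omega(\log^2 n)$ throughout the range, or to note that a Poisson variable with small mean still exceeds the $\Omega(\log^2 n)$ target only with probability $n^{-\omega(1)}$. This does not affect the validity of the argument.
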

\begin{proof}
\textbf{Case 1}~[$0 \leq \ell \leq \lfloor  2/(1-\alpha)\log(\log(n))\rfloor$]: This follows directly from the upper bound on the degree by \Cref{lem:layer-properties}.

\smallskip

\textbf{Case 2}~[$\lceil  2/(1-\alpha)\log(\log(n))\rceil \leq \ell \leq \lceil R/2 \rceil$]:

Fix a vertex $u \in V_\ell$ and recall that \wehp all vertices with radius $r' \geq r + 7$ have a smaller degree than $\deg(u)$ due to \Cref{lem:larger-neighbour-radius}. Let $A$ be the event that no neighbour of $u$ with radius larger than $r'$ has a larger degree than $\deg(u)$. Since $u \in V_\ell$ has maximal radius $R-\ell$, we obtain by using law of total expectation and \cite[Lemma 3.3]{Krohmer2016} that

\begin{align*}
    \Exp{\deghigh{u}} &=  \Exp{\deghigh{u}|A}\Pro{A} +  \Exp{\deghigh{u}|A^C}\Pro{A^C}\\
    &\leq \Theta(1) e^{\ell(1-\alpha)} +  o(1) \in \omega(\log(n)).
\end{align*}
since $\Pro{A^C} \in n^{-\omega(1)}$ and we used that by our case that $\ell\geq 2/(1-\alpha)\log(\log(n))$ in the last step. A Chernoff bound concludes the case.

\smallskip

\textbf{Case 3}~[$\lceil R/2 \rceil \leq \ell \leq  \lceil \alpha^{-1}(\log(n) - 2\log(\log(n))) \rceil$]: Fix a vertex $u$ with radius of at most $r \leq R/2$ and by \Cref{lem:larger-neighbour-radius} all vertices with radius $r' \geq r + 7$ have a smaller degree than $u \in V \cap \B_0(R/2)$ \wehp Via event $A$ similar to case 2, we obtain by \Cref{lem:measure-inner-disk} and law of total expectation that 

\begin{align*}
    \Exp{\deghigh{u}} &=  \Exp{\deghigh{u}|A}\Pro{A} +  \Exp{\deghigh{u}|A^C}\Pro{A^C}\\
    &\leq n\cdot\mu(\B_0(R/2 + c') +  o(1)\\
    &\leq \Theta(1)ne^{-\alpha\ell}.
\end{align*}
Then we use $\ell \geq \lceil \lceil \alpha^{-1}(\log(n) - 2\log(\log(n))) \rceil$, and a Chernoff yields the desired result.

\end{proof}

\section{Analysis of Random Colour Trial (Proof of Theorem~\ref{thm:main-theorem})}\label{sec:rct}

In this section we analyse \RCT. In~\Cref{sec:round-one} we analyse the first round and we show that the colour space of vertices remains relatively large~(\Cref{lem:slack-rct}), the progress we make on uncoloured vertices per layer~(\Cref{lem:round-one-aas}) and a drop on the maxmum uncoloured degree in \Cref{cor:max-uncoloured-degree}. \Cref{sec:round-two} then contains the analysis of the second round where we show that \RCT finishes \aas after two rounds in~\Cref{pro:rct-second-round}. Our lower bounds on the colour space for \RCT are proven in \Cref{sec:colour-space}. We conclude in \Cref{sec:together} by putting everything together to show \Cref{thm:main-theorem}.

\subsection{First round: Maintaining Palette size while reducing Uncoloured Degree}\label{sec:round-one}

In this section we analyse the first round of \RCT. In order to show that the algorithm terminates, we need to ensure that there are enough colours available to every vertex. To make this more formal, let $X_t(u)$ be the random variable that counts the number of colours that are not assigned to any neighbour of $u$, before round $t$ starts and we define 
\begin{align}\label{eq:slack-counter}
    \colourSpace{t}{u} := |\colourpal{t}{u}| = \sum_{\colour \in \palette}\prod_{v \in N(u)} \left(1 - \indicator{\colourchosen{t-1}{v} = \colour}\right).
\end{align} 
That is $\colourSpace{t}{u}$ tells us how large the colour space of $u$ at round $t$ is.

\begin{lemma}[Slack round 1 \RCT]\label{lem:slack-rct}
    Let $G\sim \hrg$ be a typical hyperbolic random graph and consider round $t=2$ of \RCT with colour space $|\palette| \in \omega(\log(n))$. Then, for any vertex $u \in V$ with degree $\deg{(u)} \in \bigO({|\palette|})$, the colour space of $u$ is at the second round is $\colourSpace{2}{u}\in \Omega(|\palette|)$~\wehp
\end{lemma}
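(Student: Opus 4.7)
The plan is to avoid reasoning about which candidate colours actually get permanently adopted in round~$1$, and instead to lower bound $\colourSpace{2}{u}$ by the easier count of colours that \emph{nobody} in $N(u)$ even tried. Writing $c>0$ for the constant with $\deg(u)\le c|\palette|$, let $Y_\colour$ indicate that no vertex of $N(u)$ picked $\colour$ as its candidate colour in round~$1$, and set $S:=\sum_{\colour\in\palette}Y_\colour$. Since a vertex can adopt only a colour it itself picked, $Y_\colour=1$ forces $\colour\in\colourpal{2}{u}$, so $\colourSpace{2}{u}\ge S$.

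In round~$1$ no vertex is yet coloured, so the candidate colours of the $\deg(u)$ vertices in $N(u)$ are independent uniform draws from $\palette$. Hence
$$
\Exp{S} \;=\; |\palette|\cdot\Bigl(1-\tfrac{1}{|\palette|}\Bigr)^{\deg(u)} \;\ge\; e^{-c}(1-o(1))\,|\palette|,
$$
which is $\Omega(|\palette|)\subseteq\omega(\log n)$ by the hypothesis $|\palette|\in\omega(\log n)$. For concentration, view $S$ as a function of the $\deg(u)$ i.i.d.\ candidate colours and observe that resampling a single candidate changes the set of distinct colours picked by $N(u)$ by at most one element in each direction, so $S$ changes by at most $2$. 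McDiarmid's bounded-differences inequality with $c_i=2$ and at most $c|\palette|$ coordinates gives
$$
\Pro{S<\Exp{S}/2}\;\le\; 2\exp\!\left(-\frac{\Exp{S}^2}{8\deg(u)}\right)\;\le\; 2\exp\bigl(-\Omega(|\palette|)\bigr)\;=\;n^{-\omega(1)},
$$
and a union bound over the at most $n$ vertices of $G$ preserves the \wehp\ guarantee, yielding $\colourSpace{2}{u}\in\Omega(|\palette|)$ as required.

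The only delicate point is in the first step. The naive approach would try to count how many colours are actually \emph{adopted} by $N(u)$, but whether a neighbour $v\in N(u)$ succeeds depends on the candidate colours across all of $N(v)$, which pulls in the entire $2$-neighbourhood and a much messier dependency structure that would force us to open up the geometric structure of the HRG. Replacing ``adopted'' with the stronger condition ``someone tried this colour'' removes those dependencies entirely, because the round-$1$ candidate colours of distinct vertices are genuinely independent; the cost is only a constant factor in the expectation, which the bound $\Exp{S}\ge e^{-c}|\palette|$ already absorbs.
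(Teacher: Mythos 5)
Your proposal is correct and follows essentially the same route as the paper: both lower bound $\colourSpace{2}{u}$ by the number of colours that no neighbour of $u$ even \emph{tried} in round~$1$ (thereby sidestepping the dependencies in which candidates are actually adopted), compute that this count has expectation $\Omega(|\palette|)$, and then concentrate. The only divergence is the concentration tool: the paper Poissonises the balls-into-bins experiment and then de-Poissonises via \cite[Corollary 5.9]{mu-pc-05}, whereas you apply McDiarmid's bounded-differences inequality directly to the $\deg(u)$ independent candidate choices; your version is arguably cleaner (and your Lipschitz constant $2$ is a valid, if slightly loose, bound), and it also handles the $\deg(u)\in o(|\palette|)$ and $\deg(u)\in\Theta(|\palette|)$ cases uniformly where the paper treats the former separately by the deterministic bound $\colourSpace{t}{u}\ge|\palette|-\deg(u)$.
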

\begin{proof}
First consider a vertex $u \in V$ that has degree $\deg(u) \in o(|\palette|)$. Then the statement immediately follows deterministically as $\colourSpace{t}{u} \geq |\palette| -  \deg(u)$ since at most every neighbour of $u$ can have one different colour assigned and it follows for any $t$
\begin{align*}
    \colourSpace{t}{u} \geq |\palette| - \deg(u) \in \Omega(|\palette|).
\end{align*}

Thus, to finish the proof, it is left to show that our statement holds for any vertex $u$ where the degree is $\deg(u) \in \Theta(|\palette|)$. To this end, fix a vertex $u$ with degree $\deg(u) \in \Theta(|\palette|)$ and consider the first round of our colouring algorithm \RCT. Since every vertex picks a colour uniform at random from $[|\palette|]$, we get for any $v \in N(u)$ that the probability that it tries a specific candidate colour $\colour$ is $p_v: = \Pro{v \text{ tries } \colour \text{ as its candidate in round 1.}} =|\palette|^{-1}$. Next, we count the number of colours that were not tried by any neighbour of $u$. To this end, fix a colour $\colour$ and let $Z_{\colour}$ be the number of times that $\colour$ was tried by a vertex $v \in N(u)$. By a binomial distribution we obtain
\begin{align}\label{eq:uncoloured-indicator}
   \Pro{Z_{\colour} = 0} = (1-p_v)^{\deg(u)} = (1-|\palette|^{-1})^{\Theta(1)|\palette|} \in \Omega(1).
\end{align}
This shows that with constant probability $p > 0$, a colour $\colour \in \palette$ is not picked by any neighbour of $u$. Modelling the amount of colours picked by neighbours of $u$ by a balls into bins experiments, where $|\palette|$ many bins represent the colours and $\deg(u)$ many balls represent the neighbours of $u$, let $Z'_\colour$ the indicator random variable that is $1$ if there is at least one ball in bin $\colour$. Moreover, we "Poissonise" the experiment, by throwing $N \sim \text{Po}(\Exp{\deg(u)})$ balls into $|\palette|$ bins. Let $Z' = \sum_{\colour \in \palette}Z'_{\colour}$ be the number of bins with at least one ball. Then the expected number of balls in bin $\colour$ is $\Exp{Z'_\colour} < 1 - p$ by \Cref{eq:uncoloured-indicator} and we obtain by linearity of expectation $\Exp{Z'} \leq (1-p)|\palette|$. Let $Z$ be the number of bins without a ball and by linearity of expectation we have $\Exp{Z} = \Exp{|\palette| - Z'} \geq p\cdot|\palette|$. Since $Z$ follows a Poisson distribution and $|\palette| \in \omega(\log(n))$ we obtain via Chernoff bound that $Z \in \Theta(|\palette|)$ \wehp Then for the balls into bins experiment under binomial distribution with exactly $\deg(u)$ balls it follows that there are also \wehp $Z\in\Theta(|\palette|)$ empty bins, i.e., non-used colours by any neighbour of $u$ which follows from "de-Poissonising" the experiment~\cite[Corollary 5.9]{mu-pc-05}, which says that if an event holds \wehp in the Poisson case, then it also holds \wehp in the binomial case. A union bound over all vertices with degree $\Theta(|\palette|)$ finishes the proof.
\end{proof}

Recall that $V_\ell = V \cap \mathcal{A}_\ell$ are the vertices contained in a layer~(see \Cref{sec:layers}). In the next lemma we give a bound on the random variable $Z_\ell = |\{v \in V_\ell |{\colourchosen{1}{v} =\emptyset}\}|$, with which we count how many vertices are uncoloured in any layer $\mathcal{A}_\ell$ after round 1 has terminated.

\begin{lemma}[\RCT uncoloured layer vertices]\label{lem:round-one-aas}
    Let $G\sim\hrg$ by a typical hyperbolic random graph and, let $Z_\ell = \sum_{v \in V_\ell} \indicator{\colourchosen{1}{v} =\emptyset}$. Then, for \RCT with colour space $|\palette| \in \Omega(\Delta)$, for any $c>0$ it holds $Z_\ell \in \Tilde{\bigO}(e^{-\ell(\alpha - 1/2)} \cdot n^{\frac{2\alpha - 1}{2\alpha}} + 1)$ with probability $1 - \bigO(\log^{-c}(n))$.
\end{lemma}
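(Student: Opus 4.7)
The plan is to upper bound $\Exp{Z_\ell}$ via linearity of expectation and then invoke Markov's inequality, which comfortably meets the stated $1-\bigO(\log^{-c}(n))$ bar. All probabilities are taken over the algorithm's randomness after conditioning on the typical HRG event.

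\textbf{Step 1: Per-vertex failure probability.} Fix a vertex $v \in V_\ell$. In round $1$ of \RCT, each neighbour of $v$ independently picks a candidate colour uniformly from $\palette$. By symmetry (or by conditioning on $v$'s own candidate), each neighbour picks the same colour as $v$ with probability exactly $|\palette|^{-1}$. A union bound over the $\deg(v)$ neighbours yields
\begin{equation*}
\Pro{\colourchosen{1}{v} = \emptyset} \leq \frac{\deg(v)}{|\palette|}.
\end{equation*}

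\textbf{Step 2: Bounding the expectation.} Under the typical HRG assumption, \Cref{lem:layer-properties} gives $\deg(v) \in \bigO(e^{\ell/2} + \log n)$ for every $v \in V_\ell$, and either $|V_\ell| \in \bigO(n e^{-\alpha\ell})$ when $\ell \leq \lceil \alpha^{-1}(\log n - 2\log\log n)\rceil$, or $|V_\ell| \in \bigO(\log^2 n)$ otherwise. Moreover, by \Cref{the:max-degree} combined with $|\palette|\in\Omega(\Delta)$, we have $|\palette| \in \tilde{\Omega}(n^{1/(2\alpha)})$. By linearity of expectation, $\Exp{Z_\ell} \leq \sum_{v \in V_\ell} \deg(v)/|\palette| \in \tilde{\bigO}\!\left( n^{(2\alpha-1)/(2\alpha)} \cdot e^{-\ell(\alpha - 1/2)} + 1\right)$, where the $+1$ term absorbs the regime of large $\ell$ in which $|V_\ell|$ is only polylogarithmic and hence so is $Z_\ell$ trivially.

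\textbf{Step 3: Markov's inequality.} Applying Markov with threshold $a := \log^{c+1}(n) \cdot (\Exp{Z_\ell} + 1)$ gives $\Pro{Z_\ell \geq a} \leq \Exp{Z_\ell}/a \leq 1/\log^{c+1}(n) \in \bigO(\log^{-c}(n))$. Combined with the expectation bound from Step 2, this yields the stated $Z_\ell \in \tilde{\bigO}(e^{-\ell(\alpha - 1/2)} \cdot n^{(2\alpha - 1)/(2\alpha)} + 1)$ with the desired probability.

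\textbf{Main obstacle.} There is no serious obstacle: the argument reduces to a union bound and a Markov inequality. The one mild subtlety is that the events ``$v$ uncoloured'' are correlated across $v \in V_\ell$ through shared neighbours and shared candidate colours, so sharper concentration than Markov is not immediately available by this elementary route. Fortunately, the lemma only asks for probability $1 - \bigO(\log^{-c}(n))$, and Markov suffices. The stronger $1 - \bigO(n^{-c})$ guarantees invoked later for \RCTID require the ID-priority mechanism to unlock an Azuma--Hoeffding-type bound in the spirit of \cite{BEPS}; here we deliberately stay with the weaker elementary bound because it is all that is needed.
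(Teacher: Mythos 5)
Your proposal is correct and follows essentially the same route as the paper: bound the per-vertex failure probability by $\deg(v)/|\palette|$ after revealing $v$'s candidate colour, combine with the layer lemma and the maximum-degree bound via linearity of expectation, and finish with Markov's inequality at a threshold inflated by a $\log^{c+1}(n)$ factor. The only cosmetic difference is that you use a union bound over neighbours where the paper writes the exact product $1-(1-|\palette|^{-1})^{\deg(v)}$; both give the same estimate.
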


\begin{proof}
Fix a vertex $u \in V_\ell$ and let $Z_u$ be the indicator random variable that is $1$ if $u$ is uncoloured after the first round of \RCT. Reveal the candidate colour of $u$ first such that before revealing the candidate colours of $u$'s neighbour we have $\Pro{Z_u} = (1 - |\palette|^{-1})^{\deg(u)}$ where $|\palette|$ is the colour space of \RCT. By the assumption of the lemma that $|\palette| \in \Omega(\Delta)$ and that $G$ is a typical HRG, (so by $\Cref{lem:layer-properties}$ $\deg(u) \in \Tilde{\bigO}(e^{\ell/2})$ since $u \in V_\ell$), it follows that $\Pro{Z_u} \in \Tilde{\bigO}(e^{\ell/2}/\Delta)$. Then the expectation of the random variable $Z_\ell$, which is the number of uncoloured vertices of the set $U_\ell$ after round 1 of \RCT, is by linearity of expectation $\Exp{Z_\ell} \in  \Tilde{\bigO}( |V_\ell| \cdot e^{\ell/2}/\Delta) $.

Since $G$ is atypical HRG, we have $|V_\ell| \in \Tilde{\bigO}(ne^{-\alpha\ell} + 1)$ by \Cref{lem:layer-properties} and $\Delta \geq \log^{-2}(n)n^{\frac{1}{2\alpha}}$ by \Cref{the:max-degree} and it follows $\Exp{Z_\ell} \in \Tilde{\bigO}(e^{-\ell(\alpha - 1/2)} \cdot n^{\frac{2\alpha - 1}{2\alpha}} + 1)$. An application of Markov's inequality~\Cref{lem:markov} with $a=\Exp{Z_\ell}\cdot\log^c(n) + \log(n)$ where $c$ is an arbitrarly large constant then yields the desired concentration.

\end{proof}

We bound the total number of all uncoloured vertices after the first round of \RCT. For this we write $Z:= \sum_{v \in V} \indicator{\colourchosen{1}{v} = \emptyset}$ for the random variable that is the number of uncoloured vertices after the first round. 

\begin{lemma}[Uncoloured vertices round 1]\label{lem:uncloured-layers-aas}
    Let $G\sim\hrg$ be a typical hyperbolic random graph and let $c$ be any constant. Then the number of uncoloured vertices after the first round of \RCT with colour space $\Omega(\Delta)$ is at most $Z \in \Tilde{\bigO}\left(n^{\frac{2\alpha - 1}{2\alpha}}\right)$ with probability $1 - \bigO(\log^{-c}(n))$.     
\end{lemma}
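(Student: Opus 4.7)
The plan is to decompose the total uncoloured count by layers, apply the per-layer bound of Lemma~\ref{lem:round-one-aas}, and sum the resulting geometric series. Write $Z = \sum_{\ell = 0}^{\finallayer} Z_\ell$, where $Z_\ell$ is the number of uncoloured vertices in layer $\mathcal{A}_\ell$ after round one. Since $G$ is a typical HRG, item~\ref{item:max-layer} of Lemma~\ref{lem:layer-properties} gives $\finallayer \leq \lceil \alpha^{-1}(\log n + \log\log n)\rceil \in \bigO(\log n)$, so the decomposition involves only $\bigO(\log n)$ nontrivial terms.

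For each layer $\ell \in \{0,1,\dots,\finallayer\}$, apply Lemma~\ref{lem:round-one-aas} with the constant $c+1$: it yields
\[
Z_\ell \in \Tilde{\bigO}\!\left(e^{-\ell(\alpha-1/2)}\, n^{\frac{2\alpha-1}{2\alpha}} + 1\right)
\]
with probability at least $1 - \bigO(\log^{-(c+1)}(n))$. A union bound over the $\bigO(\log n)$ layers shows that this bound holds \emph{simultaneously} for every layer with probability at least $1 - \bigO(\log n) \cdot \bigO(\log^{-(c+1)}(n)) = 1 - \bigO(\log^{-c}(n))$.

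Conditioning on this event and summing over $\ell$ gives
\[
Z \;\leq\; \Tilde{\bigO}\!\left( n^{\frac{2\alpha-1}{2\alpha}} \sum_{\ell=0}^{\finallayer} e^{-\ell(\alpha-1/2)} \;+\; (\finallayer+1) \right).
\]
Because $\alpha > 1/2$, the geometric series converges: $\sum_{\ell \geq 0} e^{-\ell(\alpha-1/2)} = (1 - e^{-(\alpha-1/2)})^{-1} \in \bigO(1)$. The additive $(\finallayer+1) \in \bigO(\log n)$ term is absorbed into $\Tilde{\bigO}(n^{(2\alpha-1)/(2\alpha)})$ since $\alpha > 1/2$ makes $n^{(2\alpha-1)/(2\alpha)}$ grow polynomially in $n$. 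Combining these, $Z \in \Tilde{\bigO}(n^{(2\alpha-1)/(2\alpha)})$ with probability $1 - \bigO(\log^{-c}(n))$, which is the claimed bound.

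There is no real obstacle beyond careful bookkeeping of the failure probability: we must apply Lemma~\ref{lem:round-one-aas} with exponent $c+1$ (rather than $c$) so that the union bound over the $\bigO(\log n)$ layers still leaves $\bigO(\log^{-c}(n))$ slack, and we must rely on $\alpha > 1/2$ twice---once so that the geometric series is summable, and once so that the polynomial factor $n^{(2\alpha-1)/(2\alpha)}$ dominates the additive $\bigO(\log n)$ contribution from the $+1$ per layer.
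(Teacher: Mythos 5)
Your proof is correct and follows essentially the same route as the paper: apply the per-layer bound of Lemma~\ref{lem:round-one-aas} with failure exponent $c+1$, union bound over the $\bigO(\log n)$ layers, and sum. The only cosmetic difference is that you sum the geometric series $\sum_\ell e^{-\ell(\alpha-1/2)}$ explicitly, whereas the paper just multiplies the per-layer maximum by the $\bigO(\log n)$ layer count and absorbs the factor into the $\Tilde{\bigO}$; both yield the same bound.
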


\begin{proof}
 
    Fix a layer with index $\ell$ and let $Z_\ell$ be the number of uncoloured vertices in layer $\mathcal{A}_\ell$. Then using \Cref{lem:uncloured-layers-aas} with $c' = c + 1$ we obtain $Z_\ell \in \Tilde{\bigO}(e^{-\ell(\alpha - 1/2)} \cdot n^{\frac{2\alpha - 1}{2\alpha}} + 1)$ with probability $1 - \bigO(\log^{-(c+1)}(n))$.

    Notice that, since $\alpha > 1/2$, it follows $Z_\ell \in \Tilde{\bigO}\left(n^{\frac{2\alpha - 1}{2\alpha}}\right)$ with probability $1 - \bigO(\log^{-(c+1)}(n))$. Then we use a union bound over all $\bigO(\log(n))$ layers and it holds simultaneously for all layers that $Z_\ell \in \Tilde{\bigO}\left(n^{\frac{2\alpha - 1}{2\alpha}}\right)$ with probability $1 - \bigO(\log^{-c}(n))$.
    
The result then follows since there are $\bigO(\log(n))$ many layers and thus, with probability $1 - \bigO(\log^{-c}(n))$, we have $$Z \in \sum_{\ell \in [\bigO(\log(n))]} Z_\ell \in \Tilde{\bigO}\left(n^{\frac{2\alpha - 1}{2\alpha}}\right).$$ 
\end{proof}

We write $\maxUncoul{t}:=\max_{u \in V}\left(\sum_{v \in N(u)}\indicator{\colourchosen{t-1}{v} = \emptyset}\right)$, i.e., $\maxUncoul{t}$ is the \emph{maximum uncoloured degree} before round $t$ starts. For round $t = 2$ of \RCT, we upper bound $\maxUncoul{t}$. Notice that $\maxUncoul{t=2}$ is upper bounded by the number of all uncoloured vertices after the first round which is bounded by~\Cref{lem:uncloured-layers-aas}. By this we get the following corollary.

\begin{corollary}[Maximum uncoloured degree after \RCT round 1]\label{cor:max-uncoloured-degree}
     Let $G\sim\hrg$ be a typical hyperbolic random graph and $\maxUncoul{t=2}$ the maximum uncoloured degree after the first round of \RCT with colour space $\Omega(\Delta)$. Then, for any constant $c$, 
 $\maxUncoul{t=2} \in \Tilde{\bigO}\left(n^{\frac{2\alpha - 1}{2\alpha}}\right)$ with probability $1 - \bigO(\log^{-c}(n))$.
\end{corollary}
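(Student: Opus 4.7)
The plan is essentially a one-line deduction from the already established Lemma~\ref{lem:uncloured-layers-aas}, exactly as the authors hint in the sentence preceding the corollary statement. First I would unfold the definitions: by definition, $\maxUncoul{t=2} = \max_{u \in V} \sum_{v \in N(u)} \indicator{\colourchosen{1}{v} = \emptyset}$. For any fixed $u \in V$, the uncoloured neighbours of $u$ form a subset of the set of \emph{all} uncoloured vertices in the graph, so
\[
\sum_{v \in N(u)} \indicator{\colourchosen{1}{v} = \emptyset} \;\leq\; \sum_{v \in V} \indicator{\colourchosen{1}{v} = \emptyset} \;=\; Z.
\]
Taking the maximum over $u$ preserves the inequality and gives $\maxUncoul{t=2} \leq Z$ \emph{deterministically}, i.e.\ on every realisation of both the random graph and the algorithm's randomness.

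Second, I would invoke Lemma~\ref{lem:uncloured-layers-aas} verbatim with the constant $c$ from the corollary statement. That lemma asserts that, provided the colour space is $\Omega(\Delta)$ and $G$ is a typical HRG, $Z \in \Tilde{\bigO}(n^{(2\alpha-1)/(2\alpha)})$ with probability $1 - \bigO(\log^{-c}(n))$. Combined with the deterministic inequality $\maxUncoul{t=2} \leq Z$, the same high-probability event yields $\maxUncoul{t=2} \in \Tilde{\bigO}(n^{(2\alpha-1)/(2\alpha)})$ with probability $1 - \bigO(\log^{-c}(n))$, which is exactly the claim.

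There is no real obstacle to this argument: no additional union bound over vertices is required because the deterministic step $\maxUncoul{t=2} \leq Z$ reduces the maximum over $V$ to a single global quantity whose concentration has already been established in Lemma~\ref{lem:uncloured-layers-aas} (which itself performs the union over the $\bigO(\log n)$ layers). The only minor subtlety worth noting is that this is purely an upper bound; one might in principle hope for a tighter per-vertex estimate by refining Lemma~\ref{lem:round-one-aas} together with the observation that a high-degree vertex has most of its neighbours located in low-level layers where colouring succeeds with strong probability. However, for the corollary as stated the coarse bound $Z$ is already on the right order $\Tilde{\bigO}(n^{(2\alpha-1)/(2\alpha)})$, so no such refinement is needed.
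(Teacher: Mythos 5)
Your proposal is correct and is exactly the argument the paper gives: the paper also observes that $\maxUncoul{t=2}$ is deterministically upper bounded by the total number of uncoloured vertices after round one and then invokes \Cref{lem:uncloured-layers-aas} to conclude. No further steps are needed.
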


\subsection{Theorem~\ref{thm:main-theorem}, part 1: Second Round }\label{sec:round-two}
In this section we analyse the second round of \RCT. We use our bounds for the uncoloured degree of any vertex after round 1 and the number of uncoloured vertices per layer after round 1 as ingredients to prove that after the second round the expected number of vertices in any layer is vanishing which then implies that \RCT terminates after two rounds \aas~(\Cref{pro:rct-second-round}). Moreover, we show that \RCT does not terminate earlier than after two rounds \wehp~(\Cref{pro:exactly-two-rounds}) and that a \whp guarantee after two rounds is not possible (\Cref{pro:not-two-rounds-whp}).

We now bound the number uncoloured vertices after the second round, which is the random variable $Z := \sum_{v \in V}\indicator{\colourchosen{2}{v} = \emptyset}$. 

\begin{proposition}[\RCT two rounds \aas]\label{pro:rct-second-round}
 Let $G\sim\hrg$ by a typical hyperbolic random graph and let $Z$ be the number of uncoloured vertices after round 2 of \RCT  with colour space $\Omega(\Delta)$ has ended. Then, $Z=0$ \aas     
\end{proposition}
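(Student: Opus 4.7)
I would condition on the high-probability structural aftermath of round~1: for every vertex $v$, the palette size satisfies $\colourSpace{2}{v} \in \Omega(\Delta)$ by \Cref{lem:slack-rct}; for every layer $\ell$, the uncoloured count obeys $Z_\ell \in \Tilde{\bigO}\bigl(e^{-\ell(\alpha-1/2)} n^{(2\alpha-1)/(2\alpha)} + 1\bigr)$ by \Cref{lem:round-one-aas}; and the maximum uncoloured degree satisfies $\maxUncoul{2} \in \Tilde{\bigO}\bigl(n^{(2\alpha-1)/(2\alpha)}\bigr)$ by \Cref{cor:max-uncoloured-degree}. Choosing the constant in those statements large enough and union-bounding over the $\bigO(\log n)$ layers, all three properties hold simultaneously with probability $1 - o(1)$. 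The remainder of the argument is carried out conditional on these round~1 outcomes, under which the candidate colours drawn in round~2 are mutually independent across the uncoloured vertices.

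Under this conditioning, for any $v \in V_\ell$ still uncoloured I would estimate $\Pro{\colourchosen{2}{v} = \emptyset \mid \colourchosen{1}{v} = \emptyset}$ by a union bound over the uncoloured neighbours of $v$. For each uncoloured $w \in N(v)$, the probability that $v$ and $w$ draw the same colour in round~2 equals $|\colourpal{2}{v} \cap \colourpal{2}{w}|/(\colourSpace{2}{v}\cdot\colourSpace{2}{w}) \leq 1/\colourSpace{2}{v} \in \bigO(1/\Delta)$. Hence
\[
\Pro{\colourchosen{2}{v} = \emptyset \mid \colourchosen{1}{v} = \emptyset} \;\in\; \bigO\!\left(\frac{\min(\deg(v),\,\maxUncoul{2})}{\Delta}\right),
\]
and combining the degree bound $\deg(v) \in \Tilde{\bigO}(e^{\ell/2})$ from \Cref{lem:layer-properties} with $\Delta \geq n^{1/(2\alpha)}/\log^2 n$ from \Cref{the:max-degree}, this simplifies to $\Tilde{\bigO}\bigl(\min(e^{\ell/2},\,n^{(2\alpha-1)/(2\alpha)}) \cdot n^{-1/(2\alpha)}\bigr)$.

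Summing the per-vertex bound over layers yields
\[
\Exp{Z} \;\in\; \sum_{\ell=0}^{\finallayer} \Tilde{\bigO}\!\left(\bigl(e^{-\ell(\alpha-1/2)} n^{(2\alpha-1)/(2\alpha)} + 1\bigr) \cdot \frac{\min(e^{\ell/2},\,n^{(2\alpha-1)/(2\alpha)})}{n^{1/(2\alpha)}}\right).
\]
I would split the sum at the crossover threshold $\ell^\star := (\log n)(2\alpha-1)/\alpha$, where $e^{\ell/2}$ meets $n^{(2\alpha-1)/(2\alpha)}$. On each side of $\ell^\star$ the summand is monotone in $\ell$ and attains its maximum at $\ell^\star$; a direct algebraic simplification on both sides gives the same value $\Tilde{\bigO}\bigl(n^{-2(1-\alpha)^2/\alpha}\bigr)$, so $\Exp{Z} \in \Tilde{\bigO}\bigl(n^{-2(1-\alpha)^2/\alpha}\bigr) = o(1)$ for every $\alpha \in (1/2,1)$. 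Markov's inequality then gives $\Pro{Z \geq 1} = o(1)$ conditional on the good events, and combined with the $1-o(1)$ probability of those events this proves $Z = 0$ a.a.s.

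The main obstacle is the layered bookkeeping. The crude estimate $|U| \cdot \maxUncoul{2}/\Delta$ only yields $\Tilde{\bigO}(n^{(4\alpha-3)/(2\alpha)})$, which fails to be $o(1)$ once $\alpha > 3/4$; one genuinely needs both the layer-stratified count from \Cref{lem:round-one-aas} \emph{and} the $\min(\deg(v),\maxUncoul{2})$ bound on the uncoloured degree to extract the cancellation. Only then do the two branches of the $\min$ meet at $\ell^\star$ with a strictly negative exponent $-2(1-\alpha)^2/\alpha$, which closes the argument uniformly across the whole range of $\alpha$.
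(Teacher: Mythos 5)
Your proposal is correct and follows essentially the same route as the paper: condition on the three round-1 events (palette size from \Cref{lem:slack-rct}, per-layer uncoloured counts from \Cref{lem:round-one-aas}, maximum uncoloured degree from \Cref{cor:max-uncoloured-degree}), bound the per-vertex survival probability by $\min(\deg(v),\maxUncoul{2})/\Delta$, and split the layer sum at $\ell^\star=(2-1/\alpha)\log n$ to obtain the exponent $-2(1/\alpha+\alpha-2)=-2(1-\alpha)^2/\alpha$ on both sides. The only cosmetic difference is that you apply Markov once to the total $\Exp{Z}$ while the paper applies Markov per layer and union-bounds over the $\bigO(\log n)$ layers; the two are equivalent here.
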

\begin{proof}
Fix a layer $\mathcal{A}_\ell$ with index $\ell$ and let $U_\ell \subseteq V_\ell$ be the set of uncoloured vertices in layer $\mathcal{A}_\ell$ after round one has ended. Now let $A_1$ be the event that $|U_\ell|\in \Tilde{\bigO}(e^{-\ell(\alpha - 1/2)} \cdot n^{\frac{2\alpha - 1}{2\alpha}} + 1)$ which by \Cref{lem:round-one-aas} holds with probability $\Pro{A_1} = 1 - \bigO(\log^{-2}(n))$. Moreover, let $A_2$ be the event that the maximum uncoloured degree is bounded by $\maxUncoul{t=2} \in \Tilde{\bigO}\left(n^{\frac{2\alpha - 1}{2\alpha}}\right)$ which by \Cref{cor:max-uncoloured-degree} has probability $\Pro{A_2} = 1 - \bigO(\log^{-2}(n))$. Finally, let $A_3$ be the event that the minimal colour space of a vertex is $X := \min_{u\in U_\ell}\colourSpace{t=2}{u} \in \Omega(\Delta)$ which by \Cref{lem:slack-rct} occurs with probability $\Pro{A_3} =1 - \bigO(\log^{-2}(n))$ since $\Delta \in \omega(\log(n))$ for a typical HRG by~\Cref{the:max-degree}. Let us write $A: = \{A_1 \cap A_2 \cap A_3\}$ and notice that by union bound over the complementaries $\Pro{A} =1 - \bigO(\log^{-2}(n))$. 

Next, let $Z_\ell:=\sum_{u \in U_\ell}\indicator{\colourchosen{2}{u} = \emptyset}$ be the number of uncoloured vertices in layer $\mathcal{A}_\ell$ after round 2 has ended and for $u \in V_\ell$, we upper bound $\Exp{\indicator{\colourchosen{2}{u} = \emptyset} |A}$. For this, consider a vertex $u \in U_\ell$ with uncoloured degree in round 2, for which we write $|\{v \in N(u) : \colourchosen{1}{v} = \emptyset\}| =:\deguncoul{2}{u}$. Moreover, let $X$ be the minimal colour space of any neighbour of $u$. Then, reveal $u$'s candidate colour $\colour$ and the probability that any uncoloured neighbour of $u$ has the same candidate colour $\colour$ is $\Exp{\indicator{\colourchosen{2}{u} = \emptyset}} \leq (1 - 1/X)^{\deguncoul{2}{u}} \leq \frac{\deg_2(u)}{X}$. Recall that event $A$ implies that the maximum uncoloured degree is $\maxUncoul{t=2} \in \Tilde{\bigO}(n^{\frac{2\alpha - 1}{2\alpha}})$ and that the minimal colour space for every vertex is $X \in \Omega(\Delta)$. Then we get by conditioning on event $A$ that 

\begin{align}\label{eq:expectation-indicator-conditioned}
    \Exp{\indicator{\colourchosen{2}{u} = \emptyset} |A} \leq \Theta(1)\min({\deg(u) , \maxUncoul{t=2}}) \cdot \Delta^{-1} \in \Tilde{\bigO}(\min(e^{\ell/2}, n^{\frac{2\alpha - 1}{2\alpha}})\cdot n^{-\frac{1}{2\alpha}}),
\end{align} 

where the first inequality follows since the uncoloured degree of a vertex $u$ is upper bounded by both, $u$'s degree and the maximum uncoloured degree of all vertices, and the later follows since $G$ is a typical HRG and so $\Delta \geq \log^{-2}(n)\cdot n^{\frac{1}{2\alpha}}$ by \Cref{the:max-degree} and $\deg(u) \in \Tilde{\bigO}(e^{\ell/2})$ by \Cref{lem:layer-properties} since $u \in V_\ell$.

We now bound $\Exp{Z_\ell |A}$, the expected number of vertices after round 2, conditining on event $A$. Recall that if event $A$ occurs, then the number of uncoloured vertices in layer $\mathcal{A}_\ell$ after round 1 is $|U_\ell|\in \Tilde{\bigO}(e^{-\ell(\alpha - 1/2)} \cdot n^{\frac{2\alpha - 1}{2\alpha}} + 1)$. Thus, by conditioning on $A$ we obtain via linearity of expectation
\begin{align}\label{eq:expectation-all-conditioned}
    \Exp{Z_\ell| A} &\in \Tilde{\bigO}(e^{-\ell(\alpha - 1/2)} \cdot n^{\frac{2\alpha - 1}{2\alpha}} + 1) \Exp{\indicator{\colourchosen{2}{u} = \emptyset}|A} \in \Tilde{\bigO}(e^{-\ell(\alpha - 1/2)} \cdot n^{-(1/\alpha - 1)} \cdot \min(e^{\ell/2}, n^{\frac{2\alpha - 1}{2\alpha}})),
\end{align}
where we used \Cref{eq:expectation-indicator-conditioned} in the second step.

We now proceed by a case distinction for $\ell$ to upper bound $Z_\ell$ for any layer. Our desired statement then follows afterwards via a union bound over all $\mathcal{O}(\log(n))$ layers.

\smallskip

\textbf{Case 1}~[$\ell \leq \lfloor(2 - 1/\alpha)\log(n)\rfloor$]:
By the case that $\ell \leq \lfloor(2 - 1/\alpha)\log(n)\rfloor$ we have 
$\Tilde{\bigO}(\min(e^{\ell/2}, n^{\frac{2\alpha - 1}{2\alpha}})) \in \Tilde{\bigO}(e^{\ell/2})$ and subsequently by \Cref{eq:expectation-all-conditioned} it follows for a constant $\delta >0$ that

\begin{align}\label{eq:case_1-for-aas}
\Exp{Z_\ell| A}  \in \Tilde{\bigO}(e^{\ell(1-\alpha)} \cdot n^{-(1/\alpha - 1)}) \in \Tilde{\bigO}(n^{-2(1/\alpha + \alpha -2)}) \in \bigO(n^{-\delta}),   
\end{align}

where we first used that $\alpha < 1$ in conjunction with $\ell \leq \lfloor(2 - 1/\alpha)\log(n)\rfloor$ by our case and then again that $\alpha < 1$ in the last step.

\smallskip

\textbf{Case }~2[$\ell \geq \lceil(2 - 1/\alpha)\log(n)\rceil$]: By the case that $\ell \leq \lceil(2 - 1/\alpha)\log(n)\rceil$ we have 
$\Tilde{\bigO}(\min(e^{\ell/2}, n^{\frac{2\alpha - 1}{2\alpha}})) \in \Tilde{\bigO}(n^{\frac{2\alpha - 1}{2\alpha}})$ and subsequently by \Cref{eq:expectation-all-conditioned} it follows for a constant $\delta >0$ that

\begin{align}\label{eq:case-2-for-aas}
\Exp{Z_\ell| A}  \in \Tilde{\bigO}(e^{-\ell(\alpha - 1/2)} \cdot n^{2 - \frac{3}{2\alpha}}) \in \Tilde{\bigO}(n^{-2(1/\alpha + \alpha -2)}) \in \bigO(n^{-\delta}) ,    
\end{align}

where we first used that $\alpha > 1/2$ in conjunction with $\ell \geq \lceil(2 - 1/\alpha)\log(n)\rceil$ by our case and then in the last step that $\alpha < 1$.

\smallskip

To finish the proof, fix any layer $\mathcal{A}_\ell$ and let $B_\ell$ be the event that in layer $A_\ell$ there is no uncoloured vertex after round 2. Using Markov's inequality~\Cref{lem:markov} it follows $$\Pro{B_\ell |A} =1 - \Pro{Z_\ell \geq 1 | A} \geq 1- \Exp{Z_\ell| A} \in 1 - \bigO(n^{-\delta}) ,$$ 

by \Cref{eq:case_1-for-aas} and \Cref{eq:case-2-for-aas}. Finally, recall that $\Pro{A} = 1 - \bigO(\log^2(n))$ and we obtain $\Pro{B_\ell \cap A}=\Pro{B_\ell |A}\Pro{A} = 1 - \bigO(\log^{-2}(n))$. Considering the complementary event and taking a union bound over $\bigO(\log(n))$ many layers then reveals that no layer has an uncoloured vertex after the second round with probability $1 - \bigO(1/\log(n)))$ what finishes the proof.
\end{proof}

We complement the result of \RCT terminating after 2 rounds \aas using $\Omega(\Delta)$ colours by showing that \RCT with $\bigO(\Delta)$ colours is unlikely to finish earlier than after two rounds.

\begin{proposition}[\RCT no first round termination]\label{pro:exactly-two-rounds}
     Let $G\sim\hrg$ be a typical hyperbolic random graph and run \RCT colour space at most $|\palette| \in \bigO(\Delta)$ on $G$. Then, \wehp \RCT does not terminate after the first round.
\end{proposition}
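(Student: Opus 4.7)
The plan is to use the leaves lemma to exhibit a large collection of leaves whose individual ``fail-to-be-coloured'' events in round one are mutually independent Bernoullis of success probability $1/|\palette|\in\Omega(1/\Delta)$; a Chernoff bound will then show that \wehp at least one such leaf is uncoloured, so \RCT cannot have terminated after one round.

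Concretely, I would invoke the leaves lemma (\Cref{lem:leaf}) at a single layer $\startlayer:=\lceil c\log\log n\rceil$ for a sufficiently large constant $c$. On a typical HRG this yields a subset $U_{\startlayer}\subseteq V_{\startlayer}$ with $|U_{\startlayer}|=\Theta(|V_{\startlayer}|)=\Theta(n(\log n)^{-\alpha c})$ such that each $u\in U_{\startlayer}$ satisfies $|L(u)|\in\Theta(\deg(u))$. Together with the standard two-sided Chernoff concentration for Poisson degrees (\Cref{lem:vertex-degree}, \Cref{lem:layer-properties}), every $u\in V_{\startlayer}$ has $\deg(u)=\Theta(e^{\startlayer/2})=\Theta((\log n)^{c/2})$, so the total number of leaves attached to $U_{\startlayer}$-vertices is
\[
M \;=\; \sum_{u\in U_{\startlayer}} |L(u)| \;=\; \Theta\!\left(n(\log n)^{c(1/2-\alpha)}\right) \;=\; \Theta\!\left(n/\polylog(n)\right).
\]
Note that these leaves are pairwise distinct, since a leaf has a unique neighbour.

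Next, for each such leaf $\ell$ with unique neighbour $u=u(\ell)$ let $Y_\ell$ indicate that $\ell$ is uncoloured after round one. Since $\ell$'s only neighbour is $u$, $Y_\ell=1$ exactly when $\ell$ and $u$ draw the same candidate colour, so $\Exp{Y_\ell}=1/|\palette|$. The key structural point is the \emph{mutual independence} of the family $\{Y_\ell\}$: for leaves with distinct roots this is immediate from the independence of all candidate-colour draws, while for leaves $\ell_1,\dots,\ell_k$ sharing a common root $u$ the joint distribution of $(Y_{\ell_1},\dots,Y_{\ell_k})$ conditional on the value of $u$'s candidate colour is a product of $\mathrm{Ber}(1/|\palette|)$'s that does not depend on that value, so the unconditional distribution factors as well. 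I expect this step to be the most delicate, but once written out it reduces to a short symmetry computation.

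Finally, set $Y:=\sum_\ell Y_\ell$. The assumption $|\palette|\in\bigO(\Delta)$ together with the typical-HRG bound $\Delta\le n^{1/(2\alpha)}\log n$ from \Cref{the:max-degree} yields
\[
\Exp{Y} \;=\; \frac{M}{|\palette|} \;\in\; \Omega\!\left(n^{1-1/(2\alpha)}/\polylog(n)\right) \;=\; n^{\Omega(1)},
\]
using $\alpha\in(1/2,1)$. Applying a Chernoff bound to the mutually independent indicators $\{Y_\ell\}$ gives $\Pro{Y=0}\le \exp(-\Exp{Y}/8) = n^{-\omega(1)}$, so \wehp at least one leaf stays uncoloured after round one, which means \RCT has not terminated. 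This completes the proof plan.
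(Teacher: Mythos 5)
Your proof is correct, but it takes a genuinely different route from the paper's. The paper does not touch the leaves lemma here: it takes the giant component $H$ (of size $\Theta(n)$), runs a DFS that exposes the $|V(H)|-1$ tree edges one by one, and notes that each newly revealed vertex collides with its DFS-parent's candidate colour with probability $\Omega(1/\Delta)$ independently of the past; this gives $\Exp{Z}\in\Omega(n/\Delta)=n^{\Omega(1)}$ collisions and a Chernoff bound finishes. You instead restrict to the leaves attached to a single layer $\mathcal{A}_{\startlayer}$ via \Cref{lem:leaf}, which costs you a $\polylog(n)$ factor in the number of trials (irrelevant for the final $n^{\Omega(1)}$ bound) but buys you a family of collision indicators whose \emph{mutual} independence you can justify cleanly by conditioning on the roots' candidate colours and observing that the conditional product law does not depend on the conditioning --- arguably a more transparent justification than the paper's sequential-exposure independence claim. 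The trade-off is that you invoke a heavier structural tool (the leaves lemma, which the paper reserves for its colour-space lower bounds) where the paper only needs the linear size of the giant component; both arguments operate in the same framework of conditioning on a typical HRG so that the \wehp guarantee is purely over the algorithm's randomness. One minor remark: for the lower bound on $M$ you do not even need the degree estimate $\deg(u)=\Theta((\log n)^{c/2})$ --- the crude bound $M\geq|U_{\startlayer}|$ already gives $n/\polylog(n)$ trials, which suffices.
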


\begin{proof}
    Consider the giant component $H$ of $G$ and let $n' = |V(H)|$. Then $n' \in \Theta(n)$ \wehp (\cite[Theorem 4]{bfkrz-esa-2023}. Fix any vertex $u \in V(H)$ and run a depth first search (DFS) on $u$ after each vertex picked its candidate colour in the first round of \RCT which is u.a.r. from $\bigO(\Delta)$ colours. Then the DFS exposes $n'-1 \in \Theta(n)$ many edges one-by-one and whenever it reveals the candidate colour of a new vertex $v$, the probability that it tried the same colour in the first round as the neighbour whose edge we used to traverse the graph is $p_v \in \Omega(1/\Delta)$ as we use $\bigO(\Delta)$ many colours. Using linearity of expectation for the random variable $Z$, with which we count number of neighbours with the same colour we obtain $\Exp{Z} \geq \sum_{v \in V(H)\setminus\{u\}} p_v = (n' - 1)\cdot p_v\in \Omega(n/\Delta)$. Using that the largest degree $\Delta$ for a typical hyperbolic random graph is at most $n^{\frac{1}{2\alpha} +o(1)}$ by \Cref{the:max-degree}, we obtain $\Exp{Z} \in n^{\frac{2\alpha - 1}{2\alpha} -o(1)} \in n^{\Omega(1)}$ since $\alpha> 1/2$. Finally, we observe that via our DFS set up, the probability $p_v$ is independent for each vertex $v \in V(H)\setminus\{u\}$ such that we can and we will apply a Chernoff bound by which we get the desired concentration for $Z$. The claim follows since neighbours that have the same candidate colour implies that the algorithm goes to another round.
\end{proof}

The subsequent statement implies that our analysis of \RCT finishing after two rounds \aas is tight in the sense that a \whp guarantee is not possible (since $\alpha > 1/2$). 

\begin{proposition}[\RCT no two rounds \whp guarantee]\label{pro:not-two-rounds-whp}
    Let $G\sim \hrg$ be a typical hyperbolic random graph. Then, in the second round of \RCT, there exists a pair $\{u,v\} \in E(G)$ where $v$ is a leaf that tries the same candidate colour as $u$ with probability at least $n^{-(1/\alpha - 1) - o(1)}$.
\end{proposition}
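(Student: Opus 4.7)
The plan is to apply the leaves lemma (\Cref{lem:leaf}) at a suitably small layer to extract a large collection of leaf--core pairs, and then run a simple first/second moment argument to lower bound the probability that at least one such pair collides in both round $1$ and round $2$.

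First I will fix $\bar\ell := \lceil c\log\log n\rceil$ with $c$ the constant from \Cref{lem:leaf}, so the leaves lemma yields a set $U\subseteq V_{\bar\ell}$ with $|U|\in\Theta(n/\polylog(n))$ and $|L(u)|\in\Theta(\polylog(n))$ for each $u\in U$. This produces $N = n^{1-o(1)}$ leaf--core pairs $\mathcal P=\{(u,v):u\in U,\,v\in L(u)\}$. For each such pair let $X_{u,v}$ be the indicator that $u$ and $v$ pick the same candidate colour in round $1$ \emph{and} again in round $2$; write $X := \sum_{p\in\mathcal P} X_p$. Note that a round-$1$ collision between a leaf $v$ and its unique neighbour $u$ forces both vertices to remain uncoloured, so round $2$ is well defined; and since $v$'s palette remains the full set $\palette$ throughout (its only neighbour $u$ is itself uncoloured at the start of round $2$), a uniform round-$2$ draw of $v$ matches $u$'s round-$2$ candidate with probability exactly $1/|\palette|$, regardless of how $u$'s palette has shrunk. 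Combined with the round-$1$ collision probability of $1/|\palette|$, this gives $\Pro{X_{u,v}=1}=1/|\palette|^2$. Since $G$ is typical, \Cref{the:max-degree} gives $|\palette| = \Theta(\Delta) \le n^{1/(2\alpha)+o(1)}$, and so
\[
\Exp{X} \;=\; \frac{N}{|\palette|^2} \;\ge\; n^{-(1/\alpha-1)-o(1)}.
\]

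To turn this first moment into a lower bound on $\Pro{X\ge 1}$ I will use Paley--Zygmund. For two distinct pairs $(u,v),(u',v') \in \mathcal P$ the point is that every leaf has a unique core (so $v\neq v'$ always) and a leaf's candidate colour in each round is a fresh uniform draw from the full palette, independent of all other choices; a short calculation then shows $\Pro{X_{u,v}=1,\,X_{u',v'}=1}=1/|\palette|^4$ in all cases---whether $u = u'$ or $u\neq u'$. This gives $\Exp{X^2}\le \Exp{X}+\Exp{X}^2$, and because $\Exp{X}=o(1)$ (since $\alpha<1$) the Paley--Zygmund inequality delivers
\[
\Pro{X\ge 1} \;\ge\; \frac{\Exp{X}^2}{\Exp{X^2}} \;\ge\; \frac{\Exp{X}}{1+\Exp{X}} \;=\; (1-o(1))\,\Exp{X} \;\ge\; n^{-(1/\alpha-1)-o(1)},
\]
which is the claim.

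The main obstacle I expect is the same-core contribution to the second moment: verifying that two distinct leaves of the same core $u$ contribute only $1/|\palette|^4$ (and not more) requires carefully tracking that the three random colours involved---the two leaves and $u$---are independent uniforms in each of rounds $1$ and $2$, and that $u$'s own round-$2$ palette is irrelevant because the leaves always see the full palette. Everything else is routine: the first moment is just the count of pairs produced by \Cref{lem:leaf} divided by $|\palette|^2$, and the dominant polynomial factor in $\Exp{X}$ comes from the gap between $N = n^{1-o(1)}$ and $|\palette|^2 = n^{1/\alpha+o(1)}$.
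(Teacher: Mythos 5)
Your proof is correct and rests on the same structural ingredients as the paper's -- the leaves lemma at layer $\bar\ell=\Theta(\log\log n)$, and the key observation that a leaf's candidate colour in every round is a fresh uniform draw from the \emph{full} palette, independent of everything else -- but the probabilistic finish is genuinely different. The paper argues in two stages: it first shows via a Chernoff bound that \wehp a set $U'$ of $n'=n^{1-\frac{1}{2\alpha}-o(1)}$ core--leaf pairs collide in round $1$ (using that the leaves' round-$1$ draws are independent), and then lower-bounds the probability that at least one of these $n'$ surviving pairs collides again in round $2$ by the direct binomial computation $n'p(1-p)^{n'}$ with $p=|\palette|^{-1}$. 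You instead define the single indicator of a collision in \emph{both} rounds, compute $\Exp{X}=N/|\palette|^2=n^{-(1/\alpha-1)-o(1)}$, and convert this to $\Pro{X\ge 1}\ge(1-o(1))\Exp{X}$ via Paley--Zygmund; your second-moment bound $\Pro{X_{u,v}=X_{u',v'}=1}=|\palette|^{-4}$ is justified exactly as you say (distinct pairs force $v\neq v'$, the leaves are never cores since they have degree $1$ while vertices in $V_{\bar\ell}$ have degree $\omega(1)$, and revealing the cores' candidates first makes the leaves' matches independent with probability exactly $|\palette|^{-1}$ each, regardless of how the cores' palettes have shrunk). Both routes yield the same exponent; yours avoids the intermediate \wehp concentration step and conditioning at the cost of a second-moment calculation, while the paper's intermediate bound on $|U'|$ is also reused conceptually elsewhere. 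One small presentational point: like the paper, you implicitly assume $|\palette|\in\Theta(\Delta)$, which is not stated in the proposition but is the intended regime; it is worth making that hypothesis explicit.
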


\begin{figure}[t]
    \centering \includegraphics[height=0.3\textheight]{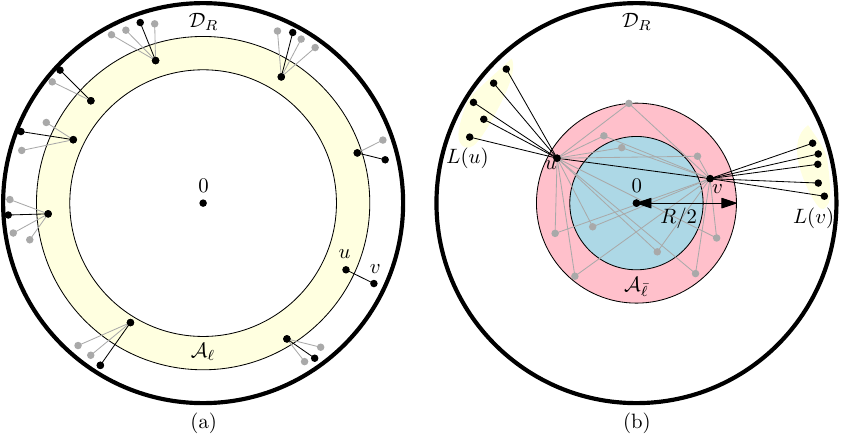}
    \caption{(a) Illustration of vertices in $\mathcal{A}_\ell$, that have the same candidate colour as one of its leaves. (b) Vertices $u$ and $v$ of radius at most $R/2$ that same candidate colour in the first round and many leaves~(\Cref{lem:birthday-paradox}).}
    \label{fig:colour-locked}
\end{figure}

\begin{proof}

Let $c$ be constant such that for $\ell := \ell(c) = c\log(\log(n))$ \Cref{lem:leaf} is applicable. Then, \aas a constant fraction of the vertices of $V_\ell$ have at least one leaf and the property carries over to our typical HRG $G$. Let us denote this set by $U_\ell \subseteq V_\ell$. It then holds
\begin{align}\label{eq:wvhp-many}
    |U_\ell| \geq \Theta(1)|V_\ell| \geq \Theta(1) ne^{-\alpha\ell} = \Theta(1) n\cdot\log^{-\alpha \cdot c}(n),
\end{align}

where we used that ${|V_\ell}| \geq \Theta(1) ne^{-\alpha\ell}$ for a typical HRG by $\Cref{lem:layer-properties}$. Now, let $U' \subseteq U_\ell$ be the set of vertices located in layer $\mathcal{A}_\ell$ where $u \in U_\ell$ has the same candidate colour as one of the leaves in its neighbourhood (see also \Cref{fig:colour-locked}a). We aim to lower bound $|U'|$. To this end, fix a vertex $u \in U_\ell$ and reveal the candidate colour $\colour$ of $u$. Since a vertex tries a candidate colour uniform at random from $\palette$, the probability that a leaf $v \in N(u)$ tries the same candidate colour $\colour$ as $u$ is $|\palette|^{-1} \in \Omega(\Delta^{-1})$. Since ${|U_\ell| \geq \Theta(1) n\cdot\log^{-\alpha \cdot c}(n)}$ for our typical HRG $G$, it follows by linearity of expectation that

\begin{align*}
    \Exp{|U'|} \geq \Theta(1) n\Delta^{-1} \cdot\log^{-\alpha \cdot c}(n) \in n^{1-\frac{1}{2\alpha} - o(1)}\log^{-\alpha \cdot c}(n),
\end{align*}

where we used that $G$ is a typical HRG and thus, $\Delta \in n^{\frac{1}{2\alpha} \pm o(1)}$ by \Cref{the:max-degree}. Since leaves try their candidate colour independently uniform at random, $|U'|$ is a binomial random variable. Thus, we can and we will apply a Chernoff bound and 

\begin{align}\label{eq:size-of-pairs}
    |U'| \in n^{1-\frac{1}{2\alpha} - o(1)}\log^{-\alpha \cdot c}(n) =: n' \text{ } \wehp
\end{align}

We next consider the second round of the \RCT and the probability that there exists a vertex $u \in U'$ that has the same candidate colour as its leaf $v$. Fix a vertex $u \in U'$ and reveal its candidate colour in the second round. Then the probability that leaf $v$ tries the same candidate is $p_v = |\palette|^{-1} \in \Omega(\Delta^{-1})$. Since we have by \Cref{eq:size-of-pairs}~\wehp at least $n'$ many pairs $\{u,v\}$ with $u \in U'$ and $v$ being a leaf that was not coloured in the first round, we can repeat this experiment $n'$ times. Let $Y$ be the random variable that counts the number of pairs $\{u,v\}$ that try the same candidate colour. Since $Y$ is binomial distributed we obtain by law of total probability
\begin{align*}
    \Pro{Y \geq 1} &\geq \Pro{Y = 1 | \text{ } |U'| \in n'}\Pro{|U'| \in n'}\\
    &\geq (1-o(1))) n'\cdot p_v(1-p_v)^{n'} \\
    &\geq \Theta(1) n'\cdot\Delta^{-1}\\
    &\in \Delta^{-1}\cdot n^{1-\frac{1}{2\alpha} - o(1)}\log^{-\alpha \cdot c}(n).
\end{align*}

Using again \Cref{the:max-degree} by which $\Delta \in n^{\frac{1}{2\alpha} \pm o(1)}$ holds for a typical HRG, it follows  that $$\Pro{Y \geq 1} \geq \Pro{Y \geq 1 | \Delta \in n^{\frac{1}{2\alpha} + o(1)}} \in (1-o(1))\Theta(1)n^{-(1/\alpha - 1) - o(1)},$$ and the claim follows.
\end{proof}

\subsection{ Theorem~\ref{thm:main-theorem}, part 2: Limitations of Random Colour Trial}\label{sec:colour-space}
In the previous part we looked at \RCT with $\Theta(\Delta)$ many colours. Ideally, one would like to colour with as few colours as possible close to the chromatic number. In this section we show that, when the colour space is upper bounded by a number that is substantially larger than the chromatic number, \RCT does never terminate on HRGs \whp

To this end we introduce the concept of \emph{colour locked}. We say that a vertex is colour locked after round $t$, if it is not coloured after $t$ rounds, while all but one colour is used by it's neighbours. In other words, the vertex can only select one possible colour in all rounds $t' > t$. We show that a constant fraction of a clique is colour locked after one round if the set of colours is not \emph{too larger}. 

To express our bound, we set $\minexp := \min((2\alpha + 1/2)^{-1}, 2(1-\alpha))$ and we show that in the first round of \RCT on HRGs there exists a pair of vertices that are neighbours and try the same colour in the first round. Moreover, both vertices with the same candidate colour have "many" leaves given that the colour space of the algorithm is at most $\eps/\log(n) \cdot n^\minexp $. Recall that $L(u)$ is the set of neighbours of $u$ that are leaves, i.e., vertices of degree $1$.

\begin{lemma}[Birthday Paradox]\label{lem:birthday-paradox}\label{lem:colour-locked} Let $G\sim \hrg$ be a typical hyperbolic random graph, let $\minexp = \min((2\alpha + 1/2)^{-1}, 2(1-\alpha))$ and let $\eps \in (0,1)$ be constant small enough. Then in the first round of \RCT with colour space at most $|\palette| \leq \epsilon/\log(n) \cdot n^{\minexp}$, there exists a pair of neighbours $\{u,v\}\in E(G)$ such that $u$ and $v$ try the same candidate colour and $|L(u)|, |L(v)| \in \Omega(n^{\minexp})$.
\end{lemma}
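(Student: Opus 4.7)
My plan is to localise the search in the single layer whose typical degree matches the palette size, use the leaves lemma to extract a dense leaves-rich sub-population of that layer, and then run a second-moment argument on the round-$1$ candidate colours.

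Concretely I set $\startlayer := \lfloor 2\minexp\log(n)\rfloor$, so that by~\Cref{eq:layer-degree} every vertex of $V_{\startlayer}$ has expected degree $\Theta(n^{\minexp})$, and by~\Cref{lem:layer-properties} $|V_{\startlayer}| = \Theta(n^{1-2\alpha\minexp})$. A short case check shows $2\alpha\minexp < 1$ in both regimes---namely $\alpha \leq 3/4$ with $\minexp = 1/(2\alpha+1/2)$ (giving $2\alpha\minexp = 4\alpha/(4\alpha+1)$) and $\alpha > 3/4$ with $\minexp = 2(1-\alpha)$ (giving $2\alpha\minexp = 4\alpha(1-\alpha) \leq 3/4$)---so the hypotheses of~\Cref{lem:leaf} are met by $\startlayer$, yielding a set $U_{\startlayer} \subseteq V_{\startlayer}$ of size $\Theta(|V_{\startlayer}|)$ whose every element has $|L(\cdot)| \in \Omega(n^{\minexp})$ leaves.

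Next I lower bound the number of edges inside $U_{\startlayer}$. By~\Cref{lem:max-angle}, two layer-$\startlayer$ vertices are adjacent whenever their angular distance is at most $\theta_R(R-\startlayer, R-\startlayer) = \Theta(n^{2\minexp - 1})$, capped at $\pi$ exactly when $\minexp \geq 1/2$ (in which case the layer essentially induces a clique). Multiplying by $\binom{|V_{\startlayer}|}{2}$ and concentrating via Chernoff gives
\begin{equation*}
|E(G[V_{\startlayer}])| = \Theta(n^{E}), \qquad E := \min\bigl(2-4\alpha\minexp,\; 1+2\minexp(1-2\alpha)\bigr).
\end{equation*}
Substituting $\minexp$ yields $E = \minexp$ for $\alpha \leq 3/4$ and $E = 5 - 12\alpha + 8\alpha^2 > \minexp$ for $\alpha > 3/4$, so in either regime $|E(G[V_{\startlayer}])|/|\palette| \geq \log(n)\cdot n^{E - \minexp}/\eps \in \omega(1)$. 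Because $U_{\startlayer}$ is built in the proof of~\Cref{lem:leaf} by selecting one representative per angularly-disjoint sector, it is well spread in angle, so a constant fraction of the $V_{\startlayer}$-edges have both endpoints in $U_{\startlayer}$; hence $m := |E(G[U_{\startlayer}])| = \Theta(|E(G[V_{\startlayer}])|)$ and $m/|\palette| \in \omega(1)$.

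Finally I run a second-moment argument on the round-$1$ candidate colours. Define
\begin{equation*}
X := \sum_{\{u,v\} \in E(G[U_{\startlayer}])} \indicator{\colourchosen{1}{u} = \colourchosen{1}{v}}, \qquad \Exp{X} = \frac{m}{|\palette|} \in \omega(1).
\end{equation*}
Two indicators $X_e, X_{e'}$ are independent when $e, e'$ are vertex disjoint, and for $e = \{u,v\}, e' = \{u,w\}$ sharing a vertex the independence of the uniform candidate colours $c_u, c_v, c_w$ gives $\Pro{X_e = X_{e'} = 1} = |\palette|^{-2} = \Pro{X_e = 1}\Pro{X_{e'} = 1}$; thus every off-diagonal covariance vanishes, $\Var(X) \leq \Exp{X}$, and Chebyshev yields $\Pro{X = 0} \leq 1/\Exp{X} = o(1)$. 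On the event $\{X \geq 1\}$ we obtain an adjacent pair $\{u,v\} \in E(G)$ with $u, v \in U_{\startlayer}$ trying the same candidate colour, and by the leaves-lemma guarantee both $|L(u)|, |L(v)| \in \Omega(n^{\minexp})$.

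\textbf{Main obstacle.} The technically most delicate step is verifying that a constant fraction of the edges in $G[V_{\startlayer}]$ actually sit inside $U_{\startlayer}$. The natural justification is the angularly-disjoint-sector construction of~\Cref{lem:leaf}, which requires the sector width to be comparable to, or smaller than, the in-layer connection angle $\theta_R(R-\startlayer, R-\startlayer) = \Theta(n^{2\minexp-1})$; if this fails out of the box, a finer angularly-balanced family of leaves-rich vertices in layer $\startlayer$ can be re-extracted by the same independent-sector idea adapted to $\mathcal{A}_{\startlayer}$.
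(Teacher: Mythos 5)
Your overall strategy is the right one and, for $\alpha\leq 3/4$, your layer choice $\startlayer=\lfloor 2\minexp\log(n)\rfloor$ coincides with the paper's; the paper likewise invokes \Cref{lem:leaf} on that layer and then argues a candidate-colour collision among its vertices. The substantive divergence is in how the collision probability is bounded, and this is where your argument falls quantitatively short. You count edges of $G[U_{\startlayer}]$ and apply a second-moment/Chebyshev bound, which can never give more than $\Pro{X=0}\leq 1/\Exp{X}$ --- pairwise independence of the collision indicators is exactly the regime where Chebyshev is tight. For $\alpha\leq 3/4$ you compute $E=\minexp$, so $\Exp{X}=m/|\palette|=\Theta(\log(n)/\eps)$ and your failure probability is $\Theta(\eps/\log(n))$. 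That is only an \aas guarantee, whereas this lemma is consumed downstream with a \wehp guarantee (\Cref{pro:dist-coul-never} and \Cref{cor:rctdeg-birthday-paradox} both begin ``consider the pair of neighbours that exists \wehp'', and Theorems~\ref{thm:main-theorem}(2) and~\ref{the:rctdeg}(3) assert non-termination \wehp). The paper avoids this loss by observing that the chosen layer sits (essentially) inside $\B_0(R/2)$, so $U_{\startlayer}$ induces a clique, and then using the exact birthday-paradox product $\prod_{j}(1-j|\palette|^{-1})\leq\exp(-\Theta(|U_{\startlayer}|^2/|\palette|))$, which gives failure probability $n^{-\Theta(1/\eps)}$. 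The repair for your proof in the regime $\alpha\leq 3/4$ is immediate: since $\minexp\geq 1/2$ there, your layer is a clique (or at least every pair within a fixed constant angle is adjacent), so you should replace Chebyshev by the product bound over $\Omega(|U_{\startlayer}|^2)=\Omega(n^{\minexp})$ adjacent pairs rather than settling for the first-moment ratio.

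Two smaller remarks. First, the obstacle you flag --- that a constant fraction of the $V_{\startlayer}$-edges have both endpoints in $U_{\startlayer}$ --- is genuine but repairable: in both regimes the in-layer connection angle exceeds the sector width of the \Cref{lem:leaf} construction by a factor $n^{4(\alpha-1)^2}$ (respectively is $\Theta(1)$ for $\alpha\leq 3/4$), so grouping consecutive sectors into blocks of that width and applying convexity to the per-block counts of successful sectors yields $\Omega(p^2)$ of all close pairs, as you anticipate; this step does need to be written out, since the successful sectors are not a priori angularly equidistributed. Second, your covariance computation and the verification that $\startlayer$ satisfies the hypotheses of \Cref{lem:leaf} (i.e.\ $2\alpha\minexp$ bounded away from $1$) are correct, and for $\alpha>3/4$ your Chebyshev bound does give a polynomially small failure probability $n^{-(4\alpha-3)(2\alpha-1)}$, which is stronger than \aas though still short of \wehp; there too the exponential product bound over adjacent pairs within a single block is the cleaner route.
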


\begin{proof}
       Set $\startlayer = \min\left(\frac{2\log(n)}{2\alpha + 1/2},\lceil R/2 \rceil + 1\right)$ and consider the set of vertices in layer $\mathcal{A}_{\startlayer}$, i.e., $V_{\startlayer}$. Notice that  $\mathcal{A}_{\startlayer} \subset \B_0(R/2)$ and thus, the set of vertices $V_\ell$ forms a clique. We show that there exists a pair of vertices $u,v \in V_\ell$, that try the same candidate colour and $|L(u)|, |L(v)| \in \Omega(n^{\minexp})$ which then proves our desired statement (see \Cref{fig:colour-locked}b for a sketch)

       First, we show the desired property that there exists a pair $u,v \in V_\ell$ that try the same candidate colour. To this end consider $U_{\startlayer} \subseteq V_{\startlayer}$ where a constant fraction of their neighbours are leaves. By \Cref{lem:leaf} we have $|U_{\startlayer}| \in \Omega(|V_{\startlayer}|)$ \aas for a hyperbolic random graph and since $G$ is a typical HRG this holds for our input. In turn this implies by \Cref{lem:layer-properties} that

       \begin{align}\label{eq:birthday-equation}
           |U_{\startlayer}| \geq \Theta(1)|V_{\startlayer}| \geq \Theta(1) n^{-\alpha\startlayer}) \in \Omega\left(n^{\min\left(1-\frac{2\alpha}{2\alpha + 1/2}, 1-\alpha\right)}\right),
       \end{align} 
       
       for a typical HRG and by our choice of $\startlayer$. 

       Moreover, in the first round any vertex tries a candidate uniform at random from the entire colour palette $\palette$. For the event $B: = \{\text{$\exists u,v \in U_{\startlayer}: u$ and $v$ try the same candidate colour.}\}$ we then obtain by the birthday paradox (see e.g. \cite[§5.1]{mu-pc-05}),  that
    \begin{align*}
        \Pro{B} &= 1 - \prod_{j=1}^{|U_{\startlayer}|-1}(1- j\cdot |\palette|^{-1})\\
        &\geq 1- \exp{\left(-\frac{\Theta(1)\log(n)\cdot |U_{\startlayer}|^2\cdot n^{-\min((2\alpha + 1/2)^{-1}, 2(1-\alpha))}}{\eps \cdot n^{\min(1/(2\alpha + 1/2), 2(1-\alpha))}}\right)} \tag{$|\palette| \leq \epsilon/\log(n) \cdot n^{\minexp}$} \\
        &\geq 1 - 1/n^{c(\eps)}, \tag{$|U_\ell| \in  \Omega\left(n^{\min\left(1-\frac{2\alpha}{2\alpha + 1/2}, 1-\alpha\right)}\right)$}
    \end{align*}
       for any constant $c(\eps) >0$ given that $\eps>$ is small enough where we used $\minexp = \min((2\alpha + 1/2)^{-1}, 2(1-\alpha))$ and then \Cref{eq:birthday-equation} in the last step. This shows that indeed there exists two vertices $u,v \in U_\ell$ that try the same candidate colour \wehp

       Finally, we show that $u$ and $v$ have $|L(u)|, |L(v)| \in \Omega(n^{\minexp})$ many leaves. To this end fix a vertex $u \in U_{\startlayer} \subseteq V_{\startlayer}$ and recall that $U_{\startlayer}$ is the set of vertices where a vertex $v \in U_{\startlayer}$ has a constant fraction of its vertices as leaves. Then we obtain via \Cref{lem:vertex-degree} for $u \in U_{\startlayer}$ that 
       
       $$\E{\deg(u)} \geq \Theta(1)e^{R-(R-\startlayer)/2}\geq \Theta(1) n^{\min(1/(2\alpha + 1/2), 1/2)} \in \Omega(n^{\minexp}),$$ 
       
       via our choice of $\startlayer = \min\left(\frac{2\log(n)}{2\alpha + 1/2},\lceil R/2 \rceil + 1\right)$ and we used that $\min(1/(2\alpha + 1/2), 1/2) \geq \min(1/(2\alpha + 1/2), 2(1-\alpha)) =: \minexp$ in the last step. Since the degree of a vertex is a Poisson random variable it holds via Chernoff bound $\deg(u) \in \Omega(n^{\minexp})$ \wehp and thus, also for our typical HRG. Moreover, via union bound this holds for all vertices of the set $U_{\startlayer}$. The claim then follows since for any vertex $u \in U_{\startlayer}$ we have $|L(u)| \in \Omega(\deg(u))$. 
\end{proof}

We now show that there exists a pair of vertices that are neighbours and that are colour locked on the same colour after the first round which implies that \RCT never terminates.

\begin{proposition}[Colour locked]\label{pro:dist-coul-never}
    Let $G\sim \hrg$ be a typical hyperbolic random graph, let $\minexp = \min((2\alpha + 1/2)^{-1}, 2(1-\alpha))$ and let $\eps \in (0,1)$ be constant small enough. Then \RCT with at most $|\palette| \leq \eps/\log(n) \cdot n^{\minexp}$ many colours on $G$ never terminates \wehp 
\end{proposition}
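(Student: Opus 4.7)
The plan is to combine the Birthday Paradox lemma (\Cref{lem:birthday-paradox}) with a coupon-collector / balls-into-bins argument on the leaves of the colliding pair to show that both endpoints of the colliding edge become \emph{colour-locked} on the same colour after the first round, so the process can never complete.

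\textbf{Step 1: Finding the pair.} I would start by invoking \Cref{lem:birthday-paradox}: \wehp there exists an edge $\{u,v\}\in E(G)$ such that, in the first round of \RCT, both $u$ and $v$ pick the same candidate colour $\colour^{*}\in\palette$, and moreover $|L(u)|,|L(v)|\in\Omega(n^{\minexp})$. Condition on this event and fix such a pair $(u,v)$ together with its leaf sets $L(u)$ and $L(v)$. Since $u$ and $v$ are neighbours that try the same candidate, neither adopts $\colour^{*}$ in round~$1$, so both remain uncoloured after round~$1$.

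\textbf{Step 2: Leaves consume the palette.} Each leaf $w\in L(u)$ has $u$ as its unique neighbour, so $w$ successfully adopts its candidate colour in round~$1$ if and only if that candidate is different from $\colourchosen{1}{\cdot}$-candidate of $u$, i.e.\ different from $\colour^{*}$. Reveal the candidate colours of the leaves \emph{after} revealing those of $u$ and $v$. For any fixed colour $\colour\in\palette\setminus\{\colour^{*}\}$, the probability that no leaf of $u$ picks $\colour$ is
\[
\left(1-\tfrac{1}{|\palette|}\right)^{|L(u)|}\leq \exp\!\left(-\tfrac{|L(u)|}{|\palette|}\right)\leq \exp\!\left(-\tfrac{\log(n)}{\eps}\cdot \Theta(1)\right),
\]
using $|L(u)|\in\Omega(n^{\minexp})$ and $|\palette|\leq \eps n^{\minexp}/\log(n)$. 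Taking this ratio large enough by choosing $\eps$ small, this is at most $n^{-c}$ for any constant $c$. A union bound over the at most $|\palette|\leq n^{\minexp}$ colours different from $\colour^{*}$ then shows that, \wehp, every such colour is picked (and therefore permanently claimed) by at least one leaf of $u$ in round~$1$. The same argument applies to $L(v)$.

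\textbf{Step 3: Colour-locking and non-termination.} After round~$1$, every colour in $\palette\setminus\{\colour^{*}\}$ appears on some neighbour of $u$ (namely one of its leaves that succeeded) and likewise on some neighbour of $v$. Hence the available palettes shrink to $\colourpal{2}{u}=\colourpal{2}{v}=\{\colour^{*}\}$. Since \RCT forces each still-uncoloured vertex to draw uniformly from its available palette, in round~$2$ (and in every subsequent round) $u$ and $v$ are both \emph{forced} to propose $\colour^{*}$ again, and being adjacent they conflict every time. Thus neither $u$ nor $v$ is ever coloured and the algorithm never terminates. Combining the probability bounds from \Cref{lem:birthday-paradox} with the union bound in Step~2 gives the \wehp guarantee in the statement.

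\textbf{Main obstacle.} The only genuinely delicate point is Step~2: we must make sure the randomness used to invoke \Cref{lem:birthday-paradox} (the candidate colours of $u$ and $v$) is compatible with the fresh randomness needed for the coupon-collector argument on the leaves. This is handled by exposing the candidate colours of $u,v$ first, identifying $(u,v,\colour^{*})$, and only then revealing the independent uniform candidate colours of the leaves in $L(u)\cup L(v)$; since leaves are degree-one vertices disjoint from $\{u,v\}$, their candidate colours are independent of the event produced by \Cref{lem:birthday-paradox}. The remaining calculation is routine, provided $\eps$ is chosen small enough to drive $|L(u)|/|\palette|\in\omega(\log n)$.
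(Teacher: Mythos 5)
Your proposal is correct and follows essentially the same route as the paper's proof: invoke the birthday-paradox lemma to obtain the colliding pair with many leaves, then show via a balls-into-bins/coupon-collector argument with a union bound over colours that the leaves of $u$ and of $v$ claim every other colour in round~1, locking both endpoints on the shared candidate colour forever. Your explicit handling of the order in which randomness is exposed (pair first, then leaf candidates) is a nice touch that the paper leaves implicit, but the substance of the argument is identical.
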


       \begin{proof}
Consider the pair of neighbours $\{u,v\} \in E(G)$ that exists \wehp by \Cref{lem:birthday-paradox} where we have for both vertices $|L(u)|, |L(v)| \in \Omega(n^{\minexp})$ many leaves. We show for $u$ that there is at least one leaf $w \in L(u)$ that tries any colour $\colour \in \palette$ in the first round \wehp Using a union bound this also holds for $v$. Then the only colour that is discarded by any of the leaves, is the colour that both $u$ and $v$ try as their respective candidate colour. Thus, $u$ and $v$ are colour locked on the same colour after the first round and the algorithm never terminates as both $u$ and $v$ will try the same colour in each round.

Now, fix $u$ which by \Cref{lem:birthday-paradox} has $|L(u)| \in \Omega(n^{\minexp})$ many leaves. Next, fix a colour $\colour \in \palette$ and define $Z_{\colour}$ as the random variable that counts the number of leaves of $u$, that try candidate colour $\colour$ in the first round. For this we use $Z_w$ for the indicator random variable that is $1$ if $w \in L(u)$ picked the fixed candidate colour in the first round. Since in the first round a vertex tries a candidate colour u.a.r. from $\palette$, we get  by linearity of expectation and $u$ having $|L(u)| \in \Theta(n^{\minexp})$ leaves that 
       \begin{align}\label{eq:leaves-consume-space}
           \Exp{Z_{\colour}} = \Exp{\sum_{w \in L(u)} Z_w}\geq \Theta(1)|L(u)|\cdot|\palette|^{-1} \geq \Theta(1)\eps^{-1}\log(n),
       \end{align}
      
       where we used that $|\palette| \leq \eps/\log(n) \cdot n^{\minexp}$. To wrap things up, we use a Chernoff bound for $Z_{\colour}$ and then a union bound over all colours. Indeed, we obtain that each candidate colour is tried at least once by a leaf with probability $\Pro{Z_\psi \geq 1} \geq 1 - n^{-\Theta(1)/\eps}$ using a Chernoff bound. Hence, under the assumption that $\eps > 0$ is small enough, each candidate colour has been tried at least once by a leaf of $u$ with probability $1 - n^{-c}$ for any constant $c$. Since leaves only have an edge to $u$, this implies that none of the colours but the colour pick of $u$ is discarded after the first round finishes \wehp 

       Via union bound this holds also true for $v$ and thus, \wehp $u$ and $v$ are colour locked on the same colour after the first round and \RCT never finishes proving the desired result.
\end{proof}

\subsection{Putting everything together: Proof of Theorem~\ref{thm:main-theorem}}\label{sec:together}

The combination of our propositions gives us our theorem for \RCT. 
\mainTheorem*

\begin{proof}

Colour space $\eps\cdot\Delta$:

First round: After the first round, \RCT does not finish \wehp by \Cref{pro:exactly-two-rounds}.

Second round: \RCT terminates after the second round \aas using \Cref{pro:rct-second-round} but not \whp~by~\Cref{pro:not-two-rounds-whp} since $\alpha > 1/2$.

Colour space $n^{\delta} \cdot \chi$:

The chromatic number of a hyperbolic random is $\chi \in \Theta\left(n^{1-\alpha}\right)$ \wehp~\cite[Corollary 10]{bmrs-stacs-25}. From this and \Cref{pro:dist-coul-never} our theorem follows since $\minexp > 1 - \alpha$ using that $\alpha < 1$. 
\end{proof}

\section{Analysis of Random Colour Trial with ID Priority (Proof of Theorem~\ref{the:rctid})}\label{sec:rctid}
In this section we enhance our random colour trial algorithm \RCT by using the ID's of vertices. We refer to the algorithm as  random colour trial with ID priority (\RCTID). Similar to \RCT, in each round of \RCTID every uncoloured vertex tries a uniform at random colour from its colour palette. The only difference is that $u$ assigns its candidate colour $\colour$ if and only if no neighbour with a smaller ID tries the same candidate colour $\colour$.

In \Cref{sec:concentration-magic} we make use of the ID's in the sense that it allows us to apply an Azuma-Hoeffding type inequality~(see \Cref{lem:BEPS}) resulting in strong concentrations for the number of uncoloured vertices per layer and the uncoloured degree of a vertex in a layer after the first round. Our result that \RCTID finishes after two rounds with probability $1-\bigO(n^{-c})$ can be found in~\Cref{sec:rctid-part1} where we show this in~\Cref{pro:rctid-upper-bound}. Finally, in~\Cref{sec:rctid-part2} we show that using a polylogarithmic fraction of $\Delta$ colours for \RCTID the algorithm never terminates \wehp~(\Cref{pro:no-termination-rctid}). 

\subsection{Concentration Bounds using ID's}\label{sec:concentration-magic}
In the following let $\deguncoul{t}{u} = |\{v \in N(u) | \colourchosen{t-1}{v} = \emptyset\}|$ be the \emph{uncoloured degree} of vertex $u$ at round $t$ and we refer to $\maxUncoul{\ell} = \max_{u \in U_\ell}\deguncoul{t}{u}$ as the \emph{maximum uncoloured degree} of layer $\mathcal{A}_\ell$. Moreover, recall that $X_t(u) = |\colourpal{t}{u}|$ is the random variable that gives us the colour space available to vertex $u$ at round $t$~(see also \Cref{eq:slack-counter}).

In the following lemma we show a strong concentration for the upper bound of a subset of uncoloured vertices in a layer.

\begin{lemma}\label{lem:magic-ID-concentration}
    Let $t \in \mathbb{N}_{\geq 1}$ be the round of the \RCTID and for $\ell \in [\finallayer]$, let $U_\ell \subseteq V_\ell$  be a subset of uncoloured vertices at round $t$ of size $|U|=n' \geq 1$. Moreover, let $X_\ell = \min_{u \in U_\ell}X_{t}(u)$ and let $\Delta_\ell = \max_{u \in U_\ell}\deguncoul{t}{u}$. Then, for the random variables $Z_t = \sum_{u \in U} \indicator{\colourchosen{t}{u} =\emptyset}$ we have $Z_t \in \Tilde{\bigO}(\sqrt{n'}(1 +  \sqrt{n'}\cdot\Delta_\ell/X_\ell)$ \wehp
\end{lemma}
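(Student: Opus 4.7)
The plan is to decompose the bound into two pieces that correspond to the mean and deviation of $Z_t$: the term $\sqrt{n'}\cdot\sqrt{n'}\cdot\Delta_\ell/X_\ell = n'\cdot\Delta_\ell/X_\ell$ will arise from the expectation of $Z_t$, while the additive $\sqrt{n'}$ term will arise from a BEPS-style Azuma-Hoeffding concentration via \Cref{lem:BEPS}. So I would first bound $\Exp{Z_t}$ and then apply the concentration lemma.

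For the expectation bound, I would fix a vertex $u\in U_\ell$ and condition on the state of the process just before round $t$ (so that the palettes $\colourpal{t}{\cdot}$, the uncoloured-degree $\deguncoul{t}{u}$ and the colour spaces $\colourSpace{t}{\cdot}$ are all fixed). Revealing the candidate colour $c_u$ of $u$ first, each uncoloured smaller-ID neighbour $v \in N(u)$ independently picks its candidate uniformly from $\colourpal{t}{v}$; hence the conditional probability that $v$ picks exactly $c_u$ is at most $1/\colourSpace{t}{v}\leq 1/X_\ell$. By a union bound over the at most $\deguncoul{t}{u}\leq \Delta_\ell$ such smaller-ID neighbours, $\Pro{\colourchosen{t}{u}=\emptyset}\leq \Delta_\ell/X_\ell$. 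Summing over $u\in U_\ell$ gives $\Exp{Z_t}\leq n'\cdot \Delta_\ell/X_\ell$.

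For the concentration, I would expose the candidate colours of the vertices in $U_\ell\cup \bigcup_{u\in U_\ell}\{v\in N(u):\text{ID}(v)<\text{ID}(u)\}$ in increasing ID order. The key property exploited by the ID-priority rule of \RCTID is that the indicator $I_u:=\indicator{\colourchosen{t}{u}=\emptyset}$ becomes measurable as soon as the candidate colour of $u$ is revealed: $u$ gets coloured iff none of its \emph{already revealed} smaller-ID uncoloured neighbours shares $u$'s candidate. This makes the Doob martingale $M_i=\Exp{Z_t\mid c_1,\dots,c_i}$ well-defined with respect to this ID-ordered filtration, and it is precisely the setup of \Cref{lem:BEPS}. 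Plugging in yields a deviation of order $\sqrt{n'}\cdot\polylog(n)$ \wehp, from which $Z_t\leq \Exp{Z_t}+\tilde{\bigO}(\sqrt{n'})\in \tilde{\bigO}(\sqrt{n'}+n'\Delta_\ell/X_\ell)= \tilde{\bigO}(\sqrt{n'}(1+\sqrt{n'}\cdot \Delta_\ell/X_\ell))$.

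The step I expect to be most delicate is the application of \Cref{lem:BEPS}: a naive Azuma-Hoeffding based on one-coordinate Lipschitz constants would be too weak, because flipping the candidate colour of one vertex $v$ can in principle flip the status of every larger-ID neighbour whose candidate happens to lie in $\{c_v^{\text{old}},c_v^{\text{new}}\}$. The BEPS argument avoids this by measuring ``bad'' contributions only through the ID-priority structure, so I need to verify that the hypotheses line up: in particular that $\Delta_\ell$ and $X_\ell$ are treated as \emph{fixed} parameters (since they are extremal quantities over $U_\ell$) and that no circular conditioning arises between the martingale differences and the random palettes. This is why I condition on the full state at the start of round $t$ before invoking \Cref{lem:BEPS}, which reduces the randomness to the fresh round-$t$ candidate colours and makes the BEPS hypothesis immediate.
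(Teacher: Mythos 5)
Your overall strategy is the same as the paper's: bound $\Exp{Z_t}\leq n'\cdot\Delta_\ell/X_\ell$ by exploiting that under ID priority the status of $u$ depends only on the candidate colours of its smaller-ID uncoloured neighbours, and then apply the Azuma--Hoeffding bound of \Cref{lem:BEPS} under an ID-ordered revelation of the round-$t$ candidate colours with deviation $a=\sqrt{n'}\log(n)$. The concentration half of your argument matches the paper's essentially verbatim (the paper reveals all of $V\setminus U_\ell$ first and then orders only $U_\ell$ by ID, but this is an equivalent filtration).

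There is, however, one step in your expectation bound that does not go through as written. You reveal $u$'s candidate colour $c_u$ first and then bound the probability that a smaller-ID uncoloured neighbour $v$ picks $c_u$ by $1/\colourSpace{t}{v}\leq 1/X_\ell$. But $X_\ell=\min_{w\in U_\ell}\colourSpace{t}{w}$ is a minimum taken only over $U_\ell$, and the relevant neighbours $v$ of $u$ need not lie in $U_\ell$ (they generally sit in other layers), so nothing guarantees $\colourSpace{t}{v}\geq X_\ell$; if $v$'s palette happens to be small and contains $c_u$, the per-neighbour probability can exceed $1/X_\ell$. The paper sidesteps this by revealing in the opposite order: condition on the (arbitrary) candidate colours of the at most $\Delta_\ell$ smaller-ID uncoloured neighbours of $u$, and then let $u$ draw uniformly from its \emph{own} palette of size $\colourSpace{t}{u}\geq X_\ell$, of which at most $\Delta_\ell$ colours are blocked, giving $\Pro{\colourchosen{t}{u}=\emptyset\mid\cdot}\leq\Delta_\ell/X_\ell$ directly. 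With that single reversal of the conditioning order your proof coincides with the paper's.
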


\begin{proof}
     Reveal the candidate colour of all uncoloured vertices $V\setminus U_\ell$. Then, we order the vertices of $U_\ell$ by their ID's in ascending order. For $u \in U_\ell$ with ID $i$, let $Z_i$ be the indicator random variable that is $1$ if $u$ stays uncoloured after round $t$ is finished. Since a vertex with ID $i$ assigns it candidate colour if and only if no neighbour with ID smaller than $i$ has the same candidate colour, $Z_i$ is uniquely determined by its own candidate colour and the candidate colours of all uncoloured neighbours of $u$ with an ID smaller than $i$. Let $\colour_1, \colour_2,\ldots,\colour_m$ be the candidate colours of all uncoloured neighbours of $u$ which, by assumption of the lemma is at most $m \leq \Delta_\ell$. Then $$\Pro{Z_i = 1 | \colour_1, \colour_2,\ldots, \colour_{\Delta_\ell}} \leq \frac{\Delta_\ell}{X_t(u)} \leq \frac{\Delta_\ell}{X_\ell},$$

using that the colour space of any vertex $u \in U_\ell$ is at least $X_\ell$ by the hypothesis of our lemma. Let $Z_t$ be the number of uncoloured vertices of $U_\ell$ after round $t$ has ended. By linearity of expectation and by $|U_\ell| = n'$ it follows $\Exp{Z_t} \leq n' \cdot\Delta_\ell/X_\ell$. 

To obtain the desired concentration we use an Azuma-Hoeffding bound~\Cref{lem:BEPS} and set $a = \sqrt{n'}\cdot\log(n)$ to obtain

$$
\Pro{Z_t > \Exp{Z_t} + a} = \Pro{Z_t > n' \cdot\Delta_\ell/X_\ell +  \sqrt{n'}\cdot\log(n)}  \leq \exp{\left(-\frac{a^2}{n'}\right)} = \exp{\left(-\frac{\log^2\cdot n'}{n'}\right)},
$$
using that $0 \leq Z_i \leq 1$.  
\end{proof}

The following gives us an upper bound on the number of uncoloured vertices in a layer after the first round of \RCTID has ended.

\begin{lemma}[Round~1 \RCTID uncoloured layer vertices]\label{lem:uncloured-layer-ID}
    Let $G\sim\hrg$ be a typical hyperbolic random graph and let $\ell \in [\finallayer]$ be any level of a layer that contains at least one vertex. Then, after the first round of \RCTID with $|\palette| \in \Omega(\Delta)$ many colours, the number of vertices not coloured in layer $\mathcal{A}_\ell$ is \wehp at most   $Z_\ell \in
			\Tilde{\bigO}(\sqrt{ne^{-\alpha\ell}} + n^{1 -\frac{1}{2\alpha}}e^{-\ell(\alpha -1/2)} +1)$.
\end{lemma}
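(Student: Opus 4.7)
The plan is to apply Lemma~\ref{lem:magic-ID-concentration} with $U_\ell := V_\ell$, together with the structural bounds for a typical HRG provided by Lemma~\ref{lem:layer-properties} and Theorem~\ref{the:max-degree}. If $V_\ell = \emptyset$ then $Z_\ell = 0$ and the claim is trivial because of the ``$+1$'' term; so assume $|V_\ell| \geq 1$ from now on.

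First I would identify the three quantities that feed into Lemma~\ref{lem:magic-ID-concentration} in round $t=1$. At this point no vertex has been permanently coloured, so for every $u \in V_\ell$: (i) all neighbours of $u$ are uncoloured, hence $\deguncoul{1}{u} = \deg(u)$; and (ii) no colour has been excluded, hence $X_1(u) = |\palette|$. Therefore, in the notation of Lemma~\ref{lem:magic-ID-concentration}, $X_\ell = |\palette|$ and $\Delta_\ell = \max_{u \in V_\ell}\deg(u)$.

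Second, I would bound each of $n' = |V_\ell|$, $\Delta_\ell$, and $X_\ell$ using the fact that $G$ is typical:
\begin{itemize}
\item $|V_\ell| \in \tilde{\bigO}(ne^{-\alpha\ell} + 1)$ by Lemma~\ref{lem:layer-properties}(\ref{item:vertices-in-layer});
\item a union bound (over at most $n$ vertices) of the per-vertex bound in Lemma~\ref{lem:layer-properties}(\ref{item:degree-in-layer}) gives $\Delta_\ell \in \bigO(e^{\ell/2} + \log n)$ \wehp;
\item $X_\ell = |\palette| \in \Omega(\Delta) \subseteq \tilde{\Omega}(n^{1/(2\alpha)})$ by Theorem~\ref{the:max-degree}.
\end{itemize}

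Third, I would substitute these into Lemma~\ref{lem:magic-ID-concentration}, which yields \wehp
\[
Z_\ell \in \tilde{\bigO}\bigl(\sqrt{|V_\ell|}\,\bigl(1 + \sqrt{|V_\ell|}\,\Delta_\ell/X_\ell\bigr)\bigr) = \tilde{\bigO}\bigl(\sqrt{|V_\ell|} + |V_\ell|\cdot\Delta_\ell/X_\ell\bigr).
\]
The first summand gives $\sqrt{|V_\ell|} \in \tilde{\bigO}(\sqrt{ne^{-\alpha\ell}} + 1)$, matching the first and the ``$+1$'' term of the target. For the second, plugging in $\Delta_\ell \in \bigO(e^{\ell/2} + \log n)$ and $X_\ell \in \tilde{\Omega}(n^{1/(2\alpha)})$ gives
\[
\frac{|V_\ell|\cdot\Delta_\ell}{X_\ell} \in \tilde{\bigO}\!\left(\frac{ne^{-\alpha\ell}\cdot e^{\ell/2}}{n^{1/(2\alpha)}}\right) = \tilde{\bigO}\bigl(n^{1-1/(2\alpha)}\,e^{-\ell(\alpha-1/2)}\bigr),
\]
where the additive $\log n$ contribution from $\Delta_\ell$ is absorbed into the $\tilde{\bigO}(\cdot)$. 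Adding the two summands gives exactly the claimed bound.

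\textbf{Main obstacle.} The actual estimates are straightforward; the only subtlety is the case distinction between the regime $\ell \leq \lceil\alpha^{-1}(\log n - 2\log\log n)\rceil$ (where $|V_\ell| \in \tilde{\Theta}(ne^{-\alpha\ell})$ is the correct answer) and the tail regime $\ell$ close to $\finallayer$ (where $|V_\ell| \in \tilde{\bigO}(1)$). In the latter regime, both summands collapse into $\tilde{\bigO}(1)$, which is why the target bound needs the ``$+1$'' term; making this separation explicit is the only bookkeeping required. All failure events (typicality of $G$, deviation bounds from Lemma~\ref{lem:magic-ID-concentration}, per-vertex degree bounds) happen \wehp, so a final union bound preserves the \wehp guarantee.
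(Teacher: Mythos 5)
Your proposal is correct and follows essentially the same route as the paper: apply \Cref{lem:magic-ID-concentration} with $U_\ell := V_\ell$, bound $|V_\ell|$, $\Delta_\ell$ and $X_\ell$ via \Cref{lem:layer-properties} and \Cref{the:max-degree}, and substitute. The only cosmetic difference is that the paper records the final bound as a case distinction on $\ell$ (dense layers versus the $\bigO(\log^2 n)$ tail), which you handle equivalently through the ``$+1$'' term.
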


\begin{proof}
 We use \Cref{lem:layer-properties} and \Cref{lem:magic-ID-concentration} to derive our desired bound: we define the random variable $Z_\ell = \sum_{v \in V_\ell} \indicator{\colourchosen{1}{v} ={\emptyset}}$ to count the number of vertices not coloured after the first round in layer $\mathcal{A}_\ell$. Using \Cref{lem:magic-ID-concentration} and setting $U_\ell := V_\ell$ we have $Z_\ell \in \Tilde{\bigO}(\sqrt{|V_\ell|}+ |V_\ell|\cdot\Delta_\ell/X_\ell)$ \wehp In the following, we bound the three parameters $|V_\ell|$, $\Delta_\ell$ and $X_\ell$.
 
 Since $G$ is a typical HRG, we obtain by \Cref{lem:layer-properties} that $|V_\ell| \in \Tilde{\bigO}(1 + ne^{-\alpha \ell})$.
 
 Recall that $\Delta_\ell$ is the maximum uncoloured degree of a vertex in $U_\ell$. This is upper bounded by the maximum degree of a vertex in $U_\ell$ which for a typical HRG is by \Cref{lem:layer-properties} $\Delta_\ell \leq \max_{u \in U_\ell}\deg(u) \in \Tilde{\bigO}(e^{\ell/2})$.
 
 Finally, since we are in the first round all vertices have a colour space of size $|\palette| \in \Omega(\Delta)$ and thus, $X_\ell \in \Omega(\Delta)$. Using that $G$ is a typical HRG it follows by \Cref{the:max-degree} that $\Delta \geq n^{\frac{1}{2\alpha}}\log^{-2}(n)$, and thus $\Delta_\ell/X_\ell \in \Tilde{\bigO}(e^{\ell/2} \cdot n^{-\frac{1}{2\alpha}})$. Putting this together with $|V_\ell|$ we have obtain \wehp by~\Cref{lem:magic-ID-concentration} that

 \begin{align*}
     Z_\ell \in \begin{cases}
			\Tilde{\bigO}(\sqrt{ne^{-\alpha\ell}}(1+{n^{-(1/\alpha - 1)/2}e^{{\ell}(1-\alpha)/2}}))  , & \text{if $\ell \leq \lceil \alpha^{-1}(\log(n) - 2\log(\log(n))) \rceil$ }\\
            \bigO(\log^2(n)), & \text{otherwise,}
		 \end{cases} 
 \end{align*}

 where we used the upper bound of \Cref{lem:layer-properties} for the second case, which upper bounds the number of uncoloured vertices by all vertices in a layer.
\end{proof}

Recall that $\deguncoul{t}{u}: = |\{v \in N(u) : \colourchosen{t-1}{v} = \emptyset\}|$ is \emph{uncoloured degree} of $u$ at round $t$. We now bound the uncoloured degree of a vertex before round 2 of \RCTID starts. We parameterise the uncoloured degree of a vertex $u$ by its radial coordinate. 

\begin{figure}[t]
    \centering \includegraphics[height=0.3\textheight]{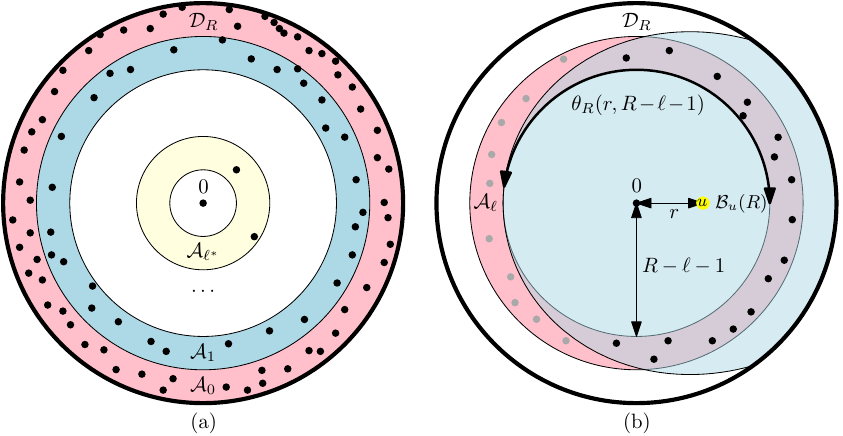}
    \caption{(a)~In \Cref{lem:uncloured-layer-ID} we bound the number of vertices of a layer that are not coloured after round 1. (b)~Sketch of a layer $\mathcal{A}_\ell$ (red area) intersecting the neighbourhood of $u$ (blue area) in \Cref{lem:uncoloured-degree-ID}.}
    \label{fig:uncoloured}
\end{figure}

\begin{lemma}[Uncoloured degree round 1]\label{lem:uncoloured-degree-ID}
Let $G\sim \hrg$ be a typical hyperbolic random graph and consider round $t=2$ of \RCTID with colour space $|\palette| \in \Omega(\Delta)$. Then, for a vertex $u \in V$ with radius $r(u) = r$, it holds $\deguncoul{2}{u} \in \Tilde{\bigO}(\sqrt{n}\cdot e^{-r/4} + n^{\frac{1}{2\alpha}} \cdot e^{- r/2} + 1)$ \wehp
\end{lemma}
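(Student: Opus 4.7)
The plan is to decompose the neighbourhood of $u$ into its intersections with the layers and apply \Cref{lem:magic-ID-concentration} to each piece separately. Concretely, for $\ell \in [\finallayer]$ set $N_\ell(u) := N(u)\cap V_\ell$ and $Z_\ell := \sum_{v\in N_\ell(u)}\indicator{\colourchosen{1}{v}=\emptyset}$. Then $\deguncoul{2}{u}=\sum_\ell Z_\ell$, so it suffices to bound each $Z_\ell$ \wehp and union-bound over the $\bigO(\log n)$ relevant layers.

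For each layer I first need a bound on $n_\ell:=|N_\ell(u)|$. Using \Cref{lem:max-angle} for the connection angle together with \Cref{eq:layer-measure} for the measure of a layer, and noting that $N_\ell(u)$ is the intersection of a Poisson point process with a measurable region whose expected count is Poisson, a standard Chernoff argument (as already carried out in the proofs in \Cref{sec:structural}) gives for a typical HRG
\[
n_\ell \in \tilde{\bigO}\bigl(n\,e^{-r/2}\,e^{\ell(1/2-\alpha)}\bigr) \text{ when } \ell\le r, \qquad n_\ell\in\tilde{\bigO}\bigl(n\,e^{-\alpha\ell}\bigr) \text{ when } \ell>r,
\]
up to an additive $\bigO(\log n)$ in the low-expectation regime (with the two expressions matching at $\ell=r$). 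For the other parameters of \Cref{lem:magic-ID-concentration} applied to $U_\ell := N_\ell(u)$ in round $t=1$, I use that all vertices are uncoloured so the maximum uncoloured degree is at most the maximum degree in $V_\ell$, which by \Cref{lem:layer-properties} is $\Delta_\ell \in \tilde{\bigO}(e^{\ell/2})$, and that the palette size is untouched so $X_\ell \geq |\palette| \in \Omega(\Delta) \subseteq \Omega(n^{1/(2\alpha)}/\log^2 n)$ by \Cref{the:max-degree}. Substituting yields
\[
Z_\ell \in \tilde{\bigO}\bigl(\sqrt{n_\ell} \;+\; n_\ell \cdot e^{\ell/2}/n^{1/(2\alpha)}\bigr) \text{ \wehp}
\]

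Summing is then a geometric-series exercise. The $\sqrt{n_\ell}$ contribution for $\ell\le r$ is $\sqrt{n}\,e^{-r/4}\,e^{\ell(1/4-\alpha/2)}$, which is geometrically decreasing since $\alpha>1/2$ and sums to $\tilde{\bigO}(\sqrt{n}\,e^{-r/4})$; for $\ell>r$ it is $\sqrt{n}\,e^{-\alpha\ell/2}$, also decreasing, contributing $\tilde{\bigO}(\sqrt{n}\,e^{-\alpha r/2})\subseteq\tilde{\bigO}(\sqrt{n}\,e^{-r/4})$. The $n_\ell\cdot e^{\ell/2}/n^{1/(2\alpha)}$ contribution peaks at $\ell\approx r$ where both regimes meet, and evaluates to $n^{1-1/(2\alpha)}\,e^{r(1/2-\alpha)}$; using the constraint $r\le R=\log n+C$ together with $\alpha<1$ one checks that $n^{1-1/(2\alpha)}\,e^{r(1/2-\alpha)}\le n^{1/(2\alpha)}\,e^{-r/2}$, giving the second term in the target bound. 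Adding the stand-alone $+1$ from the low-expectation layers finishes the bound after a union bound over the $\bigO(\log n)$ layers.

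The main obstacle I anticipate is the geometric estimate of $n_\ell$: the two regimes $\ell\le r$ and $\ell>r$ must be handled separately (with the angular constraint collapsing at $\ell=r$), and some care is needed in the regime where $n_\ell$ is small so that \Cref{lem:magic-ID-concentration} is not sharp enough on its own and one has to fall back on the crude $Z_\ell\le n_\ell\in\bigO(\log n)$. Everything else is a matter of bookkeeping, since once the per-layer bound is in place the summation is dictated by the sign of $\frac{1}{2}-\alpha$ on the one hand and of $1-\alpha$ on the other.
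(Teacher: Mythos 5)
Your proposal follows essentially the same route as the paper: decompose $N(u)$ by layers, bound $|N_\ell(u)|$ via \Cref{lem:max-angle} and \Cref{eq:layer-measure} plus a Chernoff bound, feed $n_\ell$, $\Delta_\ell\in\Tilde{\bigO}(e^{\ell/2})$ and $X_\ell\in\Omega(\Delta)$ into \Cref{lem:magic-ID-concentration}, and sum the resulting geometric series. Your explicit split of $n_\ell$ into the regimes $\ell\le r$ and $\ell>r$ is if anything slightly cleaner than the paper's single formula.

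One step needs repair, though. Your verification of the peak term, namely that $n^{1-1/(2\alpha)}e^{r(1/2-\alpha)}\le n^{1/(2\alpha)}e^{-r/2}$, is equivalent to $r\le \log(n)/\alpha$, and this is \emph{not} implied by $r\le R$: the paper's proofs consistently use $R=2\log(n)+C$ (despite the preliminaries' typo), and since $1/\alpha<2$ most vertices have radius in $(\log(n)/\alpha,\,R]$, exactly where your inequality fails. The fix is the one the paper uses: the sum over layers only runs up to $\finallayer\le\lceil\alpha^{-1}(\log(n)+\log\log(n))\rceil$ for a typical HRG (\Cref{lem:layer-properties}), so when $r>\finallayer$ the increasing branch of $n_\ell e^{\ell/2}/n^{1/(2\alpha)}$ is truncated at $\ell=\finallayer$, where it evaluates to $n^{1-1/(2\alpha)}e^{-r/2}e^{\finallayer(1-\alpha)}\in\Tilde{\bigO}(n^{1/(2\alpha)}e^{-r/2})$; when $r\le\finallayer$ your computation goes through as written. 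With that substitution the argument is complete.
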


\begin{proof}
    We use the discretisation of the disk $\disk$ into layers and we write $N_\ell(u) := V_\ell \cap N(u)$ for the vertices in layer $\mathcal{A}_\ell$ that have an edge to $u$. We aim to apply \Cref{lem:magic-ID-concentration} for each layer to bound the random variables $Y_\ell = |N_\ell(u)|$ which are the number of neighbours of $u$ located in layer $\mathcal{A}_\ell$, that are not coloured after the first round. The proof is then finished by summing over all layers to obtain $\deguncoul{2}{u}$. 
    
    Fix a layer with index $\ell$ and we then have 
    
    $$\E{|N_\ell(u)|} \leq \E{|V_\ell|}\cdot2\theta_R(r, R-\ell-1) = n\cdot\mu(\mathcal{A}_\ell)\cdot2\theta_R(r, R-\ell-1),$$ using that $\theta_R(\cdot,\cdot)$ is monotonically decreasing in both its arguments and \Cref{eq:layer-measure} (see also \Cref{fig:uncoloured}b). Using \Cref{lem:max-angle} and \Cref{eq:layer-measure} we then obtain
    $$
    \E{|N_\ell(u)|}\leq \Theta(1)ne^{-\alpha\ell} \cdot e^{(R- (R-\ell - 1) -r)/2} \leq \Theta(1)n e^{-r/2 - \ell(\alpha -1/2)}.
    $$

    Since $|N_\ell(u)|$ follows a Poisson distribution we can and we will apply a Chernoff bound and obtain $n' = |N_\ell(u)| \in \Tilde{\bigO}(1 + n e^{-r/2 - \ell(\alpha -1/2)})$ \wehp Next, recall that for any $v \in N_\ell(u)$, it holds  $\deg(v) \in \Tilde{\bigO}\left(e^{ \ell/2}\right)$ \wehp using \Cref{lem:layer-properties} which is an upper bound on the largest uncoloured degree $\Delta_\ell \leq \max_{v \in U_\ell}\deg(v)$ for the set $U_\ell$ for a typical HRG. Moreover, since we are in the first round, the minimal colour space is given by $X_\ell = |\palette| \in \Omega(\Delta)$.

    Now, we obtain by \Cref{lem:magic-ID-concentration} that $Y_\ell \in \Tilde{\bigO}(\sqrt{n'}(1 +  \sqrt{n'}\cdot\Delta_\ell/X_\ell))$ \wehp Plugging in our values we obtain $Y_\ell \in \Tilde{\bigO}(\sqrt{n} \cdot e^{-\ell/2(\alpha - 1/2) - r/4} + n^{1-\frac{1}{2\alpha}}\cdot e^{\ell(1-\alpha) - r/2} +1)$ \wehp using \Cref{the:max-degree} for a lower bound on $\Delta$ for a typical HRG. Since we have at most $\mathcal{O}(\log(n))$ many layers, via union bound $Y_\ell \in \Tilde{\bigO}(\sqrt{n} \cdot e^{-\ell/2(\alpha - 1/2) - r/4} + n^{1-\frac{1}{2\alpha}}\cdot e^{\ell(1-\alpha) - r/2} +1)$ holds simultaneously for all layers \wehp Thus, it follows by $\deguncoul{2}{u} \leq \sum_{\ell = 0}^{\finallayer}Y_\ell$ that \wehp

    \begin{align*}
    \deguncoul{2}{u} \leq &\sum_{\ell = 0}^{\finallayer}Y_\ell \in \sum_{\ell = 0}^{\finallayer} \Tilde{\bigO}(\sqrt{n} \cdot e^{-\ell/2(\alpha - 1/2) - r/4} + n^{1-\frac{1}{2\alpha}}\cdot e^{\ell(1-\alpha) - r/2} +1)    \\
    &\in \Tilde{\bigO}\left(\sum_{\ell = 0}^{\finallayer}(\sqrt{n} \cdot e^{-\ell/2(\alpha - 1/2) - r/4}) + \sum_{\ell = 0}^{\finallayer}(n^{1-\frac{1}{2\alpha}}\cdot e^{\ell(1-\alpha) - r/2}) +1\right)\\
    &\in \Tilde{\bigO}(\sqrt{n}\cdot e^{-r/4} + n^{1 - \frac{1}{2\alpha}} \cdot e^{\finallayer(1-\alpha) - r/2} + 1),
    \end{align*}

    where the last line follows since $\alpha \in (1/2,1)$. The result follows by using $\finallayer \leq \lceil\alpha^{-1}{\log(n) + \log(\log(n))}\rceil$ for a typical HRG by \Cref{lem:layer-properties}. 

    \end{proof}

\subsection{Theorem~\ref{the:rctid}, part 1: Colouring with Maximum Degree Colours}\label{sec:rctid-part1}

In this section we show that \RCTID with a colour space of size $\eps\cdot\Delta$ finishes after 2 rounds with an error guarantee that is vanishing polynomially in $n$. In the following lemma we show this for a spcific layer.

\begin{lemma}[Second round \RCTID]\label{lem:RCTI-second-round}
     Let $G\sim \hrg$ be a typical hyperbolic random graph and run \RCTID with $\Omega(\Delta)$ many colours on $G$. Let $\ell \in [\finallayer]$ and let $Z_\ell := |\{v \in V_\ell : \colourchosen{2}{v} =\emptyset\}|$ be the random variable that counts number of vertices in layer $\mathcal{A}_\ell$ that are not coloured after round two. Then, there exists a constant $c(\alpha) >0$ such that $\Exp{Z_\ell} \in \bigO\left(n^{-c(\alpha)}\right)$. 
\end{lemma}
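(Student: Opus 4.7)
The plan is to combine the two per-layer concentration bounds from \Cref{sec:concentration-magic} with a slack bound for round two and then take expectation. I will define a good event $A$ as the intersection of three events: (i) $|U_\ell|$ matches the bound from \Cref{lem:uncloured-layer-ID}, where $U_\ell\subseteq V_\ell$ is the set of vertices in $V_\ell$ uncoloured after round one, (ii) the uncoloured degree $\deguncoul{2}{u}$ matches the bound of \Cref{lem:uncoloured-degree-ID} for every $u\in V_\ell$, and (iii) $X_2(u)\in\Omega(\Delta)$ for every $u\in V_\ell$. Each event holds \wehp (with a union bound over the at most $n$ vertices of $V_\ell$ for (ii) and (iii)), so $\Pro{\overline A}\le n^{-\omega(1)}$. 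I would first justify that \Cref{lem:slack-rct} applies verbatim to \RCTID, because its proof only counts colours that no neighbour of $u$ \emph{tried} in round one, and the distribution of first-round candidates is identical for \RCT and \RCTID.

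Conditioning on $A$ and on the candidate colours of all vertices outside $U_\ell$, I would bound, for each $u\in U_\ell$, the probability that $u$ stays uncoloured in round two. In \RCTID the vertex $u$ fails only if some uncoloured neighbour with strictly smaller ID picks the same candidate; a union bound over those neighbours yields
\begin{equation*}
\Pro{\colourchosen{2}{u}=\emptyset \,\big\vert\, A}\le \frac{\deguncoul{2}{u}}{X_2(u)}\le \frac{\Delta_\ell}{X_\ell},
\end{equation*}
where $\Delta_\ell$ denotes the worst-case uncoloured-degree bound from (ii) and $X_\ell=\Omega(\Delta)$ the worst-case colour-space bound from (iii). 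Linearity of expectation then gives
\begin{equation*}
\Exp{Z_\ell}\le \Exp{Z_\ell\mid A}+|V_\ell|\cdot\Pro{\overline A}\le |U_\ell|\cdot\frac{\Delta_\ell}{X_\ell}+n^{-\omega(1)}.
\end{equation*}

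Finally I would plug in the parameters. For $u\in V_\ell$ we have $r\in[R-\ell-1,R-\ell]$, so \Cref{lem:uncoloured-degree-ID} rewrites as $\Delta_\ell\in\tilde\bigO(e^{\ell/4}+n^{1/(2\alpha)-1}e^{\ell/2}+1)$; \Cref{lem:uncloured-layer-ID} gives $|U_\ell|\in\tilde\bigO(n^{1/2}e^{-\alpha\ell/2}+n^{1-1/(2\alpha)}e^{-\ell(\alpha-1/2)}+1)$; and \Cref{the:max-degree} yields $X_\ell\in\tilde\Omega(n^{1/(2\alpha)})$. Substituting $\ell=t\log(n)$ for $t\in[0,\alpha^{-1}]$ and expanding produces four cross terms, each of the shape $n^{a_i(\alpha)+b_i(\alpha)\,t}$ up to polylog factors. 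A routine case distinction, which uses $\alpha\in(1/2,1)$ in every subcase, shows that each of them is bounded by $n^{-c_i(\alpha)}$ for some $c_i(\alpha)>0$ uniformly in $\ell$. Setting $c(\alpha):=\min_i c_i(\alpha)>0$ gives $\Exp{Z_\ell}\in\bigO(n^{-c(\alpha)})$.

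The main obstacle is precisely this last step: one has to verify that, after maximising over $t\in[0,\alpha^{-1}]$, \emph{every} of the four cross-products has a strictly negative exponent in $n$. The tightest case is $\ell\approx\finallayer$, where $|U_\ell|$ has shrunk to polylog but $\Delta_\ell$ can reach roughly $n^{1/\alpha-1}$; the saving factor $n^{1-1/(2\alpha)}$ coming from $X_\ell$ is exactly what rescues the bound because $\alpha>1/2$. All other steps are direct applications of the supporting lemmas.
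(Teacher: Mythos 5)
Your proposal is correct and follows essentially the same route as the paper: the same three ingredients (\Cref{lem:uncloured-layer-ID} for $|U_\ell|$, \Cref{lem:uncoloured-degree-ID} for $\Delta_\ell$ with $r\geq R-\ell-1$, and \Cref{lem:slack-rct} together with \Cref{the:max-degree} for $X_\ell$), the same per-vertex bound $\Delta_\ell/X_\ell$, and the same expansion into four cross-terms, each checked to have a strictly negative exponent using $\alpha\in(1/2,1)$ --- which is precisely the case analysis the paper carries out explicitly (your identification of $\ell\approx\finallayer$ as the tight case, rescued because $\alpha>1/2$, matches the paper's Cases 2--4). Your extra remarks --- justifying that \Cref{lem:slack-rct} transfers to \RCTID and explicitly accounting for the contribution of $\overline{A}$ via $\Exp{Z_\ell}\leq\Exp{Z_\ell\mid A}+n\cdot\Pro{\overline A}$ --- are points the paper glosses over, and are welcome.
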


\begin{proof}

Let $Z_\ell = \sum_{v\in V_\ell}\indicator{\colourchosen{2}{v} = \emptyset}$ as defined in our statement and let $U_\ell \subseteq V_\ell$ be the subset of uncoloured vertices of $V_\ell$ before the second round of \RCTID starts. By \Cref{lem:uncloured-layer-ID} we have  $|U_\ell| \in \Tilde{\bigO}(\sqrt{ne^{-\alpha\ell}}(1+{n^{(1/\alpha - 1)/2}e^{{\ell}(1-\alpha)/2}})+1)$ \wehp Moreover, let $\Delta_\ell := \max_{v \in U_\ell}\deguncoul{2}{v}$ be the maximum uncoloured degree of the set of vertices $U_\ell$ before the second round of \RCTID starts which by \Cref{lem:uncoloured-degree-ID} is $\Delta_\ell \in \Tilde{\bigO}(\sqrt{n}\cdot e^{-(R-\ell)/4} + n^{\frac{1}{2\alpha}} \cdot e^{- (R-\ell)/2} + 1)$ \wehp using that the radius of a vertex in layer $\mathcal{A}_\ell$ is at least $r\geq R-\ell-1$. Then let $X_\ell$ be the minimal colour space of the set of vertices $U_\ell$ before the second round starts which is $X_\ell \in \Omega(\Delta)$ \wehp by \Cref{lem:slack-rct}.

Since all events happen \wehp, using law of total expectation and then linearity of expectation yields 

\begin{align}\label{eq:vanishing-expectation}
\begin{split}
\Exp{Z_\ell} &\in \Tilde{\bigO}\underbrace{(\underbrace{\sqrt{ne^{-\alpha\ell}}}_{a} + \underbrace{n^{1 -\frac{1}{2\alpha}}e^{-\ell(\alpha -1/2)})}_{b} +1}_{|U_\ell|})\cdot \Tilde{\bigO}(\underbrace{\underbrace{n^{-(1/\alpha - 1)/2}\cdot e^{-(R-\ell)/4}}_{c} + \underbrace{e^{- (R-\ell)/2}}_{d}}_{\Delta_\ell/X_\ell}) + n^{-\omega(1)},    
\end{split}
\end{align}

where the $n^{-\omega(1)}$  term corresponds to the low probability event that our bounds for $|U_\ell|, \Delta_\ell, X_\ell$ do not hold and we used that $\Delta \in n^{\frac{1}{2\alpha}-o(1)}$ for a typical HRG~(\Cref{the:max-degree}).

We proceed by showing that all resulting sums in~\Cref{eq:vanishing-expectation} is upper bounded by a polynomial vanishing term as stated in our lemma:

\smallskip

\textbf{Case 1}~[$a\cdot c$]: Recall that $R = 2\log(n) +C$ and we get
$$\Exp{Z_\ell} \in \Tilde{\bigO}(\sqrt{ne^{-\alpha\ell}}\cdot n^{-(1/\alpha - 1)/2}\cdot e^{-(R-\ell)/4} ) \in \Tilde{\bigO}(n^{-(1/\alpha - 1)/2} e^{-\ell(\alpha - 1/2)/2}) \in  \Tilde{\bigO}(n^{-(1/\alpha - 1)/2}),$$
since $\alpha > 1/2$. It follows that $\Exp{Z_\ell} \in \bigO(n^{-c(\alpha)})$ since $\alpha < 1$.

\smallskip

\textbf{Case 2}~[$a\cdot d$]: Recall that $\finallayer \leq \lceil\alpha^{-1}\cdot(\log(n) + \log(\log(n)) \rceil$ is the maximum layer index for a typical HRG~(\Cref{lem:layer-properties}). Then by $R = 2\log(n) +C$ we obtain
$$\Exp{Z_\ell} \in \Tilde{\bigO}(\sqrt{ne^{-\alpha\ell}}\cdot e^{-(R-\ell)/2}) \in \Tilde{\bigO}(n^{-1/2} e^{\ell(1-\alpha)/2}) \in \Tilde{\bigO}(n^{-(1 - \frac{1}{2\alpha})}),$$
since $\alpha < 1$, and thus the above term is upper bounded when $\ell = \finallayer \leq \lceil\alpha^{-1}\cdot(\log(n) + \log(\log(n)) \rceil$. We conclude for the case that $\Exp{Z_\ell} \in \bigO(n^{-c(\alpha)})$  as $\alpha > 1/2$.

\smallskip
\textbf{Case 3}~[$b\cdot c$]: We obtain by using $R = 2\log(n) +C$ that
$$\Exp{Z_\ell} \in \Tilde{\bigO}(n^{1 -\frac{1}{2\alpha}}e^{-\ell(\alpha -1/2)}\cdot n^{-(1/\alpha - 1)/2}\cdot e^{-(R-\ell)/4} ) \in \Tilde{\bigO}(n^{-(1/\alpha - 1)} e^{-\ell(3/4-\alpha)}).$$

Then for $\alpha \leq 3/4$ this is upper bounded by using~\Cref{lem:layer-properties}, by which $\ell \leq \finallayer \leq \lceil\alpha^{-1}\cdot(\log(n) + \log(\log(n)) \rceil$ and we get $\Exp{Z_\ell} \in \Tilde{\bigO}(n^{-(1/\alpha - 1)} e^{-\finallayer(3/4-\alpha)}) \in \bigO(n^{-\frac{1}{4\alpha}})  \in \bigO(n^{-c(\alpha)})$ since $\alpha > 1/2$.

On the flips ide, for $\alpha > 3/4$ the term simplifies to $\Exp{Z_\ell} \in \Tilde{\bigO}(n^{-(1/\alpha - 1)})$ and we conclude the case with $\Exp{Z_\ell} \in \bigO(n^{-c(\alpha)})$ as $\alpha < 1$.

\smallskip
\textbf{Case 4}~[$b\cdot d$]: Since $R = 2\log(n) +C$ and using that $\ell$ is at most $\finallayer \leq \lceil\alpha^{-1}\cdot(\log(n) + \log(\log(n)) \rceil$ for typical HRG by~\Cref{lem:layer-properties}, we obtain in this case
$$\Exp{Z_\ell} \in \Tilde{\bigO}(n^{1 -\frac{1}{2\alpha}}e^{-\ell(\alpha -1/2)}\cdot e^{-(R-\ell)/2}) \in \Tilde{\bigO}(n^{-\frac{1}{2\alpha}} e^{\ell(1-\alpha)}) \in \Tilde{\bigO}(n^{-(1 - \frac{1}{2\alpha})}) \in \bigO(n^{-c(\alpha)}),$$
as $\alpha >1/2$. This conculdes the case and also the proof since we have shown in all cases that there exists a constant $c(\alpha) > 0$ such that $\Exp{Z_\ell} \in \bigO(n^{-c(\alpha)})$.
\end{proof}

We now show that \RCTID is likely to successfully colour a typical HRG fast. 

\begin{proposition}[\RCTID upper bound]\label{pro:rctid-upper-bound}
 Let $G\sim\hrg$ by a typical hyperbolic random graph and let $Z = |\{v \in V : \colourchosen{2}{v} = \emptyset\}|$ be the number of uncoloured vertices after round $t=2$ of \RCTID  with colour space $\Omega(\Delta)$ has ended. Then, there exists a constant $c > 0$ such that $Z=0$ with probability $1 - \bigO(n^{-c})$.

\end{proposition}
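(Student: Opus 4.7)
The plan is to prove the proposition by aggregating the per-layer expectation bound from \Cref{lem:RCTI-second-round} across all layers and then applying Markov's inequality. The heavy lifting has essentially already been done in \Cref{lem:RCTI-second-round}, which establishes for every fixed layer index $\ell$ the existence of a constant $c(\alpha) > 0$ with $\Exp{Z_\ell} \in \bigO(n^{-c(\alpha)})$, where $Z_\ell$ counts the vertices of layer $\mathcal{A}_\ell$ that remain uncoloured after round $2$. The remaining work is therefore a union-bound-style argument over the $\bigO(\log n)$ layers.

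First, I would condition on the event that $\finallayer \leq \lceil \alpha^{-1}(\log(n) + \log\log n)\rceil$, which by \Cref{lem:layer-properties}~(\ref{item:max-layer}) holds \aas; in fact, a brief inspection of its proof shows that the event $V \cap \B_0(r_0) = \emptyset$ for the slightly enlarged radius used there actually holds with probability $1 - \bigO(n^{-c})$ for some constant $c > 0$ (by taking the Markov bound in that proof with a polynomial slack instead of $\sqrt{\log n}$). On this event there are at most $\bigO(\log n)$ non-empty layers, and the variable $Z$ can be written as $Z = \sum_{\ell = 0}^{\finallayer} Z_\ell$.

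Next, by linearity of expectation and the per-layer bound of \Cref{lem:RCTI-second-round},
\begin{align*}
\Exp{Z} \;=\; \sum_{\ell=0}^{\finallayer} \Exp{Z_\ell} \;\in\; \bigO(\log n) \cdot \bigO\bigl(n^{-c(\alpha)}\bigr) \;\in\; \bigO\bigl(n^{-c'}\bigr),
\end{align*}
for any constant $c' \in (0, c(\alpha))$, since the logarithmic factor is absorbed by an arbitrarily small reduction of the exponent. An application of Markov's inequality (\Cref{lem:markov}) then yields
\begin{align*}
\Pro{Z \geq 1} \;\leq\; \Exp{Z} \;\in\; \bigO\bigl(n^{-c'}\bigr).
\end{align*}
Combining this with the (polynomially strong) bound on the high-probability events underlying \Cref{lem:RCTI-second-round} via a union bound preserves a polynomially vanishing failure probability, giving $Z = 0$ with probability $1 - \bigO(n^{-c})$ for a suitable constant $c > 0$.

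The only potential obstacle is a bookkeeping one: \Cref{lem:RCTI-second-round} is stated as an expectation bound, but its proof relies on conditioning on several \wehp events (the \wehp bounds on $|U_\ell|$, $\Delta_\ell$, and $X_\ell$). Since \wehp failure probabilities are $n^{-\omega(1)}$, they are dominated by the polynomial $n^{-c'}$ coming from Markov and absorb into the final bound without difficulty. Thus no new probabilistic machinery is required beyond a careful union bound over $\bigO(\log n)$ layers and the \wehp conditioning events.
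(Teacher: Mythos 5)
Your proposal is correct and follows essentially the same route as the paper: both rest entirely on the per-layer bound $\Exp{Z_\ell}\in\bigO(n^{-c(\alpha)})$ from \Cref{lem:RCTI-second-round}, use that there are only $\bigO(\log n)$ layers, and absorb the logarithmic factor into a slightly smaller polynomial exponent (the paper applies Markov with $a=1$ to each $Z_\ell$ and then union-bounds, whereas you sum the expectations first and apply Markov once to $Z$ — an equivalent bookkeeping choice). Your extra care about the probability that $\finallayer$ is bounded is harmless but unnecessary, since the discretisation has only $\lfloor R\rfloor\in\bigO(\log n)$ layers deterministically.
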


\begin{proof}
    To show that $Z = 0$ with probability $1 - \bigO(n^{-c})$, fix a layer $\mathcal{A}_\ell$ and let $Z_\ell$ be the number of uncoloured vertices after the second round in layer $\mathcal{A}_\ell$. Using \Cref{lem:RCTI-second-round}, we have $\Pro{Z_\ell \geq 1} \in \bigO(n^{-c(\alpha)})$ by using Markov's inequality~\Cref{lem:markov} with $a=1$. Recall that $\mathcal{A}_{\finallayer}$ is the layer with largest index $\finallayer$ that still contains at least one vertex and $\finallayer \in \bigO(\log(n))$. Using a union bound we obtain that

    \begin{align*}\label{eq:vanishing-probability-second-round}
    \Pro{Z = 0} = 1 - \Pro{Z \geq 1} \geq 1 - \sum_{\ell =0}^{\finallayer}\Pro{Z_\ell \geq 1} \in 1 - \bigO\left(\sum_{\ell =0}^{\finallayer}n^{-c(\alpha)}\right) \in 1 - \bigO(n^{-c}),
    \end{align*}

    for a constant $c>0$.

\end{proof}

\subsection{Theorem~\ref{the:rctid}, part 2: Limitations of Random Colour Trial with ID Priority}\label{sec:rctid-part2}

In this section we show that our upper bound for the colour space 
for \RCTID in \Cref{pro:rctid-upper-bound} cannot be much improved uppon, in the sense that using less colours by a polylogarithmic factor, it turns out to be highly unlikely that the algorithm terminates.

\begin{proposition}[\RCTID lower bound]\label{pro:no-termination-rctid}
    Let $G\sim\hrg$ be a typical hyperbolic random graph. Then \RCTID with colour space $|\palette| \in \bigO(\Delta/\log^{3}(n))$ colours never terminates on $G$ \wehp
\end{proposition}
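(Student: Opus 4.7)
The strategy, sketched in the introduction, is to locate a vertex $u$ of nearly maximum degree that has many leaf neighbours, assign $u$ the largest identifier in its closed neighbourhood, and show that in round one these leaves collectively try every colour of $\palette$, permanently locking $u$ out of its palette. The hypothesis $|\palette|\in\bigO(\Delta/\log^{3}n)$ is precisely what will make a union bound over $|\palette|$ colours super-polynomially small.

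To locate $u$, I apply the Leaves Lemma (\Cref{lem:leaf}) at layer $\startlayer:=\lfloor(\log n-\log\log n-c)/\alpha\rfloor$, the largest index in its admissible range. Since $G$ is a typical HRG, this yields a non-empty $U_{\startlayer}\subseteq V_{\startlayer}$ with $|L(u)|\in\Theta(\deg(u))$ for every $u\in U_{\startlayer}$. Combining the layer degree estimate $\deg(u)\in\Theta(e^{\startlayer/2})=\Theta(n^{1/(2\alpha)}/\log^{1/(2\alpha)}n)$ from \Cref{eq:layer-degree} with the upper bound $\Delta\le n^{1/(2\alpha)}\log n$ from \Cref{the:max-degree}, every such $u$ satisfies
\[
 |L(u)| \;\in\; \Omega\!\left(\Delta/\log^{(2\alpha+1)/(2\alpha)} n\right).
\]
Pick any such $u$ and assign the identifiers so that $u$ receives the largest ID in $N(u)\cup\{u\}$; in particular every leaf of $u$ has a smaller ID than $u$.

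Each leaf $w\in L(u)$ has $u$ as its unique neighbour, and $u$'s ID is larger than $w$'s, so the tie-breaking rule of \RCTID forces $w$ to commit to its candidate colour in round one. Hence the colours permanently assigned to $L(u)$ after round one are $|L(u)|$ i.i.d.\ uniform samples from $\palette$. For a fixed colour $\colour\in\palette$,
\[
\Pro{\text{no leaf of }u\text{ tries }\colour}\;\le\;\bigl(1-|\palette|^{-1}\bigr)^{|L(u)|}\;\le\;\exp\!\bigl(-|L(u)|/|\palette|\bigr).
\]
Under the hypothesis $|\palette|\in\bigO(\Delta/\log^{3}n)$, the previous step gives $|L(u)|/|\palette|\in\Omega(\log^{\gamma}n)$ with $\gamma:=3-(2\alpha+1)/(2\alpha)=(4\alpha-1)/(2\alpha)$, which exceeds $1$ for every $\alpha\in(1/2,1)$. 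Therefore the displayed probability is $\exp(-\log^{\gamma}n)=n^{-\omega(1)}$, and a union bound over the at most $n$ colours of $\palette$ shows that \wehp every colour of $\palette$ is assigned to some leaf of $u$ by the end of round one. Conditioned on that event, $\colourpal{t}{u}=\emptyset$ for all $t\ge 2$, so $u$ can never be coloured and \RCTID does not terminate.

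The main obstacle is purely quantitative: one needs $|L(u)|/|\palette|$ to grow polylogarithmically (rather than by a constant factor) in order to drive the union-bound failure probability below every inverse polynomial in $n$. This forces the polylogarithmic gap $\log^{3}n$ in the hypothesis to exceed the gap $\log^{(2\alpha+1)/(2\alpha)}n$ between the maximum degree and the typical degrees produced in the deepest layer to which the Leaves Lemma still applies. The exponent $3$ is tailored precisely so that $(4\alpha-1)/(2\alpha)>1$ throughout $\alpha\in(1/2,1)$; any smaller polylog factor in $|\palette|$ would only yield the weaker non-termination guarantee obtained in the $\RCT$ analysis of \Cref{pro:dist-coul-never}.
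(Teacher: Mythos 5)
Your proposal is correct and follows essentially the same route as the paper's proof: invoke the Leaves Lemma at a deep layer to obtain a vertex $u$ of near-maximum degree with $\Theta(\deg(u))$ leaf neighbours, give $u$ the lowest priority in its neighbourhood, and show via a per-colour concentration argument plus a union bound that the leaves exhaust $\palette$ in round one, leaving $u$ colour-starved forever. The only (immaterial) differences are your choice of the extremal admissible layer rather than the paper's $\ell=(\log n-(1/2+\alpha)\log\log n)/\alpha$, and your direct bound $(1-|\palette|^{-1})^{|L(u)|}\le\exp(-|L(u)|/|\palette|)$ in place of the paper's Chernoff bound on the count $Z_\colour$.
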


\begin{proof}
    Recall that $V_\ell$ is the set of vertices located in layer $\mathcal{A}_\ell$ and with hindsight we set $\ell := (\log(n) - (1/2 + \alpha)\log(\log(n)))\alpha^{-1}$. We bound the size of $|V_\ell|$ via \Cref{eq:layer-measure} and by our choice of $\ell$ we obtain $\E{|V_\ell|} = \Theta(1)n \cdot e^{-\alpha\ell} \in \omega(\log(n))$ since $\alpha > 1/2$. Hence, by a Chernoff bound it follows that $|V_\ell| \in \omega(\log(n))$ \wehp and thus, also for our typical hyperbolic random graph $G$.
    
    Moreover, \Cref{lem:leaf} is applicable for our choice of $\ell$ on a typical hyperbolic random graph. This tells us that there exists a set $U_\ell \subseteq |V_\ell|$ of size $|U_\ell| \in \omega(\log(n))$ where a vertex $u \in U_\ell$ has have a constant fraction of its neighbours as leaves. We denote the set of neighbour leaves of $u$ by $L(u)$, and it holds for $u \in U_\ell$ that $|L(u)| = \Theta(1)\deg(u)$. Now fix a vertex $u \in U$ and assign an ID to $u$ such that the ID of $u$ has a lower priority than any of its leaf neighbour $v \in L(u)$.

    We now show that any colour $\colour \in \palette$ is tried by a leaf $v \in L(u)$. Since any leaf $v \in L(u)$ has a higher priority in ID than $u$ and $u$ is the only neighbour for $v \in L(u)$, this implies that $v \in L(u)$ assigns its candidate colour. Thus, if in the first round the entire colour space is consumed by leaves of $u$,  $u$ stays uncoloured and cannot be coloured in later rounds.

    To show this, we first get a lower bound on $|L(u)|$. Recall that $u \in U_\ell$ and that a constant fraction of $u$'s neighbours are leaves. We consider the  degree of $u \in V_\ell$ which is $\deg(u) \in \Omega\left(n^{\frac{1}{2\alpha}}\cdot \log^{-\frac{1/2 + \alpha}{2\alpha}}(n)\right)$ \wehp by our choice of $\ell$ and \Cref{lem:layer-properties}. Subsequently, it follows that $|L(u)| \in \Omega\left(n^{\frac{1}{2\alpha}}\cdot \log^{-\frac{1/2 + \alpha}{2\alpha}}(n)\right)$ as we consider a typical HRG.

    Next, notice that the colour space of the \RCTID is $|\palette| \in \bigO(\Delta\cdot\log^{-3}(n)) \in \bigO(n^{\frac{1}{2\alpha}}\cdot\log^{-2}(n))$ for a typical HRG by~\Cref{the:max-degree}
    
    Now, fix a colour $\colour \in \palette$ and let us write $Z_\colour$ that is the random variable that represents the number of leaves of $L(u)$, that try $\colour$ as their candidate colour in the first round. Since ${|L(u)|} \in \Omega\left(n^{\frac{1}{2\alpha}}\cdot \log^{-\frac{1/2 + \alpha}{2\alpha}}(n)\right)$ and a vertex picks its candidate colour uniform at random from $|\palette| \in \bigO(n^{\frac{1}{2\alpha}}\cdot\log^{-2}(n))$ , it follows by linearity of expectation that
    \begin{align*}
         \Exp{Z_\colour} &= \sum_{v \in L'(u)} |\palette|^{-1} \geq \Theta(1)\log^{\frac{3\alpha-1/2}{2\alpha}}(n) \in \omega(\log(n)), 
    \end{align*}
where we used $\alpha > 1/2$ in the second step. Since $Z_\colour$ is a binomial random variable, it follows by an application of a Chernoff bound that $Z_\colour \in \omega(\log(n))$ \wehp Using a union bound over all colours we conclude that every colour has been picked as a candidate by a leaf of $u$ at least once \wehp Since leaves are of degree $1$, these leaves do not discard their candidate colour as $u$ has a lower priority by our choice of IDs. Thus, $u$ has no colour to pick from in the second round and it is not possible to colour $G$ without a conflict using the colour palette $\palette$. 
\end{proof}

Using \Cref{pro:rctid-upper-bound} as an upper bound on the necessary colours for \RCTID and \Cref{pro:no-termination-rctid} for the lower bound for the colourspace, we obtain \Cref{the:rctid}.


\section{Analysis of Random Colour Trial with Degree Priority (Proof of Theorem~\ref{the:rctdegSimplified})}\label{sec:rctdeg}

In this section we analyse another variant of \RCT we refer to as \emph{random colour trial with degree priority} (\RCTDEG). Similar to \RCT, at the beginning of each round all uncoloured vertices try a candidate colour uniform at random from their respective colour space. A vertex assigns its candidate colour if and only if no vertex with higher degree has the same candidate colour; breaking ties of vertices with same degree by giving priority to the vertex with smaller ID. In a certain sense, \RCTDEG is a special case of \RCTID where the smallest ID's are given to the largest degree vertices. We first state the formal results. 

\begin{restatable}[Random colour trial with degree priority]{theorem}{rctdegTheorem}\label{the:rctdeg}\text{ } Let  $\zeta_1 = \frac{1}{2\alpha + 1/2}$ and let $\zeta_2 = \frac{1}{4\alpha -1}$. 
\begin{compactenum}\item On a typical hyperbolic random graph \RCTDEG terminates in 2 rounds \aas 
\begin{compactenum}
    \item if $\alpha \leq 3/4$ and the colour space is at least $\log^4(n)\cdot \chi^2$ with chromatic number $\chi$;
    \item if $\alpha > 3/4$ and the colour space is at least $\log^4(n)\cdot\sqrt{n}$. 
\end{compactenum}

\item On a typical hyperbolic random graph \RCTDEG terminates in $\bigO(1)$ rounds \aas 
\begin{compactenum}
    \item if $\alpha \leq 3/4$ and the colour space is at least $\log(n)\cdot n^{\zeta_1}$;
    \item if $\alpha > 3/4$ and the colour space is at least $\log(n)\cdot n^{\zeta_2}$. 
\end{compactenum}

\item  There exists a constant $\delta >0$ such that \RCTDEG on a typical HRG never terminates \wehp\begin{compactenum}
    \item if $\alpha \leq 3/4$ and the colour space is at most $\frac{\delta}{\log(n)} \cdot {n^{\zeta_1}}$;
    \item if $\alpha > 3/4$ and the colour space is at most $\frac{\delta}{\log(n)} \cdot {\chi^2}$ with chromatic number $\chi$.
\end{compactenum}
\end{compactenum}
\end{restatable} 

Before embarking on the proof of this theorem, we briefly show how this implies \Cref{the:rctdegSimplified}.

\begin{proof}[Proof of \Cref{the:rctdegSimplified}]

First recall that a typical hyperbolic random graph has maximum degree $\Delta \in n^{\frac{1}{2\alpha}\pm o(1)}$ (\Cref{the:max-degree}) and chromatic number $\chi \in \Theta(n^{1-\alpha})$~(\cite[Corollary 10]{bmrs-stacs-25}). We use this to show that \Cref{the:rctdeg} implies \Cref{the:rctdegSimplified}:

We start that our desired statements hold for $\alpha \leq 3/4$. To this end, let $0 < \delta_1 < 1 - 4\alpha(1-\alpha)$ and $\delta_2 = (4\alpha + 1)^{-1}>0$ and note that $\delta_2 > \delta_1$. By~\Cref{the:rctdeg} we have that \RCTDEG terminates after 2 rounds \aas if the colour space is at least $\log^4(n)\cdot\chi^2 \in \Omega(\log^4(n) \cdot n^{2(1-\alpha)})$. This is smaller than $\Delta^{1-\delta_1} \in n^{\frac{1-\delta_1}{2\alpha} -o(1)}$ by our choice of $\delta_1$ and thus, colour space $\Delta^{1-\delta_1}$ suffices for $\RCTDEG$ so it colour a typical HRG after two rounds \aas Similar, it holds \aas by~\Cref{the:rctdeg} that \RCTDEG colours a typical HRG within constant rounds if the colour space is at least $\log(n) \cdot n^{1/(2\alpha +1/2)}$ which is again smaller than $\Delta^{{1-\delta_2} +o(1)} \in n^{\frac{1-\delta_2}{2\alpha} + o(1)}$ by our choice of $\delta_2$ and using $o(1)$ large enough. Similar $\Delta^{{1-\delta_2} -o(1)} \in n^{\frac{1-\delta_2}{2\alpha} - o(1)}$ is smaller than $ \frac{\delta}{\log(n)} \cdot n^{1/(2\alpha +1/2)}$ and thus, by \Cref{the:rctdeg} $\Delta^{{1-\delta_2} -o(1)}$ fails \wehp to colour a typical HRG.

Next, we show that our desired statements also hold for $\alpha > 3/4$ given \Cref{the:rctdeg}. For this case we set $0 < \delta_1 < 1-\alpha$ or rather $0<\delta_2< 1 - \frac{2\alpha}{4\alpha - 1}$ and note that for such choices $\delta_2 > \delta_1$ is possible. Then, by our choice of $\delta_1$, it follows that there exists another constant $\delta'>0$ such that $\Delta^{1-\delta_1} \geq n^{1/2 + \delta'} > \log(n)\cdot\sqrt{n}$ which by \Cref{the:rctdeg} implies that $\Delta^{1-\delta_1}$ colours suffice for $\alpha > 3/4$ to colour a typical HRG in two rounds. Lastly, one can verify that for our choice of $\delta_2$, that also $\Delta^{1-\delta_2} > n^{\zeta_2 + o(1)}$, hence, by~\Cref{the:rctdeg} a typical HRG is coloured within constant rounds \aas by \RCTDEG, if the colour space is at least $\Delta^{1-\delta_2}$.
\end{proof}

In \Cref{sec:rctdeg-part1}, we present the proof that \RCTDEG only requires a colour space of size $\Delta^{1-\delta}$ in order to colour a hyperbolic random graph after two rounds \aas~(\Cref{pro:rctdeg-upperbound}). Afterwards we show in \Cref{sec:rctdeg-part2} that we can reduce the colour space even further if we aim to colour within $\bigO(1)$ rounds~(\Cref{pro:rctdeg-constant-rounds}). Then, in \Cref{sec:rctdeg-part3} we show lower bounds for the colour space that for certain regimes of the power law exponent are tight up to factor $\log^2(n)$~(\Cref{pro:lower-bound-rctdeg}). In \Cref{sec:all-together} we conclude the section by putting everything together to show \Cref{the:rctdeg}.

\subsection{Theorem~\ref{the:rctdeg}, part 1: Colouring in two Rounds}\label{sec:rctdeg-part1}
We call the set of vertices $V \cap \B_0(R/2)$ the \emph{core}. We first show that \RCTDEG with enough colours gives a complete colouring of the core within one round in \Cref{lem:round-1-rctdeg-core}. Then we show that we can colour the rest within two rounds in~\Cref{lem:second-round-rctdeg}.

We show for $f(n) \in \Omega(1)$ that \RCTDEG with colour space $f(n)\cdot\chi^2$ assigns a \emph{rainbow colouring}, i.e., a different colour for each vertex, to the core within the first round with an error probability that is vanishing in $f(n)$.

\begin{lemma}[\RCTDEG core rainbow colouring]\label{lem:round-1-rctdeg-core}
     Let $G\sim \hrg$ be a typical hyperbolic random graph and let $f(n) \in \Omega(1)$. Then, after the first round of \RCTDEG with colour space of size $|\palette| \in f(n)\cdot \chi^2$, the core is completely coloured with probability $1 -\bigO(1/f(n))$.
\end{lemma}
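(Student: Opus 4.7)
The plan is to reduce the claim to a birthday-paradox argument on a slightly enlarged ``core'' region. I would define the core $C := V \cap \mathcal{B}_0(R/2)$ and the extended core $E := V \cap \mathcal{B}_0(R/2 + 7)$; since any two points in $C$ lie at hyperbolic distance at most $R$, the set $C$ induces a clique. By \Cref{lem:measure-inner-disk}, $\mu(\mathcal{B}_0(R/2+7)) = (1-o(1))\cdot e^{-\alpha(R/2-7)} \in \Theta(n^{-\alpha})$, and a Chernoff bound on the Poisson count gives $|E| \in \Theta(n^{1-\alpha}) = \Theta(\chi)$ for a typical HRG (using $\chi \in \Theta(n^{1-\alpha})$ for HRGs from \cite{bmrs-stacs-25}).

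Next, I would apply \Cref{lem:larger-neighbour-radius} with $c' = 7$ to every core vertex: for $u \in C$, \whp all vertices with radius at least $r(u)+7$ have strictly smaller degree than $\deg(u)$, so any neighbour $v$ with $\deg(v) \geq \deg(u)$ (in particular any higher-priority neighbour in \RCTDEG) has radius $< r(u)+7 \leq R/2+7$ and hence lies in $E$. A union bound over the $|C| \leq n$ core vertices preserves this simultaneously \whp This geometric reduction is the heart of the argument: it converts the event ``the core is fully coloured'' into the purely combinatorial event ``no two vertices of $E$ draw the same candidate colour in round 1''.

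Finally, each vertex of $E$ draws a candidate uniformly and independently from $\palette$, with $|\palette| = f(n)\cdot \chi^2$. Union-bounding over the $\binom{|E|}{2}$ pairs, the probability that some two vertices of $E$ share a candidate is at most
\[
\binom{|E|}{2}\cdot |\palette|^{-1} \in \bigO\!\left(\frac{\chi^2}{f(n)\cdot \chi^2}\right) = \bigO(1/f(n)).
\]
On the complement of this event, no core vertex $u$ collides with any of its higher-priority neighbours (all of which sit in $E$), so every $u \in C$ is permanently coloured in round 1, as desired.

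The main technical subtlety will be ensuring that \Cref{lem:larger-neighbour-radius} applies uniformly across all core vertices, which is absorbed by the union bound since its \whp guarantee decays polynomially while $|C| \leq n$. Tie-breaking by ID in \RCTDEG only refines the priority order and does not affect the birthday estimate; and although rainbow-colouring the whole of $E$ is slightly stronger than strictly needed, this relaxation keeps the union bound transparent and is essentially tight, matching the $\chi^2$ scaling appearing in the statement.
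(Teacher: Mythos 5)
Your proposal is correct and follows essentially the same route as the paper's proof: both work with the enlarged set $V \cap \mathcal{B}_0(R/2+7)$, invoke \Cref{lem:larger-neighbour-radius} to guarantee that every higher-priority neighbour of a core vertex lies in that set, and then apply a birthday-paradox bound using $|U|\in\Theta(\chi)=\Theta(n^{1-\alpha})$ to get failure probability $\bigO(1/f(n))$. The only cosmetic difference is that you bound the collision probability by a union bound over pairs, whereas the paper lower-bounds the no-collision product $\prod_{j}(1-j|\palette|^{-1})$ by an exponential; both yield the same estimate.
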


\begin{proof}
    
Let $c' = 7$ and let $U = V \cap \B_0(R/2 + c')$, i.e., the set of vertices that forms the core and some additional vertices with slightly larger radii. Since for a typical HRG all vertices $V\setminus U$ have a smaller degree than the vertices of the core in $\B_0(R/2)$ by \Cref{lem:larger-neighbour-radius} all vertices of the core assign their candidate colour in the first round, if all candidate colours for the vertices of $U$ are unique. To show this, let $A$ be the event that there exists no pair $u,v \in U$ that picked the same candidate colour in round 1 of the \RCTDEG. We obtain using that the colour space is $|\palette| \in f(n)\cdot \chi^2$ by the hypothesis of our lemma and $\chi \in \Theta(n^{1-\alpha})$ for a typical HRG by \cite[Corollary 10]{bmrs-stacs-25}, that event $A$ holds with probability 
    \begin{align*}
        \Pro{A} &= \prod_{j=1}^{|U|-1}(1- j\cdot |\palette|^{-1}) \geq \exp{\left(-\Theta(1)(f(n) \cdot \chi^2)^{-1}\cdot n^{2(1-\alpha)}\right)} \in 1- \bigO(1/f(n)),
    \end{align*}
where we used that $|U| \in \Theta(n^{(1-\alpha)})$ for a typical HRG by \Cref{lem:measure-inner-disk} in conjunction with a Chernoff bound. We conclude that all vertices of the core assigned their candidate colour after the first round with probability $1- \bigO(1/f(n))$ since under event $A$, all vertices that have a degree at least as large as the smallest degree vertex of the core have a unique candidate colour.
\end{proof}

In the next lemma, we bound for \RCTDEG the number of uncoloured vertices for each layer after the first round.

\begin{lemma}[Round~1 \RCTDEG]\label{lem:uncloured-layer-RCTDEG}
    Let $G\sim\hrg$ be a typical hyperbolic random graph and let $Y_\ell = |\{v \in V_\ell : \colourchosen{1}{v} =\emptyset\}|$ be the random variable which counts the number of uncoloured vertices in in layer $\mathcal{A}_\ell$ for $\ell \in [\lceil \alpha^{-1}(\log(n) - 2\log(\log(n))) \rceil]$ after the first round of \RCTDEG with colour palette $\palette$. Then
    $$
        Y_\ell\in\begin{cases}
{\bigO}(\log^3(n)\cdot|\palette|^{-1}ne^{-\ell(\alpha - 1/2)}) \text{ for $0 \leq \ell \leq \lfloor 2/(1-\alpha)\log(\log(n))\rfloor$},\\
{\bigO}(\log^2(n)\cdot|\palette|^{-1} ne^{-\ell(2\alpha - 1)}) \text{ for $\lceil 2/(1-\alpha)\log(\log(n))\rceil \leq \ell \leq \lceil R/2 \rceil$},\\
\bigO(\log^2(n)\cdot|\palette|^{-1} n^2e^{-2\alpha\ell}) \text{ for $ \lceil R/2 \rceil \leq \ell \leq \lceil \alpha^{-1}(\log(n) - 2\log(\log(n))) \rceil$},
\end{cases}  
    $$
    with probability $1 - \bigO(\log^{-2}(n))$. 
\end{lemma}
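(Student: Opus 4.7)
The strategy is to bound $\Exp{Y_\ell}$ by leveraging that \RCTDEG's degree-priority rule restricts the ``blockers'' of a vertex $u$ to its larger-degree neighbourhood $N^+(u)$, and then to apply Markov's inequality to obtain the claimed $1-\bigO(\log^{-2}(n))$ guarantee.

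The key per-vertex observation is the following. A vertex $u$ remains uncoloured after the first round only if some higher-priority neighbour---that is, some $v \in N^+(u)$---chooses the same candidate colour as $u$. (Ties between vertices of equal degree are broken by ID, but this only restricts the blocking set further, so inclusion into $N^+(u)$ keeps the bound valid.) Since in the first round each vertex picks a uniform candidate from $\palette$ independently of the others, a union bound over $N^+(u)$ yields $\Pro{\colourchosen{1}{u} = \emptyset} \leq \deghigh{u}/|\palette|$.

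By linearity of expectation, $\Exp{Y_\ell} \leq |V_\ell| \cdot \max_{u \in V_\ell} \deghigh{u}/|\palette|$. Since $G$ is a typical HRG, Lemma~\ref{lem:layer-properties} gives $|V_\ell| \in \Theta(ne^{-\alpha\ell})$ across the range of $\ell$ considered, and Lemma~\ref{lem:neighbours-larger-degree}---applied uniformly across $V_\ell$ via a union bound absorbed into the typical-HRG conditioning---controls $\deghigh{u}$ in the three regimes. Plugging in: for Case~1, $\Exp{Y_\ell} \in \bigO((e^{\ell/2} + \log(n)) \cdot ne^{-\alpha\ell}/|\palette|) \subseteq \bigO(\log(n) \cdot ne^{-\ell(\alpha-1/2)}/|\palette|)$ using $e^{-\alpha\ell} \leq e^{-\ell(\alpha-1/2)}$; for Case~2, $\Exp{Y_\ell} \in \bigO(ne^{-\ell(2\alpha-1)}/|\palette|)$; and for Case~3, $\Exp{Y_\ell} \in \bigO(n^2e^{-2\alpha\ell}/|\palette|)$.

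The claimed high-probability bounds then follow by Markov's inequality with $a = \log^2(n)\cdot\Exp{Y_\ell}$, giving failure probability $\bigO(\log^{-2}(n))$; the extra $\log^3(n)$ factor in Case~1 (as opposed to $\log^2(n)$ in Cases~2 and~3) just absorbs the additional $\log(n)$ already present in its expectation. There is little true obstacle beyond carefully unpacking the degree-priority rule to justify the $\deghigh{u}/|\palette|$ bound on the per-vertex failure probability; the relatively weak $1-\bigO(\log^{-2}(n))$ concentration is exactly what Markov provides, which is why no sharper inequality (e.g.\ the Azuma--Hoeffding-type bound used for \RCTID) is needed here.
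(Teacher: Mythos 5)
Your proposal is correct and follows essentially the same route as the paper: bound the per-vertex failure probability by $\deghigh{u}/|\palette|$ via a union bound over the larger-degree neighbourhood, apply linearity of expectation with the layer-size and larger-degree-neighbourhood bounds from \Cref{lem:layer-properties} and \Cref{lem:neighbours-larger-degree}, and finish with Markov's inequality at threshold $\log^2(n)\cdot\Exp{Y_\ell}$. Your explicit remark that tie-breaking by ID only shrinks the blocking set below $N^+(u)$ is a small clarification the paper leaves implicit, but the argument is otherwise identical.
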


\begin{proof}
 We use \Cref{lem:layer-properties} to derive our desired bound: as in our statement, let $Y_\ell := \sum_{v \in V_\ell}\indicator{\colourchosen{1}{v} = \emptyset}$ to be the random variable with which we count the number of vertices not coloured after the first round in layer $\mathcal{A}_\ell$. Aiming for an upper bound on $Y_\ell$, for $u \in V_\ell$, let $Y_u$ be the indicator that is $1$ if $u$ is uncoloured after the first round. Fix a vertex $u \in V$ and reveal its candidate colour $\colour$. Then reveal the candidate colour of a neighbour $v \in N(u)$, that has larger degree than $u$ and the probability that $v$ tries the same candidate colour $\colour$ is $|\palette|^{-1}$. Thus, via a union bound over all neighbours of $u$ with larger degree, the probability that any of its neighbours with larger degree has candidate colour $\colour$ is $\Pro{Y_u = 1} \leq \deghigh{u}/|\palette|$. Then by linearity of expectation we get 
 \begin{align*}
 \E{Y_\ell} &= \E{\sum_{v \in V_\ell}Y_v} \leq |V_\ell| \cdot \deghigh{u}/|\palette| \in {\bigO}((ne^{-\alpha\ell} + \log(n))\cdot \deghigh{u}/|\palette|),    
 \end{align*}

 where we used in the last step that $G$ is a typical HRG and thus, the number of vertices in a layer is given by $|V_\ell| \in \Theta\left(ne^{-\alpha\ell}\right)$ via~\Cref{lem:layer-properties}. Next, we use \Cref{lem:neighbours-larger-degree} by which we get an upper bound on $\deghigh{u}$ for a typical HRG. This yields

\begin{align*}
    \E{Y_\ell}\in\begin{cases}
{\bigO}(\log(n)\cdot|\palette|^{-1}ne^{-\ell(\alpha - 1/2)}) \text{ for $0 \leq \ell \leq \lfloor 2/(1-\alpha)\log(\log(n))\rfloor$},\\
{\bigO}(|\palette|^{-1} ne^{-\ell(2\alpha - 1)}) \text{ for $\lceil 2/(1-\alpha)\log(\log(n))\rceil \leq \ell \leq \lceil R/2 \rceil$ }\\
\bigO(|\palette|^{-1} n^2e^{-2\alpha\ell}) \text{ for $ \lceil R/2 \rceil \leq \ell \leq \lceil \alpha^{-1}(\log(n) - 2\log(\log(n))) \rceil$}.
\end{cases}  
\end{align*}
The desired concentration follows by using Markov's inequality~\Cref{lem:markov}, setting $a=  \E{Y_\ell}\cdot\log^{2}(n)$ and we obtain $\Pro{Y_\ell \geq a} \leq \log^{-2}(n)$.
\end{proof}

Recall that the core is the clique that is formed by the set of vertices $V \cap \B_0(R/2)$. We now show that \RCTDEG with a relatively small colour space colours every vertex outside of the core after 2 rounds \aas To express our bound on for the colour space of \RCTDEG we set $\maxexp := \max(2(1-\alpha), 1/2)$.

\begin{lemma}[Second round \RCTDEG]\label{lem:second-round-rctdeg}
 Let $G\sim\hrg$ by a typical hyperbolic random graph and let $\maxexp := \max(2(1-\alpha), 1/2)$. Then after round 2 of \RCTDEG  with colour space $|\palette| \geq  \log^4(n)\cdot n^{\maxexp}$ has ended, all vertices outside the core are coloured with probability $1 - \bigO(1/\log(n))$.
\end{lemma}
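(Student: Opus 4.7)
The strategy is to upper bound the expected number of outside-core vertices that remain uncoloured after round~$2$ and then invoke Markov's inequality. Fix $u\in V_\ell$ with $\ell<R/2$. By~\Cref{lem:layer-properties}, $\deg(u)\in\bigO(e^{\ell/2}+\log(n))\subseteq\bigO(n^{1/4}+\log(n))$, while $|\palette|\geq\log^4(n)\sqrt{n}$; hence the colour space available to $u$ entering round~$2$ satisfies $\colourSpace{2}{u}\geq|\palette|-\deg(u)\geq(1-o(1))|\palette|$ deterministically, and the same holds for each of $u$'s neighbours outside the core.

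The key step is a pair-counting argument. If an outside-core vertex $u$ is uncoloured after round~$2$, then some $v\in N^+(u)$ must itself be uncoloured after round~$1$ and pick the same candidate colour as $u$ in round~$2$; let $C_2$ count such ordered pairs $(u,v)$ with $u$ outside the core. Then $|\{u\notin\text{core}:\colourchosen{2}{u}=\emptyset\}|\leq C_2$. A single round of \RCTDEG gives $\Pro{u\text{ uncoloured after round }1}\leq\deghigh{u}/|\palette|$ (reveal $u$'s candidate colour and union bound over $N^+(u)$), and since round~$2$ candidate colours are drawn afresh and independently of round~$1$, $\Pro{u,v\text{ collide in round }2}\leq 2/|\palette|$. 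Linearity of expectation together with $|N^+(u)|=\deghigh{u}$ then yields
\[
\Exp{C_2}\;\leq\;\frac{2}{|\palette|^2}\sum_{u\notin\text{core}}\bigl(\deghigh{u}\bigr)^2.
\]

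I bound the sum layer by layer via~\Cref{lem:layer-properties,lem:neighbours-larger-degree}. On the range $0\leq\ell\leq 2(1-\alpha)^{-1}\log\log(n)$, the per-vertex bound is $\deghigh{u}\in\bigO(e^{\ell/2}+\log(n))$ and $|V_\ell|\in\bigO(ne^{-\alpha\ell}+\log^2(n))$, so $\sum_{u\in V_\ell}(\deghigh{u})^2\in\bigO\bigl(ne^{(1-\alpha)\ell}+n\log^2(n)e^{-\alpha\ell}\bigr)$; the geometric sum over this range is $\bigO(n\log^2(n))$. On $2(1-\alpha)^{-1}\log\log(n)\leq\ell\leq R/2$, $\deghigh{u}\in\bigO(e^{\ell(1-\alpha)})$, hence $\sum_{u\in V_\ell}(\deghigh{u})^2\in\bigO(ne^{(2-3\alpha)\ell})$; this sums to $\bigO(n\log^{\bigO(1)}(n))$ when $\alpha>2/3$ (exponent non-positive) and to $\bigO(n^{(4-3\alpha)/2})$ when $\alpha\leq 2/3$. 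Plugging in $|\palette|^2\geq\log^8(n)\,n^{2\maxexp}$ and splitting on $\alpha\leq 3/4$ (so $2\maxexp=4(1-\alpha)$) versus $\alpha>3/4$ (so $2\maxexp=1$), routine exponent bookkeeping shows $\Exp{C_2}\in\bigO(\log^{-6}(n))$ in every case; Markov's inequality then gives $\Pro{C_2\geq 1}\in\bigO(1/\log(n))$, which is precisely the target.

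The main obstacle is the case analysis on $\alpha$: the dominant contribution to $\sum(\deghigh{u})^2$ shifts with $\alpha$, and one has to verify that at the boundary $\alpha=3/4$---where both regimes of $\maxexp$ meet and where the middle-range sum is essentially $\Theta(n/\log^2(n))$ only---the $\log^8(n)$ factor in $|\palette|^2$ still beats the residual polylogarithmic blow-up. A secondary technicality is that the per-vertex bounds of~\Cref{lem:layer-properties,lem:neighbours-larger-degree} hold \wehp individually, so union-bounding them over all $n$ vertices is immediate since \wehp denotes error probability $n^{-\omega(1)}$; this ensures that the layer-wise replacement $\deghigh{u}\leq\bigO(e^{\ell/2}+\log(n))$ or $\bigO(e^{\ell(1-\alpha)})$ can be applied simultaneously to every $u$ in the proof.
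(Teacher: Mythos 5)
Your proof is correct, but it takes a genuinely different route from the paper's. The paper first establishes a per-layer bound on the number of round-1 survivors (\Cref{lem:uncloured-layer-RCTDEG}, itself obtained via Markov with only a $1-\bigO(\log^{-2}(n))$ guarantee), conditions on that event together with a palette-size event, bounds the conditional expectation $\Exp{Z_\ell\mid A}$ layer by layer, and finishes with Markov plus a union bound over the $\bigO(\log n)$ layers. You instead count, in one shot, the expected number of ordered pairs $(u,v)$ with $v\in N^+(u)$ that survive round 1 and collide in round 2, and apply Markov once at the end; this avoids the intermediate conditioning entirely and in fact yields the stronger error bound $\bigO(\log^{-6}(n))$ for the outside-core vertices, whereas the paper's $\bigO(1/\log n)$ loss comes from its layer-wise Markov steps. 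The one step you should spell out is the passage from $\Pro{(u,v)\text{ is counted}}$ to $\Pro{u\text{ uncoloured after round }1}\cdot 2/|\palette|$: the counted event forces $u$ (not only $v$) to be uncoloured after round 1, and conditioned on any round-1 outcome the collision probability is at most $1/\colourSpace{2}{u}\leq 2/|\palette|$ because it is the palette of $u$, the outside-core vertex, that is guaranteed to stay of size $(1-o(1))|\palette|$ --- a core neighbour $v$ may have a badly depleted palette, so the bound must be anchored at $u$. Two arithmetic slips, both traceable to reading $R=\log(n)+C$ from the preliminaries rather than the $R=2\log(n)+C$ used everywhere else in the paper, do not affect the conclusion: the maximum degree outside the core is $\bigO(\sqrt{n})$ rather than $\bigO(n^{1/4})$ (still $o(|\palette|)$ since $|\palette|\geq\log^4(n)\sqrt{n}$ in every regime), and the middle-range sum for $\alpha\leq 2/3$ is $\bigO(n^{3-3\alpha})$ rather than $\bigO(n^{(4-3\alpha)/2})$, which still gives $\Exp{C_2}\in\bigO(n^{\alpha-1}\log^{-7}(n))=o(\log^{-6}(n))$ in that regime, so the final Markov step goes through unchanged.
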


\begin{proof}
Since we only consider vertices outside the core, we only consider vertices with radius larger than $R/2$. We consider the layers outside the core and get the number of uncoloured vertices per layer. We then wrap up the proof by summing over all expected uncoloured vertices per layer outside the core and show that this is vanishing.

To get started, fix a layer $\mathcal{A}_\ell$ with index $\ell \leq \lceil R/2 \rceil$ and let $U_\ell \subseteq V_\ell$ be the set of uncoloured vertices in layer $\mathcal{A}_\ell$ after round one has ended. Now let $A_1$ be the event that $|U_\ell|\in {\bigO}(\log^3(n)\cdot|\palette|^{-1}ne^{-\ell(\alpha - 1/2)})$ if $\ell \leq \lfloor 2/(1-\alpha)\log(\log(n))\rfloor$ and otherwise $U_\ell \in {\bigO}(\log^2(n)\cdot|\palette|^{-1} ne^{-\ell(2\alpha - 1)})$, which by \Cref{lem:uncloured-layer-RCTDEG} occurs with probability $\Pro{A_1} = 1 - \bigO(\log^{-2}(n))$. Moreover, let $A_2$ be the event that the minimal colour space of a vertex is $X := \min_{u\in U_\ell}\colourSpace{t=2}{u} \in \Omega(|\palette|)$. Since $\ell \leq \lceil R/2 \rceil$ the maximum degree of a vertex $u \in U_\ell$ is for a typical HRG by \Cref{lem:layer-properties}~$\deg(u) \in \bigO(\sqrt{n}) \in o(|\palette|)$ and thus, $A_2$ holds surely given that $G$ is a typical HRG. Let us write $A: = \{A_1 \cap A_2\}$ and we conclude $\Pro{A} =1 - \bigO(\log^{-2}(n))$. 

Next, let $Z_\ell:=\sum_{u \in U_\ell}\indicator{\colourchosen{2}{u} = \emptyset}$ be the number of uncoloured vertices in layer $\mathcal{A}_\ell$ after round 2 has ended. We upper bound $\Exp{\indicator{\colourchosen{2}{u} = \emptyset} |A}$. For this, consider a vertex $u \in U_\ell$ that has $\deghigh{u} = |\{v \in N(u): \deg(v) \geq \deg(u)\}|$ many neighbours with degree at least $\deg(u)$ while all uncoloured neighbours of $u$ have a colour space of size at least $X$. Then, reveal $u$'s candidate colour $\colour$ and the probability that any uncoloured neighbour of $u$ with degree at least $\deg(u)$ has the same candidate colour $\colour$ is $\Exp{\indicator{\colourchosen{2}{u} = \emptyset}} \leq (1 - 1/X)^{\deghigh{u}} \leq \frac{\deghigh{u}}{X}$. Recall that event $A$ implies that $X \in \Omega(|\palette|)$ and thus, $\Exp{\indicator{\colourchosen{2}{u} = \emptyset}} \leq \Theta(1) \deghigh{u}/|\palette|$. Then we get by conditioning on event $A$ that 

\begin{equation}\label{eq:expectation-indicator-conditioned-rctdeg}
    \Exp{\indicator{\colourchosen{2}{u} = \emptyset} |A} 
    \in \begin{cases}
       {\bigO}(|\palette|^{-1}\cdot e^{\ell/2}\log(n)) \text{ for $0 \leq \ell \leq \lfloor 2/(1-\alpha)\log(\log(n))\rfloor$},\\
\bigO(|\palette|^{-1}\cdot e^{\ell(1-\alpha)}) \text{ for $\lceil 2/(1-\alpha)\log(\log(n))\rceil \leq \ell \leq \lceil R/2 \rceil$,}
    \end{cases}     
\end{equation} 

by using \Cref{lem:neighbours-larger-degree} to upper bound $\deghigh{u}$.

We now proceed by a case distinction for $\ell$ to upper bound $Z_\ell$ for any layer. We show that for any $\ell$, the event $B_\ell := \{Z_\ell = 0\}$, that is, after the second round the number of uncoloured vertices in layer $\mathcal{A}_\ell$ is $0$, holds with probability $\Pro{B_\ell} \in 1 - \bigO(\log^{-2}(n))$. Our desired statement then follows afterwards via a union bound over all $\mathcal{O}(\log(n))$ layers.

\smallskip

\textbf{Case 1}~[$0 \leq \ell \leq \lfloor 2/(1-\alpha)\log(\log(n))\rfloor$]:

We now bound $\Exp{Z_\ell |A}$, the expected number of vertices after round 2, conditioning on event $A$. Recall that if event $A$ occurs, then the number of uncoloured vertices in layer $\mathcal{A}_\ell$ after round 1 is $|U_\ell|\in {\bigO}(\log^3(n)\cdot|\palette|^{-1}ne^{-\ell(\alpha - 1/2)})$. Thus, by conditioning on $A$ we obtain via linearity of expectation
\begin{align}\label{eq:expectation-all-conditioned-case1}
    \Exp{Z_\ell| A} &\in {\bigO}(\log^3(n)\cdot|\palette|^{-1}ne^{-\ell(\alpha - 1/2)}) \Exp{\indicator{\colourchosen{2}{u} = \emptyset}|A} \in {\bigO}(\log^4(n)\cdot ne^{\ell(1-\alpha)} \cdot |\palette|^{-2}),
\end{align}
where we used \Cref{eq:expectation-indicator-conditioned-rctdeg} in the second step. Next, we use that $\alpha < 1$ and thus,

\begin{align*}
      \Exp{Z_\ell| A} &\in  {\bigO}(\log^6(n)\cdot n \cdot |\palette|^{-2}) \in \bigO(\log^{-2}(n)),
\end{align*}

where we used that $\ell \leq \lfloor 2/(1-\alpha)\log(\log(n))\rfloor$ by our case and $|\palette| \geq \log^4(n)\cdot \sqrt{n}$ by the assumption of our lemma. Recall that $B_\ell$ is the event that $Z_\ell =0$. Using Markov's inequality we then get that $\Pro{B_\ell|A} \in 1 - \bigO(\log^{-2}(n))$. Recall that $A$ also occurred with the same probability bound and we get $\Pro{B_\ell \cap A} \geq 1 - (\Pro{A^C} + \Pro{B_\ell^C}) \in 1 - \bigO(\log^{-2}(n))$.

\smallskip

\textbf{Case }~2[$\lceil 2/(1-\alpha)\log(\log(n))\rceil \leq \ell \leq \lceil R/2 \rceil$]:

Aiming for an upper bound on $\Exp{Z_\ell}$, recall that the number of uncoloured vertices in layer $\mathcal{A}_\ell$ under event $A$ is bounded by $U_\ell \in {\bigO}(\log^2(n)\cdot|\palette|^{-1} ne^{-\ell(2\alpha - 1)})$. Using \Cref{eq:expectation-indicator-conditioned-rctdeg} and applying linearity of expectation we obtain

\begin{align}\label{eq:expectation-all-conditioned-case2}
\begin{split}
    \Exp{Z_\ell| A} &\in {\bigO}(\log^2(n)\cdot|\palette|^{-1} ne^{-\ell(2\alpha - 1)}) \Exp{\indicator{\colourchosen{2}{u} = \emptyset}|A} \in {\bigO}(\log^2(n)\cdot ne^{\ell(2-3\alpha)} \cdot |\palette|^{-2}).
\end{split} 
\end{align}

We consider different cases for $\alpha$ to obtain a general upper bound for \Cref{eq:expectation-all-conditioned-case2}. For $\alpha < 2/3$ this is monotonically increasing in $\ell$. Since $\ell \leq \lceil R/2 \rceil < \lceil \log(n) +C/2\rceil$ by our case and we have $|\palette| \geq \log^4(n)\cdot n^{2(1-\alpha)}$ by the hpyothesis of our lemma we obtain for $\Cref{eq:expectation-all-conditioned-case2}$
$$
 \Exp{Z_\ell| A} \in \bigO(\log^{-6}(n) \cdot n^{-(1-\alpha)}) \in o(\log^{-2}(n)).
$$

 For $2/3 \leq \alpha <1$, \Cref{eq:expectation-all-conditioned-case2} is monotonically decreasing in $\ell$ and we obtain for our colour space $|\palette| \geq \log^4(n)\cdot\sqrt{n}$

 $$
 \Exp{Z_\ell| A} \in \bigO(\log^{-6}(n)) \in o(\log^{-2}(n)),
 $$
using $\ell \geq 0$. Recalling that $\Pro{A} = 1 - \bigO(\log^{-2}(n))$, an application of Markov's inequality yields for the event $B_\ell = \{Z_\ell =0\}$ that $\Pro{B_\ell\cap A} \in 1 - \bigO(\log^{-2}(n))$ which concludes our second case.

In conclusion we have $\Pro{B_\ell\cap A} \in 1 - \bigO(\log^{-2}(n))$ for all layers and thus, a union bound over $\bigO(\log(n))$ layers shows that with probability $1 - \bigO(1/\log(n))$ no vertex with radius larger than $R/2$ is uncoloured after the second round of \RCTDEG what is what we sought out to show.

\end{proof}

We now get the first part of \Cref{the:rctdeg}: for $\maxexp = (2(1-\alpha),1/2)$, \RCTDEG with colour space $|\palette| \geq \log^4(n)n^{\maxexp}$ colours a hyperbolic random graph within two rounds \aas

\begin{proposition}[\RCTDEG 2 rounds \aas]\label{pro:rctdeg-upperbound}
     Let $G\sim\hrg$ by a typical hyperbolic random graph, let $\maxexp = (2(1-\alpha),1/2)$ and let $Z$ be the number of uncoloured vertices after round 2 of \RCTDEG  with colour space at  $|\palette| \geq \log^4(n)\cdot n^{\maxexp}$ has ended. Then, $Z=0$ \aas   
\end{proposition}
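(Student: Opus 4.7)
The plan is to combine the two main technical lemmas of this subsection --- Lemma~\ref{lem:round-1-rctdeg-core} and Lemma~\ref{lem:second-round-rctdeg} --- via a union bound, since together they cover the core and its complement.

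First, I would verify that the hypothesis $|\palette| \geq \log^4(n)\cdot n^{\maxexp}$ with $\maxexp = \max(2(1-\alpha),1/2)$ simultaneously satisfies the hypotheses of both lemmas. By definition $\maxexp \geq 2(1-\alpha)$, and since on a typical HRG we have $\chi \in \Theta(n^{1-\alpha})$ by \cite[Corollary~10]{bmrs-stacs-25}, it follows that $n^{\maxexp} = \Omega(\chi^2)$. Therefore $|\palette| = \Omega(\log^4(n)\cdot \chi^2)$, so instantiating Lemma~\ref{lem:round-1-rctdeg-core} with $f(n) = \Theta(\log^4(n))$ yields that the core $V \cap \B_0(R/2)$ is rainbow-coloured after round~1 with probability $1 - \bigO(\log^{-4}(n))$; in particular, every core vertex is coloured after round~2 as well, since coloured vertices never become uncoloured.

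Next, Lemma~\ref{lem:second-round-rctdeg} applies directly under the same colour-space hypothesis, and guarantees that every vertex outside the core is coloured after round~2 with probability $1 - \bigO(\log^{-1}(n))$. A union bound over the two failure events then yields that $Z = 0$ after round~2 with probability at least $1 - \bigO(\log^{-1}(n)) = 1 - o(1)$, which is the \aas guarantee claimed.

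The proof is essentially bookkeeping: both heavy-lifting concentration arguments are already carried out in the two lemmas above, and there is no real obstacle. The only mild subtlety is the case distinction hidden inside $\maxexp$ between $\alpha \leq 3/4$ and $\alpha > 3/4$, which is handled uniformly by the single inequality $n^{\maxexp} \geq n^{2(1-\alpha)} = \Omega(\chi^2)$ used to invoke Lemma~\ref{lem:round-1-rctdeg-core}.
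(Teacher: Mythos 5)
Your proposal is correct and follows essentially the same route as the paper's own proof: invoke \Cref{lem:round-1-rctdeg-core} with $f(n)=\log^4(n)$ (using $\chi\in\Theta(n^{1-\alpha})$ and $\maxexp\geq 2(1-\alpha)$) to colour the core in round~1, invoke \Cref{lem:second-round-rctdeg} for everything outside the core after round~2, and combine the two failure probabilities to get the $1-\bigO(1/\log(n))$ guarantee. Your explicit check that $n^{\maxexp}=\Omega(\chi^2)$ handles the case distinction cleanly; nothing is missing.
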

\begin{proof}
    Round 1: Using $\Cref{lem:round-1-rctdeg-core}$ with $f(n) = \log^4(n)$ and $\chi \in\Theta( n^{1-\alpha})$ for a typical HRG~(\cite[Corollary 10]{bmrs-stacs-25}) the core is coloured with probability $1 - \bigO(\log^{-4}(n))$ after the first round since $\maxexp > 2(1-\alpha)$.

    Round 2: By the colour space given for \RCTDEG we get by \Cref{lem:second-round-rctdeg} that after the second round all vertices outside the core are coloured with probability $1-\bigO(1/\log(n))$.

    We conclude that the entire graph is coloured after the second round with probability $1-\bigO(1/\log(n))$.
\end{proof}

\subsection{Theorem~\ref{the:rctdeg}, part 2: Constant rounds with a Smaller Colour Space}\label{sec:rctdeg-part2}

We show that if we aim for constants rounds, the upper bound on the colour space for \RCTDEG in \Cref{pro:rctdeg-upperbound} can be improved.

In the following lemma we show that given a specific radius $r$ and colour space $|\palette|$, \RCTDEG assigns a rainbow colouring to the set $V\cap \B_0(r)$, i.e., after the first round each vertex $V\cap \B_0(r)$ is coloured with a different colour.

\begin{lemma}[\RCTDEG generalised rainbow colouring]\label{lem:round-1-rctdeg-rainbow}
     Let $G\sim \hrg$ be a typical hyperbolic random graph, let $r=R(1 - (2\alpha +1/2)^{-1})$ and let $f(n) \in \Omega(1)$. Then, after the first round of \RCTDEG with colour space of size $|\palette| \geq f(n)\cdot n^{\frac{1}{2\alpha +1/2}}$, the set of vertices $V \cap \B_0(r)$ is coloured with probability $1 -\bigO(1/f(n))$.
\end{lemma}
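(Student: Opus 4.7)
The plan is to mirror the proof of~\Cref{lem:round-1-rctdeg-core} with the core $V \cap \B_0(R/2)$ replaced by the generalised ball $V \cap \B_0(r)$. Set $c' = 7$ and define $U := V \cap \B_0(r + c')$. Since $R - r = R/(2\alpha + 1/2) = \Theta(\log n) = \omega(\log\log(n)/(1-\alpha))$, the hypothesis of~\Cref{lem:larger-neighbour-radius} is satisfied for every vertex of $V \cap \B_0(r)$, so \wehp each such vertex has all its neighbours of at least its own degree contained in $U$. It therefore suffices to show that, with the claimed probability, no two vertices of $U$ choose the same candidate colour in the first round: whenever this event occurs, every $v \in V \cap \B_0(r)$ sees only distinct candidate colours among its larger-degree neighbours, and hence successfully assigns its candidate colour.

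To estimate $|U|$, I would apply~\Cref{lem:measure-inner-disk} to obtain $n \cdot \mu(\B_0(r + c')) = \Theta(n \cdot e^{-\alpha(R - r)})$. Plugging in $R - r = R/(2\alpha + 1/2)$ and $R = 2\log(n) + C$ yields $\Theta(n^{1 - 2\alpha/(2\alpha + 1/2)}) = \Theta(n^{1/(4\alpha + 1)})$. Since this expectation is polynomial in $n$, a Chernoff bound (absorbed into the typicality of $G$, as in~\Cref{lem:round-1-rctdeg-core}) gives $|U| = \Theta(n^{1/(4\alpha + 1)})$. Squaring, $|U|^2 = \Theta(n^{2/(4\alpha + 1)}) = \Theta(n^{1/(2\alpha + 1/2)})$, which by hypothesis is at most a $1/f(n)$-fraction of $|\palette|$.

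A birthday-paradox calculation identical to that in~\Cref{lem:round-1-rctdeg-core} then gives that the probability that all vertices of $U$ pick pairwise distinct candidate colours in round one is at least
\[
\prod_{j = 1}^{|U| - 1} \left(1 - j \cdot |\palette|^{-1}\right) \geq \exp\left(-\Theta(|U|^2/|\palette|)\right) \geq 1 - \bigO(1/f(n)).
\]
Combined with the \wehp guarantee from~\Cref{lem:larger-neighbour-radius}, this yields the claim. The only substantive step beyond the proof of~\Cref{lem:round-1-rctdeg-core} is the algebraic bookkeeping: the choice $r = R(1 - (2\alpha + 1/2)^{-1})$ is engineered precisely so that $|U|^2$ is of order $n^{1/(2\alpha + 1/2)}$, matching the threshold on $|\palette|$. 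No new probabilistic ingredients are required.
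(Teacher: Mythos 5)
Your proposal is correct and follows essentially the same route as the paper's proof: the same enlarged set $U = V \cap \B_0(r+c')$ with $c'=7$, the same appeal to \Cref{lem:larger-neighbour-radius} to reduce the claim to a rainbow colouring of $U$, the same estimate $|U| = \Theta(n^{1/(4\alpha+1)})$ via \Cref{lem:measure-inner-disk} and a Chernoff bound, and the same birthday-paradox computation. Your explicit verification that $R - r = \Theta(\log n)$ satisfies the radius hypothesis of \Cref{lem:larger-neighbour-radius} is a detail the paper leaves implicit, but otherwise the arguments coincide.
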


\begin{proof}
    
Let $c' = 7$ and let $U = V \cap \B_0(r + c')$, i.e., the set of vertices we aim to show a rainbow colouring for and some additional vertices with slightly larger radii. Since for a typical HRG, all vertices $V\setminus U$ have a smaller degree than the vertices located in $\B_0(r)$ by \Cref{lem:larger-neighbour-radius}, each vertex $u \in V \cap \B_0(r)$ assigns their candidate colour in the first round, if all candidate colours for the vertices of $U$ are unique. To show this, notice that $\E{|U|} \leq (1+o(1))ne^{-\alpha(R-r-c')}$ by \Cref{lem:measure-inner-disk}. Then by our choice $r = R(1 - (2\alpha +1/2)^{-1})$ and $c'=7$, we obtain $\E{|U|}\leq (1+o(1))\sqrt{n^{\frac{1}{2\alpha + 1/2}}}$. A chernoff bound then yields $|U| \in \bigO\left(\sqrt{n^{\frac{1}{2\alpha + 1/2}}}\right)$ \wehp and thus, this holds for a typical HRG.

Now, let $A$ be the event that there exists no pair $u,v \in U$ that picked the same candidate colour in round 1 of the \RCTDEG. We obtain using $|\palette| \geq f(n)\cdot n^{\frac{1}{2\alpha +1/2}}$ for event $A$ that 
    \begin{align*}
        \Pro{A} &= \prod_{j=1}^{|U|-1}(1- j\cdot |\palette|^{-1})\geq \exp{\left(-\Theta(1) |\palette|^{-1} \cdot|U|^2\right)} \in 1- \bigO(1/f(n)),
    \end{align*}
where we also used that $|U| \in \bigO\left(\sqrt{n^{\frac{1}{2\alpha + 1/2}}}\right)$. We conclude that all vertices of $V \cap \B_0(r)$ assigned their candidate colour after the first round with probability $1- \bigO(1/f(n))$ since under event $A$, all vertices $U$ that might have a larger degree than any vertex $v \in V \cap \B_0(r)$ have a unique candidate colour.
\end{proof}

By the previous lemma, \RCTDEG colours all vertices up to a specific radius $r$ within one round \aas To colour the vertices with larger radius we show that, with the same colour space as used in~\Cref{{lem:round-1-rctdeg-rainbow}}, all vertices with radius larger than $r$ are coloured within constant rounds.

\begin{lemma}[\RCTDEG constant rounds for $\alpha\leq 3/4$]\label{lem:second-round-rctdeg-constant}
 Let $G\sim\hrg$ by a typical hyperbolic random graph, let $r=R(1 - (2\alpha +1/2)^{-1})$ and let $U = V \cap (\disk \setminus \B_0(r))$. Then after $t \in \bigO(1)$ rounds of \RCTDEG  with colour space $|\palette| \geq  \log(n)\cdot n^{\frac{1}{2\alpha + 1/2}}$, the set of vertices $U$ is coloured with probability $1 - \bigO(1/\log(n))$.
\end{lemma}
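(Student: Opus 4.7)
The plan is to reduce to \Cref{lem:round-1-rctdeg-rainbow} for round~$1$ and then run a short iterative argument on the outer region $U$. Applying that lemma with $f(n) = \log n$ already colours $V \cap \B_0(r)$ after round~$1$ with probability $1-\bigO(1/\log n)$, so only vertices of $U$ can remain uncoloured; the task is to show that a constant number of additional rounds finishes~$U$.

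For the subsequent rounds I would track $Z_\ell^{(t)} := |\{v \in V_\ell : \colourchosen{t}{v}=\emptyset\}|$ per layer $\mathcal{A}_\ell$ with $\ell \leq R - r$. The key observation is that in \RCTDEG only an uncoloured \emph{higher-degree} neighbour can block $u$. A union bound over at most $\deghigh{u}$ candidates, each choosing uniformly from a palette of size at least $|\palette|-\deg(u)$, gives the conditional per-round failure bound
\[
 \Pro{\colourchosen{t}{u}=\emptyset \mid \text{history before round }t} \;\leq\; \frac{\deghigh{u}}{|\palette|-\deg(u)} \;=:\; p_\ell.
\]
Since every $u \in U$ has $\deg(u) \in \bigO(n^{1/(2\alpha+1/2)})$ by \Cref{lem:layer-properties} and $|\palette| \geq \log n \cdot n^{1/(2\alpha+1/2)}$, we have $|\palette|-\deg(u) = \Omega(|\palette|)$, so the bound above is uniform in $u$ (given $\ell$). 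Using the tower property, iterating this conditional bound from round $2$ to round $t$ yields $\Exp{Z_\ell^{(t)}} \leq \Exp{Z_\ell^{(1)}} \cdot p_\ell^{t-1}$, where $\Exp{Z_\ell^{(1)}}$ is already controlled by \Cref{lem:uncloured-layer-RCTDEG}.

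The decisive calculation is to check that $p_\ell \leq n^{-\delta(\alpha)}/\log n$ for a constant $\delta(\alpha)>0$ whenever $\alpha \in (1/2, 3/4]$. Using \Cref{lem:neighbours-larger-degree}, $\deghigh{u} \in \bigO(e^{\ell(1-\alpha)})$ for $\ell \in [2\log\log n/(1-\alpha), R/2]$ and $\deghigh{u} \in \bigO(ne^{-\alpha\ell})$ for $\ell \in [R/2, R-r]$; both expressions are maximised at $\ell = R/2$, where they coincide at $\bigO(n^{1-\alpha})$. Hence $p_\ell \leq \bigO(n^{1-\alpha})/|\palette| \leq n^{-(4\alpha^2-3\alpha+1)/(4\alpha+1)}/\log n$ uniformly over the relevant $\ell$. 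The quadratic $4\alpha^2-3\alpha+1$ has negative discriminant and is thus strictly positive, yielding the required $\delta(\alpha)>0$. Combined with the polynomial bound $\Exp{Z_\ell^{(1)}} \in n^{\bigO(1)}$ (substituting $|\palette|$ into \Cref{lem:uncloured-layer-RCTDEG}), choosing a sufficiently large constant $t = t(\alpha) \in \bigO(1)$ drives $\Exp{Z_\ell^{(t)}} \in o(1/\log^2 n)$ uniformly in~$\ell$. A Markov step and a union bound over the $\bigO(\log n)$ relevant layers then yield the desired $1-\bigO(1/\log n)$ guarantee.

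The main obstacle I anticipate is the bookkeeping needed to iterate the per-round bound conditionally. One must simultaneously condition on the high-probability structural events behind \Cref{lem:layer-properties} and \Cref{lem:neighbours-larger-degree}, on the base-case control of $\Exp{Z_\ell^{(1)}}$ from \Cref{lem:uncloured-layer-RCTDEG}, and on the event that no $u \in U$ ever drops below $\Omega(|\palette|)$ available colours, then intersect everything with the round-1 success event of \Cref{lem:round-1-rctdeg-rainbow}. This mirrors the conditioning pattern already used in \Cref{lem:second-round-rctdeg}, but now has to be carried across a constant number of additional rounds and over the $\bigO(\log n)$ layers of~$U$, which is routine but fiddly.
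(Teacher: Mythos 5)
Your proposal is correct and follows essentially the same route as the paper: bound the per-round blocking probability by $\deghigh{u}/\Omega(|\palette|)$ using \Cref{lem:neighbours-larger-degree} and the fact that every $u$ outside $\B_0(r)$ has degree $o(|\palette|)$, raise it to the power of the (constant) number of rounds, and finish with a first-moment/union bound over the $\bigO(\log n)$ layers; your detour through \Cref{lem:round-1-rctdeg-rainbow} and \Cref{lem:uncloured-layer-RCTDEG} is harmless but unnecessary, since the trivial bound $Z_\ell^{(1)}\le |V_\ell|$ suffices. One small imprecision: the denominator $|\palette|-\deg(u)$ should be justified as a lower bound on $u$'s \emph{own} palette (reveal the higher-degree neighbours' candidates first, then $u$ collides with any of them with probability at most $1/\colourSpace{t}{u}$), not as the palette size of the blockers, whose degrees --- and hence whose palettes --- are not controlled by $\deg(u)$.
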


\begin{proof}
Since we only consider vertices with radius larger than $r$, we consider the layers with index $\ell \leq \lceil R -r\rceil$ and count the number of uncoloured vertices per layer. We then wrap up the proof by summing up all uncoloured vertices per layer we considered.

 Observe that the expected degree of a vertex $u \in V \cap (\B_0(R) \setminus \B_0(r))$ is $\E{\deg(u)} \leq \Theta(1)n e^{-r/2}$ by \Cref{lem:vertex-degree}. By our choice of $r$, a Chernoff and union bound it follows for all vertices $V \cap (\B_0(R) \setminus \B_0(r_0))$ that their degree is at most $\bigO(n^{\frac{1}{2\alpha + 1/2}}) \in o(|\palette|)$ \wehp which then also holds for a typical HRG. This implies that at any round $t$, any vertex $u \in V \cap (\B_0(R) \setminus \B_0(r))$ has a colour space of size $\colourSpace{t}{u} \in \Omega(|\palette|) \in \Omega(\log(n) \cdot n^{\frac{1}{2\alpha + 1/2}})$. Let $p_u := \Pro{\colourchosen{t}{u} = \emptyset}$ be the probability that $u$ is uncoloured after round $t$ of the \RCTDEG. We distinguish for two different regimes of layers:

    \smallskip
    
    \textbf{Case 1}[$\ell \in \omega(\log(\log(n)))$]: For $u \in V_\ell$ with $\ell \in \omega(\log(\log(n))$ we get by \Cref{lem:neighbours-larger-degree} 
    
    $$p_u \leq \left(\frac{\deghigh{u}}{\colourSpace{t}{u}}\right)^t \in \bigO\left(\frac{e^{\ell(1-\alpha)}}{\log(n) \cdot n^{\frac{1}{2\alpha + 1/2}}}\right)^t.$$

Next, let $B_\ell$ be the event that there exists a vertex $u \in V_\ell$ is uncoloured after round $t$. A union bound yields

    $$
    \Pro{B_\ell} \leq \sum_{v \in V_\ell} p_v \in \bigO\left(ne^{-\alpha\ell}\left(\frac{e^{\ell(1-\alpha)}}{\log(n) \cdot n^{\frac{1}{2\alpha + 1/2}}}\right)^t\right),
    $$

    where we used that $|V_\ell| \in \Theta(ne^{-\alpha\ell})$ for a typical HRG by \Cref{lem:layer-properties}. Notice that for vertices with radius larger than $r$, this is upper bouded for any $\ell \leq \lceil R- r\rceil$ by

    $$
    \Pro{B_\ell} \in \bigO\left(n\left(\frac{e^{(\lceil R- r\rceil)(1-\alpha)}}{\log(n) \cdot n^{\frac{1}{2\alpha + 1/2}}}\right)^t\right) \in o\left(n^{1-t\frac{2\alpha -1}{2\alpha + 1/2}}\right),
    $$
since $\alpha <1$ and we used that $\ell \leq \lceil R-r\rceil \leq \frac{2\log(n)}{2\alpha + 1/2}$. Since $\alpha > 1/2$ we have $\Pro{B_\ell} \in o(1/n)$ for $t \geq \frac{4\alpha + 1}{2\alpha -1}$. A union bound over all layers gives that for $\ell \in \omega(\log(\log(n)))$ no layer $\mathcal{A}_\ell$ contains an uncoloured vertex after constant rounds \aas

\smallskip

\textbf{Case 2}[$\ell \in \bigO(\log(\log(n)))$]: Observe that $\deghigh{u} \in \Tilde{\bigO}(e^{\ell/2})$ by \Cref{lem:neighbours-larger-degree} and we get for any $u \in V \cap (\B_0(R)\setminus \B_0(R-\ell))$ that the probability that $u$ is uncoloured after round $t$ is at most $p_u \leq  \left({\deghigh{u}}/\colourSpace{t}{u}\right)^t \in \Tilde{\bigO}\left(n^{-{t}/{(2\alpha +1/2)}}\right)$ since $\colourSpace{t}{u} \in \Omega\left(n^{\frac{1}{2\alpha + 1/2}}\right)$. Thus, setting the number of rounds $t \geq 4\alpha + 1$, the probability that there exists an uncoloured vertex $u \in V \cap (\B_0(R)\setminus \B_0(R-\ell))$ after $t$ rounds is $\Pro{B} \in o(1/n)$ via a union bound. This finishes the case and also the proof.
\end{proof}

The combination of \Cref{lem:round-1-rctdeg-rainbow} and \Cref{lem:second-round-rctdeg-constant} provides us with an improved colour space to that of \Cref{pro:rctdeg-upperbound}. The following lemma gives us a further improvement if $\alpha > 3/4$.

\begin{lemma}[\RCTDEG constant rounds for $\alpha >3/4$]\label{lem:rctdeg-large-alpha-upper}
Let $G\sim\hrg$ by a typical hyperbolic random graph. Then \RCTDEG with colour space $|\palette| \geq \log(n)\cdot n^{\frac{1}{4\alpha - 1}}$ colours $G$ in $\bigO(1)$ rounds with probability $1-\bigO(1/\log(n))$.
\end{lemma}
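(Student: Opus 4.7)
The plan is to follow the two-phase strategy of \Cref{lem:second-round-rctdeg-constant}: use a birthday-paradox argument to produce a rainbow colouring on an inner ball in round~1, and then iterate \RCTDEG for $\bigO(1)$ more rounds on its complement, driving the expected uncoloured count to $o(1)$ via the $\deghigh{u}$ bounds of \Cref{lem:neighbours-larger-degree}. The principal new feature for $\alpha > 3/4$ is that the smaller palette of $\log(n)\cdot n^{\zeta_2}$ forces the rainbow ball to be chosen smaller than the radius at which the typical degree drops below $|\palette|$, so a thin high-degree annulus must be handled separately.

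Concretely, I would set $r' := R - \frac{8\alpha - 3}{2\alpha(4\alpha - 1)}\log(n) - 2\log\log(n)$ so that $|V \cap \B_0(r'+7)|^2 \in \bigO(|\palette|/\log n)$ \wehp\ by \Cref{lem:measure-inner-disk} and a Chernoff bound. The birthday-paradox calculation from the proof of \Cref{lem:round-1-rctdeg-rainbow}, together with the larger-degree-radius bound of \Cref{lem:larger-neighbour-radius} (absorbed in the $+7$ slack), then yields that every vertex of $V \cap \B_0(r')$ is coloured at the end of round~1 with probability $1 - \bigO(1/\log n)$. For the remaining vertices $u \in V \setminus \B_0(r')$, \Cref{lem:neighbours-larger-degree} gives $\deghigh{u} \in \bigO(e^{\ell(u)(1-\alpha)} + \log n)$ with $\ell(u) < R - r'$. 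As long as $\deg(u) \leq |\palette|/2$, the palette $\colourSpace{t}{u}$ stays $\Omega(|\palette|)$ at every round, so the probability that $u$ is still uncoloured after $t$ rounds is $\bigO((\deghigh{u}/|\palette|)^t)$; summing layer-by-layer as in \Cref{lem:second-round-rctdeg-constant} and using $\alpha > 1/2$ shows that some constant $t = t(\alpha)$ drives the expected uncoloured count among these low-degree outer vertices down to $o(1)$.

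The main obstacle is the thin annulus $\B_0(r^{**}) \setminus \B_0(r')$, where $r^{**} := 2(1-\zeta_2)\log(n) - 2\log\log(n)$ is the radius below which typical degrees exceed $|\palette|$; a short calculation with $\zeta_2 = 1/(4\alpha-1)$ shows $r' < r^{**}$ precisely for $\alpha > 3/4$, so this annulus is genuinely non-empty and its vertices have degree exceeding $|\palette|$, breaking the simple palette-preservation argument. To handle it I would use that its vertex count is at most $\bigO(n^{(2\alpha-1)/(4\alpha-1)})$ by \Cref{lem:measure-inner-disk} while \Cref{lem:neighbours-larger-degree} still gives $\deghigh{u} \leq \bigO(n^{1-\alpha})$ for every $u$ in the annulus, and $1 - \alpha < \zeta_2$ throughout $\alpha \in (3/4,1)$. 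The palette of such $u$ loses at most $|V \cap \B_0(r')| \in o(|\palette|)$ colours from the rainbow neighbours, and the remaining loss from coloured outer neighbours can be absorbed via the same layer-wise Markov argument, so $\colourSpace{t}{u} \in \Omega(|\palette|)$ \wehp\ throughout. Iterating $\bigO(1)$ rounds then colours every annulus vertex w.h.p., and composing with the earlier steps plus a union bound over the $\bigO(\log n)$ layers yields the desired $1 - \bigO(1/\log n)$ guarantee.
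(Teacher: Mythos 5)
Your phase~1 (birthday-paradox rainbow colouring on $\B_0(r'+7)$) and your treatment of the low-degree outer region (radius $>r^{**}$) are both sound and match the paper's toolkit. The genuine gap is in your handling of the annulus $\B_0(r^{**})\setminus\B_0(r')$. You assert that for an annulus vertex $u$ the palette stays $\colourSpace{t}{u}\in\Omega(|\palette|)$ at all rounds $t\geq 2$, absorbing the colours used by coloured outer neighbours "via the same layer-wise Markov argument". This is precisely the step that fails: as you yourself compute, such a $u$ has $\deg(u)$ polynomially larger than $|\palette|$, and the overwhelming majority of its neighbours are low-degree outer vertices that get permanently coloured in round~1 with uniformly random colours. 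Throwing $\deg(u)\gg|\palette|$ near-uniform colours at $|\palette|$ bins hits essentially every bin, so $\colourSpace{2}{u}$ collapses to $o(|\palette|)$ and can even become empty — this is exactly the mechanism the paper exploits for its \emph{lower} bounds (\Cref{pro:lower-bound-rctdeg}, where the leaves of a high-degree vertex consume its entire palette in round~1). No Markov/layer-wise bookkeeping rescues $\Omega(|\palette|)$ palette for these vertices, so your iterated multi-round argument for the annulus does not go through.

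The paper avoids the annulus entirely by not using a rainbow colouring for the inner region at all. It sets $r_0=2(1-\frac{1}{4\alpha-1})\log n$ (your $r^{**}$) and shows that \emph{every} vertex of $V\cap\B_0(r_0)$ — including your problematic annulus — is coloured already in round~1, when all palettes are still full. The point is that \RCTDEG only requires $u$ to avoid a candidate-colour collision with its \emph{larger-degree} neighbours, and \Cref{lem:neighbours-larger-degree} gives $\deghigh{u}\in\bigO(e^{\ell(1-\alpha)})\leq\bigO(n^{1-\alpha})=o(|\palette|)$ for all such $u$ (since $(1-\alpha)(4\alpha-1)<1$ for all $\alpha$). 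A per-vertex failure probability of $\deghigh{u}/|\palette|$, summed layer by layer over $\ell\geq\lfloor R-r_0\rfloor$, telescopes to $\bigO(1/\log n)$ because the geometric sum $\sum_\ell ne^{\ell(1-2\alpha)}/|\palette|$ is dominated by $\ell=\frac{2\log n}{4\alpha-1}$, where the exponent is exactly $0$. Replacing your annulus phase (and in fact your rainbow phase, which becomes redundant) with this single round-1 union bound over larger-degree conflicts closes the gap.
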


\begin{proof}
    We split the proof into two parts: first, we show that vertices up to a radius $r \leq 2\left(1 - \frac{1}{4\alpha - 1}\right)\log(n) =: r_0$ are coloured after the first round \aas Then, in a second step, we show that all vertices with radius larger than $r_0$ are coloured after constant rounds.
 
    For the set of vertices with radius smaller than $r_0$, fix vertex $v \in V \cap \B_0(r_0)$. Since in \RCTDEG a vertex $u$ assigns its candidate colour $\colour$ if no neighbour $w \in N^+(v) = \{w \in N(v) | \deg(w) \geq \deg(v)\}$ has the same candidate colour $\colour$, it suffices to show that no vertex of the set $V \cap \B_0(r_0)$ tries the same colour as any of its larger degree neighbours.
    
    Now, consider the fixed vertex $v \in V_\ell$ with $\ell \geq \lfloor R - r_0 \rfloor$ since $v \in V \cap \B_0(r_0)$. Then, for the set of larger degree vertices $N^+(v)$ the size is upper bounded by $|N^+(v)| = \deghigh{u} \leq \Theta(1)e^{\ell(1-\alpha)}$ using \Cref{lem:neighbours-larger-degree} for a typical HRG. Next, reveal first the candidate colour $\colour$ in the first round of the fixed vertex $v$. Thus, for $p_v := \Pro{\exists u \in N^+(v) \text{ with candidate colour } \colour.}$ we obtain 
    $$p_v = (1 - |\palette|^{-1})^{\deghigh{v}} \leq \Theta(1)e^{\ell(1-\alpha)}\cdot \log^{-1}(n) \cdot n^{-\frac{1}{4\alpha - 1}},$$

    by using $|\palette| \geq \log(n)\cdot n^{\frac{1}{4\alpha - 1}}$. Next, let $B_\ell$ be the event that there exists a vertex $u \in V_\ell$ that has the same candidate colour as any of its larger degree neighbours. A union bound gives us

    $$
    \Pro{B_\ell} \leq \sum_{v \in V_\ell} p_v \leq \Theta(1)\log^{-1}(n) \cdot n^{1 - \frac{1}{4\alpha - 1}} \cdot e^{-\ell(2\alpha -1)},
    $$

    where we used that $|V_\ell| \in \Theta(ne^{-\alpha\ell})$ for a typical HRG by \Cref{lem:layer-properties}. Next, let $B$ be the event that there exists a vertex $u \in V \cap \B_0(r_0)$, that has the same cadidate colour as any of its larger degree neighbourhood $N^+(u)$. Another union bound then gives us

    $$
    \Pro{B} \leq \sum_{\ell = \lfloor R-r_0\rfloor}^{\finallayer} \Pro{B_\ell} \leq \sum_{\ell = \lfloor R-r_0\rfloor}^{\finallayer}\Theta(1)\log^{-1}(n) \cdot n^{1 - \frac{1}{4\alpha - 1}} \cdot e^{-\ell(2\alpha -1)}.
    $$

    Since $\alpha > 1/2$, this is upper bounded by the integral

    $$
    \Pro{B} \leq \int_{\ell = \lfloor R-r_0\rfloor}^{\finallayer}\Theta(1)\log^{-1}(n) \cdot n^{1 - \frac{1}{4\alpha - 1}} \cdot e^{-\ell(2\alpha -1)} \dd \ell \in \bigO(1/\log(n)),
    $$
    where we also used that $\ell \geq \lfloor R-r_0\rfloor = \frac{2\log(n)}{4\alpha - 1}$. This shows that all vertices up to radius $r_0$ are coloured by \RCTDEG after the first round with probability $1-\bigO(1/\log(n))$ and finishes the first part of the proof.

    Next, we consider the set of vertices with radius larger than $r_0$. Observe that the expected degree of a vertex $u \in V \cap (\B_0(R) \setminus \B_0(r_0))$ is $\deg(u) \leq \Theta(1)n e^{-r_0/2}$ by \Cref{lem:vertex-degree}. By our choice of $r_0$, a Chernoff and union bound it follows for all vertices $V \cap (\B_0(R) \setminus \B_0(r_0))$ that their degree is at most $\bigO(n^{\frac{1}{4\alpha - 1}}) \in o(|\palette|)$ \wehp which then also holds for a typical HRG. This implies that at any round $t$, any vertex $u \in V \cap (\B_0(R) \setminus \B_0(r_0))$ has a colour space of size $\colourSpace{t}{u} \in \Omega(|\palette|) \in \Omega(\log(n) \cdot n^{\frac{1}{4\alpha - 1}})$. Let $p_u := \Pro{\colourchosen{t}{u} = \emptyset}$ be the probability that $u$ is uncoloured after round $t$ of the \RCTDEG. We distinguish for two different regimes of layers:
    
    \smallskip
    
    \textbf{Case 1}[$\ell \in \omega(\log(\log(n)))$]: For $u \in V_\ell$ with $\ell \in \omega(\log(\log(n)))$ we get by \Cref{lem:neighbours-larger-degree} 
    
    $$p_u \leq \left(\frac{\deghigh{u}}{\colourSpace{t}{u}}\right)^t \in \bigO\left(\frac{e^{\ell(1-\alpha)}}{\log(n) \cdot n^{\frac{1}{4\alpha - 1}}}\right)^t.$$

Next, let $B_\ell$ be the event that there exists a vertex $u \in V_\ell$ that is uncoloured after round $t$. A union bound yields

    $$
    \Pro{B_\ell} \leq \sum_{v \in V_\ell} p_v \in \bigO\left(ne^{-\alpha\ell}\left(\frac{e^{\ell(1-\alpha)}}{\log(n) \cdot n^{\frac{1}{4\alpha - 1}}}\right)^t\right),
    $$

    where we used that $|V_\ell| \in \Theta(ne^{-\alpha\ell})$ for a typical HRG by \Cref{lem:layer-properties}. Notice that for vertices with radius larger that $r_0$ this is upper bounded for any $\ell \leq \lceil R- r_0\rceil$ by

    $$
    \Pro{B_\ell} \in \bigO\left(n\left(\frac{e^{(\lceil R- r_0\rceil)(1-\alpha)}}{\log(n) \cdot n^{\frac{1}{4\alpha - 1}}}\right)^t\right) \in o\left(n^{1-t\frac{2\alpha -1}{4\alpha - 1}}\right),
    $$
since $\alpha <1$ and we used that $\ell \leq \lceil R-r_0\rceil \leq \frac{2\log(n)}{4\alpha - 1}$. Since $\alpha > 1/2$ we have $\Pro{B_\ell} \in o(1/n)$ for $t \geq \frac{8\alpha}{2\alpha -1}$. A union bound over all layers gives that for $\ell \in \omega(\log(\log(n)))$ no layer $\mathcal{A}_\ell$ contains an uncoloured vertex after constant rounds with probability $1-\bigO(1/\log(n))$

\smallskip

\textbf{Case 2}[$\ell \in \bigO(\log(\log(n)))$]: The only difference to the previous case is that $\deghigh{u} \in \Tilde{\bigO}(e^{\ell/2})$ by \Cref{lem:neighbours-larger-degree} and we get for any $u \in V \cap (\B_0(R)\setminus \B_0(R-\ell))$ that the probability that $u$ is uncoloured after round $t$ is at most $p_u \leq  \left({\deghigh{u}}/\colourSpace{t}{u}\right)^t \in \Tilde{\bigO}\left(n^{-{t}/{(4\alpha -1)}}\right)$. Thus, setting the number of rounds $t \geq 8\alpha$, the probability that there exists an uncoloured vertex $u \in V \cap (\B_0(R)\setminus \B_0(R-\ell))$ after $t$ rounds is $\Pro{B} \in o(1/n)$ via a union bound. This finishes the case and also the proof.
\end{proof}

We now put all lemmas of the section together to obtain the second part of \Cref{the:rctdeg}.

\begin{proposition}[\RCTDEG constant rounds \aas]\label{pro:rctdeg-constant-rounds}
     Let $G\sim\hrg$ by a typical hyperbolic random graph and let $\minexp' = \min((2\alpha + 1/2)^{-1},(4\alpha -1)^{-1})$. Then, \RCTDEG with colour space $|\palette| \geq \log(n)\cdot n^{\minexp'}$ terminates after $\bigO(1)$ rounds with probability $1 - \bigO(1/\log(n))$.
\end{proposition}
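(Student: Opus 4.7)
The plan is a straightforward case distinction on $\alpha$, invoking the two preceding lemmas as black boxes. The preliminary observation needed is that the expression $\minexp'=\min(\zeta_1,\zeta_2)$ simplifies nicely: comparing $\zeta_1=(2\alpha+1/2)^{-1}$ with $\zeta_2=(4\alpha-1)^{-1}$ one has $\zeta_1\leq \zeta_2$ iff $4\alpha-1\leq 2\alpha+1/2$ iff $\alpha\leq 3/4$. Hence $\minexp'=\zeta_1$ for $\alpha\leq 3/4$ and $\minexp'=\zeta_2$ for $\alpha>3/4$, so the hypothesis $|\palette|\geq \log(n)\cdot n^{\minexp'}$ translates exactly into the colour-space requirement of the matching lemma.

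In the first case ($\alpha\leq 3/4$) the colour space satisfies $|\palette|\geq \log(n)\cdot n^{\zeta_1}$. I would first apply \Cref{lem:round-1-rctdeg-rainbow} with $f(n)=\log(n)$: after round one every vertex of $V\cap\B_0(r)$, where $r=R(1-(2\alpha+1/2)^{-1})$, is coloured with probability $1-\bigO(1/\log n)$. I would then invoke \Cref{lem:second-round-rctdeg-constant} to colour all remaining vertices, i.e.\ those of radius larger than $r$, in $\bigO(1)$ further rounds with probability $1-\bigO(1/\log n)$. A union bound over the two failure events keeps the total failure probability at $\bigO(1/\log n)$, and the total number of rounds is a constant.

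In the second case ($\alpha>3/4$) the hypothesis gives $|\palette|\geq \log(n)\cdot n^{\zeta_2}$, which is precisely what \Cref{lem:rctdeg-large-alpha-upper} needs; that lemma immediately delivers the conclusion that \RCTDEG terminates in $\bigO(1)$ rounds with probability $1-\bigO(1/\log n)$.

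The only real work is the bookkeeping in the case split and in matching each colour-space inequality with the correct cited lemma; there is no new probabilistic argument beyond what has already been done. If there were any obstacle it would be checking that the regime boundary $\alpha=3/4$ (where $\zeta_1=\zeta_2=1/2$) is covered by either case, which it is since \Cref{lem:second-round-rctdeg-constant} is stated for $\alpha\leq 3/4$ and handles the equality.
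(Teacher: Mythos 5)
Your proposal is correct and follows essentially the same route as the paper: a case split at $\alpha=3/4$ (equivalently, on which of $\zeta_1,\zeta_2$ attains the minimum), applying \Cref{lem:round-1-rctdeg-rainbow} with $f(n)=\log(n)$ followed by \Cref{lem:second-round-rctdeg-constant} in the first regime, and \Cref{lem:rctdeg-large-alpha-upper} in the second, with a union bound over the failure events. Your explicit check of the boundary $\alpha=3/4$ and the identification of $\minexp'$ with the correct exponent in each regime is exactly the bookkeeping the paper's proof performs.
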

\begin{proof}
\textbf{Case 1}~$\left[|\palette| \geq \log(n)\cdot n^{\frac{1}{2\alpha + 1/2}}\right]$: After 1 round: Using \Cref{lem:round-1-rctdeg-rainbow} with $f(n) = \log(n)$, it follows that after the first round all vertices up to radius $r\leq R(1-(2\alpha + 1/2)^{1/2})$ are coloured with probability $1 - \bigO(1/\log(n))$.

After $\bigO(1)$ rounds: After constant rounds, all vertices with radius $r >  R(1-(2\alpha + 1/2)^{1/2})$ are coloured due to \Cref{lem:second-round-rctdeg-constant} with probability $1 - \bigO(1/\log(n))$. 

Thus, after constant rounds all vertices of a typical HRG are coloured by \RCTDEG with colour space $|\palette| \geq \log(n)\cdot n^{\frac{1}{2\alpha + 1/2}}$ with probability $1 - \bigO(1/\log(n))$. 

\smallskip

\textbf{Case 2}~$\left[|\palette| \geq \log(n)\cdot n^{\frac{1}{4\alpha -1}}\right]$: By \Cref{lem:rctdeg-large-alpha-upper} a typical HRG is coloured after $\bigO(1)$ rounds with probability  $1 - \bigO(1/\log(n))$. 

\end{proof}

\subsection{Theorem~\ref{the:rctdeg}, part 3: Limitations of Random Colour Trial with Degree Priority}\label{sec:colour-space-rctdeg}\label{sec:rctdeg-part3}

In this section we translate the results of \Cref{sec:colour-space} from \RCT to \RCTDEG. Note that, given we use the same colour space for \RCTDEG as for \RCT, the birthday paradox as stated in~\Cref{lem:birthday-paradox} holds also true for \RCTDEG as a candidate colour is choosen uniform at random.

\begin{corollary}[\RCTDEG birthday paradox]\label{cor:rctdeg-birthday-paradox}
    Let $G\sim \hrg$ be a typical hyperbolic random graph, let $\minexp = \min((2\alpha + 1/2)^{-1}, 2(1-\alpha))$ and let $\eps \in (0,1)$ be constant small enough. Then in the first round of \RCTDEG with colour space at most $|\palette| \leq \epsilon/\log(n) \cdot n^{\minexp}$, there exists a pair of neighbours $\{u,v\}\in E(G)$ such that $u$ and $v$ try the same candidate colour and $|L(u)|, |L(v)| \in \Omega(n^{\minexp})$.
\end{corollary}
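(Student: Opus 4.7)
The plan is to reduce this corollary directly to \Cref{lem:birthday-paradox}. The key observation is that the conclusion of \Cref{lem:birthday-paradox} depends only on the joint distribution of the first-round candidate colours and on purely graph-theoretic properties of the typical HRG (a clique of vertices each carrying many leaf neighbours), and not on the rule that decides which candidates survive. Since in round $1$ no vertex has yet been coloured, \RCTDEG has every vertex draw its candidate colour independently and uniformly at random from $\palette$, exactly as \RCT does; the degree-based tie-breaking of \RCTDEG intervenes only at the permanent-assignment step, which is irrelevant to the claim.

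Concretely, I would replay the construction from the proof of \Cref{lem:birthday-paradox}. Set $\startlayer := \min\!\left(\frac{2\log n}{2\alpha+1/2},\lceil R/2\rceil+1\right)$ so that $\mathcal{A}_{\startlayer}\subset \B_0(R/2)$ and $V_{\startlayer}$ forms a clique. Applying \Cref{lem:leaf} to the typical HRG $G$ yields a set $U_{\startlayer}\subseteq V_{\startlayer}$ of size $|U_{\startlayer}|\in \Omega\!\left(n^{\min\left(1-\frac{2\alpha}{2\alpha+1/2},\,1-\alpha\right)}\right)$ in which every vertex has a constant fraction of its neighbours as leaves. With $|\palette|\le \eps/\log(n)\cdot n^{\minexp}$ and i.i.d.\ uniform candidate colours, the birthday-paradox estimate from the proof of \Cref{lem:birthday-paradox} applies unchanged and produces, with probability $1-n^{-c(\eps)}$, two vertices $u,v\in U_{\startlayer}$ trying the same candidate colour; because $U_{\startlayer}$ sits inside a clique, $\{u,v\}\in E(G)$ automatically.

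It remains to argue $|L(u)|,|L(v)|\in\Omega(n^{\minexp})$, which is inherited from the same construction. By the choice of $\startlayer$ and \Cref{lem:vertex-degree} we get $\E{\deg(u)}\in\Omega(n^{\minexp})$ for every $u\in U_{\startlayer}$; Poisson concentration followed by a union bound over $U_{\startlayer}$ upgrades this to $\deg(u)\in\Omega(n^{\minexp})$ \wehp, and \Cref{lem:leaf} already supplies $|L(u)|\in\Theta(\deg(u))$. The only real ``obstacle'' is the conceptual one of confirming that \RCTDEG's priority rule is invisible to the first-round candidate distribution; once that is noted, the corollary is an immediate consequence of \Cref{lem:birthday-paradox} with no new concentration work required.
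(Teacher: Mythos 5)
Your proposal is correct and matches the paper's treatment: the paper justifies this corollary with exactly the observation you make, namely that \Cref{lem:birthday-paradox} concerns only the first-round candidate colours, which are drawn uniformly at random in \RCTDEG just as in \RCT, so the statement carries over verbatim. Your additional replay of the construction from the proof of \Cref{lem:birthday-paradox} is consistent with the paper but not needed beyond that one observation.
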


From this we get in similar fashion to~\Cref{pro:dist-coul-never} that at least one of the two vertices $u,v \in E(G)$ as given in \Cref{cor:rctdeg-birthday-paradox} will never be coloured  \wehp 

\begin{proposition}[\RCTDEG lower bound]\label{pro:lower-bound-rctdeg}
    Let $G\sim \hrg$ be a typical hyperbolic random graph, let $\minexp = \min((2\alpha + 1/2)^{-1}, 2(1-\alpha))$ and let $\eps \in (0,1)$ be constant small enough. Then \RCTDEG with colour space at most $|\palette| \leq \eps/\log(n) \cdot n^{\minexp}$ will never terminate on $G$ \wehp 
\end{proposition}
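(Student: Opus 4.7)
The proof will closely follow \Cref{pro:dist-coul-never} (the \RCT colour-lock argument), adapted to the degree-based tie-breaking of \RCTDEG. The plan is to exhibit, \wehp, a vertex whose palette becomes empty after the first round, so that the algorithm cannot possibly produce a valid colouring in any subsequent round.

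First, I would invoke \Cref{cor:rctdeg-birthday-paradox} to obtain a pair of neighbours $\{u,v\}\in E(G)$ such that, in the first round of \RCTDEG, both $u$ and $v$ try the same candidate colour $\colour$, and $|L(u)|,|L(v)|\in\Omega(n^{\minexp})$. Because \RCTDEG resolves conflicts in favour of the higher-degree endpoint (breaking ties by ID), exactly one of $u,v$ wins this direct conflict; relabel so that $v$ wins, i.e., $\colourchosen{1}{v}=\colour$, while $u$ remains uncoloured after round~$1$. The key useful property of the degree rule is that each leaf $w\in L(u)$ has degree $1<\deg(u)\in\Omega(n^{\minexp})$ and $u$ is its unique neighbour; consequently $w$ adopts its candidate colour in round~$1$ if and only if that candidate differs from $u$'s candidate~$\colour$.

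Next, I would show that, \wehp, every colour in $\palette$ is tried by at least one leaf of~$u$, reusing the Chernoff-plus-union-bound calculation from the proof of \Cref{pro:dist-coul-never}. Fix $\colour'\in\palette$ and let $Z_{\colour'}$ count the leaves of $u$ that try $\colour'$ in round~$1$. Since $|L(u)|\in\Omega(n^{\minexp})$ and $|\palette|\le\eps\log^{-1}(n)\cdot n^{\minexp}$, linearity of expectation yields $\Exp{Z_{\colour'}}\in\Omega(\eps^{-1}\log(n))$. A Chernoff bound gives $Z_{\colour'}\ge 1$ except with probability $n^{-\Theta(1)/\eps}$, and a union bound over the at most $n^{\minexp}$ colours covers the whole palette, provided $\eps$ is chosen small enough.

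Conditioning on this event, every leaf of $u$ whose candidate is some $\colour'\neq\colour$ actually gets coloured in round~$1$ (as argued above), so every colour in $\palette\setminus\{\colour\}$ is adopted by some leaf-neighbour of~$u$; the remaining colour $\colour$ is adopted by~$v$. Hence the updated palette of $u$ satisfies $\colourpal{2}{u}=\palette\setminus\bigcup_{w\in N(u)}\colourchosen{1}{w}=\emptyset$, and $u$ has no legal candidate available in round~$2$ or any later round, so \RCTDEG never terminates \wehp. The main work, already performed in \Cref{cor:rctdeg-birthday-paradox} (which itself relies on \Cref{lem:leaf}), is producing the pair of high-degree neighbours with many leaves on which the whole argument hinges; the degree-priority twist itself is painless because leaves always lose ties to their unique neighbour~$u$, making the colour-lock on $u$ even cleaner than in the symmetric \RCT case.
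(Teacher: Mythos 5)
Your proposal is correct and follows essentially the same route as the paper's proof: invoke \Cref{cor:rctdeg-birthday-paradox} to get the colliding pair $\{u,v\}$ with many leaves, use a Chernoff bound plus a union bound over the palette to show every colour is tried by a leaf of the losing endpoint, and conclude that the lower-degree vertex is left with no usable colour. The only (cosmetic) difference is that the paper runs the leaf-consumption argument for both endpoints and then focuses on the loser, whereas you apply it directly to the loser after relabelling; the substance is identical.
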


\begin{proof}

Consider the pair of neighbours $u,v \in E(G)$ given by \Cref{cor:rctdeg-birthday-paradox} \wehp, that share the same candidate colour and which have both $|L(u)|, |L(v)| \in \Omega(n^{\minexp})$ many leaves and fix $u$. Then, fix a colour $\colour \in \palette$ and let $Z_{\colour}$ be the random variable that counts the number of leaves of $u$, that try candidate colour $\colour$ in the first round. Moreover, let $Z_w$ be the indicator random variable that is $1$ if $w \in L(u)$ tries the fixed candidate colour in the first round. Since in the first round a vertex tries a candidate colour u.a.r. from $\palette$, we get  by linearity of expectation and $u$ having $|L(u)| \in \Theta(n^{\minexp})$ leaves that 
       \begin{align}\label{eq:leaves-consume-space2}
           \Exp{Z_{\colour}} = \Exp{\sum_{w \in L(u)} Z_w}\geq \Theta(1)|L(u)|\cdot|\palette|^{-1} \geq \Theta(1)\eps^{-1}\log(n),
       \end{align}
      
       since $|\palette| \leq \eps/\log(n) \cdot n^{\minexp}$. A   Chernoff bound for $Z_{\colour}$ gives that there exists at least one leaf $w \in L(u)$ that tries $\colour$ with probability $\Pro{Z_\psi \geq 1} \geq 1 - n^{-\Theta(1)/\eps}$. Thus, given that $\eps > 0$ is small enough, a union bound over all colours yields that any colour is tried at least once by a leaf \wehp Since leaves only have an edge to $u$, this implies that none but the colour pick of $u$ is discarded after the first round finishes \wehp 

       Via union bound this holds also true for $v$. Without loss of generality let $\deg(u) \geq \deg(v)$. Then, since $u$ and $v$ have the same candidate colour and \RCTDEG gives priority in the colour choice to the vertex with larger degree, $v$ does not assign its candidate colour (in case of $\deg(u) = \deg(v)$ consider without less of generality that $u$ has the smaller ID among the two). Since after the first round all colours except the candidate colour of $v$ are assigned to its leaves \wehp, after round 1 vertex $v$ has an empty colour space $\colourSpace{2}{v} = 0$ while being uncoloured after the first round. Thus, \RCTDEG never terminates \wehp and the proof is complete.
\end{proof}

\subsection{Putting everything together: Proof of Theorem~\ref{the:rctdeg}}\label{sec:all-together}
By the propositions that we collected throughout the section we now obtain our theorem for \RCTDEG on hyperbolic random graphs.

\rctdegTheorem*

\begin{proof}
    For the first part, we obtain by~\Cref{pro:rctdeg-upperbound} that \RCTDEG terminates \aas within two rounds on a typical hyperbolic random graph with $\alpha \leq 3/4$ and a colour space $|\palette| \geq \log^4 (n) \cdot n^{\maxexp} = \log^4(n)\cdot n^{2(1-\alpha)}$. Since the chromatic number for a typical HRG is $\chi \in \Theta(n^{1-\alpha})$~(\cite[Corollary 10]{bmrs-stacs-25}), the result for $\alpha \leq 3/4$ follows.

    For $\alpha > 3/4$ we have by~\Cref{pro:rctdeg-upperbound} that \RCTDEG terminates \aas within two rounds on a typical hyperbolic random graph if the colour space is at least $|\palette| \geq \log^4 (n) \cdot n^{\maxexp} = \log^4\cdot \sqrt{n}$ which finishes the first part.

    For the second part, we use~\Cref{pro:rctdeg-constant-rounds} by which \RCTDEG finishes \aas after constant rounds on a typical hyperbolic random graph with $\alpha \leq 3/4$ if the colour space is at least $|\palette| \geq \log(n) \cdot n^{\minexp'} = n^{\frac{1}{2\alpha + 1/2}}$.

    If $\alpha >3/4$ for a typical hyperbolic random graph, another application of~\Cref{pro:rctdeg-constant-rounds} yields that \RCTDEG finishes \aas after constant rounds on given that the colour space is at least $|\palette| \geq \log(n) \cdot n^{\minexp'} = n^{\frac{1}{4\alpha - 1}}$.

    For the final part of the theorem, we apply~\Cref{pro:lower-bound-rctdeg} such that for $\alpha \leq 3/4$, there exists a constant $\delta >0$ so that\RCTDEG never terminates on a typical hyperbolic random graph \wehp if the colour space is at most $|\palette| \leq \delta \log^{-1}(n) \cdot n^{\minexp} = \frac{\delta\cdot n^{\frac{1}{2\alpha + 1/2}}}{\log(n)}$.

    Finally, if $\alpha > 3/4$, \RCTDEG never terminates \wehp if the colour space is at most $|\palette| \leq \delta \log^{-1}(n)\cdot n^{\minexp} = \frac{\delta \cdot n^{2(1-\alpha)}}{\log(n)}$~by~\Cref{pro:lower-bound-rctdeg}~where $\delta > 0$ is a constant small enough. Using that the chromatic number for a typical HRG is $\chi \in \Theta(n^{1-\alpha})$~by~(\cite[Corollary 10]{bmrs-stacs-25}) concludes the proof.
\end{proof}

\section{Deterministic LOCAL Colouring (Proof of Theorem~\ref{the:deterministic-colouring})}\label{sec:deterministic}
In the previous sections we looked into randomised distributed algorithms. Here we shed some light on deterministic colouring and consider a deterministic algorithm that is tailored towards a hyperbolic random graph.

To this end, we partition our disk $\disk$ into an \emph{outer} and an \emph{inner} part. More concretely, let $\eps > 0$ be a fixed constant and let $O(\eps) := \{v \in V : r(v) \geq (2-\eps)\log(n)\}$ be the set of $\emph{outer vertices}$ and $I(\eps) := V\setminus O(\eps)$ the set of $\emph{inner vertices}$. We first show that $G[I(\eps)]$ has a constant diameter in \Cref{lem:const-diameter}. We combine this with the fact that the vertices in the outer disk have a smallish degree (shown in \Cref{lem:outer-set}), to show that in a constant number of rounds we can colour a hyperbolic random graph with $(1+o(1))\chi$ colours in \Cref{the:deterministic-colouring}.

The following shows that vertices of the inner disk have a bounded diameter.

\begin{lemma}[Constant Diameter]\label{lem:const-diameter}
    Let $G \sim \mathcal{G}(n, \alpha, C)$ be a threshold hyperbolic random graph and let $G_I = G [V \cap \mathcal{B}_0 ((2 - \varepsilon)log(n))]$. Then for
any $0 < \varepsilon < 1$ there exists $n \geq n_0$ large enough such that the diameter for $G_I$ is constant \wehp
\end{lemma}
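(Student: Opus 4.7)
The plan is to reduce the claim to two ingredients: the core $W := V \cap \B_0(R/2)$ induces a clique in $G_I$, and every other inner vertex reaches $W$ via a path of length bounded by a constant depending only on $\alpha$ and $\eps$. The clique property is immediate from the hyperbolic triangle inequality: for any $u,v \in W$ we have $\dist(u,v) \leq r(u) + r(v) \leq R$, so $\{u,v\} \in E(G)$; since $W \subseteq V(G_I)$, the induced subgraph $G_I[W]$ has diameter $1$.

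The main step is a per-hop radial-progress statement: for every $v \in V(G_I) \setminus W$ at layer $\ell$, with probability $1 - o(1/n)$ the vertex $v$ has a neighbour at layer at least $\ell + c$, where $c := \eps(1-\alpha)\log(n)/(2(\alpha-1/2))$. To prove this I would combine \Cref{lem:max-angle} with the layer-measure estimate $\mu(\mathcal{A}_{\ell'}) = \Theta(e^{-\alpha\ell'})$ from \Cref{sec:layers} to obtain, in the angular-constrained regime $\ell + \ell' \leq R$,
\[
\E{|N(v) \cap V_{\ell+c}|} \;=\; \Theta\!\left(e^{\ell(1-\alpha) - c(\alpha-1/2)}\right).
\]
Since $v$ lies in the inner disk we have $\ell \geq \eps\log(n) + \bigO(1)$, so this choice of $c$ keeps the expected count at least $n^{\eps(1-\alpha)/2}$. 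A Chernoff bound on the Poisson random variable $|N(v) \cap V_{\ell+c}|$, followed by a union bound over the $\bigO(n)$ vertices of $G_I$, yields the statement \wehp.

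Iterating the progress statement, each hop strictly decreases $v$'s radius by at least $c$, so every intermediate vertex remains in $G_I$, and after $k := \lceil (R/2 - \eps\log(n))/c \rceil$ hops the reached layer exceeds $R/2$ and we arrive in $W$. Substituting the value of $c$ shows that $k$ depends only on $\alpha$ and $\eps$, not on $n$. Combining with the diameter-$1$ clique structure of $W$, any two vertices of $G_I$ are at distance at most $2k+1$, so $\mathrm{diam}(G_I) = \bigO(1)$ \wehp.

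The hard part is the regime $\alpha \to 1$ (or $\eps \to 0$), where $c$ shrinks and $k$ grows, so the diameter constant blows up; one must track the $\alpha$- and $\eps$-dependence carefully to certify that $k$ stays finite whenever $\alpha < 1$ and $\eps > 0$. A minor technicality is that for $v$ close to $W$ the target layer $\ell + c$ may cross into the auto-connected regime $\ell + \ell' > R$, where $\theta_R = \pi$ and every vertex in the target layer is a neighbour; this only strengthens the expected-count bound, so the argument is unaffected.
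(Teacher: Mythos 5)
Your proposal is correct and follows essentially the same route as the paper's proof: decompose $G_I$ into the core clique $V\cap\B_0(R/2)$ plus a constant number of radially-inward hops, prove per-hop progress via the $\theta_R$ estimate and the layer measure, and finish with a Poisson Chernoff bound and a union bound. The only cosmetic difference is that you advance layers additively by a fixed $\Theta(\log n)$ increment while the paper advances multiplicatively by a factor $(1+\delta)$; both yield the same per-hop expected neighbour count $n^{\eps(1-\alpha)/2}$ and a hop count depending only on $\alpha$ and $\eps$.
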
 
\begin{proof}
    We prove the statement in two steps. First, we show that any vertex $u \in V_\ell$ where $\ell \geq \lfloor\eps\log(n)\rfloor$ with $\eps>0$ being an arbitrary small constant, there exist \wehp a vertex $v \in V_{\ell(1 + \delta)}$ (whereby
$\delta > 0$ is a constant) such that $u$ and $v$ are adjacent. We then proceed by using this argument iteratively to argue that $u$ reaches another vertex having distance $\ell(1 + \delta)^c$ to the boundary within $c$ hops
\wehp Finally, setting $c \in \bigO(1)$ large enough we reach a vertex with radius at most $R/2$. Since all vertices
in $\mathcal{B}_0(R/2)$ form a clique by the triangle inequality, we therefore obtain that all nodes of $G_I$ form a
connected component with diameter $\bigO(1)$. 

Indeed, let $\ell \geq \lfloor\eps log(n)\rfloor$ and fix a vertex $u \in V_\ell$. Moreover, we set with hindsight $\delta:= (1-\alpha)/(2\alpha-1)$ and let $B$ be the event that $u$ has no neighbour $v \in V_{\ell'}$ where $\ell' = \ell(1 + \delta)$. We upper bound the probability of event $B$. By~\Cref{lem:max-angle} it holds that $u$ has an edge to $v$ if the angle distance is at least $\theta_R(R-\ell, R-\ell') \geq \Theta(1)n^{-1}e^{(\ell + \ell')/2}$. Moreover, by \Cref{eq:layer-measure}, the expected number of vertices in layer $\mathcal{A}_{\ell'}$ is $\E{|V_{\ell'}|} = \Theta(1) n\cdot e^{-\alpha\ell}$. Let $V'_{\ell'} \subseteq V_{\ell'}$ be the set of vertices in layer $\mathcal{A}_{\ell'}$ that have an edge with $u \in V_\ell$. By linearity of expectation, it follows 

\begin{align*}\label{eq:increasing-prob}
    \E{|V'_{\ell'}|} \geq \theta_R(R-\ell, R-\ell') \cdot \E{|V_\ell|} \geq \Theta(1) e^{-\alpha\ell + (\ell' + \ell)/2}.
\end{align*}

Notice that event $B$ is equivalent to the event that $|V'_{\ell'}| = 0$. Since $|V'_{\ell'}|$ follows a Poisson distribution we now get

\begin{align*}
   \Prob{B} = e^{-\E{|V'_{\ell'}|} }&\leq \exp(-\Theta(1) \cdot e^{-\alpha\ell(1+\delta)} \cdot e^{(\ell+\ell(1+\delta))/2})\\
&= \exp(-\Theta(1) \cdot e^{(\ell(1+\delta))(1/2-\alpha)+\ell/2})\\
&= \exp(-n^{\Omega(1)}), 
\end{align*}

where the last line follows as $\ell \geq \eps \log(n)$ and our choice of $\delta = (1-\alpha)/(2\alpha-1)$. Applying a union bound we get that all vertices in layer $\mathcal{A}_\ell$ reach another vertex in layer $\mathcal{A}_{\ell'}$ via one hop \wehp We conclude that with $c$ hops a vertex with in layer level $\ell \leq \lfloor\eps\log(n)\rfloor$ reaches a vertex in a layer level $\ell(c) =  \lfloor(1 + \delta)^c \eps \log(n)\rfloor$ \wehp Thus, setting $c = \left\lceil\frac{\log(1/\eps)}{\log(1+\delta)}\right\rceil$ we reach a vertex in $\mathcal{B}_O(R/2)$ within  $c \in \bigO(1)$ hops \wehp The claim now follows by another application of a union bound.
\end{proof} 

Recall that the set of outer vertices is the set $O(\eps) = \{v \in V : r(v) \geq (2-\eps)\log(n)\}$. The following lemma shows that for an $\eps$ small enough, the maximum degree $\Delta(O) := \max_{v \in O(\eps)} \deg(v)$ of the outer set of vertices $O(\eps)$ is not too large.

\begin{lemma}\label{lem:outer-set}
    Let $G\sim \hrg$ be a threshold hyperbolic random graph and let $\eps > 0$ be a constant. Then for $O:=O(\eps)$ it holds $\Delta(O) \leq \Theta(1)\cdot n^{\eps/2}$ \wehp
\end{lemma}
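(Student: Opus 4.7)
The plan is to combine the vertex-degree bound of \Cref{lem:vertex-degree} with a Chernoff tail estimate for the Poisson random variable $\deg(v)$, and then take a union bound over all vertices in $O$. Since every vertex in $O$ has radius $r(v) \geq (2-\eps)\log(n)$, its neighbourhood ball $\B_v(R) \cap \disk$ is tiny, so its expected degree is only polynomial in $n^{\eps/2}$.

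More concretely, I would first apply the upper bound in \Cref{lem:vertex-degree} to a fixed $v \in O$. Using $R = \log(n) + C$ (wait, actually the paper sets $R = \log(n)+C$ in one spot and $R = 2\log(n) + C$ is used later; in any case the calculation is: $e^{-r/2} \leq e^{-(1-\eps/2)\log n} \cdot e^{O(1)} = \Theta(1)\cdot n^{\eps/2 - 1}$), this gives
\[
\Exp{\deg(v)} \;\leq\; (1+o(1))\, n \cdot \frac{\alpha\, e^{-r(v)/2}}{\alpha-1/2} \;\leq\; \Theta(1)\cdot n^{\eps/2} \;=:\; \mu.
\]
Since $\deg(v)$ is a Poisson random variable with mean at most $\mu$, a standard Chernoff bound for Poisson tails yields, for a sufficiently large constant $K$,
\[
\Pro{\deg(v) \geq K \cdot n^{\eps/2}} \;\leq\; \exp\!\bigl(-\Omega(n^{\eps/2})\bigr).
\]
Because $\eps$ is a positive constant, this failure probability is of the form $\exp(-n^{\Omega(1)})$, which is \wehp\ in the sense used by the paper.

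Finally, I would discharge the quantifier over $v\in O$ by a union bound. The total number of vertices $|V|$ is Poisson with mean $n$, so $|V| \leq 2n$ \wehp, and conditioning on this gives $|O| \leq 2n$. Taking a union bound over at most $2n$ outer vertices preserves the \wehp\ guarantee since $2n \cdot \exp(-\Omega(n^{\eps/2})) = \exp(-n^{\Omega(1)})$. Hence $\Delta(O) \leq K \cdot n^{\eps/2}$ \wehp, as desired.

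The only real obstacle is bookkeeping: being careful that the constant $C$ in $R$ and the constants in $e^{-r/2}$ get absorbed into the $\Theta(1)$ prefactor, and making sure that the Chernoff exponent $\Omega(n^{\eps/2})$ survives the union bound over all of $O$. No structural property of HRGs beyond \Cref{lem:vertex-degree} and the Poisson nature of $\deg(v)$ is needed.
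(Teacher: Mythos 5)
Your proposal is correct and follows essentially the same route as the paper: bound $\E{\deg(v)}$ via \Cref{lem:vertex-degree} using $r(v)\geq(2-\eps)\log(n)$ and $R=2\log(n)+C$ to get $\Theta(1)\cdot n^{\eps/2}$, apply a Poisson Chernoff bound, and union-bound over the outer vertices. The only (harmless) addition is your explicit step bounding $|O|\leq 2n$, which the paper leaves implicit.
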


\begin{proof}
    Recall that the degree of a vertex $u \in V$ is given by $\E{\deg(u)} = \Theta(1) e^{(R-r(u))/2}$ using \Cref{lem:vertex-degree}. Then, using that the minimum radius of a vertex $u \in O(\eps)$ is $r(u) \geq (2-\eps)\log(n)$ and that the expected degree of a vertex is monotonically increasing in $r$, we obtain for a fixed vertex $u \in O(\eps)$ that

    $$
    \E{\deg(u)} \leq \Theta(1)e^{(R-r(u))/2} \leq \Theta(1)e^{(2\log(n)-(2-\eps)\log(n))/2} \leq\Theta(1) n^{\eps/2},
    $$

    where we used that $R = 2\log(n) + C$. A Chernoff bound then shows that $\deg(u) \leq\Theta(1) n^{\eps/2}$ \wehp A union bound over all vertices in $O(\eps)$ finishes the proof.
\end{proof}

We now colour the inner vertices $I$ and the outer vertices $O$ with two different colour palettes of combined size $(1+o(1))\chi$ in constant rounds.

\deterministicLOCAL*

\begin{proof}
We set $\palette(I) := \{1,2,.., \chi\}$ and $\palette(O) :=\{\chi + 1, \chi+2,.., \chi + n^{(1-\alpha)\frac{3}{4}}\}$ such that $\palette(I) \cap \palette(O) = \emptyset$ and $|\palette(I)| + |\palette(O)| \leq (1+o(1))\chi$ \wehp using that the chromatic number of a threshold hyperbolic random graph is \wehp given by $\chi \in \Theta\left(n^{1-\alpha}\right)$~\cite[Corollary 10]{bmrs-stacs-25}.

$\palette(I)$ is the set of inner colours we use to colour $I(\eps)$ and $\palette(O)$ is the set of outer colours we use for $O(\eps)$. To do so, we set $\eps := (1-\alpha)/2$ and obtain via \Cref{lem:outer-set} that the maximum degree in of the vertices in $O$ is bounded by $\Delta(O) \leq \Theta(1)n^{(1-\alpha)/4}$. Now, one round of Linial's algorithm can colour a graph with maximum degree $\Delta$ and unique IDs from an ID space of size $m$ with $\bigO(\Delta^2\log m)$ colours~(see \cite{Linial-92, M21}). As the maximum degree of $O:=O(\eps)$ by our choice of $\eps$ is $\Delta(O)\leq \Theta(1)n^{(1-\alpha)/4}$ \wehp, we can colour $O$ with 

$$\bigO(\Delta(O)^2\log n) \in \Tilde{\bigO}\left(n^{(1-\alpha)/2}\right) \in o(|\palette(O)|)$$ 

colours in a single round \wehp Thus, we use $\palette(O)$ to colour $O$ within a round \wehp

It is left to colour $I:=I(\eps)$ for our choice of $\eps$. Since the diameter of $G[I]$ is constant \wehp by \Cref{lem:const-diameter}, we can brute force $I$ with chromatic number colours in constant rounds. Thus, we can and we will use $\palette(I)$ to colour $I$ in constant time. Since $\palette(I) \cap \palette(O) = \emptyset$ and $|\palette(I)| + |\palette(O)| \leq (1+o(1))\chi$ \wehp by construction, $I$ and $O$ combined are coloured with different colour palettes and with in total $(1+o(1))\chi$ colours in constant rounds. Since $V = I \cup O$ this concludes the proof. \end{proof}

\printbibliography 

\newpage

\appendix

\section{Stochastic Domination for Randomised Colouring}\label{sec:stochastic-dominance}

In this section, we briefly explain the issue with applying stochastic dominance as done in~\cite[Lemma 5.2]{HKMT21}. There, at its core, the authors aim to apply \cite[Lemma 1.8.7]{Doer-21} stated by Doerr, which says that a sequence of arbitrary binary random variables $X_1,\ldots, X_n$ is stochastically dominated by independent random variables $X^*_1,\ldots, X^*_n$ if $$\Pro{X_i = 1 | X_1 = x_1,\ldots, X_{i-1} = x_{i-1}} \leq \Pro{X^*_i = 1}$$

for all $i \in [1..n]$ and all $x_1, \ldots, x_{i-1} \in \{0, 1\}$ with $\Pro{X_1 = x_1, \ldots, X_{i-1} = x_{i-1}} > 0$. The upshot of using stochastic domination lies in the applicability of Chernoff bounds for the random variable $X = X_1 + \ldots +X_n$. 

In~\cite[Lemma 5.2]{HKMT21}, the authors define $Z_u$ to be the indicator random variable that is $1$ if vertex $u$ is uncoloured after iteration $i$ and then basically sate that $\Pro{Z_u = 1} \leq \deg(u)/|\mathsf{ColourSpace}(u)|$, and that this holds \emph{irrespective} of other $Z_{u'}$. While this is true if we condition on the candidate colour of $u'$, observe that in order to apply \cite[Lemma 1.8.7]{Doer-21}, the upper bound on $\Pro{Z_u |Z_{u'} = z_{u'} }$ needs to hold true for \emph{all} $z_{u'} \in \{0,1\}$ with $\Pro{Z_{u'} = z_{u'}} > 0$.

Now, consider as an example a graph that consists of two vertices $V = \{\{u\},\{u'\}\} $ with edge $e = \{u,u'\}$ and at iteration $i$, an uncoloured vertex $v \in V$ picks its candidate colour u.a.r. from all available candidate colours. Let $Z_u$ be the indicator random variable that is $1$ if $Z_u$ is uncoloured after iteration $i$. Fixing a vertex $u$ we have indeed $\Exp{Z_u} \leq \deg(u)/|\mathsf{ColourSpace}(u)|$. However, when we condition on the event $Z_{u'} =1$ which has probability $\Pro{Z_{u'} =1} > 0$, then $u$ must have surely tried the same candidate colour as $u'$ giving $\Pro{Z_{u} | Z_{u'} = 1} = 1 \not\leq \deg(u)/|\mathsf{ColourSpace}(u)|$ and thus, the stochastic domination does not apply. 

\section{Lower Bound for Random Colour Trial on Cliques}
\label{app:RCTLowerBoundWorstCase}
\begin{restatable}[Random colour trial on cliques]{proposition}{ProClique}\label{pro:clique-theorem}
    Let $G$ be a clique with $n$ vertices. Then $\RCT$ with colour space $\Delta + 1$ requires \whp at least $\Omega(\log n)$ rounds to colour $G$.
\end{restatable}

\begin{proof}
    We use the following:

    \begin{claim} For round $t\geq 1$ let $U_t$ be set of uncoloured vertices of $G$ after round $t$. Then consider some round $t$ such that 
        $|U_{t-1}| \in n^{\Omega(1)}$ holds. Then, there exists a constant $\delta > 0$ such that with probability $1 - n^{-\omega(1)}$ the number of uncoloured vertices after round $t$ is at least 
        \begin{equation}\label{eq:claim}
            |U_t| \geq \delta \cdot |U_{t-1}|.
        \end{equation}
        
    \end{claim}

    Before proving the claim, we outline how our desired statement follows by using~\Cref{eq:claim}: let $c > 0$ be a constant that we will choose adequately small enough later, and let $T = c \cdot \log n$. We show that $|U_T| \in n^{\Omega(1)}$ \whp given $c$ is small enough from which our desired statement follows.
    
    Note that we obtain for $t=1$ that $|U_1| \geq \delta \cdot |U_0| \in n^{\Omega(1)}$ with probability $1-n^{-\omega(1)}$ by~\Cref{eq:claim} since $|U_0| = n \in n^{\Omega(1)}$. Subsequently, it holds by a union bound over $T$ rounds that $\delta  \in \Omega(1)$ with probability $1 - n^{-\omega(1)}$ by \Cref{eq:claim}, if for any constant $\eps>0$ we have $|U_0| \cdot \eps^T \in n^{\Omega(1)}$. Indeed, we obtain by setting $c=(2\log(1/\eps))^{-1}$ that  
    
     \begin{align*}
        |U_0|\cdot\eps^T = n \cdot \eps^{T} \geq \eps^{c\cdot \log n}\cdot n = \eps^{c\cdot \log_{\eps}(n)\log(\eps)} \cdot n= n^{1-c\cdot\log(1/\eps)} = \sqrt{n},
    \end{align*}

    and thus, $\delta \in \Omega(1)$ with probability $1 - n^{-\omega(1)}$. We conclude that $|U_T| \geq n \cdot \delta^T \in n^{\Omega(1)}$ with probability $1 - n^{-\omega(1)}$ using $c=(2\log(1/\delta))^{-1}$ for $T = c\log(n)$. This implies that after $\Omega(\log n)$ rounds, \RCT did not finish on a clique \whp and what remains to be proven is \Cref{eq:claim}.
    
    For that, consider any round $t$ and assume $|U_t| \in n^{\Omega(1)}$ as given as the hypothesis of the claim. We count the number of colours that are tried exactly twice by vertices in round $t+1$ by the random variable $Z$. Note that $|U_{t+1}| \geq 2\cdot Z$ since every colour that is used twice implies two uncoloured vertices after round $t+1$. For colour $\colour \in \palette$, let $Z_\colour$ be the indicator random variable that is $1$ if colour $\colour$ is tried by exactly two vertices in round $t+1$ such that $Z = \sum_{\colour \in \palette_{t+1}} Z_\colour$ where $\palette_{t+1}$ is the colour space available at round $t+1$. Since we use $\Delta + 1$ as our colour space and $G$ is a clique it holds for all $t$ that $|\palette_{t+1}| = |U_t| =: n' \in n^{\Omega(1)}$ and thus, $Z \geq n' \cdot \delta/2$ (for a constant $\delta > 0$) with probability $1 - n^{-\omega(1)}$ suffices to prove our desired claim. Now, for round $t+1$, consider a balls into bins experiment with $N \sim \text{Po}(n')$ balls and $n'$ bins. We count the number of bins with exactly two balls (representing the number of colours used exactly twice) by random variable $Z'$ and indicator random variables $Z'_i = 1$ for bin $i$ containing two balls which yields $Z' = \sum_{i=1}^{N} Z'_i$. By Poisson distribution it follows $Z_i' \in \Omega(1)$ and since $n' \in n^{\Omega(1)}$ we get by using linearity of expectation that there exists a $\delta' >0$ such that $\Exp{Z'} = \delta' \cdot n' \in n^{\Omega(1)}$. By a Chernoff bound it follows that there exists a constant $0 < \delta < \delta'$ such that $Z' \geq n' \cdot \delta/2$ with probability $1 - n^{-\omega(1)}$ since $n' \in n^{\Omega(1)}$. We then conclude by "de-Poissonising" the balls into bins experiment~\cite[Corollary 5.9]{mu-pc-05} that also $Z \geq n' \cdot \delta/2 $ with  probability $1 - n^{-\omega(1)}$ what proves the desired claim.
\end{proof}

\section{Complementary proofs of Section~\ref{sec:prelims}}\label{sec:degree}

\degree*

\begin{proof}
Since a vertex $u$ with radius $r$ has $\dist(u,v) \leq R$ if $r(v) \leq R - r$ we have for the area of the ball $B_u(R)$ intersecting the disk $\disk$
\begin{align*}
    \mu(\mathcal{B}_u(R) \cap \disk) = \mu(\mathcal{B}_0(R-r)) + \int_{R-r}^R \int_{-\theta_R(x,r)}^{\theta_R(x,r)}\rho(x) \dd\theta dx.
\end{align*}
By \cite[Corollary 5]{bks-hudg-23} we have  $\sqrt{e^{(R-r-x)}} \leq \theta_R(r,x) \leq \pi \sqrt{e^{(R-r-x)}}$ for $x+r\geq R$ since $r\geq 1$ by the hypothesis of our statement. Thus, for every $x \in [R-r, R)$, there exists a $\tau$ with $1 \leq \tau \leq \pi$ such that  
\begin{align*}
     \mu(\mathcal{B}_u(R) \cap \disk) &=\mu(\mathcal{B}_0(R-r)) + 2 \int_{R-r}^R \tau e^{\frac{R-r-x}{2}}\frac{\alpha\sinh(\alpha x)}{2\pi(\cosh(\alpha R) - 1)} \dd x\\
    &= \mu(\mathcal{B}_0(R-r)) + \frac{  \alpha e^{(R-r)/2}}{\pi(\cosh(\alpha R) - 1)}\int_{R-r}^{R} \tau e^{-x/2}\sinh(\alpha x) \dd x.
\end{align*}    

We proceed by using $\pi$ as an upper bound on $\tau$. The lower bound follows analogously by using $\tau \geq 1$. We get for the integral above using $\tau \leq \pi$

\begin{align*}
    \int_{R-r}^{R}\pi e^{-x/2}\sinh(\alpha x) \dd x &= \left[\frac{2\pi}{4\alpha^2 - 1}e^{-x/2}(2\alpha \cosh(\alpha x ) + \sinh(\alpha x))\right]_{R-r}^R\\
    &= \frac{2\pi}{4\alpha^2 -1}\left(e^{-R/2}(2\alpha \cosh(\alpha R) + \sinh(\alpha R))\right)\\
    &-\frac{2\pi}{4\alpha^2 -1}\left(e^{-(R-r)/2}(2\alpha \cosh(\alpha (R-r)) + \sinh(\alpha (R-r))\right).
\end{align*}

We then obtain by similar calculations to \cite[Lemma 3.2]{gpp-rhg-12}
\begin{align*}
\frac{  \alpha e^{(R-r)/2}}{\pi(\cosh(\alpha R) - 1)}\int_{R-r}^{R} \tau e^{-x/2}\sinh(\alpha x) \dd x \leq \frac{(1+o(1))\alpha e^{-r/2}}{\alpha - 1/2},  
\end{align*}

and then using \Cref{lem:measure-inner-disk}

\begin{align*}
    \mu(\mathcal{B}_u(R) \cap \disk) &=\mu(\mathcal{B}_0(R-r)) + 2 \int_{R-r}^R \tau e^{\frac{R-r-x}{2}}\frac{\alpha\sinh(\alpha x)}{2\pi(\cosh(\alpha R) - 1)} \dd x\\
    &\leq (1+o(1))e^{-\alpha r} + \frac{(1+o(1))\alpha e^{-r/2}}{\alpha - 1/2}\\
    &\leq \frac{(1+o(1))\alpha e^{-r/2}}{\alpha - 1/2},
\end{align*}

since $\alpha > 1/2$. The desired bound follows since $\deg(u)$ is a Poisson random variable so the expected vertices in the area is $n \cdot  \mu(\mathcal{B}_u(R) \cap \disk)$. The lower bound follows by doing the same calculations using $\tau \geq 1$.

\end{proof}

In the original work by Gugelmann, Konstantinos and Peter~\cite{gpp-rhg-12} the statement for the maximum degree was given without providing a definition for \whp This is not meant as a criticism to the authors as their proof is correct in the sense that it shows an \aas guarantee according to our definitions. In any case, to avoid any possible confusion, we provide a proof for our stated probabilistic guarantees.

\maxDegree*

\begin{proof}
    We start by showing the lower bound. Let $r_{high} := (2-1/\alpha)\log(n) + 2\log(\log(n))/\alpha$. Using \Cref{lem:measure-inner-disk} we obtain for the expected number of vertices in $\B_0(r_{high})$ that 
    $$n\cdot\mu(\B_0(r_{high})) = n(1-o(1))e^{-\alpha\left(R-r_{high}\right)} = \Theta(1)\log^2(n).$$

Using a Chernoff bound we have $|V \cap \B_0(r_{high})| \in \Theta(\log^2(n))$ \wehp Next we lower bound the vertex degree for a vertex $u \in V \cap \B_0(r_{high})$. Using \Cref{lem:vertex-degree} it holds
$$
\E{\deg(u)} \geq \Theta(1)e^{\left(R-r_{high}\right)/2} = \Theta(1)n^{\frac{1}{2\alpha}}\cdot\log^{-1/\alpha}(n).
$$

Then, using another Chernoff bound yields $\deg(u) \geq \Theta(1)n^{\frac{1}{2\alpha}}\cdot\log^{-1/\alpha}(n) \> n^{\frac{1}{2\alpha}}\log^{-2}(n)$ \wehp what finishes the lower bound as $\alpha > 1/2$.

For the upper bound we set $r_{low}:= (2-1/\alpha)\log(n) - \log(\log(n))/\alpha$ and use \Cref{lem:measure-inner-disk} to show that the expected number of vertices in $\B_0(r_{low})$ is
$$
n\cdot\mu(\B_0(r_{low})) = n(1-o(1))e^{-\alpha\left(R-r_{low}\right)} =\Theta(1)(1/\log(n)) \in o(1).
$$

Using Markov's inequality there is no vertex in the area $\B_0(r_{low})$ \aas Let us write $A$ for the event that $\{V \cap \B_0(r_{low}) = \emptyset\}$. Then, since $\mu(B_r(R) \cap \disk)$ is monotonically decreasing in $r$ (see \cite[Lemma 3.3]{gpp-rhg-12}), we obtain by using \Cref{lem:vertex-degree} that the expected degree for a vertex $u \in V \cap \disk$ conditioned on event $A$ is  

$$
\E{\deg(u)|A} \leq n\cdot\mu(\B_{r_{low}}(R)\cap \disk)) = \Theta(1)e^{(R-r_{low})/2} = \Theta(1)(n\cdot\log(n))^{\frac{1}{2\alpha}}.
$$

Using a Chernoff bound and $\alpha > 1/2$ it follows $\deg(u) \leq \Theta(1)(n\cdot\log(n))^{\frac{1}{2\alpha}} < n^{\frac{1}{2\alpha}}\cdot\log(n)$ \wehp conditioned on the event $A$. The proof for the upper bound is complete by a union bound over all vertices over vertices with radii larger than $r_{\text{low}}$, where the probabilistic guarantee stems from the fact that event $A$ occurs \aas

\end{proof}

\section{Proof of Leaves Lemma (Lemma~\ref{lem:leaf})}\label{sec:leavesproof}

\leavesLemma*

\begin{proof}
The idea goes as follows: first, we consider the set of vertices $U = V \cap \B_0(R-\startlayer-1)$, i.e., the set of vertices with smaller radius than the set $V_{\startlayer}$. For $u \in U$, we then write $N_0(u)$ for the set of vertices $V_0 \cap N(u)$ and let $\xi(u)$ be the largest angle that is spanned among a pair of vertices in $N_0(u)$ (see \Cref{fig:leaves}a). This is useful in the following sense: let $\Xi(u)$ be the sector that is defined by bisector $\varphi(u)$ and angle $\xi(u)$. Then whenever we consider a vertex $v \in V \cap (\mathcal{A}_0 \setminus \Xi(u))$, we know that $v$ has no edge to $u$. Thus, an upper bound on the angles $\sum_{u \in U} \xi(u) =: \xi$ gives us the size of the layer $\mathcal{A}_0$ on which the probability of events for vertices $V \cap \mathcal{A}_0$ are independent of $U$. Then let $\Xi := \cup_{u \in U}\Xi(u)$ and in a second step we operate on the sub area $\mathcal{A}_0\setminus\Xi$. We show the desired properties by revealing enough leaves in $\mathcal{A}_0\setminus\Xi$ that have only a single vertex $u \in V_{\startlayer}$ as a neighbour.

To bound the angle $\xi$ recall that $\Xi := \cup_{u\in U}\Xi(u)$ where for $u \in U$ the area $\Xi(u)$ is the sector with bisector $\varphi(u)$ and angle $\xi(u)$. Using \Cref{lem:max-angle} we obtain $\xi(u) \leq 2\theta_R(r(u), R-1)$ since $\theta_R(\cdot,\cdot)$ is monotonically decreasing in both arguments. Then an upper bound for the angle of $\Xi$ is given by summing over all upper bounds of the angles of all $\Xi(u)$. This yields

$$\xi = \sum_{u \in U}{\xi(u)} \leq \sum_{u \in U}{2\theta_R(r(u), R-1)}.$$

 Moreover let $A_\finallayer$ be the layer of largest index $\finallayer$ that is non-empty. Since the set $U$ are the set of vertices located layers larger than $\startlayer$ and a vertex in layer $\mathcal{A}_\ell $ has radius at least $R - \ell - 1$, we obtain  via $\theta_R(\cdot, \cdot)$ being monotonically decreasing in its two arguments

 $$
 \xi \leq \sum_{\ell = \startlayer + 1}^{\finallayer}\sum_{v \in V_\ell}2\theta_R(R-\ell-1,R - 1) \in \sum_{\ell = \startlayer + 1}^{\finallayer}\bigO(\max(ne^{-\alpha \ell}, \log^2(n))\cdot\theta_R(R-\ell-1,R - 1),
 $$

 where we used \Cref{lem:layer-properties} by which $|V_\ell| \in \bigO(\max(ne^{-\alpha \ell}, \log^2(n))$ \wehp Next let us write $\ell_0 := \lceil \alpha^{-1}(\log(n) - 2\log(\log(n)))\rceil$. Then we use  \Cref{lem:layer-properties} by which $\finallayer \leq \lceil\alpha^{-1}\cdot(\log(n) + \log(\log(n)) \rceil$ \aas and obtain

$$
\xi  \in \sum_{\ell = \startlayer + 1}^{\ell_0}\mathcal{{\bigO}}\left(e^{-\ell(\alpha-1/2)}\right) + \sum_{\ell_0 + 1}^{\finallayer} \bigO\left(\log^{2 + \frac{1}{2\alpha}}(n)\cdot n^{-\frac{2\alpha - 1}{2\alpha}}\right),
$$

where we used \Cref{lem:max-angle} to obtain the upper bound $\theta_R(R-\ell-1, R-1) \leq \Theta(1)e^{\ell/2 - \log(n)}$ which is monotonically increasing in $\ell$ for $\theta_R(R- \ell -1, R-1)$. Using that there are $\bigO(\log(n))$ layers and $\alpha > 1/2$ it follows that

$$
\xi  \in \mathcal{{\bigO}}\left(\log(n)\cdot e^{-\startlayer(\alpha-1/2)}\right) + o(1) \in o(1),
$$

where we used in the last step that $\startlayer\geq c \log(\log(n))$ by the hypothesis of our lemma. Thus, for $c \geq 2(\alpha - 1/2)^{-1}$ the term is vanishing and we conclude $\xi \in o(1)$ \aas 

Now, we partition the rest disk $\mathcal{D}_R\setminus\Xi$ into $k = \left\lceil ne^{-\alpha\startlayer} \right\rceil$ disjoint sectors such that each has angle $\phi = (2\pi-\xi)/k = \Theta(1)n^{-1}e^{\alpha\startlayer}$. Let us write $\Phi_i$ for $i$-th sector with angle $\phi$ and we denote the intersection of a sector and the layer $\mathcal{A}_{\startlayer}$ by $\mathcal{A}_{\startlayer} \cap \Phi_i =: \mathcal{C}_i$ (see \Cref{fig:leaves}b).
Fixing any area $\mathcal{C}_i$, we obtain for its measure 
\begin{equation}\label{eq:non-vanishing-probability}
    \mu(\mathcal{C}_i) = \mu(\mathcal{A}_{\startlayer} \cap \Phi_i) = \mu(\B_0(R-\ell)\setminus\B_0(R-\ell-1)\cdot \phi = \Theta(1)e^{-\alpha\startlayer}\cdot n^{-1}e^{\alpha\startlayer} \in \Theta(1/n),
\end{equation}
where we used \Cref{lem:measure-inner-disk} and $\phi = \Theta(1)n^{-1}e^{\alpha\startlayer}$. We conclude that $\E{|V \cap \mathcal{C}_i|} \in \Theta(1)$. Since the distribution of the vertices follows Poisson point distribution we get for any $i \in k$ a non-vanishing probability for the event that there is exactly one vertex in $\mathcal{C}_i$. 

In the next step, we count the number of vertices in $\mathcal{A}_{\startlayer}$ that have constant fraction of its neighbour as leaves. To this end, let $A_i$ be the event that a vertex $u \in V \cap \mathcal{C}_i$ has $\Theta(\deg(v))$ leaves in layer $\mathcal{A}_0$. Recall that $\startlayer \geq c\log(\log(n)) \in \omega(1)$ by which we get with \Cref{lem:leaves} in conjunction with \Cref{eq:non-vanishing-probability} using law of total expectation $\E{A_i} \geq \E{A_i | |\mathcal{C}_i| = 1}\Prob{|\mathcal{C}_i| = 1} \in \Omega(1)$.

Now recall that we partitioned $\mathcal{D}_R\setminus\Xi$ into $k$ sectors and define $A = \sum_{i}^{\lfloor k/2\rfloor} \indicator{A_i}$. We note that $A$ is crude lower bound on the desired random variable we are looking for in our lemma as $A_i$ is the event that $\mathcal{C}_i$ contains a vertex where a constant fraction of its neighbours are leaves while only located in $\mathcal{A}_0\setminus \Xi$. By linearity of expectation we obtain 
\begin{align}\label{eq:lin-exp}
    \E{A} = \Theta(k) \E{A_i} \in \Omega(k).
\end{align}
We proceed by showing independence between $A_i$ and $A_j$ for $|i-j|>1$. First observe that $\Phi_i \cap \Phi_j = \emptyset$. Moreover recall that $\phi = \Theta(1)n^{-1}e^{\alpha\startlayer}$, such that the leaves located in $\B_0(R)\setminus\B_0(R-1)$ have a neighbour in $\mathcal{D}_R\setminus\Xi$ with angle distance at most $\theta_R(R-\startlayer-1, R-1) \leq \Theta(1)n^{-1}e^{\eta/2} \in o(\phi)$ using \Cref{lem:max-angle}. Finally recall that we only operate on the sub disk $\mathcal{D}_R\setminus\Xi$ where vertices in layer $\mathcal{A}_0$ do not have a neighbour with radius at most $R - \startlayer - 1$. It follows that the area of leaves are independent and thus, also the events $A_i$ and $A_j$ for $|i-j|>1$.

To finish the proof recall that $\startlayer \leq \alpha^{-1}(\log(n) - \log(\log(n)) -c)$ and thus, $k \geq c\cdot\log(n)$ which implies $\E{A} \geq \Theta(1)c \cdot \log(n)$ by \Cref{eq:lin-exp}. A Chernoff bound yields $\Prob{A \in o(k)} \leq n^{-c'}$ for any constant $c'>$ given that $c$ was chosen large enough. Another Chernoff bound gives us in conjunction with \Cref{lem:measure-inner-disk} that $\Prob{|V_{\startlayer}| \in \omega(k)} \leq n^{-c'}$ for any constant $c'>0$. A union bound over these two events and another union bound over $\log n$ layers concludes the proof where the probabilistic guarantee \aas stems from the fact that we conditioned on the event that $\finallayer \leq \lceil \alpha^{-1}(\log(n) + \log(\log(n)))\rceil$ to bound $\xi$ by $o(1)$ which occurs \aas 
\end{proof}

\section{Concentration bounds}\label{item:prob-guarantees}


To obtain probabilistic guarantees, we use the following concentration bounds (see, e.g. \cite{Prob-Method-16, mu-pc-05}).

\begin{lemma}[Markov's inequality]\label{lem:markov}
    Let $X$ be a random variable with expectation $\Exp{X}$. Then for $a > 0$, $\Pro{X \geq a} \leq a^{-1}\Exp{X}$.
\end{lemma}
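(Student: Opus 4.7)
The plan is to give the textbook one-line argument via an indicator random variable. First I would introduce the indicator $Y := \indicator{X \geq a}$, noting that $Y$ takes only the values $0$ and $1$, and that $\Pro{X \geq a} = \Exp{Y}$ by definition of the expectation of a $\{0,1\}$-valued random variable.

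Next I would establish the pointwise inequality $a \cdot Y \leq X$. On the event $\{X \geq a\}$ we have $Y = 1$, so $aY = a \leq X$; on the complementary event $\{X < a\}$ we have $Y = 0$, so $aY = 0 \leq X$ (using non-negativity of $X$, which the paper implicitly assumes since the only $X$'s it feeds to \Cref{lem:markov} throughout the paper are counts of uncoloured vertices or similar non-negative quantities). Taking expectations of both sides and using monotonicity and linearity yields $a \cdot \Exp{Y} \leq \Exp{X}$, i.e.\ $a \cdot \Pro{X \geq a} \leq \Exp{X}$. Dividing by $a > 0$ gives the claim.

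The only potential subtlety worth flagging is the implicit non-negativity of $X$: if the statement is intended for arbitrary real-valued $X$, the bound is false in general (take $X$ constant and negative with $a$ small positive). In the paper's usage the random variables are always counts, so this causes no issue, and one could either silently assume $X \geq 0$ or tighten the statement to $\Pro{X \geq a} \leq a^{-1}\Exp{\max(X,0)}$. There is no real obstacle — this is a two-line proof.
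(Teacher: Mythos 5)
Your proof is correct and is the standard textbook argument; the paper does not supply its own proof of this lemma but simply cites standard references, so there is nothing to diverge from. Your observation that the statement as written implicitly requires $X \geq 0$ (or a restatement with $\Exp{\max(X,0)}$) is a fair point — the paper's version omits this hypothesis, though every application in the paper is to non-negative counting variables, so no harm results.
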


\smallskip

\begin{theorem}[Chernoff bound]\label{lem:Chernoff}
For $i \in [k]$, let $X_i \in \{0,1\}$ be independent random variables and $X = \sum_i X_i$. Then
\begin{align*}
    \Pro{X\geq \frac{3}{2}\Exp{X}} \leq \exp{\left(-\frac{\Exp{X}}{12}\right)} \text{ and }
    \Pro{X\leq \frac{1}{2}\Exp{X}} \leq \exp{\left(- \frac{\Exp{X}}{8}\right)}.
\end{align*}
\end{theorem}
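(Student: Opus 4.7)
The plan is to prove this classical Chernoff bound via the standard exponential-moment method (Cramér--Chernoff / Bernstein's trick), handling the upper and lower tails separately. I would set $\mu := \Exp{X}$ and let $p_i := \Pro{X_i = 1}$ so that $\mu = \sum_{i=1}^{k} p_i$.

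For the upper tail, I would introduce a parameter $t > 0$ and apply Markov's inequality (\Cref{lem:markov}) to the random variable $e^{tX}$:
\begin{align*}
\Pro{X \geq \tfrac{3}{2}\mu} \;=\; \Pro{e^{tX} \geq e^{(3/2)t\mu}} \;\leq\; e^{-(3/2)t\mu}\cdot\Exp{e^{tX}}.
\end{align*}
Independence of the $X_i$ factorises the moment generating function as $\Exp{e^{tX}} = \prod_{i}\Exp{e^{tX_i}}$. For each Bernoulli summand, $\Exp{e^{tX_i}} = 1 + p_i(e^t - 1) \leq \exp\bigl(p_i(e^t - 1)\bigr)$ by the elementary inequality $1+x \leq e^x$. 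Multiplying yields $\Exp{e^{tX}} \leq \exp\bigl(\mu(e^t - 1)\bigr)$, and choosing the minimiser $t := \ln(3/2)$ turns the overall bound into $\exp\bigl(-\mu\cdot[\tfrac{3}{2}\ln\tfrac{3}{2} - \tfrac{1}{2}]\bigr)$. The only remaining step is the numerical check that $\tfrac{3}{2}\ln(3/2) - \tfrac{1}{2} \geq \tfrac{1}{12}$, i.e.\ $\ln(3/2) \geq 7/18$, which holds comfortably since $\ln(3/2) \approx 0.405 > 0.389$.

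For the lower tail I would mirror the argument, applying Markov's inequality to $e^{-tX}$ for $t > 0$, using $\Exp{e^{-tX_i}} \leq \exp\bigl(p_i(e^{-t} - 1)\bigr)$, and choosing the minimiser $t := \ln 2$ corresponding to the deviation $\tfrac{1}{2}\mu$. This gives the bound $\exp\bigl(-\mu\cdot\tfrac{1 - \ln 2}{2}\bigr)$, and the remaining numerical check $\tfrac{1-\ln 2}{2} \geq \tfrac{1}{8}$ reduces to $\ln 2 \leq 3/4$, which is immediate. There is no substantive obstacle beyond bookkeeping; the stated constants $1/12$ and $1/8$ are deliberately loose, so the argument has slack at both numerical checks and no finer optimisation of $t$ is needed.
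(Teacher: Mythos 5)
Your proof is correct: both optimisations of $t$ and both numerical checks ($\ln(3/2)\geq 7/18$ and $\ln 2\leq 3/4$) hold, and the exponential-moment argument is exactly the standard one. The paper does not prove this lemma itself but cites it from standard references (Alon--Spencer, Mitzenmacher--Upfal), and your argument is precisely the textbook derivation used there, so there is nothing to compare beyond noting the match.
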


The following is a Chernoff\footnote{For convenience, we refer to both as a Chernoff bound whenever using either of the two.} type deviation bound (see e.g. \cite[Lemma 6]{Kiwi2024}) which is necessary for the distribution of vertices as it follows a Poisson point distribution.

\begin{lemma}[Poisson Chernoff bound]\label{lem:Poisson-Chernoff}
    Let $X$ have a Poisson distribution with mean $\E{X}$. Then for $a \geq e^{3/2}$,
    \begin{align*}
        \Prob{X\leq \frac{1}{2}\mathbb{E}[X]} \leq \exp{(- \mathbb{E}[X]/8)} \text{ and }
    \Prob{X\geq a\mathbb{E}[X]} \leq \exp{(a\cdot \mathbb{E}[X] /2)}.
    \end{align*}
\end{lemma}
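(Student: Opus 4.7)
\medskip
\noindent\textbf{Proof plan for Lemma~\ref{lem:Poisson-Chernoff}.}
This is a standard Chernoff--style tail bound for a Poisson variable, and the plan is the textbook one based on the moment generating function, i.e.\ $\E{e^{tX}} = \exp(\E{X}(e^t - 1))$ for $X \sim \mathrm{Poisson}(\E{X})$, combined with Markov's inequality (Lemma~\ref{lem:markov}). Note that as stated, the upper tail should read $\exp(-a\E{X}/2)$ rather than $\exp(a\E{X}/2)$ (the displayed bound is vacuous otherwise); I address the corrected form, which is what is actually used in the body of the paper.

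\smallskip
\noindent\emph{Upper tail.}
For any $t > 0$, Markov applied to $e^{tX}$ yields
\begin{equation*}
\Pro{X \geq a\E{X}} \;\leq\; e^{-ta\E{X}}\,\E{e^{tX}} \;=\; \exp\!\bigl(\E{X}\,(e^t - 1 - ta)\bigr).
\end{equation*}
Optimising in $t$ gives $t = \ln a$, leading to the exponent $\E{X}\,(a - 1 - a\ln a)$. The key inequality to verify is that, under the hypothesis $a \geq e^{3/2}$,
\begin{equation*}
a - 1 - a\ln a \;\leq\; -\tfrac{a}{2}.
\end{equation*}
This follows since $\ln a \geq 3/2$ implies $a\ln a \geq 3a/2$, so $a - 1 - a\ln a \leq -a/2 - 1 \leq -a/2$. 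Substituting back gives $\Pro{X \geq a\E{X}} \leq \exp(-a\E{X}/2)$.

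\smallskip
\noindent\emph{Lower tail.}
For $t < 0$, Markov's inequality applied to $e^{tX}$ with threshold $e^{t\E{X}/2}$ gives
\begin{equation*}
\Pro{X \leq \tfrac{1}{2}\E{X}} \;\leq\; \exp\!\bigl(\E{X}\,(e^t - 1 - t/2)\bigr).
\end{equation*}
Optimising in $t$ gives $e^t = 1/2$, i.e.\ $t = -\ln 2$, and the exponent becomes $\E{X}\cdot\tfrac{\ln 2 - 1}{2}$. Since $\ln 2 < 3/4$, we have $(1 - \ln 2)/2 > 1/8$, and therefore
\begin{equation*}
\Pro{X \leq \tfrac{1}{2}\E{X}} \;\leq\; \exp\!\bigl(-\E{X}/8\bigr),
\end{equation*}
as required.

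\smallskip
\noindent\emph{Obstacles.}
There is no real obstacle here; the only care points are (i) the two elementary numerical inequalities $a\ln a \geq 3a/2$ (using $a \geq e^{3/2}$) and $(1-\ln 2)/2 \geq 1/8$, and (ii) the fact that the stated upper-tail bound has a sign typo which should be fixed to $\exp(-a\E{X}/2)$ so that the statement is non-trivial and matches the way the lemma is invoked elsewhere (e.g.\ in the degree/layer concentration arguments of Section~\ref{sec:structural}). Once those are noted, both inequalities follow immediately from the MGF of a Poisson and Markov.
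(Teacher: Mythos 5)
Your proof is correct. The paper does not prove this lemma at all---it is stated in the appendix with only a citation to the literature (Lemma~6 of the cited work on random walks on HRGs)---so there is no in-paper argument to compare against; your moment-generating-function derivation via $\E{e^{tX}} = \exp(\E{X}(e^t-1))$ and Markov's inequality is the standard and correct route, and both of your elementary numerical checks ($a\ln a \geq 3a/2$ for $a \geq e^{3/2}$, and $(1-\ln 2)/2 > 1/8$) are valid. You are also right that the upper-tail bound as printed, $\exp(a\E{X}/2)$, is vacuous and must be a sign typo for $\exp(-a\E{X}/2)$, which is the form actually needed wherever the lemma is invoked.
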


Finally, for analysing the \RCTID algorithm, we use the following variant of an Azuma-Hoeffding bound (see e.g. \cite[Corollary A.5]{BEPS}).

\begin{lemma}[Azuma-Hoeffding]\label{lem:BEPS}
    Let $Z = Z_1 + \ldots + Z_n$ be the sum of n random variables and $X_0,\ldots, X_n$ be a sequence, where $Z_i$ is uniquely determined by $X_0,\ldots, X_i$, $\Exp{Z_i| X_0,\ldots, X_{i-1}}$, $\Exp{Z} = \sum_i \Exp{Z_i}$, and $a_i \leq Z_i \leq a'_1$. Then
    $$
    \Pro{Z \geq \Exp{Z} + a}\leq \exp{\left(-\frac{a^2}{2\sum_i(a'_i - a_i)^2}\right)}.
    $$
\end{lemma}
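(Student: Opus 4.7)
The plan is to reproduce the classical Doob-martingale derivation that underlies both Azuma's inequality and McDiarmid's bounded differences inequality. First I would define the filtration $\mathcal{F}_i := \sigma(X_0,\dots,X_i)$ and the associated Doob martingale $Y_i := \Exp{Z \given \mathcal{F}_i}$. Since $Z$ is a deterministic function of $(X_0,\dots,X_n)$ by hypothesis, we have $Y_n = Z$ and $Y_0 = \Exp{Z}$; writing $D_i := Y_i - Y_{i-1}$ telescopes the deviation into $Z - \Exp{Z} = \sum_{i=1}^n D_i$, a mean-zero martingale difference sequence with $\Exp{D_i \given \mathcal{F}_{i-1}} = 0$.

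The key structural step is to show $|D_i| \leq a'_i - a_i$. The clean way to obtain this is by a coupling: conditional on the prefix $X_0,\dots,X_{i-1}$, let $X'_i$ be an independent copy of $X_i$, and note that
\[
D_i \;=\; \Exp{Z(X_i) - Z(X'_i) \given \mathcal{F}_i},
\]
where $Z(X'_i)$ is the value obtained by replacing the $i$-th coordinate by $X'_i$ and averaging over $(X_{i+1},\dots,X_n)$ in the same way. Using that only $Z_i$ is sensitive to the swap of $X_i$, and that $a_i \leq Z_i \leq a'_i$, this swap changes $Z$ by at most $a'_i - a_i$ pointwise, and taking conditional expectation preserves the bound to give $|D_i| \leq a'_i - a_i$.

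Given this bounded-difference property, Hoeffding's lemma applied conditionally yields, for every $\lambda > 0$,
\[
\Exp{e^{\lambda D_i} \given \mathcal{F}_{i-1}} \;\leq\; \exp\!\Big(\tfrac{\lambda^2 (a'_i - a_i)^2}{2}\Big).
\]
Iterating via the tower property over $i$ produces $\Exp{e^{\lambda (Z - \Exp{Z})}} \leq \exp(\lambda^2 s / 2)$ with $s := \sum_i (a'_i - a_i)^2$. A final exponential Markov step, $\Pro{Z - \Exp{Z} \geq a} \leq e^{-\lambda a}\Exp{e^{\lambda(Z-\Exp{Z})}}$, optimised at $\lambda = a/s$, delivers the stated tail bound $\exp(-a^2/(2s))$.

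The main obstacle I anticipate is rigorously isolating the statement that ``only $Z_i$ changes under the swap of $X_i$'', since the hypothesis as written only guarantees that each $Z_i$ is measurable in the full product sigma-algebra $\sigma(X_0,\dots,X_n)$ rather than the prefix $\mathcal{F}_i$; the cleanest fix is to strengthen this to $\mathcal{F}_i$-measurability (which is the standard and the setting in which the lemma is used throughout the paper), and otherwise to run the coupling argument through a two-point expansion of $D_i$ that relies only on the global bound $a_i \leq Z_i \leq a'_i$. Everything else is routine moment-generating-function calculus.
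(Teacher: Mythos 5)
First, note that the paper does not prove this lemma at all; it is imported verbatim from \cite[Corollary A.5]{BEPS}, so there is no in-paper argument to compare against. Judged on its own terms, your proof has a genuine gap at its central step. You run the Doob martingale $Y_i = \Exp{Z \mid \mathcal{F}_i}$ and claim $|D_i| = |Y_i - Y_{i-1}| \le a'_i - a_i$ because ``only $Z_i$ is sensitive to the swap of $X_i$''. That is false under the lemma's (intended) hypotheses: each later summand $Z_j$ with $j > i$ is a function of $X_0,\dots,X_j$ and in particular of $X_i$, so the coupling swap $X_i \mapsto X'_i$ perturbs every $Z_j$ with $j \ge i$, giving only $|D_i| \le \sum_{j \ge i}(a'_j - a_j)$. (In the paper's application this dependence is the whole point: whether vertex $j$ keeps its colour depends on the candidate colours of all earlier-ID neighbours.) With $a'_j - a_j = 1$ this degrades the exponent from $a^2/(2n)$ to $\Theta(a^2/n^3)$, which is useless for Lemma~\ref{lem:magic-ID-concentration}. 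Your proposed fix of requiring $Z_i$ to be $\mathcal{F}_i$-measurable does not repair this, and your coupling also silently assumes the $X_i$ are independent, which is not among the stated hypotheses.

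The correct route is a different martingale: set $W_k = \sum_{i \le k}\bigl(Z_i - \Exp{Z_i \mid X_0,\dots,X_{i-1}}\bigr)$. Its increments are, conditionally on $\mathcal{F}_{i-1}$, mean zero and supported on an interval of length exactly $a'_i - a_i$ (both $Z_i$ and $\Exp{Z_i\mid\mathcal{F}_{i-1}}$ lie in $[a_i,a'_i]$), so the conditional Hoeffding plus exponential-Markov machinery you describe applies verbatim to $W_n$ and yields $\Pro{W_n \ge a} \le \exp\bigl(-a^2/(2\sum_i(a'_i-a_i)^2)\bigr)$. To convert this into a tail bound for $Z - \Exp{Z}$ one additionally needs the hypothesis --- garbled in the paper's statement (the dangling ``$\Exp{Z_i\mid X_0,\dots,X_{i-1}}$'') but present in the BEPS original --- that $\Exp{Z_i \mid X_0,\dots,X_{i-1}} \le \Exp{Z_i}$, whence $\sum_i \Exp{Z_i\mid\mathcal{F}_{i-1}} \le \Exp{Z}$ and $Z - \Exp{Z} \le W_n$. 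Some hypothesis of this kind is unavoidable: taking all $Z_i$ equal to one common $\pm 1$ variable satisfies everything your argument actually uses yet violates the conclusion.
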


\end{document}